\renewcommand\dots{\makebox[.7em][c]{.\hfil.\hfil.}}
\def\namedlabel#1#2{\begingroup
    #2%
    \def\@currentlabel{#2}%
    \phantomsection\label{#1}\endgroup
}
\renewcommand\theoremautorefname{Th.}
\patchcmd{\ALG@step}{\addtocounter{ALG@line}{1}}{\refstepcounter{ALG@line}}{}{}
\newcommand{\ALG@lineautorefname}{Line}
\renewcommand\phi{\varphi}
\newcommand\defaccr[2]{\newcommand#1{#2\xspace}}
\newcommand\defmath[2]{\newcommand#1{\ensuremath{#2}\xspace}}
\newcommand\concept[1]{\textit{#1}}
\defmath\leqpoly{\leq_\poly}
\defmath{\img}{\mathtt{image}}
\defmath{\apre}{\mathtt{\forall preimage}}
\defmath{\Post}{\mathit{Postfix}}
\let\set\undefined
\providecommand{\set}[1]{\ensuremath{\left\lbrace #1 \right\rbrace}}
\providecommand{\sizeof}[1]{\ensuremath{\left\vert{#1}\right\vert}}
\newcommand{\id}[1][]{\ensuremath{\mathbb{I}_{#1}}\xspace}
\defmath{\bool}{\ensuremath{\mathbb{B}}}
\defmath{\complex}{\ensuremath{\mathbb{C}}}
\defmath{\real}{\ensuremath{\mathbb{R}}}
\defmath{\integers}{\ensuremath{\mathbb{Z}}}
\defmath{\conditionalind}{\mathrel{\text{\scalebox{1.07}{$\perp\mkern-10mu\perp$}}}}
\defmath{\dx}{\partial x}
\defmath{\ddx}{\sfrac{\partial}{\partial x}}
\defmath{\half}{\textstyle{\frac{1}{2}}}
\newcommand{\underbraceset}[2]{\underset{#1}{\underbrace{#2}}}
\defmath\Exists{\mathit{Exists}}
\defmath\PlusExists{\mathit{PlusExists}}
\defmath\var{\mathit{var}}
\defmath\calciso{\mathsf{calciso}}
\newcommand{\defn}{\,\triangleq\,}
\tikzstyle{oval} = [state, ellipse, minimum size=4mm, inner sep=0.5mm, node distance=1cm]
\tikzset{every picture/.style={->,thick}}
\tikzstyle{leaf}=[draw, rectangle,minimum size=5mm, inner sep=3pt]
\tikzstyle{var}=[circle,draw=black!70,solid,thick,minimum size=6mm]
\tikzstyle{bdd}=[regular polygon, regular polygon sides=3, draw=black!70,solid,thick,inner sep=0.5mm]
\tikzstyle{n}=[->,loosely dashed,thick]
\tikzstyle{p}=[->,solid,thick]
\tikzstyle{b}=[->,densely dashdotted,ultra thick]
\defmath\before{\prec}
\defmath\beforeq{\preccurlyeq}
\newenvironment{smallmat}{\left[\begin{smallmatrix}}{\end{smallmatrix}\right]}
\newcommand\diag[1]{\begin{smallmat}1 & 0 \\  0 & #1\end{smallmat}\xspace}
\newcommand\diagg[2]{\begin{smallmat}#1 & 0 \\  0 & #2\end{smallmat}\xspace}
\defaccr{\bdd}{\textsf{BDD}}
\defaccr{\bdds}{\textsf{BDD}s}
\defmath{\qmdd}{\textsf{SLDD}_{\times}}
\defaccr{\add}{\textsf{ADD}}
\defaccr{\isoqmdd}{\textsf{LIMDD}}
\defaccr{\limdd}{\textsf{LIMDD}}
\defaccr{\zlimdd}{\ensuremath{\braket{Z}}-\textsf{LIMDD}}
\defaccr{\paulilimdd}{Pauli-\limdd}
\defaccr{\paulilimdds}{Pauli-\limdds}
\defaccr{\qmdds}{\textsf{QMDD}s}
\defaccr{\adds}{\textsf{ADD}s}
\defaccr{\isoqmdds}{\textsf{LIMDD}s}
\defaccr{\limdds}{\textsf{LIMDD}s}
\defaccr{\glimdd}{\ensuremath{G}-\limdd}
\defaccr{\glimdds}{\ensuremath{G}-\limdds}
\defaccr{\MPS}{\textsf{MPS}}
\newcommand\lbl{\textsf{label}\xspace}
\newcommand\Edge{\textsc{Edge}\xspace}
\newcommand\Node{\textsc{Node}\xspace}
\defmath\oh{\mathcal O}
\defmath\rootlim{B_{\textnormal{root}}}
\defmath\lowlim{B_{\textnormal{low}}}
\defmath\highlim{B_{\textnormal{high}}}
\defmath\gmax{g}
\defmath\kmax{\kappa^{\textnormal{final}}}
\defmath\cast{\mathbb C^\ast}
\defmath\plus{+}
\DeclareRobustCommand{\leafnode}[1][]{%
  \raisebox{-.8mm}{%
  \tikz{%
    \node[state,inner sep=0pt,minimum size=10pt,right= of x,leaf](v){\scriptsize $1$};%
  }%
  }%
}
\DeclareRobustCommand{\ledge}[3][]{%
  \raisebox{-.8mm}{%
  \tikz{%
    \node[inner sep=0pt] (x){$#1\,\,$};%
    \node[state,inner sep=0pt,minimum size=10pt,right= of x](v){\scriptsize $#3$};%
    \draw (x) to node[above,pos=.5]{\scriptsize $\,#2\,\,$} (v);%
  }%
  }%
}
\DeclareRobustCommand{\lnode}[5][]{%
  \raisebox{-1.5mm}{%
  \tikz{%
    \node[state,inner sep=0pt,minimum size=10pt] (v){\scriptsize $#1$};%
    \node[state,inner sep=0pt,minimum size=10pt,left= of v](v0){\scriptsize $#3$};%
    \draw[dotted] (v) to node[above,pos=.45]{\scriptsize $#2$} (v0);%
    \node[state,inner sep=0pt,minimum size=10pt,right= of v](v1){\scriptsize $#5$};%
    \draw (v) to node[above,pos=.45]{\scriptsize $#4$} (v1);%
  }%
  }%
}
\newcommand\low[1]{\ensuremath{#1[0]}}
\newcommand\high[1]{\ensuremath{#1[1]}}
\defmath\Low{\ensuremath{\textsf{low}}}
\defmath\High{\ensuremath{\textsf{high}}}
\defmath\LIM{\textsf{LIM}}
\tikzstyle{oval} = [state, ellipse, minimum size=4mm, inner sep=0.5mm, node distance=1cm]
\tikzset{every picture/.style={->,thick}}
\tikzstyle{leaf}=[draw, rectangle,minimum size=4.mm, inner sep=3pt]
\tikzstyle{var}=[circle,draw=black!70,solid,thick,minimum size=6mm]
\tikzstyle{bdd}=[regular polygon, regular polygon sides=3, draw=black!70,solid,thick,inner sep=0.5mm]
\tikzstyle{n}=[->,loosely dashed,thick]
\tikzstyle{p}=[->,solid,thick]
\tikzstyle{b}=[->,densely dashdotted,ultra thick]
\tikzset{every node/.style={initial text={}, inner sep=2pt, outer sep=0}}
\tikzstyle{e0}[0]=[dashed,thick,bend right=#1]
\tikzstyle{e1}[0]=[solid, bend left =#1]
\tikzstyle{lbl}=[draw,fill=white,inner sep=2pt, minimum size=0cm,line width=.5pt]
\defmath\yy{\begin{smallmat}
    0 & y^*\\
    y & 0\\
\end{smallmat}}
\defmath\ww{\begin{smallmat}
      0 & y   \\
      y^* & 0  \\
  \end{smallmat}
}
\def\paulilim{\textnormal{\sc PauliLIM}}
\tikzstyle{e0}[0]=[dotted,bend right=#1]
\tikzstyle{e1}[0]=[solid, bend left =#1]
\defmath\hv{{\hat v}}
\tikzset{every node/.style={initial text={}, inner sep=2pt, outer sep=0}}
\defmath\expsep{\succ\hspace{-1.5mm}\succ}
\newcommand{\allformulae}[2]{#1^{#2}}
\defmath{\sumstate}{\ket{\text{Sum}}}
\defmath{\atleastassuccinctas}{\preceq_s}%
\defmath{\strictlymoresuccinctthan}{\prec_s}
\defmath{\notmoresuccinctthan}{\npreceq_{s}}
\def\bigarrowhead{-{Latex[length=2mm,width=2mm]}}
\def\bigarrowheadb{{Latex[length=2mm,width=2mm]}-{Latex[length=2mm,width=2mm]}}
\newcommand\hide[1]{}
\newcommand\supp[1]{}
\def\crossedto{\hspace{3mm}\mathclap{\longrightarrow}{\hspace{-1.5mm}\times}\hspace{2mm}}
\def\strictsuccinctto{
    \setbox0\hbox{
            $\longrightarrow$
    }\copy0\llap{\raise\ht0\hbox{
    {
    $    \hspace{0mm}\mathclap{\longleftarrow}{\hspace{-1.5mm}\times}\hspace{0mm}$
    }
    }}
}
\newcommand{\nindex}{\ensuremath{\textsf{idx}}}
\newcommand{\Yes}{\ding{51}\xspace}
\newcommand{\Yar}{\ding{51}'\xspace}
\newcommand{\No}{\ding{54}\xspace}
\newcommand{\Cond}{\ensuremath{\circ}\xspace}
\defmath\samp{\textbf{Sample}}
\defmath\pro{\textbf{Measure}}
\defmath\gates{\textbf{Gates}}
\defmath\eq{\textbf{Equal}}
\defmath\res{\textbf{Res}}
\defmath\addi{\textbf{Addition}}
\defmath\inprod{\textbf{InnerProd}}
\defmath\fid{\textbf{Fidelity}}
\defmath\had{\textbf{Hadamard}}
\defmath\xyz{\textbf{X,Y,Z}}
\defmath\cx{\textbf{CX}}
\defmath\cz{\textbf{CZ}}
\defmath\swap{\textbf{Swap}}
\defmath\loc{\textbf{Local}}
\defmath\T{\textbf{T}}
\defmath\Rot{\mathit{Rot}}
\renewcommand\citeauthor[1]{\cite{#1}}
\definecolor{light-blue}{cmyk}{0.8,.15,.15,.15}
\newcommand\novelunderline[1]{\textcolor{light-blue}{\underline{#1}}}
\title{A Knowledge Compilation Map for Quantum Information}
\author{
	Lieuwe Vinkhuijzen,
	Tim Coopmans,
	Alfons Laarman
}
\begin{document}

\crefname{equation}{Eq.}{Eq.}
\crefname{figure}{Fig.}{Fig.}
\crefname{section}{Sec.}{Sec.}
\crefname{algorithm}{Alg.}{Alg.}
\crefname{definition}{Def.}{Def.}
\crefname{theorem}{Th.}{Th.}
\crefname{lemma}{Lem.}{Lem.}
\crefname{appendix}{App.}{App.}
\crefname{tabular}{Table}{Tables}
\crefname{table}{Table}{Tables}
\crefname{corollary}{Cor.}{Corollaries}

\maketitle

\begin{abstract}
Quantum computing is finding promising applications in optimization, machine learning and physics, leading to the development of various models for representing quantum information.
Because these representations are often studied in different contexts (many-body physics, machine learning, formal verification, simulation), 
little is known about fundamental trade-offs between their succinctness and the runtime of operations to update them.
We therefore analytically investigate three widely-used quantum state representations: matrix product states (MPS), decision diagrams (DDs), and restricted Boltzmann machines (RBMs).
We map the relative succinctness of these data structures and provide the complexity for relevant query and manipulation operations.
Further, to chart the balance between succinctness and operation efficiency, we extend the concept of rapidity with support for the non-canonical data structures studied in this work, showing 
in particular that MPS is at least as rapid as some DDs.

By providing a knowledge compilation map for quantum state representations, this paper contributes to the understanding of the inherent time and space efficiency trade-offs in this area.

\end{abstract}

\section{Introduction}
\label{sec:introduction}
\label{sec:application-domains}

Quantum computers promise large computational advantages compared to classical computers in areas ranging from mathematics to chemistry.
To support this development, various classical methods have been proposed to analyze quantum systems.
A major bottleneck is that the state of a quantum system is described by an exponential-sized amplitude vector, and it is difficult to represent this object in general.
Therefore,  different disciplines, from physics to computer science,
have come up with data structures that enable quantum state representation and manipulation:
in order to simulate, verify and analyze new quantum algorithms, (algebraic) decision diagrams (DDs) have been proposed~\cite{miller2006qmdd,viamontes2003improving}.
To study physical quantum systems, physicists have developed tensor networks~(TNs; \cite{schollwock2011density}), matrix product states (MPS; \cite{orus2014practical}, sometimes also called tensor trains) ---a specialization of TNs---, and restricted Boltzmann machines (RBMs;~\cite{Dumoulin_Goodfellow_Courville_Bengio_2014}) with complex values, a specialization of Neural Quantum States \cite{carleo2017solving}.
Recently these methods have been growing towards each other, for instance, with the development of quantum circuit simulation for MPS~\cite{vidal2003efficient}, combinations of TNs and DDs~\cite{hong2020tensor}, MPS and decision diagrams~\cite{burgholzer2023tensor}, a comparison between TNs and probabilistic graphical models~\cite{glasser2019expressive},
and the extension of DDs with support for stabilizer states, yielding so-called \limdds~\cite{vinkhuijzen2021limdd}.

However, fundamental differences between these data structures  have not yet been studied in detail.
This choice between different structures introduces an important trade-off between \emph{size} and \emph{speed}, i.e., how much space the data structure uses, versus how fast certain operations, such as measurement, can be performed.
This trade-off plays a crucial role in other areas as well, and has already been illuminated for the domain of explainable AI~\cite{audemard2020tractable} and logic~\cite{darwiche2002knowledge,fargier2014knowledge}, but not yet for quantum information.

In this work, we analytically compare for the first time several data structures for representing and manipulating quantum states, motivated by the following three applications. 
First, \emph{simulating quantum circuits}, the building block of quantum computations, is a crucial tool for predicting the performance and scaling behavior of experimental devices with various error sources, thereby guiding hardware development~\cite{zulehner2018advanced,thanos2023fast}.
Next, \emph{variational methods} are the core of \emph{quantum machine learning}~\cite{Benedetti_2019,dunjko2017machine} and solve quantum physics questions such as finding the lowest energy of a system of quantum particles~\cite{foulkes2001quantum,carleo2017solving}.
Finally, \emph{verifying} if two quantum circuits are equivalent is crucial for checking if a (synthesized or optimized) quantum circuit satisfies its specification~\cite{ardeshir2014verification,gay2005communicating,burgholzer2020advanced}.

We focus on data structures developed for the above applications. In particular, we focus on: {algebraic decision diagrams} (\add), {semiring-labeled decision diagrams} (\qmdd, also called \textsc{QMDD}), {local invertible map decision diagrams} (\limdd), {matrix product states} (MPS) and restricted Boltzmann machines (RBM).
For those, we study \concept{succinctness}, \concept{tractability} and \concept{rapidity}:

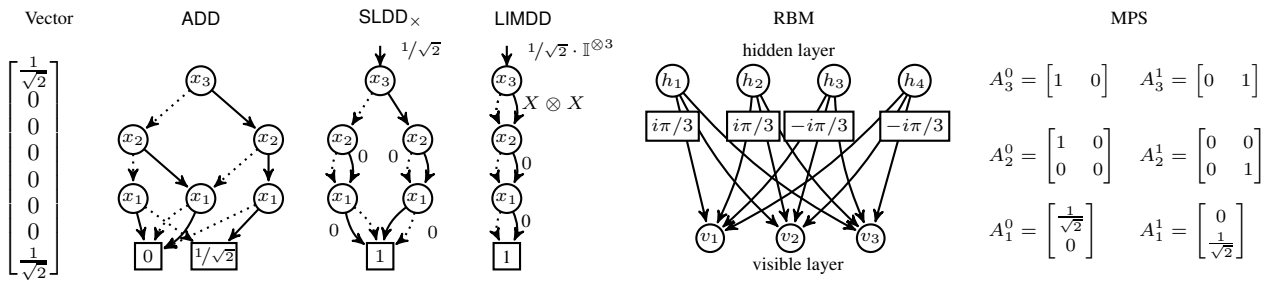
\begin{figure*}[b]
\begin{tikzpicture}[-{>[scale=0.3]},>=stealth',shorten >=1pt,auto,node distance=.4cm,
    thick, state/.style={circle,draw,minimum size=10pt,inner sep=.6pt},font=\scriptsize]

\node (vec) {
    \begin{minipage}{1.3cm}\footnotesize
$\def\arraystretch{1.}
    \begin{bmatrix*}[c]
    \frac1{\sqrt 2} \\ 0 \\ 0 \\ 0 \\ 0 \\ 0 \\ 0 \\ \frac1{\sqrt 2}
    \end{bmatrix*}
    $
    \end{minipage}
};

\node[state, right = 2cm of vec.north,anchor=north,yshift=-.2cm] (n1) {$x_3$};

\node[state](n2)[below = of n1, xshift=-.9cm]{$x_2$};
\node[state](n3)[below = of n1, xshift= .9cm]{$x_2$};

\node[state](n21)[below = of n2, ]{$x_1$};
\node[state](n22)[below = of n2, xshift= .9cm]{$x_1$};

\node[state](n31)[below = of n3]{$x_1$};

\node[draw, rectangle,minimum size=.5cm,below = of n21, xshift=.2cm,minimum size=10pt,inner sep=1pt] (l1) {$0$};
\node[draw, rectangle,minimum size=.5cm,right = of l1,minimum size=10pt,inner sep=1pt] (l2) {$\nicefrac1{\sqrt 2}$};

\path[]
(n1) edge[e1] node[right,pos=.7] {} (n3)
(n1) edge[e0] node[left,pos=.7] {} (n2)
(n2) edge[e0] node[left,pos=.7] {} (n21)
(n2) edge[e1] node[left,pos=.7] {} (n22)
(n3) edge[e0] node[left,pos=.7] {} (n22)
(n3) edge[e1] node[left,pos=.7] {} (n31)
(n21) edge[e0=  0]  node[pos=.7] {} (l2)
(n21) edge[e1=  0]  node[pos=.7] {} (l1)
(n22) edge[e0= 20]  node[pos=.7] {} (l1)
(n22) edge[e1= 20]  node[pos=.7] {} (l1)
(n31) edge[e0=  0]  node[pos=.7] {} (l1)
(n31) edge[e1=  0]  node[pos=.7] {} (l2)
;

\node[above=.8cm of n1,anchor=north]    (add) {{\add}};
\node[left =1.3cm of add]    (svc) {{Vector}};

\node[state, right=2cm of n1] (q1) {$x_3$} ;
\node[state, below=of q1,xshift=-0.5cm] (q2) {$x_2$} ;
\node[state, below=of q1,xshift=0.5cm] (q3) {$x_2$} ;
\node[state, below=of q2] (q4) {$x_1$} ;
\node[state, below=of q3] (q5) {$x_1$} ;
\node[draw, rectangle,minimum size=.5cm,below = of q4,xshift=0.5cm,minimum size=10pt,inner sep=1pt] (l1) {$1$};

\draw[e0] (q1) edge  node[] {} (q2);
\draw[e1] (q1) edge  node[] {} (q3);
\draw[e0=25] (q2) edge  node[] {} (q4);
\draw[e1=25] (q2) edge  node[,pos=.45] {$0$} (q4);
\draw[e0=25] (q3) edge  node[] {} (q5);
\draw[e1=25] (q3) edge  node[,xshift=-0.6cm,pos=.45] {$0$} (q5);
\draw[e0=-20] (q4) edge  node[] {} (l1);
\draw[e1=-20] (q4) edge  node[,left=.1cm,pos=.45] {$0$} (l1);
\draw[e0=-20] (q5) edge  node[] {} (l1);
\draw[e1=-20] (q5) edge  node[,xshift=0.4cm,pos=.45] {$0$} (l1);
\draw[<-] (q1) --++(90:.5cm) node[right=.2cm,pos=.7,] {$\nicefrac 1{\sqrt 2}$} node[right,pos=.8] {};

\node[right =1.7cm of add]    (sldd) {{\qmdd}};

    \node[state, right = 1.3cm of q1] (a1) {$x_3$};
    \node[state, below = of a1] (a3) {$x_2$};
    \node[state, below = of a3] (a4) {$x_1$};
    \node[draw,rectangle,minimum size=0.5cm, below= of a4,minimum size=10pt,inner sep=1pt] (w4) {1};

    \draw[<-] (a1) --++(90:.5cm) node[right=.2cm,pos=.7,] {$\nicefrac 1{\sqrt 2} \cdot \id^{\otimes3}$} node[left,pos=.8] {};
    \draw[e0=25] (a1) edge  node[] {} (a3);
    \draw[e1=25] (a1) edge  node[pos=.3,right] {$X \otimes X$} (a3);
    \draw[e0=25] (a1) edge  node[] {} (a3);
    \draw[e1=25] (a3) edge  node[pos=.3,right] {$0$} (a4);
    \draw[e0=25] (a3) edge  node[] {} (a4);
    \draw[e0=25] (a4) edge  node[] {} (w4);
    \draw[e1=25] (a4) edge  node[pos=.3,right] {0} (w4);

\node[right=0.8cm of sldd]    (limdd) {{\limdd}};

\node[state, right=1.8cm of a1,yshift=.cm] (h1) {$h_1$} ;
\node[state, right=.65cm of h1] (h2) {$h_2$} ;
\node[state, right=.65cm of h2] (h3) {$h_3$} ;
\node[state, right=.65cm of h3] (h4) {$h_4$} ;
\node[state, below=1.7cm of h1, xshift=0.5cm] (v1) {$v_1$} ;
\node[state, right=.7cm of v1] (v2) {$v_2$} ;
\node[state, right=.7cm of v2] (v3) {$v_3$} ;

\draw[e1] (h1) edge  node[] {} (v1);
\draw[e1=-10] (h1) edge  node[] {} (v2);
\draw[e1=-10] (h1) edge  node[] {} (v3);
\draw[e1=10] (h2) edge  node[] {} (v1);
\draw[e1] (h2) edge  node[] {} (v2);
\draw[e1=-10] (h2) edge  node[] {} (v3);
\draw[e1=10] (h3) edge  node[] {} (v1);
\draw[e1] (h3) edge  node[] {} (v2);
\draw[e1=-10] (h3) edge  node[] {} (v3);
\draw[e1=10] (h4) edge  node[] {} (v1);
\draw[e1=10] (h4) edge  node[] {} (v2);
\draw[e1] (h4) edge  node[] {} (v3);

\node[draw,rectangle, below=0.2cm of h1, fill=white, opacity=1] (1) {$i\pi/3$};
\node[draw,rectangle, below=0.2cm of h2, fill=white, opacity=1] (1) {$i\pi/3$};
\node[draw,rectangle, below=0.2cm of h3, xshift=-0.2cm, fill=white, opacity=1] (1) {$-i\pi/3$};
\node[draw,rectangle, below=0.2cm of h4, fill=white, opacity=1] (1) {$-i\pi/3$};

\node[above=0.cm of h3,xshift=-0.6cm]    (a31) {{hidden layer}};
\node[below=0.cm of v2,xshift=0.1cm]    (a31) {{visible layer}};

\node[right=2.8cm of limdd]    (rbm) {{RBM}};

\node[right=.7cm of h4] (a30) {$A_3^0=\begin{bmatrix}1 & 0\end{bmatrix}$};
\node[right=2cm of a30.west, anchor=west]   (a31) {$A_3^1=\begin{bmatrix}0 & 1\end{bmatrix}$};
\node[below=1.cm of a30.west, anchor=west] (a20) {$A_2^0=\begin{bmatrix}1 & 0 \\ 0 & 0\end{bmatrix}$};
\node[below=1.cm of a31.west, anchor=west] (a21) {$A_2^1=\begin{bmatrix}0 & 0 \\ 0 & 1\end{bmatrix}$};
\node[below=1.cm of a20.west, anchor=west] (a10) {$A_1^0=\begin{bmatrix}\frac1{\sqrt 2} \\ 0\end{bmatrix}$};
\node[below=1.cm of a21.west, anchor=west] (a11) {$A_1^1=\begin{bmatrix}0 \\ \frac1{\sqrt 2}\end{bmatrix}$};

\node[right=3.8cm of rbm]    (a31) {{MPS}};

\end{tikzpicture}
\caption{The $3$-qubit GHZ state $\nicefrac{1}{\sqrt 2}(\ket{000} + \ket{111})$, displayed using different data structures.
The unlabelled edges for \add, \qmdd, \limdd have resp. label 1, 1, $\id$.
In the RBM, the weights of edges incident to $h_1,h_2$ ($h_3, h_4$) are all $i\pi/3$ ($-i\pi/3$); the hidden node biases $(\beta_{h_1}, \beta_{h_2}, \beta_{h_3}, \beta_{h_4}) = i\pi \cdot (1/3, 2/3, -1/3, -2/3)$; the visible node biases $\alpha_{v_1}=\alpha_{v_2}=\alpha_{v_3}=0$.
}
\label{fig:ghz-examples}
\end{figure*}

\paragraph{Succinctness \& Tractability.}
In \cref{sec:succinctness-separations}, we find that the succinctness of MPS, RBM and \limdd are incomparable.
We also find that MPS is strictly more succinct than \qmdd, whereas previous research had suggested that these were incomparable \cite{burgholzer2023tensor}.
In \cref{sec:tractable-queries}, we give states which \qmdd and \limdd can compactly represent, but which take exponential time to apply Hadamard and swap gates.
Finally, we prove that computing fidelity and inner product in RBM and \limdd is intractable assuming the exponential time hypothesis.
An operation on a data structure is said to be \emph{tractable} if it runs in polynomial time in the size of the input \cite{darwiche2002knowledge}.

\paragraph{Rapidity.}
Considering tractability and succinctness separately can deceive; for instance, when representing a quantum state as an amplitude vector, most operations (e.g., applying a single-qubit gate) are polynomial time in the size of the size of this vector; thus, these exponential-time operations would be called tractable. Of course, this is only because this vector has exponential size in the number of qubits.
To mend this deficiency,  \citeauthor{lai2017new} introduced the notion of \concept{rapidity}, which reflects the idea that exponential operations may be preferable if a representation can be exponentially more succinct.

In \cref{sec:rapidity}, we generalize the definition of rapidity for non-canonical data structures, which allows us to study rapidity of MPS and RBM.
We also give a simple sufficient condition for data structures to be more rapid than others (for all operations).
We use it to settle several rapidity relations, showing surprisingly that MPS is strictly more rapid than \qmdd. %
Since we are unaware of a previous comparison, our knowledge was consistent with them being incomparable.

Proofs of all claims are provided in the appendix.

\section{Data Structures for Quantum States}
\label{sec:preliminaries}

\subsection{Quantum Information}
\label{sec:quantum-intro-in-main-text}

To understand our main results, it suffices to know the following
(we refer to \cref{sec:prelims2} and \citeauthor{nielsen2000quantum} for a more elaborate treatment of quantum information and computing).
A quantum state $\phi$ on $n$ quantum bits or qubits, written in Dirac notation as $\ket{\phi}$, can be represented as a vector of $2^n$ complex \emph{amplitudes} $a_k \in \mathbb{C}$  such that the vector has unit norm, i.e. $\sum_{k=1}^{2^n} |a_k|^2 = 1$, where $|z| = \sqrt{a^2 + b^2}$ is the modulus of a complex number $z = a + b\cdot i$ with $a, b\in \mathbb{R}$.
We also write $a_k = \langle k | \phi\rangle$.
Two quantum states $\ket{\phi}, \ket{\psi}$ are equivalent if $\ket{\phi} = \lambda \ket{\psi}$ for some complex $\lambda$.
The inner product between two quantum states $\ket{\phi}, \ket{\psi}$ is $\langle \phi | \psi \rangle = \sum_{k=1}^{2^n} \left(\langle k | \phi\rangle\right)^* \cdot \langle k | \psi\rangle$ where $z^* = a - bi$ is the complex conjugate of 
the complex number $z = a+b\cdot i$.
The fidelity $|\langle \phi | \psi \rangle|^2 \in [0, 1]$ is a measure of closeness, with fidelity equalling 1 if and only if $\ket{\phi}$ and $\ket{\psi}$ are equivalent.

A quantum circuit consists of gates and measurements.
An $n$-qubit \emph{gate} is a $2^n \times 2^n$ unitary matrix which updates the $n$-qubit state vector by matrix-vector multiplication.
A $k$-local gate is a gate which effectively only acts on a subregister of $k\leq n$   qubits.
Often-used gates go by names Hadamard ($H$), $T$, Swap, and the Pauli gates $X, Y, Z$. Together with the two-qubit gate controlled-$Z$, the $H$ and $T$ gate form a universal gate set (i.e. any quantum gate can be arbitrarily well approximated by circuits of these gates only).
Next, a (computational-basis) measurement is an operation which samples $n$ bits from (a probability distribution defined by) the quantum state,
\emph{while also updating the state.}

We now elaborate on  the application domains from \cref{sec:application-domains}.
\concept{Simulating a circuit} using the data structures in this work starts wlog by constructing a data structure for some (simple) initial state $\ket{\phi_0}$, followed by manipulating the data structure by applying the gates and measurements $A,B,\dots$ of the circuit one by one, i.e.: $\ket{\phi_1} =A\ket{\phi_0}$,  $\ket{\phi_2} =B\ket{\phi_1}$, etc.
The data structure supports strong (weak) simulation, if, for each measurement gate, it can produce (a sample of) the probability function (as a side result of the state update).
Next, the quantum circuit of \concept{variational methods} consist of a quantum circuit; then the output state $\ket{\phi}$ is used to compute $\langle \phi | O | \phi\rangle$ for some linear operator $O$ (an \concept{observable}).
This computation reduces to $\langle \phi | \psi \rangle$ with $\ket{\psi} := O\ket{\phi}$ (i.e., computing simulation and fidelity).
Last, \concept{circuit verification} relies on checking approximate or exact equivalence of quantum states.
This extends to unitary matrices (gates), which all data structures from this paper can represent 
also, but which we will not treat for simplicity.

\subsection{Data Structures}
\label{sec:data-structures-definitions}
We now define the data structures for representing quantum states considered in this work, with examples in \cref{fig:ghz-examples}.

\begin{definition}
[Inspired by \protect\citeauthor{fargier2014knowledge}]
A \concept{quantum-state representation (language)} %
 is a tuple $(D, n, \ket{.}, |.|)$ where $D$ is a set of data structures. Given $\alpha\in D$, $\ket{\alpha}$ is the (possibly unnormalized) quantum state it represents (i.e., the interpretation of $\alpha$), $|\alpha|$ is the size of the data structure, and $n(\alpha)$, or $n$ in short, is the number of qubits~of~$\ket{\alpha}$.
Finally, each quantum state should be expressible in the language.
\end{definition}

We will often refer to a representation as a data structure.
We define $\allformulae{D}{\phi} \defn \{\alpha \in D \mid \ket{\alpha} = \ket{\phi}\}$, i.e.,  the set of all data structures in language $D$ representing state~$\ket{\phi}$.
In line with quantum information\supp{ (see \cref{sec:quantum-intro})}, we say that data structures $\alpha, \beta$ are \concept{equivalent} if $\ket{\alpha} = \lambda \ket{\beta}$ for some $\lambda \in \mathbb{C}$.%

\paragraph{Vector.} 
A state vector $\alpha$ is a length-$2^n$ array of complex entries $a_k \in \mathbb{C}^{2^n}$ satisfying $\sum_{k=1}^{2^n} |a_k|^2 = 1$,  and interpretation $\ket \alpha = \sum_j \alpha_j\ket j$.
Despite its size, the vector representation is used in many simulators~\cite{jones2019quest}.
We mainly include the vector representation to show that considering operation tractability alone can be deceiving.

\paragraph{Matrix Product States (MPS).}
An MPS $M$ is a series of $2n$ matrices $A_k^x \in \mathbb{C}^{D_k \times D_{k-1}}$ where $x\in \{0, 1\}$, $k\in [n]$ and $D_k$ is the row dimension of the $k$-th matrix with $D_0 = D_{n} = 1$.
The interpretation $\ket{M}$ is determined as $\braket{\vec y | M} = A_n^{x_n} \cdots A_2^{x_2} A_1^{x_1}$ for $\vec y \in \{0, 1\}^n$.
The size of $M$ is the total number of matrix elements, i.e., $|M| = 2\cdot \sum_{k=1}^{n} D_{k} \cdot D_{k-1}$.
We will speak of $\max_{j\in [0 \dots n]} D_j$ as `the' bond dimension.

\paragraph{Restricted Boltzmann Machine (RBM).}
An $n$-qubit RBM is a tuple $\mathcal M=(\vec \alpha,\vec\beta, W,m)$,
where $\vec \alpha\in \mathbb C^n, \vec\beta\in\mathbb C^m$ for $m \in \mathbb{N}_{+}$ are \concept{bias vectors} and $W \in \mathbb{C}^{n \times m}$ is a \concept{weight matrix}.
An RBM $\mathcal M$ represents the state $\ket{\mathcal M}$ as follows.
\begin{align}
	\label{eq:rbm-coefficient}
	\braket{\vec x|\mathcal M}
	= & e^{\vec x^T\cdot \vec\alpha}\cdot \prod_{j=1}^m (1 + e^{\beta_j + {\vec x}^T\cdot {\vec{W}}_j})
\end{align}
where $\vec W_j$ is the $j$-th column of $W$, $\beta_j$ is the $j$-th entry of $\beta$ and where we write $\vec x^T\cdot W_j$ to denote the inner product of the row vector $\vec x^T$ and the column vector $\vec{W}_j$ \cite{chen2018equivalence}.
The size of $\mathcal{M}$ is $|\mathcal{M}| = n + m + n \cdot m$.
We say this RBM has $n$ \concept{visible nodes} and $m$ \concept{hidden nodes}.
A weight $W_{v,j}$ is an edge from the $v$-th visible node to the $j$-th hidden node.
The $j$-th hidden node is said to contribute the multiplicative term $(1 + e^{\beta_j + \vec x^T\cdot {\vec{W}}_j})$.

\newcommand{\edgelabels}{\mathcal{E}}
\newcommand{\leaflabels}{\mathcal{L}}
\paragraph{Quantum Decision Diagrams (QDD).} %

\begin{table}[t!]
\caption{
Various decision diagrams treated by the literature.
The column \emph{Merging strategy} lists the conditions under which two nodes $v,w$, representing subfunctions $f,g\colon \{0,1\}^k\to\mathbb C$ are merged.
Here $p,a\in \mathbb C$ are complex constants, $P_i$ are Pauli gates and
$f + a$ means the function $f(\vec x) + a$ for all $\vec x$.
We list SLDD$_+$ and AADD for completeness, but do not consider them.
}
\label{tab:dds}
\centering\scriptsize
\def\arraystretch{.9}
\begin{tabular}{|p{5.2cm}|c|}
\hline
\textbf{(Quantum) decision diagrams (and variants)}				& \textbf{Node merging strategy}   \\\hline
Decision Tree	& (no merging)  \\
\setstretch{.6} \add~\cite{bahar1997algebric}, MTBDD~\cite{clarke1993spectral}, QuiDD~\cite{viamontes2003improving}			& $f = g$   \\
\setstretch{.6} SLDD$_\times$~\cite{wilson2005decision,fargier2013semiring}, QMDD~\cite{miller2006qmdd} %
	& $f = p\cdot g$   \\
\setstretch{.6} SLDD$_+$~\cite{fargier2013semiring}, EVBDD~\cite{lai1994evbdd}	& $f = g + a$   \\
\setstretch{.6} AADD~\cite{sanner2005AffineADDs}, FEVBDD~\cite{tafertshofer1997factored} 		& $f = p\cdot g + a$  \\
LIMDD~\cite{vinkhuijzen2021limdd}			& $f = p  P_1\otimes \dots \otimes P_n \cdot g$    \\
\hline
\end{tabular}
\end{table}

We define a Quantum Decision Diagram to represent a quantum state, based on the Valued Decision Diagram~\cite{fargier2014knowledge} as instantiated on a domain of binary variables (qu\textbf{bit}s) and a co-domain of complex values (amplitudes).
A QDD $\alpha$ is a finite, rooted, directed acyclic graph $(V, E)$, where each node $v$ is labeled with an index $\nindex(v) \in [n]$ and leaves have index $0$.
In addition, a `root edge' $e_R$ (without a source node) points to the root node.
Each node has two outgoing edges, one labeled $0$ (the low edge) and one labeled $1$ (the high edge).
In addition, edge $e = vw$ pointing to a node $w$ with index $k$ has a label $label(e) \in \edgelabels_k$ for some edge label set $\edgelabels_k$; in this paper, $\edgelabels_k$ is a group (for \qmdd and \limdd below with $0$ added).
Also, each leaf node $v$ has a label $\lbl(v) \in \leaflabels$.
The size of a QDD is $|\alpha| = |V| + |E| + \sum_{v \in V} |\lbl(v)| + \sum_{e\in E \cup\{e_R\}} |\lbl(e)|$.
For simplicity, we require that no nodes are skipped, i.e. $\forall vw\in E \colon  id[v] = id[w] + 1$.\footnote{Asymptotic analysis is not affected by disallowing node skipping as it yields linear-size reductions at best~\cite{knuth4}.}
The semantics are:
\begin{itemize}
\item A leaf node $v$ represents the value $\lbl(v)$.
\item A non-leaf node $\lnode[v]{e_{\mathit{low}}}{}{e_{\mathit{high}}}{}$ represents\\ $\ket{v} = \ket{0} \otimes \ket{e_{\mathit{low}}} + \ket{1} \otimes \ket{e_{\mathit{high}}}$.
\item An edge $\ledge{e}{v}$ represents $\ket{e} = \lbl(e) \cdot \ket{v}$.
\end{itemize}

Consequently, any node $v$ at \concept{level} $k$, i.e., with $\nindex(v) = k$, is $k$ edges away from a leaf (along all paths) and therefore represents a $k$-qubit quantum state (or a complex vector).

In this paper, we consider the following types of QDDs.
We emphasize that an \add can be seen as a special case of \qmdd, which is a special case of \limdd.
\begin{description}
\item[\add:] ~~~~~$\forall k\colon  \edgelabels_k = \set{1},~ \leaflabels=\mathbb{C}$ \cite{bahar1997algebric}.
\item[\qmdd:] $\forall k\colon  \edgelabels_k = \mathbb{C},~ \leaflabels=\{1\}$ \cite{wilson2005decision}.
\item[\limdd:] ~$\forall k\colon \edgelabels_k = \paulilim_k \cup \{0\},~ \leaflabels=\{1\}$,~where 
	\hbox{$\paulilim_k \defn \{\lambda P \mid \lambda \in \mathbb{C} \setminus\{0\}, P \in \set{\id,X,Y,Z}^{\otimes k}\}$}, i.e., the group generated by $k$-fold tensor products of the single-qubit Pauli matrices \cite{vinkhuijzen2021limdd}:
\[
\id \hspace{-.5ex}\defn\hspace{-.5ex} \begin{pmatrix} 1 & 0\\ 0 & 1 \end{pmatrix}\hspace{-.5ex},
X \hspace{-.5ex}\defn\hspace{-.5ex} \begin{pmatrix} 0 & 1\\ 1 & 0 \end{pmatrix}\hspace{-.5ex},
Y \hspace{-.5ex}\defn\hspace{-.5ex} \begin{pmatrix} 0 & \hspace{-1.3ex}-i\\ i & 0 \end{pmatrix}\hspace{-.5ex},
Z \hspace{-.5ex}\defn\hspace{-.5ex} \begin{pmatrix} 1 & 0\\ 0 & \hspace{-1.3ex}-1 \end{pmatrix}
\]
\end{description}

Two isomorphic QDD nodes (\cref{def:iso}) $v, w$ with $\ket w = \ell \ket v$ can be \concept{merged} by removing $w$ and rerouting all edges $uw \in E$ incoming to $w$ to $v$, updating their edge labels
accordingly (i.e., $\lbl(uv) := \lbl(uw) \cdot \ell$).
\cref{tab:dds} summarizes merging strategy for the above QDDs and related versions.
If all isomorphic nodes are merged, we call a QDD \concept{reduced}.
We may assume a QDD is reduced (and even canonical), since it can be reduced in polynomial time and manipulation algorithms keep the QDD reduced using a \textsc{MakeNode} operation~\cite{bryant86,miller2006qmdd,fargier2013semiring,vinkhuijzen2021limdd}.

\begin{definition}[Isomorphic nodes]\label{def:iso}
QDD node $v$ is isomorphic to node $w$, if there exists an edge label $\ell \in \edgelabels_{\nindex(v)}, \ell \neq 0$ such that $\ell \cdot \ket{v}= \ket{w}$.
Since $\edgelabels_{\nindex(v)}$ is a group, $\ell \in \edgelabels_{\nindex(v)}$ implies $\ell^{-1} \in \edgelabels_{\nindex(v)}$ so $\ell^{-1}\ket{w}=\ket{v}$.
Hence isomorphism is an equivalence relation.
\end{definition}

It is well known that the variable order greatly influences QDD sizes~\cite{bollig1996improving,darwiche2011sdd,bova2016sdds}. Our results here assume any variable order: For instance, when we say that some structure is strictly more succinct than a QDD, it means that there is no variable order for which the QDD is more succinct than the other structure (for representing a certain worst-case state).

\section{Succinctness~of~Quantum~Representations}
\label{sec:succinctness-separations}

\let\oldsupp\supp
\renewcommand\supp[1]{}
\begin{figure}[b!]
\centering
\vspace{-1em}
\scalebox{.7}{
\begin{tikzpicture}[node distance=.7cm,minimum height=.5cm]
		\node[draw] (mps) 			   {MPS};
		\node[draw, right =2cm of mps, yshift=-1.5cm] (limdd) {\limdd};
		\node[draw, right =2cm of limdd, yshift=1.5cm] (rbm) {RBM};
		\node[draw, below = .5cm of limdd] (qmdd) {\qmdd};
		\node[draw, below = .5cm of qmdd] (add) {ADD};

		\draw[\bigarrowhead, bend left=20] (limdd.north west) to node[midway] {$\boldsymbol{\bigtimes}$}
											node[pos=0.3,left=1cm,below=0cm] {\textcolor{black}{\supp{\cref{thm:succ-limdd-vs-mps}}}}
											(mps.south east);
		\draw[\bigarrowhead, bend left=20, color=light-blue] (mps.south east) to node[midway] {$\boldsymbol{\bigtimes}$}
											node[pos=1,right=0cm,above=0.3cm] {\textcolor{black}{\supp{\cref{thm:succ-mps-vs-limdd}}}}
											(limdd.north west);

		\draw[\bigarrowhead, bend left=10, color=light-blue] (qmdd.west) to node[pos=.3,below left=-.5cm]
									{\textcolor{black}{\supp{\cref{thm:mps-exp-more-succinct-than-qmdd}}}} (mps);

		\draw[\bigarrowhead, bend left=20, color=light-blue] (add.west) to
													 (mps.south west);

		\draw[\bigarrowhead, bend left=10] (rbm.west) to node[midway] {$\boldsymbol{\bigtimes}$} 
													node[midway,above=0.125cm] {\supp{\cref{thm:succ-rbm-vs-mps}}}
											(mps.east);
		\draw[\bigarrowhead, bend left=10, color=light-blue] (mps.north east) to node[midway] {$\boldsymbol{\bigtimes}$} 
													node[midway,above=0.125cm] {\textcolor{black}{\supp{\cref{thm:add-rbm}}}}
											(rbm.north west);

		\draw[\bigarrowheadb, bend right=0, color=light-blue] (rbm.south west) to node[midway] {$\boldsymbol{\bigtimes}$}
											node[pos=.7,above left=-1mm] {\textcolor{black}{\supp{\cref{thm:succ-rbm-vs-limdd}}}}
											(limdd.north east);

		\draw[\bigarrowheadb, bend left=10, color=light-blue] (rbm) to node[midway] {$\boldsymbol{\bigtimes}$}
										   (qmdd.east);

		\draw[\bigarrowheadb, bend left=20, color=light-blue] (rbm.south east) to node[midway] {$\boldsymbol{\bigtimes}$}
											node[pos=.3,below right=-1mm] {\textcolor{black}{\supp{\cref{thm:add-rbm}}}} 
											 (add.east);

		\draw[\bigarrowhead, bend right=0] (qmdd) to  node[pos=.5,right=.01cm] 
									{\supp{\cref{thm:succ-limdd-more-qmdd}}}	 (limdd);
		\draw[\bigarrowhead, bend right=0] (add) to node[pos=.5,right=.01cm]
									{\supp{\cref{thm:succ-qmdd-more-add}}} 	(qmdd);

	\end{tikzpicture}
}
\caption{Succinctness relations between various classical data structures for representing quantum states.
Solid arrows $A\to B$ denote $B\strictlymoresuccinctthan A$, i.e., $B$ is strictly more succinct than~$A$.
Crossed arrows $A\crossedto B$ denote a separation $B \notmoresuccinctthan A$; a bidirectional crossed arrow implies incomparability.
Blue arrows indicate novel relations that we identified.
}
\label{fig:succinct}
\end{figure}
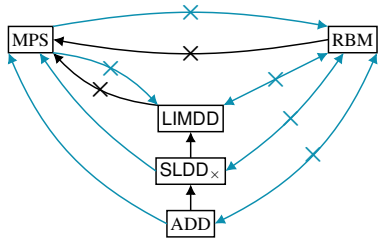
\renewcommand\supp\oldsupp

In this section, we compare the succinctness of the data structures introduced in \cref{sec:preliminaries}.
A language $L_1$ is as succinct as $L_2$ if $L_2$ uses at least as much space to represent any quantum state as $L_1$, up to polynomial factors.
Thus, a more succinct diagram is more expressive when we constrain ourselves to polynomial (or asymptotically less) space.
We define a language $L_1$ to be \concept{at least as succinct} as $L_2$ (written $L_1 \atleastassuccinctas L_2$), if there exists a polynomial $p$ such that for all $\beta\in L_2$, there exists $\alpha \in L_1$ such that $\ket{\alpha} = \ket{\beta}$ and $|\alpha| \leq p(|\beta|)$.
We say that $L_1$ is more succinct ($L_1 \strictlymoresuccinctthan L_2$) if $L_1$ is at least as succinct as $L_2$ but not vice versa.

The results of this section are summarized by \cref{thm:succinctness} (\cref{fig:succinct})\supp{ and proved in \cref{sec:proofs-of-sec-succinctness}}.
We now highlight our novel results.
First, we show that MPS \strictlymoresuccinctthan \qmdd, by (i) describing a simple and efficient way to write a \qmdd as an MPS, and (ii) finding a family of states, which we call \sumstate, which have small MPS but which require exponentially large \qmdd. It follows that MPS \strictlymoresuccinctthan \add.
Second, we strengthen (ii) to show that the same state also requires exponentially large \limdds, thus we establish \limdd \notmoresuccinctthan MPS. The reverse here also holds because MPS cannot efficiently represent certain stabilizer states, while \limdd is small for all stabilizer states, as shown in~\cite{vinkhuijzen2021limdd}.
Lastly, we show that RBMs can efficiently represent \sumstate.
Since it is well-known that RBM explodes for parity functions, which are small for QDD~\cite{martens2013representational}, we establish incomparability of RBM with all three QDDs.

\begin{restatable}{theorem}{succinctness}
\label{thm:succinctness}
The succinctness results in \cref{fig:succinct} hold.
\end{restatable}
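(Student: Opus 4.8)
The plan is to prove each arrow of \cref{fig:succinct} as a separate lemma; since \atleastassuccinctas{} is transitive, only a handful of extremal constructions are really needed. Three structural facts come essentially for free: an \add becomes a \qmdd by folding each leaf value onto the edges pointing at it (so all leaves carry label $1$), and a \qmdd is a \limdd in which every scalar edge label $\lambda$ is read as the Pauli-LIM $\lambda\,\id^{\otimes k}$; both rewrites are linear-size, giving \qmdd \atleastassuccinctas \add{} and \limdd \atleastassuccinctas \qmdd. For the strictness of the two bottom solid arrows I would use one product state $\ket{\psi}=\bigotimes_{i=1}^n(\ket 0+c_i\ket 1)$ with multiplicatively independent constants, say $c_i=2^{2^i}$: it has bond dimension $1$ and a one-node-per-level \qmdd and \limdd, but in \emph{every} variable order the $2^j$ prefix products $\prod_{i\in S}c_i$ are pairwise distinct, so every \add for it has $2^{\Omega(n)}$ nodes; this settles \add \notmoresuccinctthan \qmdd, \add \notmoresuccinctthan \limdd, and \add \notmoresuccinctthan MPS simultaneously. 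Finally \qmdd \notmoresuccinctthan \limdd, and one half of the \limdd/MPS incomparability (MPS \notmoresuccinctthan \limdd), I would inherit from \cite{vinkhuijzen2021limdd}: all stabilizer states have linear-size \limdds, while a suitable graph state has exponential \qmdd in every order, and an expander graph state has Schmidt rank $2^{\Omega(n)}$ across every balanced cut, hence exponential MPS bond dimension in every order.

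Second, the new containment MPS \atleastassuccinctas \qmdd (which, composed with \qmdd \atleastassuccinctas \add, also gives MPS \atleastassuccinctas \add). Given a reduced \qmdd with a fixed variable order, I would build an MPS with the same order whose $k$-th bond dimension $D_k$ equals the number of \qmdd nodes at level $k$ (so $D_0=D_n=1$); the entry of $A_k^x$ indexed by a level-$k$ node $v$ and a level-$(k{-}1)$ node $w$ is $\lbl(e)$ when $e$ is the $x$-edge from $v$ to $w$, and $0$ otherwise, with the root-edge label absorbed into $A_n^x$. A straightforward induction on the level shows $\braket{\vec y | M}$ equals the \qmdd amplitude, and $|M|=2\sum_k D_k D_{k-1}\le 2n\,|V|^2$ with every entry a single \qmdd edge label, so $|M|$ is polynomial in the \qmdd size.

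Third, and at the core, is the family $\sumstate$ (constructed in the appendix), for which I would prove four things: (i) it admits a small MPS in some order --- having ``additive'' amplitudes, a constant number of $2\times 2$ transfer matrices carrying a running partial sum suffice; (ii) it admits a small RBM, exhibited by an explicit $(\vec\alpha,\vec\beta,W)$ whose termwise expansion reproduces the amplitudes; (iii) every \qmdd for $\sumstate$, in \emph{every} variable order, has $2^{\Omega(n)}$ nodes; and (iv) likewise every \limdd for $\sumstate$ in every order. For (iii), for any ordering one exhibits a level $k=\Theta(n)$ at which $2^{\Omega(n)}$ induced subfunctions are pairwise non-proportional, because they differ by an additive term no scalar can absorb. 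Step (iv) --- the main obstacle --- strengthens this to: the same subfunctions are pairwise unrelated by any tensor-product Pauli $\lambda\,P_1\otimes\cdots\otimes P_{n-k}$. Since such an operator is a signed/phased permutation matrix (nonzero entries in $\{\pm 1,\pm i\}$) and the amplitudes of $\sumstate$ are real with almost all of them nonzero, any candidate equivalence is forced to be a positive scalar times a genuine permutation; comparing the multisets of amplitude values --- for instance their minimum and maximum --- then pins the scalar to $1$ and the permutation to the identity, a contradiction. Items (iii)--(iv) give \add, \qmdd, \limdd \notmoresuccinctthan MPS (in particular MPS \strictlymoresuccinctthan \qmdd, MPS \strictlymoresuccinctthan \add, and \limdd \notmoresuccinctthan MPS), and item (ii) gives \add, \qmdd, \limdd \notmoresuccinctthan RBM.

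It remains to separate RBM in the other direction. For RBM \notmoresuccinctthan \add{} (hence also versus \qmdd and \limdd) it suffices to exhibit one state that is $O(n)$-sized for every QDD but requires exponential-size RBM; following \cite{martens2013representational}, a parity-type family has at most two distinct subfunctions per level in every order --- so linear-size for \add, \qmdd and \limdd --- yet no small RBM. For MPS \notmoresuccinctthan RBM I would take an RBM with a sufficiently dense weight matrix whose state has Schmidt rank $2^{\Omega(n)}$ across every balanced bipartition (volume-law entanglement), forcing exponential MPS bond dimension in every order. Together with the first paragraph (MPS \notmoresuccinctthan \limdd) and the third (\limdd \notmoresuccinctthan MPS via $\sumstate$), this closes every incomparability in \cref{fig:succinct}. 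The central difficulty throughout is item (iv): ruling out \emph{all} exponentially many Pauli relations between subfunctions, in \emph{every} variable order.
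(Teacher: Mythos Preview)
Your overall architecture matches the paper's: the same containments \add$\to$\qmdd$\to$\limdd, the same \qmdd$\to$MPS bipartite-adjacency construction, and \sumstate as the common witness against the QDDs. Two points, however, are genuine gaps.

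\textbf{The \limdd lower bound for \sumstate (your item (iv)).} Your argument hinges on the claim that ``the amplitudes of \sumstate are real''. They are not: the paper's \sumstate has amplitude function $f(\vec x)=1+e^{i\pi\sum_j x_j 2^{-j-1}}$, and for every $\vec x\neq 0$ the exponent lies strictly in $(0,1)$, so $f(\vec x)\in\mathbb C\setminus\mathbb R$. Consequently your ``signed permutation forces a positive scalar, then compare min/max of the amplitude multiset'' step collapses: a Pauli-LIM is $\lambda P$ with $\lambda\in\mathbb C^{*}$ arbitrary, so nothing pins $\lambda$ to be positive real, and there is no meaningful min/max on complex values. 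The paper's proof is quite different and does not use realness at all: it works at level~$1$ (subfunctions on a single remaining variable), observes that $f$ is injective and nowhere zero and that $(\vec x,\vec y)\mapsto f(\vec x)/f(\vec y)$ is injective off the diagonal, and then checks the four single-qubit Paulis directly to see that no $\lambda P\in\paulilim_1$ can identify $f_{\vec a}$ with $f_{\vec c}$ for $\vec a\neq\vec c$. That ratio-injectivity is the actual engine; your sketch does not supply a substitute for it.

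\textbf{MPS $\notmoresuccinctthan$ RBM.} Asserting the existence of ``an RBM with a sufficiently dense weight matrix whose state has volume-law entanglement'' is not a proof; you would need an explicit family together with a Schmidt-rank lower bound across \emph{every} balanced cut. The paper avoids this entirely by citing that RBMs represent all stabilizer (in particular graph) states efficiently \cite{zhang2018efficient}, and then reusing the known exponential MPS lower bound for cluster states. Unless you can point to a concrete RBM family with a proven entanglement lower bound, you should route through stabilizer states as the paper does.

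Minor differences that are fine: your product-state witness for \add$\notmoresuccinctthan$\qmdd is a clean alternative to the paper's citation of \cite{fargier2013semiring}; and for RBM$\notmoresuccinctthan$\add the paper uses the inner-product function $IP'$ (interleaved order) rather than plain parity, but both rely on \cite{martens2013representational} in the same way.
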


\section{Tractability of Quantum Operations}
\label{sec:tractable-queries}

In this section, we investigate for each data structure (DS) the tractability of the main relevant queries and manipulation operations for the different applications discussed \cref{sec:application-domains}.

By a \concept{manipulation operation}, we mean a map $D^c \rightarrow D$ and by a \concept{query operation} a map $D^c  \rightarrow \mathbb{C}$, where $D$ is a class of data structures and $c \in \mathbb N_{\ge 1}$ the number of operands.
We say that a class of data structures $D$ \emph{supports} a (query or manipulation) operation $OP(D)$, if there exists an algorithm implementing $OP$ whose runtime is polynomial in the size of the operands, i.e.,  $\sizeof{\phi_1} + \dots  + \sizeof{\phi_c}$.

\newcolumntype{R}[2]{%
    >{\adjustbox{angle=#1,lap=\width-(#2)}\bgroup}%
    l%
    <{\egroup}%
}
\newcommand\rot[1]{\multicolumn{1}{R{90}{0em}|}{#1}}

The operations whose tractability we investigate are:
\begin{itemize}
	\item \samp: Given a  DS  representing $\ket\phi$, sample the measurement outcomes $\vec x\in\{0,1\}^n$ from measuring all qubits of $\ket\phi$ in the computational basis.
	\item \pro: Given a DS representing $\ket\phi$ and $\vec x\in\{0,1\}^n$, compute the probability to get $\vec x$ when measuring $\ket\phi$.
	\item \textbf{Gates} (\had, Pauli \xyz, Controlled-Z \cz, \swap, \T): Given a  DS  representing $\ket\phi$ and a one- or two-qubit gate $U$, construct a DS representing $U\ket\phi$. This gate set is universal \cite{bravyi2005universal} and is used by many  algorithms \cite{vandaele2023optimal}.
	\item \loc (Local gates) as general case of Gates: Given a DS representing $\ket\phi$, a constant $k\in\mathbb{N}_{\geq 1}$ and a $k$-local gate $U$, construct a DS representing the state $U\cdot\ket\phi$.
	\item \addi: Given DSs for states $\ket\phi,\ket\psi$, construct a DS representing the state $\ket\phi + \ket\psi$.
	\item \inprod (inner product) and \textbf{Fidelity}: Given DSs for states $\ket\psi,\ket\psi$, compute $\braket{\phi|\psi}$ and $|\braket{\phi|\psi}|^2$.
	\item \eq (Equality): Decide whether the states represented by two given data structures are equivalent.
\end{itemize}

\begin{table}[t!]
\caption{Tractability of queries and manipulations on the data structures analyzed in this paper (single application of the operation). 
A \Yes means the data structure supports the operation in polytime, a \Yar means supported in randomized polytime, and \No means the data structure does not support the operation in polytime.
A \Cond means the operation is not supported in polytime unless $P=NP$.
? means unknown.
The table only considers deterministic algorithms (for some ? a probabilistic algorithm exists, e.g., for \inprod on RBM).
Novel results are blue and underlined.
}
\setlength{\tabcolsep}{4pt}
\def\arraystretch{1.1}
\footnotesize
\centering
\begin{tabular}{|l@{\hspace{10pt}}|| *{5}{c|}| *{20}{c|}}
\hline
 & \multicolumn{5}{c||}{Queries} & \multicolumn{7}{c|}{Manipulation operations} \\
	& \rot{\samp} & \rot{\pro} & \rot{\eq}  & \rot{\inprod} & \multicolumn{1}{R{90}{0em}||}{\fid}
	& \rot{\addi} & \rot\had & \rot{\xyz} & \rot\cz & \rot{\swap} & \rot{\loc} & \rot{\T-gate} \\
\hline
Vector& \Yar & \Yes & \Yes & \Yes & \Yes & \Yes & \Yes & \Yes & \Yes & \Yes & \Yes & \Yes \\
\hline
ADD \supp{~~~~~~~~~~\hfill \cref{app:trac:add}}  	& \Yar	& \Yes	& \Yes	& \Yes & \Yes
		& \Yes	& \Yes 	& \Yes	& \Yes	& \Yes	& \Yes	& \Yes \\
\hline
\qmdd \supp{\hfill \cref{app:trac:qmdd}}
		& \Yar	& \Yes	& \Yes	& \novelunderline{\Yes} & \novelunderline{\Yes}
		& \No	& \novelunderline{\No} 	& \novelunderline{\Yes}	& \novelunderline{\Yes}	& \novelunderline{\No}	& \novelunderline{\No}	& \novelunderline{\Yes} \\
\hline 
	\limdd \supp{\hfill \cref{app:trac:limdd}}
	 	& \Yar	& \Yes	& \Yes	& \novelunderline{\Cond} & \novelunderline{\Cond}
		& \novelunderline{\No}	& \novelunderline{\No}	& \Yes	& \Yes	& \novelunderline{\No}	& \novelunderline{\No}	& \novelunderline{\Yes}  \\
\hline 
MPS   \supp{\hfill \cref{app:trac:mps}}
	  & \Yar & \Yes & \Yes & \Yes & \Yes & \Yes 
	  & \Yes & \Yes & \Yes & \Yes & \Yes & \Yes  \\
\hline 
	RBM   \supp{\hfill \cref{app:trac:rbm}}
	 & \Yar    & ? & ? & \novelunderline{\Cond} & \novelunderline{\Cond} & ? & ? & \Yes & \Yes & \Yes & ? & \Yes \\
\hline 
\end{tabular}
\label{tab:tractability}
\end{table}

We are motivated to study these operations by their applicability in the three application domains from \cref{sec:application-domains}.
First, classical simulation of quantum circuits includes \gates, as well as \pro (strong simulation) and \samp (weak simulation).
Although \addi is not, technically speaking, a quantum operation, we include it because addition of quantum states can happen due to quantum operations.
For example, if we apply first a Hadamard gate, and then a measurement, to the state $\ket0\ket\phi + \ket1\ket\psi$, we may obtain the state $\ket 0\otimes (\ket\phi+\ket\psi)$.
For a data structure, this resulting state is at least as difficult to represent as the sum $\ket\phi + \ket\psi$.
Therefore, it is instructive to check which data structures support \addi: it allows us to explain why certain gates are not tractable for a data structure (such as the swap and general local gates).
In particular, all decision diagram implementations support an explicit addition subroutine \cite{miller2006qmdd,Clarke2001}.
In addition, \inprod and \fid (and \loc) are required in variational methods and quantum circuit verification involves \textbf{Eq}.

For the operations listed above and the languages from \cref{sec:preliminaries}, we present an overview of existing and novel tractability results in \cref{thm:tractability} (\cref{tab:tractability})\supp{, with proofs for all entries in \cref{sec:proofs-of-sec-tractability}}.
The novel results in this work are the hardness results (denoted \Cond)  for \inprod and \fid  on  \limdd and RBM (\cref{thm:fidelity} and for \inprod we can reduce from \fid)
and unsupported manipulations (denoted \No) on \qmdd and \limdd.
More generally, our proof shows that computing \fid is hard for any data structure which succinctly represents both all graph states and the Dicke state, such as RBM and \limdd.
A fidelity algorithm for \qmdd was mentioned by \cite{burgholzer2021random}, but, to the best of our knowledge, had not previously been described or analyzed.
\begin{restatable}{theorem}{thmtractability}
\label{thm:tractability}
The tractability results in \cref{tab:tractability} hold.
\end{restatable}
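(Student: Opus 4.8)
The plan is to establish \cref{tab:tractability} cell by cell, organizing the entries by proof technique rather than by data structure. The cells split into four kinds: positive results (\Yes/\Yar) obtained by giving or adapting an algorithm; negative results (\No) ruling out any polynomial-time manipulation; conditional hardness results (\Cond); and a handful of genuinely new positive results for \qmdd. I would dispose of the positive baseline first, citing the relevant DD, tensor-network, and RBM literature and keeping each argument to a sentence or two. For the vector representation every listed operation is a direct linear-algebra computation on an object of size $2^n$, hence polynomial in the input. For \add and the other QDDs, \samp reduces to computing single-qubit marginals and sampling qubit by qubit (each marginal being a \pro-style query on a sub-diagram), \pro is a product of labels along one root-to-leaf path, \eq is identity of canonical forms, and the gates \xyz, \cz and \T act either purely on edge labels or by a single \textsc{MakeNode} pass; these are standard. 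The two new positive cells, \inprod and \fid on \qmdd, I would obtain by a direct recursive algorithm on pairs of reduced \qmdd nodes that recurses on the low and high children and memoizes on node pairs, so the number of distinct subproblems, and hence the running time, is polynomial in $|\phi_1|\cdot|\phi_2|$ (fidelity is then the squared modulus of that inner product). For MPS, \samp/\pro/\inprod/\fid/\addi and all the listed gates are the usual tensor-train contractions and gauge manipulations, each raising the bond dimension by at most a constant factor (two, for a two-qubit gate), hence polynomial; \eq uses the known canonical-form test for MPS. For RBM, the cells marked \Yes (\xyz, \cz, \swap, \T) hold because each such gate can be absorbed into the bias vectors and weight matrix with only $O(1)$ extra hidden nodes, keeping $|\mathcal M|$ polynomial, and the \samp cells (all \Yar) follow by the qubit-by-qubit argument above and, for RBM, by a known dedicated randomized sampling result.

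The \No cells for \qmdd and \limdd follow from the succinctness separations of \cref{sec:succinctness-separations}. The uniform principle is that the output of a supported manipulation is itself a valid data structure of polynomial size, so if some operation maps a polynomial-size diagram to a state that provably has no polynomial-size diagram, that operation cannot run in polynomial time. For \addi I would take $\ket\phi,\ket\psi$ with small \qmdd (resp.\ \limdd) whose sum is, up to normalization, the \sumstate family or a close variant, which \cref{thm:succinctness} shows needs exponential-size \qmdd (resp.\ \limdd). For \had on \qmdd, and for \cz, \swap, and the general \loc case, I would likewise pick a small-diagram state on which the named gate produces such a sum-like state; alternatively one observes that these gates can simulate \addi---a controlled operation followed by a measurement that discards a qubit realizes $\ket\phi+\ket\psi$, exactly as in the discussion preceding \cref{tab:tractability}---so intractability of \addi propagates. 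For \limdd the same states work because the Pauli-LIM edge labels do not help, again as witnessed by \limdd \notmoresuccinctthan MPS via \sumstate.

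Finally, the \Cond cells---\inprod and \fid on \limdd and on RBM---rest on \cref{thm:fidelity}, which I would prove in the general form stated in the text: \fid is intractable (unless $\mathrm P=\mathrm{NP}$, and violating ETH) for \emph{any} language admitting polynomial-size representations of all graph states and of the Dicke state. The reduction notes that the inner product of a graph state $\ket{G_A}$ with a fixed-Hamming-weight Dicke state equals, up to an explicit normalization factor, the signed count $\sum_{|S|=k}(-1)^{e_A(S)}$ of $k$-subsets $S$ weighted by the parity of the number of edges of $A$ inside $S$; evaluating this quantity encodes a \#P/NP-hard problem, so an efficient \fid algorithm---and hence \inprod, since \fid reduces to \inprod---for such a language would solve it, and the size of the reduced instance yields the ETH consequence. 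One then verifies the hypotheses for both languages: graph states are stabilizer states and so have linear-size \limdd~\cite{vinkhuijzen2021limdd}, the Dicke state has a polynomial-size \limdd (and \add), and both the graph states and the Dicke state also admit polynomial-size RBMs. The main obstacle is exactly this lemma---identifying the hard problem the graph-state/Dicke-state inner product encodes, showing the encoding is faithful and tight enough for the ETH statement, and checking the two succinct-representation hypotheses---whereas the positive baseline is routine adaptation of published algorithms and the \No cells are immediate corollaries of the separations in \cref{thm:succinctness}.
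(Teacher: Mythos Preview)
Your high-level organization matches the paper's, and most individual cells are handled the same way or by an equally valid alternative (e.g., your direct memoized recursion for \inprod and \fid on \qmdd works just as well as the paper's route via the \qmdd-to-MPS translation). Two points deserve correction, one minor and one substantive.

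\textbf{Minor.} You list \cz among the \No manipulations for \qmdd. In the table \cz is \Yes for both \qmdd and \limdd; the paper gives a cached recursive algorithm (\textsc{ApplyControlledZ}) for \qmdd and cites the controlled-Pauli result for \limdd. Your addition-simulation argument does not apply to \cz, so this is simply a slip to remove.

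\textbf{Substantive.} In your \Cond argument you write that the inner product $\braket{D_n^k|G}$ equals (up to normalization) the signed count $\sum_{|S|=k}(-1)^{e_G(S)}=e(G,k)-o(G,k)$, and then conclude that an efficient \fid algorithm would solve the hard counting problem. But \fid returns $|\braket{D_n^k|G}|^2$, which only yields $|e(G,k)-o(G,k)|$, not the signed quantity. Since $e(G,k)+o(G,k)=\binom{n}{k}$ is known, this leaves a two-fold ambiguity in $e(G,k)$, and you cannot directly recover \#\textsc{EVEN SUBGRAPHS} from a single fidelity call. The paper closes this gap by introducing an intermediate problem, \textsc{EVEN ODD SUBGRAPH DIFFERENCE} (EOSD), whose output is exactly $|e(G,k)-o(G,k)|$, and then proving a separate polynomial-time Turing reduction from \#\textsc{EVEN SUBGRAPHS} to EOSD: starting from $\zeta_0=1$, it iteratively determines the sign of each $\zeta_j=e(G,j)-o(G,j)$ by adding $j-\ell$ isolated vertices to $G$ and using that on the augmented graph the EOSD oracle returns $|\zeta_\ell+\zeta_j|$. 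Without this (or an equivalent device), your argument proves hardness of \inprod but not of \fid; the implication ``\fid reduces to \inprod'' goes the wrong way for that purpose. You should either insert this sign-recovery step or explicitly argue that computing $|e(G,k)-o(G,k)|$ is already ETH-hard.
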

\begin{restatable}{theorem}{thmfidelity}
\label{thm:fidelity}
	Assuming the exponential time hypothesis, the fidelity of two states represented as \limdds or RBMs cannot be computed in  polynomial time.
	The proof uses a reduction from the \#\textsc{EVEN SUBGRAPHS} problem \cite{jerrum2017parameterised}.
\end{restatable}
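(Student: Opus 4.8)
The plan is to reduce the counting problem $\#\textsc{EVEN SUBGRAPHS}$ — which, by the cited work of \cite{jerrum2017parameterised}, has no subexponential-time algorithm under ETH — to the problem of computing the fidelity $|\braket{\phi|\psi}|^2$ between two efficiently-constructed \limdds (and similarly RBMs). Recall that an even subgraph of a graph $G=(V,E)$ is a subset $S\subseteq E$ in which every vertex has even degree; equivalently, writing $\vec{x}\in\{0,1\}^{|E|}$ for the indicator of $S$, the constraint is $M\vec{x}=\vec 0$ over $\mathbb{F}_2$, where $M$ is the $|V|\times|E|$ incidence matrix of $G$. So $\#\textsc{EVEN SUBGRAPHS}(G)$ equals the number of solutions to a homogeneous $\mathbb{F}_2$-linear system, i.e. $2^{|E|}$ times the probability that a uniformly random $\vec x$ satisfies all $|V|$ parity checks.

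\textbf{Step 1: encode the parity constraints as a graph state.} For each vertex $v\in V$, the constraint $\bigoplus_{e\ni v} x_e = 0$ is a parity function on the qubits incident to $v$. I would take $\ket{\phi}$ to be (a suitable rescaling of) the state whose amplitude on $\vec x\in\{0,1\}^{|E|}$ is $1$ if $\vec x$ satisfies all constraints and $0$ otherwise — this is, up to local Clifford equivalence and normalization, exactly the graph state (or a CSS stabilizer state) associated with the Tanner graph of $M$. Crucially, \limdd represents all graph states in polynomial size (the defining feature from \cite{vinkhuijzen2021limdd}), and RBMs can encode products of parity-check indicator functions compactly as well (each check contributes a bounded number of hidden units à la the construction referenced for \textsf{Sum} states, noting that an XOR-constraint indicator is a short product of $(1+e^{\cdots})$ factors after the standard trick). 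So $\ket\phi$ has a polynomial-size \limdd and RBM in $|E|+|V|$.

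\textbf{Step 2: make $\ket\psi$ the uniform (Dicke-like) state so that the inner product counts solutions.} I would take $\ket\psi$ to be the uniform superposition $\sum_{\vec x}\ket{\vec x}$, or — to force the appearance of the Dicke state so that the remark "hard for any data structure succinctly representing all graph states and the Dicke state" applies — a fixed-Hamming-weight Dicke state, choosing the target weight and the constraint encoding so that $\braket{\psi|\phi}$ is (proportional to) a fixed linear combination of the $\#\textsc{EVEN SUBGRAPHS}$ counts of $G$ restricted to subsets of a given size. Then $\braket{\phi|\psi}$, and hence the fidelity $|\braket{\phi|\psi}|^2$, is an efficiently-invertible function of $\#\textsc{EVEN SUBGRAPHS}(G)$: a polynomial-time fidelity algorithm would yield a polynomial-time algorithm for $\#\textsc{EVEN SUBGRAPHS}$, contradicting ETH via \cite{jerrum2017parameterised}. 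Since both $\ket\phi$ and $\ket\psi$ have polynomial-size \limdds (graph states, Dicke states) and RBMs, and the reduction runs in polynomial time, the theorem follows; the hardness of \inprod then follows because \fid reduces to \inprod (take the modulus squared of the computed inner product).

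\textbf{The main obstacle} I anticipate is Step 2's bookkeeping: getting the normalization and the exact combinatorial identity right so that a single fidelity value — not polynomially many queries — suffices to recover the count, and ensuring the Dicke/uniform state plus the constraint-encoding state are simultaneously small in \emph{both} target representations with compatible qubit layouts (so that the inner product is literally a weighted solution count rather than something entangled with an auxiliary register). If recovering $\#\textsc{EVEN SUBGRAPHS}$ cleanly requires evaluating the fidelity against Dicke states of several different Hamming weights (a small polynomial number), that is still fine under ETH, but I would prefer the cleaner uniform-superposition version where one evaluation gives $\braket{\phi|\psi}=2^{-|V|/2}\cdot(\text{number of even subgraphs})$ up to known scalars. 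A secondary technical point is confirming the RBM encoding of an XOR-constraint indicator is genuinely polynomial-size (it is: introduce auxiliary hidden units implementing the standard $1+e^{i\pi(\dots)}$ parity gadget), which I would handle by adapting the \textsf{Sum}-state construction already used in \cref{sec:succinctness-separations}.
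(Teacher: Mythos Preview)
There is a genuine gap: you have misread the definition of \#\textsc{EVEN SUBGRAPHS} from \cite{jerrum2017parameterised}. The problem there is \emph{not} to count edge subsets $S\subseteq E$ in which every vertex has even degree (the cycle space of $G$), but rather, given a graph $G$ on $n$ vertices and an integer $k$, to count the number of \emph{$k$-vertex induced subgraphs} $G[S]$ (with $S\subseteq V$, $|S|=k$) whose total number of edges is even. Your version is a homogeneous $\mathbb F_2$-linear system and is solvable in polynomial time by Gaussian elimination (the answer is $2^{|E|-\operatorname{rank}(M)}$), so a reduction from it proves nothing. In particular, your ``cleaner'' single-query identity $\braket{\phi|\psi}=2^{-|V|/2}\cdot(\text{number of even subgraphs})$ recovers a quantity that was already polynomial-time computable.

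The paper's reduction uses the correct (hard) version and is structurally different from your outline. One takes $\ket\phi=\ket{G}$ to be the \emph{graph state}, whose amplitude on a vertex subset $S$ is proportional to $(-1)^{|E(G[S])|}$ (a $\pm 1$ sign, not a $0/1$ indicator), and $\ket\psi=\ket{D_n^k}$ the Dicke state. Then $\braket{D_n^k|G}$ is proportional to $e(G,k)-o(G,k)$, the difference between the numbers of even and odd $k$-induced subgraphs; hence the fidelity yields only the \emph{absolute value} $|e(G,k)-o(G,k)|$. Recovering $e(G,k)$ itself from these absolute values is nontrivial and is handled via an intermediate problem (``EOSD'') and a bootstrapping argument that queries EOSD on $G$ augmented with varying numbers of isolated vertices to disambiguate the signs level by level. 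Your proposal does not contain any analogue of this sign-recovery step, and even your Dicke-state variant would stall at the absolute value. Both the graph state and the Dicke state have small \limdds and small RBMs, which is what makes the reduction apply to both representations.
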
%

\section{Rapidity of Data Structures}
\label{sec:rapidity}

\renewcommand\time{\ensuremath{\mathit{time}}}

The tractability criterion, studied in \cref{sec:tractable-queries}, sometimes gives a skewed picture of efficiency.
For example, looking naively at \cref{tab:tractability}, it seems that \add is faster than \qmdd when applying a Hadamard gate.
Yet there is no state for which applying the Hadamard gate on its \qmdd representation is slower than on its \add representation.
So succinctness actually consistently mitigates the worst-case runtime behavior. 
To remedy this shortcoming, \citeauthor{lai2017new} introduced the notion of \concept{rapidity} for canonical data structures.

In \cref{def:rapidity}, we generalize rapidity to support non-canonical data structures, such as MPS, RBM, d-DNNF~\cite{darwiche2001decomposable} and CCDD~\cite{lai2022ccdd}.
To achieve this, \cref{def:rapidity} requires that for a fixed input $\phi$ to $ALG_1$, among all equivalent inputs to $ALG_2$, there is one on which  $ALG_2$ is at least as fast as $ALG_1$.
It may seem reasonable to require, instead, that $ALG_2$ is at least as fast as $ALG_1$ on \emph{all} equivalent inputs; however, in general, there may be no upper bound on the size of the (infinitely many) such inputs; thus, such a requirement would always be vacuously false.

In the following, 
we write $\time(A, x)$ for the runtime of algorithm $A$ on input~$ x$.

\begin{restatable}[Rapidity for non-canonical data structures]{definition}{defrapidity}
	\label{def:rapidity}
	Let $D_1,D_2$ be two data structures and consider some $c$-ary operation $OP$ on these data structures.
    In the below, $ALG_1$ ($ALG_2$) is an algorithm implementing $OP$ for $D_1$ ($D_2$).
\begin{enumerate}[(a)]

    \item We say that $ALG_1$ is \emph{at most as rapid as} $ALG_2$ iff there exists a polynomial $p$ such that
    for each input $\phi=(\phi_1,\ldots,\phi_c)$ there exists an equivalent input $\psi=(\psi_1,\ldots,\psi_c)$, i.e., with $\ket{\phi_j}=\ket{\psi_j}$ for $j=1\ldots c$, for which 
			$\time(ALG_2, \psi) \leq p\left(\time(ALG_1, \phi)\right)$.
			We say that $ALG_2$ is \emph{at least as rapid as} $ALG_1$.
	\label{item:rapidity-algorithm}
	\item \label{item:rapidity-operations} We say that $OP(D_1)$ is \emph{at most as rapid as} $OP(D_2)$ if for each algorithm $ALG_1$ performing $OP(D_1)$, there is an algorithm $ALG_2$ performing $OP(D_2)$ such that $ALG_1$ is at most as rapid as $ALG_2$.
\end{enumerate}
\end{restatable}

We remark that, when applied to canonical data structures, \cref{def:rapidity} reduces to the definition by Lai et al. \supp{\cref{thm:equivalence-of-rapidity-definitions}}, except that Lai et al. allow the input to be fully read by the algorithm: $\time(ALG_1,x_1)\leq \poly(\time(ALG_2,x_2) \underline{+ |x_2|})$ (difference underlined). We omit this to achieve transitivity.
Rapidity indeed has the desirable property that it is a preorder, i.e., it is reflexive and transitive, as \cref{thm:rapidity-is-preorder} shows. \supp{\cref{sec:proofs-of-sec-rapidity} provides a proof.}
\begin{restatable}{theorem}{rapidityispreorder}
	\label{thm:rapidity-is-preorder}
	Rapidity is a preorder over data structures.
\end{restatable}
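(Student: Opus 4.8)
The plan is to verify the two defining properties of a preorder — reflexivity and transitivity — directly from \cref{def:rapidity}\ref{item:rapidity-operations}, unwinding to the algorithm-level statement in \ref{item:rapidity-algorithm} where necessary. Throughout, the key technical fact I will use is that polynomials are closed under composition: if $p,q$ are polynomials then $p\circ q$ is again a polynomial, and $p$ may be taken monotone nondecreasing without loss of generality (replace $p(x)$ by $\max_{0\le t\le x} p(t)$, which is still polynomially bounded).

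\textbf{Reflexivity.} I must show $OP(D)$ is at most as rapid as itself. Given an algorithm $ALG$ performing $OP(D)$, I take $ALG_2 := ALG$ and must check $ALG$ is at most as rapid as $ALG$ in the sense of \ref{item:rapidity-algorithm}: for each input $\phi$ there is an equivalent input $\psi$ with $\time(ALG,\psi)\le p(\time(ALG,\phi))$. Choosing $\psi := \phi$ and $p$ the identity polynomial settles this, since $\ket{\phi_j}=\ket{\phi_j}$ trivially and $\time(ALG,\phi)\le \time(ALG,\phi)$.

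\textbf{Transitivity.} Suppose $OP(D_1)$ is at most as rapid as $OP(D_2)$, and $OP(D_2)$ is at most as rapid as $OP(D_3)$; I must show $OP(D_1)$ is at most as rapid as $OP(D_3)$. Fix an algorithm $ALG_1$ for $OP(D_1)$. By the first hypothesis there is an algorithm $ALG_2$ for $OP(D_2)$ with $ALG_1$ at most as rapid as $ALG_2$, witnessed by a polynomial $p_{12}$; by the second hypothesis, applied to $ALG_2$, there is an algorithm $ALG_3$ for $OP(D_3)$ with $ALG_2$ at most as rapid as $ALG_3$, witnessed by a polynomial $p_{23}$. I claim $ALG_1$ is at most as rapid as $ALG_3$ with witness $p_{23}\circ p_{12}$ (taking $p_{23}$ monotone). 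Indeed, given an input $\phi$ to $ALG_1$, the first witness yields an equivalent input $\psi$ to $ALG_2$ with $\time(ALG_2,\psi)\le p_{12}(\time(ALG_1,\phi))$; applying the second witness to the input $\psi$ yields an equivalent input $\chi$ to $ALG_3$ with $\time(ALG_3,\chi)\le p_{23}(\time(ALG_2,\psi))$. Since equivalence of inputs is componentwise equality of represented states, $\chi$ is equivalent to $\psi$ and hence to $\phi$; and by monotonicity $\time(ALG_3,\chi)\le p_{23}(p_{12}(\time(ALG_1,\phi)))$, as required. Then $OP(D_1)$ is at most as rapid as $OP(D_3)$ by \ref{item:rapidity-operations}.

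\textbf{Main obstacle.} The argument is essentially bookkeeping, and the one subtlety worth stating carefully is why transitivity works here despite rapidity being defined via \emph{existence} of a good equivalent input rather than a universal bound: because the witnessing input $\psi$ produced at the first stage is a concrete single input, it can be fed to the second-stage guarantee, and the "equivalent to" relation on inputs is itself transitive (it is componentwise equality of interpretations $\ket{\cdot}$). This is precisely the point the paper flags as the reason for dropping the "$+\,|x_2|$" term from Lai et al.'s definition: allowing the second algorithm to re-read its input would break the composition, since there is no uniform bound on $|\psi|$ over all equivalent inputs. I would include one sentence making this explicit, and otherwise the proof is complete.
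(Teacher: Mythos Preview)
Your proof is correct and follows essentially the same approach as the paper: reflexivity via $\psi:=\phi$ and the identity polynomial, and transitivity by chaining the two witnessing algorithms and composing the polynomial bounds. Your explicit mention of taking the witness polynomials monotone is a small improvement over the paper, which uses the inequality $q(\time(ALG_2,\psi))\leq q(p(\time(ALG_1,\phi)))$ without stating this assumption.
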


\subsection{A Sufficient Condition for Rapidity}
\label{sec:sufficient}

We now introduce a simple sufficient condition for rapidity in \cref{thm:sufficient-condition-rapidity}, allowing researchers to easily establish that one data structure is more rapid than another 
 \emph{for many relevant operations simultaneously}.
Previously, such proofs were done for each operation individually \cite{lai2017new}.
We use this sufficient condition to establish rapidity relations between many of the data structures studied in this work.
By a \emph{transformation} $f$ from data structure $D_1$ to $D_2$ we mean a map such that $\ket{f(x_1)} = \ket{x_1}$ for all $x_1\in D_1$. We also need the notions of a weakly minimizing transformation (\cref{def:minimizing}) and a runtime monotonic algorithm (\cref{def:runtime-monotonic}).

\begin{definition}[Weakly minimizing transformation]
	\label{def:minimizing}
	Let $D_1,D_2$ be data structures.
	A transformation $f: D_1 \rightarrow D_2$
	is \emph{weakly minimizing} if $f$ always outputs an instance which is polynomially close to minimum-size, i.e., 
	there exists a polynomial $p$ such that for all $x_1\in D_1, x_2\in D_2$ with $\ket{x_1} = \ket{x_2}$, we have $\sizeof{f(x_1)} \leq p(\sizeof {x_2})$.
\end{definition}

\begin{definition}[Runtime monotonic algorithm]
	\label{def:runtime-monotonic}
	An algortihm $ALG$ implementing some operation on data structure $D$ is \emph{runtime monotonic} if for each polynomial~$s$ there is a polynomial $t$ such that for each state $\ket\phi$ and each $x,y\in D^{\phi}$, if  $\sizeof x \leq s(\sizeof y)$, then $\time(ALG,x) \leq t(\time(ALG,y))$.
\end{definition}

\newcommand{\rapidityproofoutputlabel}[2]{}

\begin{restatable}[b!]{figure}{sufficientconditionprooffigure}
 \centering
 \scalebox{.8}{
 \begin{tikzpicture}[ele/.style={fill=black,circle,minimum width=.8pt,inner sep=1pt},every fit/.style={ellipse,draw,inner sep=-2pt}]

  \node[ele,label=left:$x_1$] (x1) at (0,2.5) {};    
	 \node[ele,label=left:\rapidityproofoutputlabel{1}{}] (a1) at (0,1) {};

  \node[ele,,label=left:$x_2$] (x2) at (6.5,2.5) {};
  \node[ele,,label=right:$f(x_1)$] (fx1) at (4,2.5) {};
	 \node[ele,,label=left:\rapidityproofoutputlabel{2}{}] (a2) at (6.5,1) {};
	 \node[ele,,label=right:\rapidityproofoutputlabel{2}{'}] (a2prime) at (4,1) {};

  \node[draw,gray!60,fit= (x1) (a1) ,minimum width=2cm,minimum height=3cm, label={above:$D_1$}] {};] {} ;
  \node[draw,gray!60,fit= (x2) (fx1) (a2prime) (a2),minimum width=5cm, minimum height=3cm,label={above:$D_2$}] {};]] {} ;

	 \draw[->,bend left=20,thick,shorten <=2pt,shorten >=2pt] (x1) to node[midway,fill=white,label=above:(\textnormal{\theoremautorefname}~\ref{thm:sufficient-condition-rapidity}: \ref{i:d1d2})] {$f$} (fx1);
  \draw[->,thick,shorten <=2pt,shorten >=2pt] (fx1) to node[midway,fill=white,solid] {$ALG_2^{rm}$} (a2prime);
  \draw[->,thick,shorten <=2pt,shorten >=2pt] (x2) to node[midway,fill=white,solid] {$ALG_2$} (a2);
	 \draw[->,bend left=20,thick,shorten <=2pt,shorten >=2pt] (a2prime) to node[midway,fill=white,solid,label=below:(\textnormal{\theoremautorefname}~\ref{thm:sufficient-condition-rapidity}: \ref{i:d2d1})] {$g$} (a1);

 \end{tikzpicture}
 } 
	\caption{
		Visualization of the proof of \cref{thm:sufficient-condition-rapidity} in case $OP$ is a transformation operation: %
Given runtime monotonic algorithm $ALG_2^{rm}$ implementing $OP$ on language $D_2$, the composed algorithm $ALG_1 \defn g \circ ALG_2^{rm} \circ f$ for $OP(D_1)$ is at least as rapid as $ALG_2$.
	To prove this, we consider $x_2\in D_2$ and an equivalent and at most only polynomially larger than $x_1\in D_1$ and show that $ALG_1$ takes at most polynomially more time on $x_1$ than $ALG_2$ on $x_2$.
	$ALG_1$ is also runtime monotonic.
	Horizontally-aligned instances of data structures are equivalent, i.e. represent the same quantum state.
	\supp{\cref{sec:proofs-of-sec-rapidity} provides a proof.}
	}
		\label{fig:rapidity-proof}
\end{restatable}

\begin{restatable}[A sufficient condition for rapidity]{theorem}{sufficientconditiontheorem}
	\label{thm:sufficient-condition-rapidity}
	Let $D_1,D_2$ be data structures with $D_1 \atleastassuccinctas D_2$ and $OP$ a $c$-ary operation.
	Suppose that,
	\begin{enumerate}[label=A\arabic*,leftmargin=*]
		\item $OP(D_2)$ requires time $\Omega(m)$ where $m$ is the sum of the sizes of the operands; and 
			\label{i:omega}
		\item for each algorithm $ALG$ implementing $OP(D_2)$, there is a runtime monotonic algorithm $ALG^{rm}$, implementing the same operation $OP(D_2)$, which is at least as rapid as $ALG$; and
			\label{i:rm}
		\item there exists a transformation from $D_1$ to $D_2$ which is (i)  weakly minimizing and (ii) runs in time polynomial in the output size (i.e, in time $\poly( \sizeof{\psi})$ for transformation output $\psi \in D_2$); and
			\label{i:d1d2}
		\item if $OP$ is a manipulation operation (as opposed to a query), then there also exists a polynomial time transformation from $D_2$ to $D_1$ (polynomial time in the input size, i.e, in $\sizeof \rho$ for transformation input $\rho \in D_2$).
			\label{i:d2d1}
	\end{enumerate}
	Then $D_1$ is at least as rapid as $D_2$ for operation $OP$.
\end{restatable}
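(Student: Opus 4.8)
The plan is to construct, for an arbitrary algorithm $ALG_2$ implementing $OP(D_2)$, a matching algorithm $ALG_1$ for $OP(D_1)$ and show directly that $ALG_1$ is at most as rapid as $ALG_2$ in the sense of \cref{def:rapidity}\ref{item:rapidity-algorithm}; since $ALG_2$ was arbitrary, this yields \cref{def:rapidity}\ref{item:rapidity-operations}. The construction is the one depicted in \cref{fig:rapidity-proof}: first apply \ref{i:rm} to replace $ALG_2$ by a runtime-monotonic algorithm $ALG_2^{rm}$ that is at least as rapid as $ALG_2$ (by transitivity of ``at least as rapid as'', established in \cref{thm:rapidity-is-preorder}, it suffices to prove $ALG_1$ at most as rapid as $ALG_2^{rm}$). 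Let $f$ be the weakly minimizing, output-polynomial transformation from $D_1$ to $D_2$ of \ref{i:d1d2}. If $OP$ is a query, set $ALG_1 \defn ALG_2^{rm} \circ f$ (applied componentwise to the $c$ operands via $f$); if $OP$ is a manipulation, let $g$ be the polynomial-time transformation $D_2 \to D_1$ of \ref{i:d2d1} and set $ALG_1 \defn g \circ ALG_2^{rm} \circ f$.

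The core of the argument is the running-time bound. Fix an input $\phi = (\phi_1,\dots,\phi_c) \in D_1^c$. Because $D_1 \atleastassuccinctas D_2$, for each $j$ there is an equivalent $\psi_j \in D_2$ with $|\psi_j| \le q(|\phi_j|)$ for a fixed polynomial $q$; I take $\psi = (\psi_1,\dots,\psi_c)$ as the equivalent $D_2$-input required by \cref{def:rapidity}\ref{item:rapidity-algorithm}, and must show $\time(ALG_1,\phi) \le p(\time(ALG_2^{rm},\psi))$. The chain of estimates is: (1) since $f$ is weakly minimizing and $\psi_j$ is an equivalent $D_2$-instance, $|f(\phi_j)| \le r(|\psi_j|)$ for a polynomial $r$, hence $|f(\phi_j)| \le r(q(|\phi_j|))$; (2) by \ref{i:d1d2}(ii), computing $f(\phi_j)$ costs $\poly(|f(\phi_j)|) = \poly(|\psi_j|)$, and the total size $\sum_j |f(\phi_j)|$ is $\poly(\sum_j|\psi_j|) =: \poly(m)$ where $m$ is the summed size of $\psi$; (3) by runtime monotonicity of $ALG_2^{rm}$ applied to the equivalent inputs $f(\phi)$ and $\psi$ — here we use $|f(\phi_j)| \le r(|\psi_j|)$ — we get $\time(ALG_2^{rm}, f(\phi)) \le t(\time(ALG_2^{rm},\psi))$ for a polynomial $t$ depending on $r$; (4) if $OP$ is a manipulation, the final $g$-step costs $\poly(|ALG_2^{rm}(f(\phi))|)$, and this output size is bounded by the running time of the step producing it, i.e.\ by $\time(ALG_2^{rm},f(\phi))$, hence also polynomially in $\time(ALG_2^{rm},\psi)$. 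Adding the three contributions — the $f$-step, the $ALG_2^{rm}$-step, the $g$-step — and using assumption \ref{i:omega} (that $\time(ALG_2^{rm},\psi) = \Omega(m)$, so the $\poly(m)$ bound on the $f$-step is absorbed) gives $\time(ALG_1,\phi) \le p(\time(ALG_2^{rm},\psi))$ for a suitable polynomial $p$, as required.

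Two loose ends must be tied. First, for the manipulation case the definition of ``at most as rapid as'' compares the \emph{outputs} as well: $ALG_1(\phi) = g(ALG_2^{rm}(f(\phi)))$ represents the same state as $ALG_2^{rm}(\psi)$ because $f$ and $g$ are transformations (preserving $\ket{\cdot}$) and $ALG_2^{rm}$ correctly implements $OP$ on equivalent inputs; so $ALG_1$ genuinely implements $OP(D_1)$. Second, one should note that $ALG_1$ is itself runtime monotonic — this is needed so that the sufficient condition can be chained, and it follows from runtime monotonicity of $ALG_2^{rm}$ together with the output-polynomial / polynomial-time bounds on $f$ and $g$ and the succinctness relation; this is the same kind of composition argument as above.

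The main obstacle I expect is step (3): correctly invoking runtime monotonicity. Runtime monotonicity of $ALG_2^{rm}$ (\cref{def:runtime-monotonic}) compares its running time on two instances \emph{of the same state} whose sizes are polynomially related; here the two instances are $f(\phi_j)$ and $\psi_j$ (which represent the same state since $\ket{f(\phi_j)} = \ket{\phi_j} = \ket{\psi_j}$), and the size relation $|f(\phi_j)| \le r(|\psi_j|)$ is exactly what the weak-minimization property of $f$ supplies. The subtlety is that this must be applied to the \emph{tuple} of operands coherently, and that one must be careful that the polynomial $r$ in the definition of weakly minimizing is allowed to depend on $\psi_j$ only through its size — which it is. Handling the query-vs-manipulation dichotomy cleanly, so that assumption \ref{i:d2d1} is invoked exactly when needed and the output-equivalence claim is checked, is the remaining bookkeeping.
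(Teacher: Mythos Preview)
Your overall construction matches the paper's: set $ALG_1 = g \circ ALG_2^{rm} \circ f$ (or $ALG_2^{rm}\circ f$ in the query case) and bound the running time of its three stages separately. Your chain of estimates (1)--(4) is correct and is precisely the paper's argument. However, you have the direction and quantifier structure backwards, and this is a genuine gap.

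To show that $D_1$ is at least as rapid as $D_2$ you must show that $ALG_2$ is \emph{at most} as rapid as $ALG_1$ (not, as you write, that $ALG_1$ is at most as rapid as $ALG_2$). Unfolding \cref{def:rapidity}\ref{item:rapidity-algorithm} with the roles swapped, this means: for each input $\psi \in D_2^c$, there exists an equivalent $\phi \in D_1^c$ with $\time(ALG_1,\phi) \le p(\time(ALG_2,\psi))$. You instead fix $\phi \in D_1^c$ and produce $\psi \in D_2^c$, the wrong quantifier order; these two statements do not imply one another. Relatedly, your use of succinctness is inverted: $D_1 \atleastassuccinctas D_2$ says that for each $\psi_j \in D_2$ there is a small equivalent $\phi_j \in D_1$ with $|\phi_j| \le s(|\psi_j|)$, not that there is a small $\psi_j$ for a given $\phi_j$. (Fortunately you never actually use the claimed bound $|\psi_j| \le q(|\phi_j|)$, so this particular misstatement does not propagate into your estimates.)

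The fix is immediate and your estimates survive it verbatim: start from an arbitrary $\psi \in D_2^c$, invoke $D_1 \atleastassuccinctas D_2$ to pick an equivalent $\phi \in D_1^c$, and then run your steps (1)--(4). Step~(1), weak minimization of $f$, gives $|f(\phi_j)| \le r(|\psi_j|)$ for \emph{any} equivalent pair, so it still applies; steps (2)--(4) then go through as you wrote them, and \ref{i:omega} absorbs the $\poly(m)$ cost of the $f$-step into $\time(ALG_2^{rm},\psi)$. This corrected version is exactly the paper's proof.
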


\cref{fig:rapidity-proof} %
provides a proof outline that illustrates the need for (polynomial-time) transformations in order to execute operations on states represented in data structure $D_1$ on their $D_2$ counterpart.
The operation on $f(x_1)$ is at most polynomially slower than on its counterpart $x_2$ because $f$ produces a small instance (because $f$ is weakly minimizing), and because $ALG^{rm}_2$ is not much slower on such instances (because it is runtime monotonic).
We opted for weakly minimizing transformations (rather than \emph{strictly} minimizing transformations), because a minimum structure might be hard to compute and is not needed in the proof.
Runtime monotonic algorithms are ubiquitous, for instance, most operations on MPS scale polynomially in the bond dimension and number of qubits~\cite{vidal2003efficient}.
Finally, we emphasize that for canonical data structures $D$, each algorithm is runtime monotonic and any transformation $D_1 \rightarrow D$ is weakly minimizing.

\subsection{Rapidity between Quantum Representations}
\label{sec:transformations}
We now capitalize on the sufficient condition of \cref{thm:sufficient-condition-rapidity} by revealing the rapidity relations between data structures for all operations satisfying \ref{i:omega} and \ref{i:rm}.
\cref{thm:rapidity} shows our findings.
We highlight the result that MPS is at least as rapid as \qmdd in \cref{thm:mps-as-rapid-as-qmdd}. Its proof provides the required transformations from MPS to \qmdd and back.

\begin{restatable}{theorem}{thmrapidity}
\label{thm:rapidity}
	The rapidity relations in \cref{fig:rapidity} hold. 
\end{restatable}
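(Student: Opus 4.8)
The plan is to read \cref{fig:rapidity} as a list of positive relations of the form ``$D_1$ is at least as rapid as $D_2$'' together with some separations, and to dispatch each kind uniformly. Every positive arrow will be obtained by verifying the four hypotheses of \cref{thm:sufficient-condition-rapidity}, which is built exactly for this; every separation will come for free from the succinctness separations of \cref{thm:succinctness} (\cref{fig:succinct}) together with hypothesis A1. Several hypotheses are pair-independent and I would dispose of them first: the precondition $D_1\atleastassuccinctas D_2$ is supplied by \cref{thm:succinctness}; A1 (the operation must read $\Omega(m)$ of its operands) holds for every operation in the scope of \cref{fig:rapidity} --- manipulation operations such as \gates, \addi, \loc produce an output data structure of size $\Omega(m)$, and the queries considered must at least scan their operands --- which is also precisely the class of operations for which the figure is stated; and A2 is automatic whenever $D_2\in\{\add,\qmdd,\limdd\}$, since for canonical data structures every algorithm is runtime monotonic (noted after \cref{def:runtime-monotonic}). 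For the non-canonical $D_2\in\{\text{MPS},\text{RBM}\}$ I would build runtime monotonic variants: given any algorithm, first bring the input to canonical form (for MPS, the standard left/right-canonical form via successive QR/SVD sweeps, which minimises every bond dimension in time polynomial in the input; for RBM, an analogous normal form), then run a textbook algorithm whose cost is polynomial in the minimised bond dimension and the qubit count; its runtime depends only on the canonical size, which is bounded by the size of any equivalent instance, so it is runtime monotonic, and by A1 it is at least as rapid as the original algorithm.

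The substance of the proof is constructing the transformations demanded by A3 (a weakly minimizing $D_1\to D_2$ transformation running in time polynomial in its output) and A4 (a polynomial-time $D_2\to D_1$ transformation, needed for manipulation operations). Along the canonical chain $\add\to\qmdd\to\limdd$, the ``downward'' maps $D_2\to D_1$ (e.g.\ \qmdd$\to$\limdd, \add$\to$\qmdd) are near-identities --- push leaf values onto incident edges, or read a complex scalar $\lambda$ as $\lambda\cdot\id^{\otimes k}$ --- and run in polynomial time, giving A4. The ``upward'' maps $D_1\to D_2$ (e.g.\ \limdd$\to$\qmdd, \qmdd$\to$\add) are realised by \emph{canonicalisation in the target language}: emit the reduced diagram in $D_2$. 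By canonicity this is the unique minimum-size $D_2$ representative, hence the transformation is trivially weakly minimizing, and it is produced in time polynomial in that (possibly exponentially large) output. Feeding these into \cref{thm:sufficient-condition-rapidity} already yields that \limdd and \qmdd are at least as rapid as all weaker diagrams in the chain.

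For the highlighted relation that MPS is at least as rapid as \qmdd (\cref{thm:mps-as-rapid-as-qmdd}) I would spell out both transformations. \qmdd$\to$MPS (for A4): put at level $k$ the matrices $A_k^x\in\mathbb C^{w_k\times w_{k-1}}$, where $w_k$ is the number of \qmdd nodes at level $k$, with entries recording the labelled low/high edge structure; this is linear in the \qmdd size. MPS$\to$\qmdd (for A3): sweep from the leaves upward, maintaining at each level the set of distinct-up-to-scalar ``remainder'' row vectors $A_n^{x_n}\cdots A_{k+1}^{x_{k+1}}$; these are in bijection with the nodes of the reduced \qmdd, so the algorithm emits exactly the canonical \qmdd --- hence is weakly minimizing --- and, after first reducing the MPS to minimal bond dimension (which by a rank argument is at most the \qmdd width), the per-level bookkeeping is polynomial in the number of nodes produced there, so the whole transformation runs in time polynomial in its output. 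Combined with A1, A2 and \cref{thm:sufficient-condition-rapidity}, this gives MPS $\succeq$ \qmdd. The remaining positive arrows (e.g.\ MPS at least as rapid as \add, and any arrow involving RBM) follow by the same recipe and by composing these transformations; transitivity of rapidity (\cref{thm:rapidity-is-preorder}) lets me restrict to a spanning set of arrows. Each separation ``$D_2$ is not at least as rapid as $D_1$'' is then argued directly: \cref{thm:succinctness} supplies a state family $\ket{\phi_n}$ with polynomial-size $D_1$ representation but only exponential-size $D_2$ representations; taking any in-scope operation and an algorithm for $D_1$ that is fast on these inputs (e.g.\ one gate on the small $D_1$ instance), A1 forces every $D_2$-algorithm to spend time $\Omega(2^n)$ on \emph{every} $D_2$ instance equivalent to $\ket{\phi_n}$, so no polynomial relates their runtimes; hence the separations mirror those of \cref{fig:succinct}.

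I expect the main obstacle to be the MPS$\to$\qmdd transformation: showing that the canonical reduced \qmdd of a state can be built from an arbitrary, not-necessarily-minimal MPS in time polynomial in the size of that \qmdd, which needs both the bound on the minimal bond dimension in terms of the \qmdd width and an efficient representation of the remainder-vector classes. A secondary subtlety is establishing runtime monotonic variants for the non-canonical languages uniformly over \emph{all} algorithms, where canonicalisation together with invoking A1 is the decisive step; everything else is bookkeeping against \cref{thm:sufficient-condition-rapidity}.
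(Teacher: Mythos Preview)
Your core approach --- verify the hypotheses of \cref{thm:sufficient-condition-rapidity} for each arrow and supply the required transformations --- matches the paper exactly. However, you have misread \cref{fig:rapidity}. The figure contains only three positive arrows (\add$\to$\qmdd, \qmdd$\to$\limdd, and \qmdd$\to$MPS) and nothing else: no separations, and RBM does not appear. Consequently, large parts of your proposal are superfluous: the separation argument, every mention of RBM, and the construction of runtime-monotonic variants for non-canonical $D_2$. In each of the three arrows actually present, the less-succinct side $D_2$ is a canonical QDD (\add, \qmdd, \qmdd respectively), so \ref{i:rm} and the weakly-minimizing property in \ref{i:d1d2} are automatic from canonicity --- exactly as the paper notes right after \cref{def:runtime-monotonic} and in the proof of \cref{thm:mps-as-rapid-as-qmdd}. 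Your claimed ``analogous normal form'' for RBM is not established in the paper and is not needed.

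For the arrows that are actually there, your treatment is essentially the paper's. One minor methodological difference: for the MPS$\to$\qmdd transformation (\ref{i:d1d2} with $D_1=$MPS, $D_2=$\qmdd), the paper recurses top-down, contracting the top open index and using an \textsc{Equivalent} check based on inner products between the current MPS and already-built \qmdd nodes (converted back to MPS); the runtime is argued to be polynomial in the input MPS size plus the output \qmdd size, which suffices given the succinctness relation. Your sketch instead pre-canonicalises the MPS to minimal bond dimension and then enumerates remainder row-vectors up to scalar. Both are workable, but you should be aware that the paper's argument sidesteps the need to bound the minimal bond dimension by the \qmdd width, trading it for the fidelity-based equivalence test on MPS.
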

\begin{restatable}{theorem}{mpsasrapidasqmdd}
	MPS is at least as rapid as \qmdd for all operations
	satisfying \ref{i:omega}~and~\ref{i:rm}.
	\label{thm:mps-as-rapid-as-qmdd}
\end{restatable}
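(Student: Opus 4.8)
The plan is to derive the statement from \cref{thm:sufficient-condition-rapidity}, instantiated with $D_1 = $ MPS and $D_2 = $ \qmdd. Hypotheses \ref{i:omega} and \ref{i:rm} are free: they are exactly the restriction on $OP$ in the statement. The succinctness premise $D_1 \atleastassuccinctas D_2$ is supplied by \cref{thm:succinctness} (which in fact gives MPS $\strictlymoresuccinctthan$ \qmdd). Since a reduced \qmdd is canonical, any transformation into \qmdd is automatically weakly minimizing, so \ref{i:d1d2} reduces to giving a polynomial-time transformation MPS $\to$ \qmdd, and \ref{i:d2d1} (needed only when $OP$ is a manipulation) to a polynomial-time transformation \qmdd $\to$ MPS. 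The substance of the proof is these two transformations.

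\textbf{Easy direction (\qmdd $\to$ MPS).}
I would observe that a reduced \qmdd is in essence a tensor train whose $k$-th bond is the level-$k$ cut of the diagram. Concretely: if there are $m_k$ nodes at level $k$ (so $m_0 = m_n = 1$), put $D_k \defn m_k$ and let $A_k^x$ be the $D_k \times D_{k-1}$ matrix whose $(v,w)$-entry is $\lbl(e)$ when $e$ is the $x$-labelled edge from level-$k$ node $v$ to level-$(k{-}1)$ node $w$, and $0$ otherwise; then absorb the root-edge label into $A_n^0,A_n^1$. Unfolding the product $A_n^{y_n}\cdots A_1^{y_1}$ along the unique root-to-leaf path selected by $\vec y$ and comparing to the QDD semantics shows that $M$ represents the same state as the \qmdd, and the construction is linear in the \qmdd size.

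\textbf{Hard direction (MPS $\to$ \qmdd).}
Here I would build the reduced \qmdd top-down. For $k \le n$ and $\vec y \in \{0,1\}^k$ write $u^{(k)}_{\vec y} \defn A_k^{y_k}\cdots A_1^{y_1} \in \mathbb C^{D_k}$, and let $T_k\colon \mathbb C^{D_k} \to \mathbb C^{2^k}$ be the fixed linear map with $(T_k w)_{\vec y} = w \cdot u^{(k)}_{\vec y}$. Fixing the top $n - k$ qubits of $\ket\phi = \ket M$ to a string $s$ yields the cofactor $T_k(w_s)$ with boundary row vector $w_s = A_n^{s_n}\cdots A_{k+1}^{s_{k+1}}$; hence every level-$k$ node of the \qmdd for $\ket\phi$ is $T_k w$ for some $w$, and two boundary vectors give the same node iff their $T_k$-images are proportional. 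Since $\ker T_k$ is the annihilator of $R_k \defn \mathrm{span}\{ u^{(k)}_{\vec y} : \vec y \in \{0,1\}^k \}$ and $R_k = A_k^0 R_{k-1} + A_k^1 R_{k-1}$, I can precompute in $\poly(|M|)$ time a basis $b_1,\dots,b_{\dim R_k}$ of each $R_k$, and hence a canonical ``signature'' of cofactors ($w \mapsto (w \cdot b_1, \dots, w \cdot b_{\dim R_k})$, taken up to a global scalar) that detects proportionality and records the relating scalar. A recursive memoised routine then starts at the root with $w = 1 \in \mathbb C^{D_n}$, and at each step canonicalises the current cofactor: if it has been seen it returns the cached node with the recorded scalar as the incoming edge label; otherwise it recurses on $w A_k^0$ and $w A_k^1$, creates a fresh node whose low and high edge labels are the scalars returned by those calls (a $0$-labelled edge to the leaf when a child cofactor vanishes), and caches it. Each \qmdd node is created exactly once, the routine makes $O(|Q|)$ calls of cost $\poly(|M|)$ each, and it returns the reduced \qmdd for $\ket M$.

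\textbf{Where the difficulty lies.}
The main obstacle is the MPS $\to$ \qmdd direction. I will need to justify that cofactor equivalence really does collapse to proportionality of the short signature vectors — so the node merging can be carried out on the fly, without ever materialising an exponential-length amplitude vector — that the per-edge scalar bookkeeping is globally consistent, and that the memoised traversal outputs precisely the canonical reduced diagram. The \qmdd $\to$ MPS direction and the plumbing inside \cref{thm:sufficient-condition-rapidity} should be routine. One minor caveat: this transformation reads its input, so strictly its runtime is polynomial in input-plus-output size rather than output size alone; this is harmless, since in the use of \cref{thm:sufficient-condition-rapidity} the MPS operand is itself polynomially bounded by the \qmdd instance driving the comparison.
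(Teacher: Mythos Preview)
Your proposal is correct and follows the same overall route as the paper: invoke \cref{thm:sufficient-condition-rapidity} with $D_1=\text{MPS}$, $D_2=\qmdd$, use canonicity of \qmdd to discharge the weakly-minimizing and runtime-monotonicity requirements, and supply the two transformations. Your \qmdd$\to$MPS construction (bipartite adjacency matrices of the level-$k$ cut) is exactly the paper's.

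The one point of genuine difference is the equivalence test inside the MPS$\to$\qmdd transformation. The paper also builds the \qmdd top-down with memoisation, but detects ``already seen'' cofactors by converting each cached \qmdd node back to an MPS and then computing MPS inner products to test whether $\ket{M}=\lambda\ket{v}$. Your approach instead precomputes bases of the reachable subspaces $R_k=\mathrm{span}\{A_k^{y_k}\cdots A_1^{y_1}\}$ via the recursion $R_k=A_k^0R_{k-1}+A_k^1R_{k-1}$ and tests proportionality of the short signature vectors $(w\cdot b_1,\dots,w\cdot b_{\dim R_k})$. Both are valid; yours is more self-contained (it does not lean on the separate tractability result for MPS \inprod) and handles the zero-cofactor case cleanly, while the paper's version is shorter to state because it outsources the linear algebra to an already-proved operation. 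Your closing caveat about the transformation being $\poly(|M|+|Q|)$ rather than $\poly(|Q|)$ alone is also how the paper's lemma is phrased, and your justification that this is harmless inside the proof of \cref{thm:sufficient-condition-rapidity} is correct.
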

\begin{proof}[Proof sketch]
	Since \qmdd is canonical, the runtime monotonicty (\ref{i:rm} of \cref{thm:sufficient-condition-rapidity}, see \cref{def:runtime-monotonic}) and weakly-minimizing (\ref{i:d1d2}) requirements are satisfied automatically.
	Hence we only need to provide efficient transformations in both directions (\ref{i:d1d2}-\ref{i:d2d1}): We transform \qmdd to MPS by choosing its matrices $A^x_k$ to be the weighted bipartite adjacency matrices of the $x$-edges (low / high edges) between level $k$ and $k-1$ nodes of the \qmdd.
	In the other direction, an MPS can be efficiently transformed into an \qmdd by contracting the first open index with $\ket 0$ and then $\ket 1$ to find the coefficients in the $\ket 0$ and $\ket 1$ parts of the state's Shannon decomposition; and repeating this recursively for all possible partial assignments.
	Dynamic programming using the fidelity operation (efficient for MPS) ensures that only a polynomial number of recursive calls are made.
\end{proof}

The relations not involving MPS involve QDDs.
QDDs are canonical data structures as explained in \cref{sec:preliminaries}, so that runtime monotonicity and weak minimization of transformations are automatically ensured.
\supp{In \cref{sec:transformations-between-QDDs}, we give detailed transformations between QDDs.}
Transformations between different QDDs can be realized using the well-known \textsc{MakeNode} procedure (see \cref{sec:preliminaries}), in linear time in the resulting QDD size (using dynamic programming).

\begin{restatable}[h!]{figure}{figrapidity}

\centering
\scalebox{.7}{
	\begin{tikzpicture}[node distance=.5cm,minimum height=.5cm]

		\node[draw] (mps) 			   {MPS};
		\node[draw, left = 1cm of mps] (limdd) {\limdd};
		\node[draw,  below = of limdd] (qmdd) {\qmdd};
		\node[draw, below = of qmdd] (add) {ADD};

		\draw[\bigarrowhead, solid] (qmdd.north)     to (limdd.south);
		\draw[\bigarrowhead, solid] (add.north)      to (qmdd.south);
		\draw[\bigarrowhead, solid, bend right = 20] (qmdd.east) to (mps.south);
	\end{tikzpicture}
}
	\caption{Rapidity relations between data structures considered here.
	A solid arrow $D_1\to D_2$ means $D_2$ is at least as rapid as $D_1$ 
	for all operations satisfying \ref{i:omega}~and~\ref{i:rm}~of~\cref{thm:sufficient-condition-rapidity}.
	}
	\label{fig:rapidity}

\end{restatable}

\section{Related work}
\label{sec:related-work}

\citeauthor{darwiche2002knowledge} pioneered the knowledge compilation map approach followed here.
\citeauthor{fargier2014knowledge} employ the same method  to compare decision diagrams for real-valued functions.
In order to mend a deficiency in the notion of tractability, \citeauthor{lai2017new} contributed the notion of \concept{rapidity},
which we extend in \cref{sec:rapidity}

Alternative ways to deal with non-canonical data structures includes finding a canonical variant, as has been done for MPS~\cite{perez2007matrix}, although it is not canonical for all states. For other structures, such as RBM, d-DNNF and CCDD (mentioned before) such canonical versions do not exist as far as we know, and might not be tractable.

Often the differences between DD variants in \cref{tab:dds} are subtle differences in the constraints used for obtaining canonicity. While perhaps irrelevant for the (asymptotic) analysis, we emphasize that these definitions have great practical relevance. For instance, canonicity makes dynamic programming efficient, which enables fast implementations of QDD operations, while it also reduces the diagram size. Moreover, in practice the behavior of floating point calculations interacts with the canonicity constraints~\cite{zulehner2019efficiently}, significantly affecting performance. Like \cite{fargier2014knowledge}, we do not analyze numerical stability and focus on asymptotic behavior.

The affine algebraic decision diagram (AADD), introduced by \citeauthor{sanner2005AffineADDs}, augments the \qmdd as shown in \cref{tab:dds}.
Their work proves that the worst-case time and space performance of AADD is within a multiplicative factor of that of \add. The concept of rapidity makes explicit that this is the case only when equivalent inputs are considered for both structures.
We note that AADD would be able to represent \sumstate (a hard case for all QDDs studied here), so its succinctness relation with respect to RBM is still open.
We omit Affine \add (AADD)~\cite{sanner2005AffineADDs}, since to the best of our knowledge these have not been applied to quantum computing yet.
The \limdd data structure is implemented and compared against \qmdd in \cite{limdd2}.

\citeauthor{burgholzer2023tensor} compare tensor networks, including MPS, to decision diagrams, on slightly different criteria, such as abstraction level and ease of distributing the computational workload on a supercomputer.
Hong et al. introduce the Tensor Decision Diagram, which extends the \qmdd so that it is able to represent tensors \cite{hong2020tensor} and their contraction.
Context-Free-Language Ordered Binary Decision Diagrams (CFLOBDDs)~\cite{sistla2023cflobdds,sistla2023weighted,quasimodo} achieve a similar goal by extending BDDs with insights from visibly pushdown automata~\cite{vpda}.

The stabilizer formalism is a tractable but non-universal method to represent and manipulate quantum states \cite{aaronson2008improved}.
Its universal counterpart is the stabilizer decomposition-based
method \cite{bravyi2016trading}, which relies on a linear combination of representations in the stabilizer formalism that is exponential only in the number of $T$ gates in the circuit. Although this is a highly versatile technique~\cite{bravyi2017improved}, no superpolynomial lower bounds on their size are known at the moment, making it less useful to include them in a knowledge compilation map.

Similarly, we expect the relation between tree tensor networks~\cite{orus2014practical} and SDDs~\cite{darwiche2011sdd} to be similar the relation between \qmdd and MPS, since SDD also extend the linear QDD variable order with a `variable tree.'

\citeauthor{ablayev2001computationalQOBDD} introduce the Quantum Branching Program, a branching program whose state is a superposition of the nodes on a given level, some of which are labeled \emph{accepting}.
An input string $x\in \{0,1\}^n$ determines how the superposition evolves.
A string $x\in \{0,1\}^n$ is said to be accepted by the automaton if, after evolving according to $x$, a measurement causes the superposition to collapse to an accepting state with probability greater than $\frac 12$.
Thus, the automaton accepts a finite language $L\subseteq\{0,1\}^n$.
Since this diagram accepts a language, rather than represents and manipulates a quantum state, we cannot properly fit it into the present knowledge compilation map.

\section{Discussion}
\label{sec:discussion}

We have compared several classical data structures for representing quantum information on their succinctness, tractability of relevant operations and rapidity.
We catalogued all relevant operations required to implement simulation of quantum circuits, variational quantum algorithms (e.g., quantum machine learning) and quantum circuit verification.
We have followed the approach of \cite{darwiche2002knowledge} and \cite{fargier2014knowledge} to map the succinctness and tractability of the data structures.
In order to mend a deficiency in the notion of tractability that was noticed by several researchers, we additionally adopt and develop the framework of \concept{rapidity}.
We contribute a general-purpose method by which data structures, whether canonical or not, may be analytically compared on their rapidity.

Common knowledge says that there is a trade-off between the succinctness, and the speed of a data structure.
In contrast, we find that, the more succinct data structures are often also more rapid.
For example, we find that algebraic decision diagrams, \qmdd, and matrix product states are each successively more succinct and more rapid.
However, in practice, the \qmdd has achieved striking performance on realistic benchmarks, \cite{zulehner2018advanced} due to a successful sustained effort to optimize the software implementation. But we emphasize that e.g. MPS was not developed with the intention for circuit simulation but for quantum simulation, and \qmdd is vice versa.
Therefore, more research is needed to compare the relative performance of these data structures on the various application domains in practice.

In future work, we could also consider unbounded operations (multiple applications of the same gates) and other data structures, e.g., tree tensor networks \cite{orus2014practical} and affine algebraic decision diagrams \cite{sanner2005AffineADDs}.

\bibliography{lit}

{\appendix}
{\onecolumn}
{\section{Additional Preliminaries on Quantum Information and Decision Diagrams}
\label{sec:prelims2}

\subsection{Brief Introduction to Quantum Information}
\label{sec:quantum-intro}

This appendix presents additional background on quantum information necessary to understand the other appendices. For a detailed treatment, we refer to \citeauthor{nielsen2000quantum}.

The basic unit of quantum information is a quantum bit or \concept{qubit}.
The joint state $\phi$ of $n$ qubits is described by a unit vector of $2^n$ complex numbers, denoted $|\phi\rangle$ in Dirac notation.
Equivalently, a state $\phi$ is described by an \concept{amplitude function} $f_{\phi} \colon \{0,1\}^n\to\mathbb C$ satisfying $\sum_{\vec x\in \{0,1\}^n}|f_{\phi}(\vec x)|^2=1$.
Here a complex number $z = a + bi$ with $a, b \in \mathbb{R}$ has squared modulus $|z|^2 = z^* \cdot z = a^2 + b^2$ and complex conjugate $z^* = a - bi$.
The computational basis of the $n$-qubit vector space $\mathbb{C}^{2^n}$ is $\ket{k} = (0, 0, \dots, 0, 1, 0, \dots, 0)^T$ for $k\in \{0, 1, \dots, 2^n-1\}$, where the single entry $1$ occurs at the $k$-th position (often we write $k$ in binary, i.e. $k \in \{0, 1\}^n$) and $(.)^T$ denotes vector transposition.
Thus $\ket{\phi} = \sum_{x \in \{0, 1\}^n} f_{\phi}(x) \ket{x}$.
We also write $\ket{k} = \ket{x_1x_2\dots x_n} = \ket{x_1} \otimes \ket{x_2} \otimes \dots \otimes \ket{x_n}$ for $x_j \in \{0,1\}$ and $x$ the binary representation of $k \in \mathbb{N}_{\geq 0}$.
Here $A \otimes B$ for a $k \times l$ matrix $A$ and an $m \times n$ matrix $B$ is the $km \times ln$ tensor product:
\[
A\otimes B = 
\begin{pmatrix}
A_{00} B & A_{01} B & \dots & A_{0l} B\\
A_{10} B & A_{11} B & \dots & A_{1l} B\\
\vdots & \vdots & \ddots & \vdots\\
A_{k 0 } B & A_{k 1} B & \dots & A_{k l} B\\
\end{pmatrix}.
\]
Two separate quantum states $\ket{\phi_A}, \ket{\phi_B}$ on $n_{A}, n_{B}$ qubits have joint state $\ket{\phi_A}\otimes\ket{\phi_B}$.
If a quantum state $\ket\phi$ on $n=n_{A} + n_{B}$ qubits is decomposable as $\ket{\phi}=\ket{\phi_A}\otimes \ket{\phi_B}$, it is \emph{separable} (with respect to bipartition $A,B$) and \emph{entangled} otherwise.
We say that a complex vector $\ket{\phi}\in\mathbb C^{2^n}$ is \emph{normalized} when $\ket{\phi}$ is a unit vector.

Two quantum states $\phi,\psi$ are \concept{equivalent} iff there exists a complex value $\lambda\in\mathbb C$ such that $\ket\phi=\lambda\cdot\ket\psi$.
Given two $n$-qubit states $\phi, \psi$, their inner product is $\langle \phi | \psi \rangle \defn \sum_{\vec  x\in \{0,1\}^n} f_{\phi}(\vec x)^* \cdot f_{\psi}(x)$, and their fidelity is $|\langle \phi | \psi \rangle|^2$.
Equivalently, the inner product is also (the $1\times 1$-matrix entry) $\bra{\phi} \cdot \ket{\phi}$ where $\bra{\phi} \defn \left( \ket{\phi} \right)^{\dagger}$ is a column vector where $(.)^{\dagger}$ is the complex transpose operator, which first takes a matrix's transpose followed by entry-wise complex conjugation.
Note that if $k \in \set{0, 1}^n$, the $k$-th entry in the state vector $\ket \phi$ is $\braket{k|\phi} = f_\phi(k)$.
\begin{example}
\cref{fig:ghz-examples} gives an example of a quantum state on $3$ qubits, the so-called GHZ state: $\ket{\text{GHZ}}=\nicefrac{1}{\sqrt 2}(\ket{000}+\ket{111})$.
This state is displayed as state vector (left), followed by representations in all data structures used in this paper.
\end{example}

A quantum state can be manipulated by applying a quantum \concept{gate} or a measurement to it.
An $n$-qubit gate $U$ is a unitary transformation $U\colon \mathbb C^{2^n}\to\mathbb C^{2^n}$, i.e., it satisfies $U^\dagger U = U U^{\dagger} = \id[2^n]$ where $\id[2^n]$ is the $n$-qubit identity operator mapping each vector to itself.
Examples of gates are the four Pauli matrices defined in \cref{sec:data-structures-definitions}, the controlled-$U$ gate mapping $\ket{0}\otimes \ket{\phi}$ to itself and $\ket{1}\otimes\ket{\phi}$ to $\ket{1}\otimes U\ket{\phi}$, the Hadamard gate $H = \nicefrac{1}{\sqrt{2}} \begin{smallmat} 1 & 1 \\ 1 & -1\end{smallmat}$, the gate $T = \begin{smallmat} 1 & 0\\ 0 & 
e^{\nicefrac 14\pi}
\end{smallmat}$ and the phase gate $S = T^2$.
A single-qubit computational-basis measurement, performed on the $k$-th qubit of an $n$-qubit state $\phi$, yields outcome $m\in \{0, 1\}$ with probability $p_m \defn \sum_{x\in \{0, 1\}^{n-1}} |f_{\phi}(x, x_k = m)|^2$, and the post-measurement state $\psi$ is found by setting a vector entries with $x_k \neq m$ to 0 and subsequent rescaling; i.e., $f_{\psi} (x) := f_{\phi}(x) / p_m$ if $x_k = m$ and $f_{\psi}(x) := 0$ otherwise.

\begin{example}
	A small quantum circuit is shown in \cref{fig:ghz-circuit}.
The initial state is $\ket{0}\otimes\ket{0}\otimes \ket{0}$.
First, a Hadamard gate is applied to the first qubit, after which the state becomes $\nicefrac{1}{\sqrt 2}(\ket{000}+\ket{100})$.
Then a controlled-$X$ gate is applied to the first and second qubit.
Here the first qubit acts as the \emph{control}; the second qubit is the \emph{target}.
The state of the quantum system is now $\nicefrac{1}{\sqrt 2}(\ket{000}+\ket{110})$.
The state before measurement is called the GHZ state, $\ket{\text{GHZ}}=\nicefrac{1}{\sqrt 2}(\ket{000}+\ket{111})$.
Finally, a measurement is applied to the third qubit.
After this measurement, the state ``collapses'' to either state $\ket{000}$ or $\ket{111}$ with equal probability.

\begin{figure}[h]
\begin{align*}
	\Qcircuit @C=2.3em @R=0.7em {
	& \ket{0} & & \gate{H} & \ctrl{1} & \qw      & \qw &\qw \\
	& \ket{0} & & \qw      & \gate{X} & \ctrl{1} & \qw &\qw \\
	& \ket{0} & & \qw      & \qw      & \gate{X} & \meter
	}
\end{align*}
\caption{$3$-qubit quantum circuit preparing the~GHZ~state.
}
\label{fig:ghz-circuit}
\end{figure}
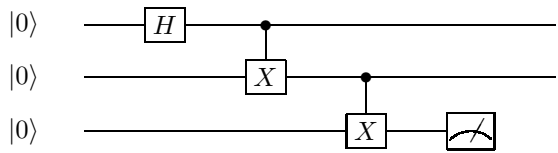
\end{example}

We now formalize the application domains from \cref{sec:application-domains}.
In \concept{circuit simulation}, a quantum circuit consists of (i) an initial quantum state, (ii) several quantum gates  followed by (iii) measuring one or more quantum bits.\footnote{In general, these steps can be intertwined and include adaptive operations, i.e. which depend on measurement outcomes, with (i) replaced by resetting the quantum state.}
Simulating a circuit using the data structures in this work starts by constructing a data structure for (i), followed by manipulating the data structure by applying the gates and measurements of (ii) one by one.
The structure supports \emph{strong simulation} if it can produce the probability function; or \emph{weak simulation} if it merely produces a sample, for each measurement in~(iii).
Next, the quantum circuit of \concept{variational methods} consist of a quantum circuit; then the output state $\ket{\phi}$ is used to compute $\langle \phi | O | \phi\rangle$ for some linear operator $O$ (an \concept{observable}) which is hermitian ($O = O^{\dagger}$), for example $O$ is the Hamiltonian. This computation reduces to $\langle \phi | \psi \rangle$ with $\ket{\psi} := O\ket{\phi}$ which could be any complex vector.
Often one considers \concept{local} observables.
Specifically, an $n$-qubit observable $O$ is called $k$-local when it can be written as $O=P\otimes \id[2^{n-k}]$ where $P$ is a $k$-qubit observable.
Last, \concept{circuit verification} relies on checking approximate or exact equivalence of quantum states.
This extends to unitaries, which all data structures from this paper can represent also, but which we will not treat for simplicity.

\subsection{Preliminaries on decision diagrams}
\label{sec:ddprelims}

Here, we present some additional background required to understand the results in the other appendices. For a detailed treatment of decision diagrams, we refer to~\cite{somenzi1999binary}.

The tractability of the QDD algorithms heavily relies on two properties: canonicity, i.e., there exists a unique decision diagram for each quantum state, and dynamic programming, i.e., avoiding unnecessary recursion by storing intermediary results in a cache.
Canonicity implies that for each quantum state $\phi$ there is a unique diagram $x \in \allformulae{D}{\phi}$, and moreover no other (non-canonical) data structure $y \in \allformulae{D}{\phi}$ has fewer nodes or edges than $x$.
All QDDs in this work can be made canonical by requiring that the QDD satisfies a set of reduction rules  (for \qmdd see \cite{miller2006qmdd} and for \limdd see \cite{vinkhuijzen2021limdd}) and all isomorphic nodes are subsequently merged (see \cref{tab:dds}).
This so-called \concept{reduced} QDD can be obtained in linear time using a standard \textsc{MakeNode} procedure (see also \cref{sec:transformations}). QDD manipulation algorithms also use the \textsc{MakeNode} procedure to ensure that the result is reduced efficiently.
For these reasons, we may always assume that QDDs are reduced.

Let $f\colon \{0,1\}^n\to\mathbb C$ be a function, and $y_n,y_{n-1},\ldots, y_k\in\{0,1\}$ a partial assignment.
Then $f_{y}\colon \{0,1\}^{k-1}\to\mathbb C$ is the \emph{subfunction of} $f$ \emph{induced by} $y$, defined as $f_y(x)=f(y,x)$.
Suppose a QDD has root node $v$ representing a pseudo-Boolean function $f\colon \{0, 1\}^n \rightarrow \mathbb{C}$.
For any $y_n,\ldots, y_k$, this diagram contains a node $w$ which represents the induced subfunction $f_y\colon \{0,1\}^{k-1}\to\mathbb C$, or which represents a function which is isomorphic to $f_y$ under $\mathcal {E}_k$.
The vertices of the QDD are divided into \emph{layers}, i.e., a node which is $k$ edges away from a leaf is said to be in layer $k$, and each edge points from a vertex in layer $k$ to a node in layer $k-1$.
Each node in layer $k$ represents a pseudo-Boolean function of the form $f\colon \{0,1\}^{k}\to\mathbb C$.

We will use the following proposition as a general intuition in many of the following proofs.

\begin{proposition}
A reduced QDD representing pseudo-Boolean function (quantum state) $f$ has as many nodes on level $\ell$, as there are unique induced subfunctions $f_{\vec a}$ for $\vec a \in \set{0,1} ^ {n - \ell}$, modulo isomorphism under $\mathcal E_k$ (see \cref{def:iso}).
\end{proposition}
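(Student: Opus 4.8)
The plan is to exhibit a bijection between the nodes at level $\ell$ and the isomorphism classes (under \cref{def:iso}) of the induced subfunctions $f_{\vec a}$, where $\vec a \in \{0,1\}^{n-\ell}$ ranges over assignments of the top $n-\ell$ variables (recall the root has index $n$, and $\forall vw\in E\colon \nindex(v)=\nindex(w)+1$, so following $n-\ell$ edges from the root always lands on a level-$\ell$ node). The bijection sends a level-$\ell$ node $v$ to the class of $\ket v$. I first record the standard assumption that a reduced QDD contains no unreachable nodes — they can be pruned in linear time without changing the represented state, and their absence is implied by canonicity; this also makes the root the unique node of index $n$.

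The heart of the argument is a simultaneous induction on $j \defn n-\ell$ proving (i) every class $[f_{\vec a}]$ with $\vec a\in\{0,1\}^{j}$ is realized as $[\ket v]$ for some level-$\ell$ node $v$, and (ii) every level-$\ell$ node $v$ satisfies $[\ket v]=[f_{\vec a}]$ for some $\vec a\in\{0,1\}^{j}$. The base case $j=0$ holds because the root $r$ is the only node of index $n$ and $\ket{e_R}=\lbl(e_R)\cdot\ket r=\ket\phi=f=f_{()}$, hence $[\ket r]=[f_{()}]$. For the inductive step I would use the node/edge semantics: a level-$(\ell+1)$ node $w$ with low/high children $u_0,u_1$ (at level $\ell$) satisfies $\ket w=\ket0\otimes\lbl(e^w_0)\ket{u_0}+\ket1\otimes\lbl(e^w_1)\ket{u_1}$, so the $d$-subfunction of $\ket w$ is $\lbl(e^w_d)\ket{u_d}$ and is therefore isomorphic to $\ket{u_d}$. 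Combining this with the inductive hypothesis $\ket w=g\cdot f_{\vec b}$ for some $g\in\edgelabels_{\ell+1}$ and $\vec b\in\{0,1\}^{j}$, and restricting the top variable, relates the level-$\ell$ children of $w$ to the subfunctions $f_{(\vec b,0)},f_{(\vec b,1)}$; ranging $w$ over all level-$(\ell+1)$ nodes and invoking reachability together with no-skipping (so every level-$\ell$ node has an incoming edge from level $\ell+1$) yields both (i) and (ii) for level $\ell$.

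The one delicate point — and the step I expect to be the main obstacle — is pinning down how a label $g\in\edgelabels_{\ell+1}$ behaves under taking the top-variable subfunction: I would show that for each $d\in\{0,1\}$ the $d$-subfunction of $g\cdot h$ equals $g'\cdot h_{\pi(d)}$ for some $g'\in\edgelabels_{\ell}$ and some permutation $\pi$ of $\{0,1\}$. For \add this is vacuous; for \qmdd, $g$ is a scalar, so $\pi=\mathrm{id}$ and $g'=g$; for \limdd, $g=\lambda\,(P\otimes Q)$ with $\lambda\in\mathbb C\setminus\{0\}$, $Q\in\edgelabels_{\ell}$ and $P\in\{\id,X,Y,Z\}$, which is either diagonal ($\id,Z$, giving $\pi=\mathrm{id}$) or antidiagonal ($X,Y$, giving $\pi$ the transposition), in both cases up to a scalar absorbed into $g'$. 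Consequently $\{[(g\,h)_0],[(g\,h)_1]\}=\{[h_0],[h_1]\}$, which is exactly what makes the set of subfunction-classes below $w$ agree with $\{[f_{(\vec b,0)}],[f_{(\vec b,1)}]\}$ and closes the induction.

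Finally, (i) and (ii) together say that the map $v\mapsto[\ket v]$ from level-$\ell$ nodes has image exactly $\{[f_{\vec a}]\mid \vec a\in\{0,1\}^{n-\ell}\}$; and it is injective, since two distinct level-$\ell$ nodes $v,v'$ with $[\ket v]=[\ket{v'}]$ would be isomorphic in the sense of \cref{def:iso} and hence already merged in a reduced QDD. Therefore the number of nodes at level $\ell$ equals the number of induced subfunctions $f_{\vec a}$ modulo isomorphism, as claimed.
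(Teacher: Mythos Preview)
The paper does not actually prove this proposition; it is stated in the preliminaries as a well-known structural fact about reduced QDDs and invoked only as ``general intuition'' in later lower-bound arguments. Your proof therefore fills in what the paper leaves implicit, and the approach---building a bijection $v\mapsto[\ket v]$ between level-$\ell$ nodes and isomorphism classes of subfunctions, with surjectivity by induction on depth and injectivity by reducedness---is correct and is the standard one. Your case analysis of how a label $g\in\edgelabels_{\ell+1}$ interacts with restricting the top variable (trivial for \add, scalar for \qmdd, diagonal/antidiagonal single-qubit Pauli factor for \limdd) is exactly the ingredient needed, and matches how the paper itself reasons in, e.g., the proof of \cref{thm:state-has-exp-limdd}.

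One small technical gap: in \qmdd and \limdd the label set $\edgelabels_k$ contains $0$, so the step ``the $d$-subfunction of $\ket w$ is $\lbl(e^w_d)\ket{u_d}$ and is therefore isomorphic to $\ket{u_d}$'' fails when $\lbl(e^w_d)=0$. A node that is graph-reachable only through $0$-labelled edges would then not be matched to any nonzero $f_{\vec a}$, breaking (ii). The paper's definition of ``reduced'' only demands that isomorphic nodes be merged, so your stated assumption of ``no unreachable nodes'' should be strengthened to ``no \emph{semantically} unreachable nodes'' (every non-root node has an incoming edge with nonzero label). This is guaranteed by the canonical forms of \qmdd and \limdd that the paper assumes elsewhere, so the gap is purely cosmetic, but it is worth making explicit.
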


\newcommand\weight[1]{\mathit{weight}(#1)}

}
{\section{Proofs of \cref{sec:succinctness-separations}}
\label{sec:proofs-of-sec-succinctness}

In this appendix, we prove \cref{thm:succinctness} as reproduced below. We also reproduce \cref{fig:succinct} in \cref{fig:succinct2}, which additionally includes references to the respective lemmas. \cref{sec:prelims2} contains relevant preliminaries on quantum information and QDDs.

\renewcommand\supp[1]{#1} %
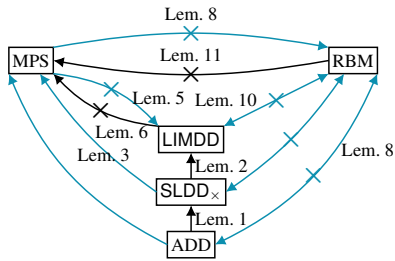
\begin{figure}[b!]
\centering
\vspace{-1em}
\scalebox{.7}{
\begin{tikzpicture}[node distance=.7cm,minimum height=.5cm]
		\node[draw] (mps) 			   {MPS};
		\node[draw, right =2cm of mps, yshift=-1.5cm] (limdd) {\limdd};
		\node[draw, right =2cm of limdd, yshift=1.5cm] (rbm) {RBM};
		\node[draw, below = .5cm of limdd] (qmdd) {\qmdd};
		\node[draw, below = .5cm of qmdd] (add) {ADD};

		\draw[\bigarrowhead, bend left=20] (limdd.north west) to node[midway] {$\boldsymbol{\bigtimes}$}
											node[pos=0.3,left=1cm,below=0cm] {\textcolor{black}{\supp{\cref{thm:succ-limdd-vs-mps}}}}
											(mps.south east);
		\draw[\bigarrowhead, bend left=20, color=light-blue] (mps.south east) to node[midway] {$\boldsymbol{\bigtimes}$}
											node[pos=1,right=0cm,above=0.3cm] {\textcolor{black}{\supp{\cref{thm:succ-mps-vs-limdd}}}}
											(limdd.north west);

		\draw[\bigarrowhead, bend left=10, color=light-blue] (qmdd.west) to node[pos=.3,below left=-.5cm]
									{\textcolor{black}{\supp{\cref{thm:mps-exp-more-succinct-than-qmdd}}}} (mps);

		\draw[\bigarrowhead, bend left=20, color=light-blue] (add.west) to
													 (mps.south west);

		\draw[\bigarrowhead, bend left=10] (rbm.west) to node[midway] {$\boldsymbol{\bigtimes}$} 
													node[midway,above=0.125cm] {\supp{\cref{thm:succ-rbm-vs-mps}}}
											(mps.east);
		\draw[\bigarrowhead, bend left=10, color=light-blue] (mps.north east) to node[midway] {$\boldsymbol{\bigtimes}$} 
													node[midway,above=0.125cm] {\textcolor{black}{\supp{\cref{thm:add-rbm}}}}
											(rbm.north west);

		\draw[\bigarrowheadb, bend right=0, color=light-blue] (rbm.south west) to node[midway] {$\boldsymbol{\bigtimes}$}
											node[pos=.7,above left=-1mm] {\textcolor{black}{\supp{\cref{thm:succ-rbm-vs-limdd}}}}
											(limdd.north east);

		\draw[\bigarrowheadb, bend left=10, color=light-blue] (rbm) to node[midway] {$\boldsymbol{\bigtimes}$}
										   (qmdd.east);

		\draw[\bigarrowheadb, bend left=20, color=light-blue] (rbm.south east) to node[midway] {$\boldsymbol{\bigtimes}$}
											node[pos=.3,below right=-1mm] {\textcolor{black}{\supp{\cref{thm:add-rbm}}}} 
											 (add.east);

		\draw[\bigarrowhead, bend right=0] (qmdd) to  node[pos=.5,right=.01cm] 
									{\supp{\cref{thm:succ-limdd-more-qmdd}}}	 (limdd);
		\draw[\bigarrowhead, bend right=0] (add) to node[pos=.5,right=.01cm]
									{\supp{\cref{thm:succ-qmdd-more-add}}} 	(qmdd);

	\end{tikzpicture}
}
\caption{Succinctness relations between various classical data structures for representing quantum states.
Solid arrows $A\to B$ denote $B\strictlymoresuccinctthan A$, i.e., $B$ is strictly more succinct than~$A$.
Crossed arrows $A\crossedto B$ denote a separation $B \notmoresuccinctthan A$; a bidirectional crossed arrow implies incomparability.
Blue arrows indicate novel relations that we identified.
}
\label{fig:succinct2}
\end{figure}

\succinctness*
\begin{proof}
The proofs for individual relations are stated in the lemmas referenced by \cref{fig:succinct2}, which is otherwise equal to \cref{fig:succinct}.

Note that we do not include a proof for every arrow (direction), since several can be derived through transitivity properties. All unlabeled edge (directions) can be derived as follows:
\begin{itemize}
\item  MPS $\strictlymoresuccinctthan$ \add follows from MPS $\strictlymoresuccinctthan$ \qmdd and \qmdd $\strictlymoresuccinctthan$ \add
\item  \limdd $\strictlymoresuccinctthan$ \add follows from \limdd $\strictlymoresuccinctthan$ \qmdd and \qmdd $\strictlymoresuccinctthan$ \add
\item \qmdd $\notmoresuccinctthan$ RBM follows from  \limdd  $\notmoresuccinctthan$ RBM and \limdd $\strictlymoresuccinctthan$ \qmdd
\item \add $\notmoresuccinctthan$ RBM follows from  \limdd  $\notmoresuccinctthan$ RBM and \limdd $\strictlymoresuccinctthan$ \add
\item RBM  $\notmoresuccinctthan$ MPS follows from  RBM  $\notmoresuccinctthan$ \add and MPS $\strictlymoresuccinctthan$ \add
\item RBM  $\notmoresuccinctthan$ \qmdd follows from  RBM  $\notmoresuccinctthan$ \add and \qmdd $\strictlymoresuccinctthan$ \add
\item RBM  $\notmoresuccinctthan$ \limdd follows from  RBM  $\notmoresuccinctthan$ \add and \limdd $\strictlymoresuccinctthan$ \add
\end{itemize}
This completes the proof of all stated succinctness relations.
\end{proof}

\begin{lemma}
	\label{thm:succ-qmdd-more-add}
	\qmdd is exponentially more succinct than \add.
\end{lemma}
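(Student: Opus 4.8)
The statement splits into two parts: (i) \qmdd $\atleastassuccinctas$ \add, and (ii) \add $\notmoresuccinctthan$ \qmdd, witnessed by a family of states with polynomial-size \qmdds but exponential-size \adds. Part (i) is immediate from the definitions in \cref{sec:data-structures-definitions}: every \add is already a \qmdd (take all edge labels equal to $1$ and observe $\set{1}\subseteq\mathbb C = \edgelabels_k$, while the leaf $\mathbb C$-values of the \add can be pushed onto the incoming edges so that the \qmdd leaf set $\set 1$ suffices). This transformation is size-preserving up to a linear factor, so $|\alpha_{\qmdd}| \le \poly(|\alpha_{\add}|)$.

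For part (ii), the plan is to exhibit a state family $\ket{\phi_n}$ on $n$ qubits whose reduced \qmdd has $O(n)$ nodes but whose reduced \add has $2^{\Omega(n)}$ nodes. A clean choice is a product state with all-distinct, multiplicatively-related amplitudes, e.g.\ $\ket{\phi_n} = \bigotimes_{k=1}^n \bigl(\ket 0 + c_k\ket 1\bigr)$ for suitably chosen constants $c_k$ (for instance $c_k = 2^{2^{k-1}}$, or $c_k=k$), normalization aside. Since the state is a tensor product, each level-$k$ induced subfunction of the \qmdd is (up to a scalar, which the \qmdd edge labels absorb) the same function $\bigotimes_{j\le k}(\ket0+c_j\ket1)$; hence by the proposition in \cref{sec:ddprelims} the reduced \qmdd has exactly one node per level, i.e.\ $O(n)$ nodes total. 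For the \add, by contrast, two level-$\ell$ nodes are merged only if their subfunctions are \emph{equal}, not merely proportional; the $2^{n-\ell}$ partial assignments to the top $n-\ell$ qubits produce subfunctions $\bigl(\prod_{j>\ell} c_j^{x_j}\bigr)\cdot\bigl(\bigotimes_{j\le\ell}(\ket0+c_j\ket1)\bigr)$, and the scalar prefactors $\prod_{j>\ell}c_j^{x_j}$ are pairwise distinct for the chosen $c_j$ — so the \add needs $2^{n-\ell}$ distinct nodes on level $\ell$, giving exponential size for any variable order.

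The main obstacle is making the "pairwise distinct scalar prefactors" claim robust \emph{for every variable order}, since the succinctness notion here quantifies over all orders (as stressed at the end of \cref{sec:data-structures-definitions}). With $c_k = 2^{2^{k-1}}$ the products $\prod_{j\in S} c_j$ over subsets $S$ are all distinct regardless of which qubits are placed on top, because the exponents $\sum_{j\in S} 2^{j-1}$ are distinct binary representations; this closes the argument uniformly. I would also double-check the normalization does not collapse distinctness (it only multiplies all amplitudes by a common factor, so it does not), and note the same family simultaneously shows MPS $\strictlymoresuccinctthan$ \add is not needed here — only the \add lower bound and the trivial \qmdd upper bound are required. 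Finally, since the \add is exponential and the \qmdd polynomial, we conclude \add $\notmoresuccinctthan$ \qmdd, and combined with part (i) this yields \qmdd $\strictlymoresuccinctthan$ \add, in fact with an exponential separation as claimed.
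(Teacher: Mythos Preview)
Your argument for part~(i) coincides with the paper's: \add is literally a special case of \qmdd once leaf values are pushed onto incoming edges, so $\qmdd\atleastassuccinctas\add$ is immediate.

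For part~(ii) you take a genuinely different and more self-contained route. The paper simply invokes Proposition~10 of \cite{fargier2013semiring}, noting that the exponential (rather than merely superpolynomial) gap is buried in their proof. You instead exhibit an explicit product state $\bigotimes_k(\ket0+c_k\ket1)$ and argue directly: the \qmdd collapses to one node per level because product states are closed under the \qmdd merging rule $f=p\cdot g$, while the \add must keep all $2^{n-\ell}$ subfunctions at level $\ell$ separate since their scalar prefactors $\prod_{j\in S}c_j$ are pairwise distinct. With $c_k=2^{2^{k-1}}$ this distinctness holds for \emph{every} subset $S$, so the lower bound is uniform over variable orders, as required. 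This is an elementary, fully explicit proof where the paper defers to the literature; the cost is only that you must pick the $c_k$ carefully. Two small caveats: your parenthetical alternative $c_k=k$ does \emph{not} work for the argument as written (e.g.\ $\{6\}$ and $\{2,3\}$ give the same product), so drop it or replace it by $c_k=p_k$ (the $k$-th prime); and your doubly-exponential $c_k$ would blow up the \qmdd size if edge-label bit-length were charged, but the paper's size convention (like the MPS and RBM definitions in \cref{sec:data-structures-definitions}) treats scalar entries as unit cost, so this is not an issue here.
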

\begin{proof}
Since \add is a special case of \qmdd (\cref{sec:data-structures-definitions}), \qmdd is at least as succinct.

 \citeauthor{fargier2013semiring} prove an exponential separation in Prop.~10.
 The proposition itself only mentions a  superpolynomial separation; the fact that the separation is in fact exponential is contained in the proof.
\end{proof}

\begin{lemma}
	\label{thm:succ-limdd-more-qmdd}
	\limdd is exponentially more succinct than \qmdd.
\end{lemma}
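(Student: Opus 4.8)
The plan is to establish the two halves of the statement separately. First, "at least as succinct": since a \qmdd is the special case of a \limdd in which every edge label is restricted to a scalar (i.e., $\lambda \cdot \id^{\otimes k}$ rather than an arbitrary $\lambda P_1 \otimes \cdots \otimes P_k$), every \qmdd is literally already a \limdd of the same size, so $\limdd \atleastassuccinctas \qmdd$ trivially. This observation was already noted in \cref{sec:data-structures-definitions} ("an \add can be seen as a special case of \qmdd, which is a special case of \limdd"), so I would simply cite that.

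Second, and this is the crux, I need a family of $n$-qubit states $\{\ket{\phi_n}\}$ admitting polynomial-size \limdds but requiring exponential-size \qmdds under every variable order. The natural choice is a family of stabilizer states, since \cite{vinkhuijzen2021limdd} proves that \emph{every} stabilizer state has a \limdd of size $O(n^2)$ (polynomial), regardless of variable order — this immediately gives the upper bound. For the \qmdd lower bound I would take a stabilizer state whose amplitude function has exponentially many distinct induced subfunctions up to \emph{scalar} multiplication (the \qmdd merging rule, per \cref{tab:dds}), for every ordering of the qubits. A cluster state on a $\sqrt n \times \sqrt n$ grid, or more simply a suitable CSS/graph state on an expander-like or "complete-bipartite-ish" graph, works: one shows that for any bipartition of the qubits into a prefix set $A$ and suffix set $B$ with $|A| = |B| = n/2$ induced by a linear order, the number of distinct subfunctions $f_{\vec a}$ (for $\vec a \in \{0,1\}^{|B|}$) modulo global scalar is $2^{\Omega(n)}$, hence by the \cref{sec:ddprelims} proposition counting nodes per level, the \qmdd has a level with exponentially many nodes. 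This is exactly the "rank" / "entanglement across the cut" argument: for stabilizer states the Schmidt rank across any cut is a power of two determined by the rank of a certain binary matrix derived from the stabilizer generators, and for the grid/graph-state family this rank is $\Omega(n)$ for every balanced cut.

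Concretely, the steps in order: (1) recall $\limdd \atleastassuccinctas \qmdd$ from the definitions; (2) fix the witness family, e.g. graph states $\ket{G_n}$ for an explicit graph $G_n$ on $n$ vertices (grid graph) — note these are stabilizer states; (3) invoke \cite{vinkhuijzen2021limdd} to get $|\limdd(\ket{G_n})| = \poly(n)$ for all orders; (4) for an arbitrary linear order, pick the balanced cut and argue the number of distinct induced subfunctions modulo scalar equals (a power of) the Schmidt rank across the cut, which for a connected-enough $G_n$ is $2^{\Omega(n)}$ for every balanced cut; (5) conclude via the node-counting proposition that the \qmdd has size $2^{\Omega(n)}$ regardless of order, so $\qmdd \not\atleastassuccinctas \limdd$; (6) combine to get the strict (indeed exponential) separation.

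The main obstacle is step (4): showing the Schmidt-rank / subfunction-count lower bound holds \emph{for every} linear variable order simultaneously, not just the "natural" one. For a line graph this fails (a path has a cut of rank $1$), so the witness graph must be chosen so that every balanced vertex bipartition has large "boundary rank" — this is a (mild) expansion/bisection-width property. I would handle it by using the $\sqrt n \times \sqrt n$ grid, whose bisection width is $\Theta(\sqrt n)$... which is only polynomial, not exponential — so in fact I should instead use a graph family with linear bisection width (e.g. a bounded-degree expander, or simply invoke that \cite{vinkhuijzen2021limdd} themselves exhibit such a stabilizer-state family, likely the very one used there to separate \limdd from \qmdd). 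The cleanest route is therefore to reuse the exact witness and lower-bound argument of \cite{vinkhuijzen2021limdd}, citing it rather than re-deriving it, and only supply the short glue that packages it as an "exponentially more succinct" statement.
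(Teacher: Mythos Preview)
Your proposal is correct and lands on essentially the same proof as the paper: the ``at least as succinct'' direction is the trivial special-case inclusion, and the exponential separation is obtained by citing the cluster-state witness and lower bound from \cite{vinkhuijzen2021limdd}. One small wobble worth flagging: a $\sqrt n \times \sqrt n$ grid with bisection width $\Theta(\sqrt n)$ already gives Schmidt rank $2^{\Theta(\sqrt n)}$ across every balanced cut, which \emph{is} exponential in $n$ (not merely polynomial), so the grid cluster state in fact suffices---you need not retreat to expanders.
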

\begin{proof}
Since \qmdd is a special case of \limdd (\cref{sec:data-structures-definitions}), \limdd is at least as succinct.

\citeauthor{vinkhuijzen2021limdd} show an exponential separation for so-called `cluster states.'
\end{proof}

\begin{lemma}
\label{thm:mps-exp-more-succinct-than-qmdd}
	MPS is exponentially more succinct than \qmdd.
\end{lemma}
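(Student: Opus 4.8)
The plan is to prove this in two halves, matching the structure already announced in \cref{sec:succinctness-separations}. First, I would show that MPS is at least as succinct as \qmdd, i.e.\ \qmdd $\atleastassuccinctas$ MPS is false while MPS $\atleastassuccinctas$ \qmdd holds. For the positive direction, given a reduced \qmdd of size $s$, I would construct an MPS of polynomial size directly from the diagram's layer structure, exactly as sketched in the proof of \cref{thm:mps-as-rapid-as-qmdd}: take $D_k$ to be the number of \qmdd nodes at level $k$, and for each $x\in\{0,1\}$ let $A_k^x$ be the weighted bipartite adjacency matrix whose $(i,j)$ entry is the label of the $x$-edge from the $j$-th node at level $k$ to the $i$-th node at level $k-1$ (and $0$ if there is no such edge). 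One then checks by induction on $k$ that the partial product $A_k^{x_k}\cdots A_1^{x_1}$, read as a column vector indexed by level-$k$ nodes, records the amplitudes $\braket{x_1\ldots x_k \mid v}$ for each level-$k$ node $v$; plugging in $k=n$ and using $D_n=1$ gives $\braket{\vec y\mid M}=\braket{\vec y\mid\text{root}}$. The bond dimension is at most the \qmdd width $\le s$, so $|M|=2\sum_k D_kD_{k-1}\le 2n s^2 = \mathrm{poly}(s)$.

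Second, I would exhibit the separating family $\sumstate$: a sequence of $n$-qubit states that admit MPS of size $\mathrm{poly}(n)$ but whose (reduced, hence minimum-size) \qmdd has $2^{\Omega(n)}$ nodes. The natural candidate, given the name, is something like $\ket{\mathrm{Sum}_n}\propto\sum_{\vec x\in\{0,1\}^n} x \ket{\vec x}$ where $x$ denotes the integer whose binary representation is $\vec x$ — or a close variant such as $\sum_{\vec x}(\sum_i x_i)\ket{\vec x}$ or a product-of-affine-forms state. For the upper bound I would give an explicit constant (or small) bond-dimension MPS: a state whose amplitude is an affine function $c_0+\sum_i c_i x_i$ of the bits has a bond-$2$ MPS, carrying a ``running partial sum'' in one coordinate and a constant $1$ in the other, which is the standard trick. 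For the lower bound I would use the proposition recalled at the end of \cref{sec:ddprelims}: the number of level-$\ell$ nodes in a reduced \qmdd equals the number of distinct induced subfunctions $f_{\vec a}$ for $\vec a\in\{0,1\}^{n-\ell}$, up to scalar multiples (since $\mathcal E_k=\mathbb C$ for \qmdd). For the integer-value state, fixing the top $n-\ell$ bits to $\vec a$ leaves the subfunction $\vec x\mapsto a\cdot 2^{\ell}+x$ on the bottom $\ell$ bits; two such subfunctions (for $\vec a\ne\vec a'$) are scalar multiples of each other only if $a\cdot2^\ell+x = \lambda(a'\cdot2^\ell+x)$ for all $x$, which forces $\lambda=1$ and $a=a'$ — so all $2^{n-\ell}$ subfunctions are pairwise inequivalent, giving $2^{n-\ell}$ nodes at level $\ell$ and hence $2^{\Omega(n)}$ total (take e.g.\ $\ell=n/2$). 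Combined with \cref{thm:succ-qmdd-more-add} (which gives $\add\strictlymoresuccinctthan$ is already strict below \qmdd) this also yields MPS $\strictlymoresuccinctthan\add$.

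I expect the main obstacle to be pinning down the exact $\sumstate$ family so that \emph{both} bounds go through cleanly: the lower bound needs the induced subfunctions to be genuinely inequivalent modulo scalars (easy to botch if the ``sum'' is something like $\sum_i x_i$, where fixing high bits only shifts the subfunction by an additive constant, which is not a \qmdd-allowed transformation but does collapse many cases if one is not careful about what ``distinct modulo scalar'' means), while the upper bound needs the amplitude function to be expressible with small bond dimension — affine functions of the bits are the safe choice. A secondary subtlety is verifying normalizability and that the state is nonzero (so it genuinely lies in the language); this is routine since we only care about projective equivalence, but the amplitude $0$ at $\vec x=\vec 0$ for the integer-value state means one should either shift by a constant or simply note that a single zero amplitude is harmless. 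I would also double-check the reduced-\qmdd lower bound argument is robust to the no-node-skipping convention, but as the footnote in \cref{sec:data-structures-definitions} notes this only costs linear-size factors and cannot rescue a $2^{\Omega(n)}$ lower bound.
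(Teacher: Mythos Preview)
Your proposal is essentially the paper's proof: the same bipartite-adjacency-matrix translation from \qmdd to MPS, and the same separating state $\sum_{\vec x}(x)_2\ket{\vec x}$ with the same bond-$2$ MPS via a running-sum register. Two small remarks: the paper does not prove the \qmdd lower bound itself but simply cites Prop.~10 of \citeauthor{fargier2013semiring} (your direct subfunction-counting argument is correct and amounts to the same thing); and the name \sumstate in this paper refers to a \emph{different} state, $\ket{+}^{\otimes n}+\bigotimes_j(\ket0+e^{i\pi 2^{-j-1}}\ket1)$, used for the \limdd separations, so your guess about what the name denotes is off even though the state you actually use for this lemma is the right one.
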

\begin{proof}
We show in \cref{sec:qmdd-to-mps} that MPS is at least as succinct as \qmdd, by showing that every \qmdd can be translated to MPS in linear time.

We  provide a state $\ket\phi$ on $n$ qubits, which has an exponential-sized \qmdd, but a polynomial-sized MPS.
	Let $(x)_2\in\mathbb Z$ be the integer represented by a bit-string $x\in \{0, 1\}^n$.
	The state of interest is
	\begin{align}
	\ket{\phi} = \sum_{x\in\{0,1\}^n} (x)_2\ket x = \sum_{x\in \{0,1\}^n}\left(\sum_{j=1}^n 2^{j-1}x_j\right)\ket x
	\end{align}
	\citeauthor{fargier2013semiring} show that this state has exponential-sized \qmdd (Prop.~10). 
	On the other hand, it can be efficiently represented by the following MPS of bond dimension $2$:
	\begin{align} 
	A_n^0 =& \begin{smallmat}1 & 0 \end{smallmat}
&  A_j^0 =& \begin{smallmat}1 & 0 \\ 0 & 1\end{smallmat}
&	A_1^0 = & \begin{smallmat}0 \\ 1\end{smallmat}  \\
	A_n^1 =& \begin{smallmat}1 & 2^{n-1} \end{smallmat}
&  A_j^1 =& \begin{smallmat}1 & 2^{j-1} \\ 0 & 1 \end{smallmat}
&  A_1^1 = & \begin{smallmat}1 \\ 1\end{smallmat} 
	\end{align}
	Here $j$ ranges from $2\ldots n-1$.
To show this, we can write
\[
A_n^{x_n} = \begin{bmatrix}1 & x_n \cdot 2^{n-1} \end{bmatrix}
\qquad
A_j^{x_j} = 
\begin{bmatrix} 1& x_j \cdot 2^{j-1}\\ 0 & 1 \end{bmatrix} \textnormal{\mbox{ } for $j=2, \dots, n-1$}
\qquad
A_1^{x_1} = \begin{bmatrix} x_1 \\ 1 \end{bmatrix}
\]
Hence we can write
\[
A_n^{x_n}\cdot  \cdots A_1^{x_1}
=
\begin{bmatrix}1 & x_n \cdot 2^{n-1}\end{bmatrix}
\cdot
\begin{bmatrix} 1& \sum_{j=2}^{n-1} x_j \cdot 2^{j-1}\\ 0 & 1 \end{bmatrix} 
\cdot
\begin{bmatrix} x_1 \\ 1 \end{bmatrix}
=
\begin{bmatrix} 1 & x_n \cdot 2^{n-1} \end{bmatrix}
\cdot
\begin{bmatrix} \sum_{j=1}^{n-1} x_j \cdot 2^{j-1}\\  1 \end{bmatrix} 
=
\begin{bmatrix} \sum_{j=1}^{n}  2^{j-1} \cdot x_j \end{bmatrix}
\]
\end{proof}

The following quantum state, called \sumstate, will feature in several of the below proofs.
Specifically, we will show that RBM and MPS can represent this state efficiently, whereas \limdds cannot.
A similar state will be used to show that \limdd does not support the Swap operation.  We omit normalization factors, as all data structures are oblivious to them.
\begin{align}
\sumstate = \ket{+}^{\otimes n} + \bigotimes_{j=1}^n(\ket 0+e^{i\pi 2^{-j-1}}\ket 1)
\end{align}
\begin{lemma}
	\label{thm:state-has-exp-limdd}
	The \limdd of \sumstate has size $2^{\Omega(n)}$ for every variable order.
\end{lemma}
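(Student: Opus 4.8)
The plan is to apply the structural proposition from \cref{sec:ddprelims} (a reduced QDD has as many nodes at level $\ell$ as there are induced subfunctions on the bottom $\ell$ qubits, counted modulo isomorphism under $\edgelabels_\ell$) at the level just above the leaves, $\ell = 1$, and exhibit exponentially many pairwise non-isomorphic subfunctions there. Fix an arbitrary variable order and let $q$ be the qubit placed at level $1$; write $S = [n]\setminus\{q\}$ and $\theta_j = \pi 2^{-j-1}$. Since both $\ket{+}^{\otimes n}$ and $\bigotimes_{j}(\ket0 + e^{i\theta_j}\ket1)$ are product states, fixing the qubits in $S$ to any $\vec a \in \{0,1\}^{S}$ merely multiplies each tensor factor by a scalar, so the induced subfunction on qubit $q$ is
\[
	f_{\vec a} \;=\; c\,(\ket0+\ket1) \;+\; e^{i\psi(\vec a)}\bigl(\ket0 + e^{i\theta_q}\ket1\bigr),
	\qquad \psi(\vec a) = \pi\!\sum_{j\in S} 2^{-j-1} a_j,
\]
for a constant $c>0$ independent of $\vec a$ (coming from the $\ket{+}$-amplitudes). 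Thus, up to a fixed global scalar, the only dependence of $f_{\vec a}$ on $\vec a$ is through the phase $e^{i\psi(\vec a)}$.

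Next I would establish two elementary facts. First, the $2^{n-1}$ numbers $\psi(\vec a)$ are pairwise distinct and all lie in $[0,\pi/2)$: this is just uniqueness of binary representations applied to $\sum_{j\in S} 2^{-j-1}a_j$, a sum of distinct negative powers of two with $0/1$ coefficients. Second, $\ket0+\ket1$ and $\ket0 + e^{i\theta_q}\ket1$ are linearly independent, since $\theta_q = \pi 2^{-q-1} \in (0,\pi/4]$ so $e^{i\theta_q}\neq 1$; hence distinct values of $e^{i\psi}$ yield pairwise non-proportional vectors $f_{\vec a}$. Consequently the family $\{f_{\vec a}\}_{\vec a\in\{0,1\}^S}$ consists of $2^{n-1}$ pairwise distinct subfunctions, indeed pairwise distinct even as rays.

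The remaining step is to bound how many of these subfunctions a single $\mathcal{E}_1 = \paulilim_1$-isomorphism class can contain. Isomorphism under $\mathcal{E}_1$ (\cref{def:iso}) identifies $f_{\vec a}$ with $f_{\vec a'}$ iff $f_{\vec a'} = \lambda P f_{\vec a}$ for some $\lambda \neq 0$ and some single-qubit Pauli $P \in \{\id,X,Y,Z\}$. For a fixed $f_{\vec a}$ there are only four candidate rays $[P f_{\vec a}]$, and by the ray-distinctness just established each of them coincides with $f_{\vec a'}$ for at most one $\vec a'$; so every isomorphism class meets $\{f_{\vec a}\}$ in at most four elements. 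Therefore the $2^{n-1}$ distinct subfunctions fall into at least $2^{n-1}/4 = 2^{n-3}$ isomorphism classes, and by the proposition every reduced \limdd for \sumstate in this variable order has at least $2^{n-3}$ nodes at level $1$ (any \limdd, reduced or not, has at least as many, since non-isomorphic subfunctions cannot be represented by the same node). Hence its size is $2^{\Omega(n)}$, and since the variable order was arbitrary the bound holds for every order.

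The main obstacle — and the reason to work at level $1$ rather than a middle level, where the raw count of distinct subfunctions would be much larger — is controlling the size of a LIM-isomorphism class: at level $\ell$ the local Pauli group has $4^\ell$ elements, so a class could in principle absorb up to $4^\ell$ subfunctions, which swamps the count of $2^{n-\ell}$ subfunctions unless $\ell$ is small; a middle-level argument would instead require ruling out all $4^{n/2}$ Paulis simultaneously. Everything else is routine once one observes that \sumstate is a sum of two product states, so that every restriction factorizes and the subfunctions are parametrized by a single phase with distinct values.
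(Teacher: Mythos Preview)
Your proof is correct and follows the same overall plan as the paper's: work at level~$1$, write each of the $2^{n-1}$ induced subfunctions as a phase-parametrized combination of two fixed linearly independent single-qubit vectors, and then control how many of these subfunctions can fall into one $\paulilim_1$-isomorphism class. The only difference is in this last step. The paper argues (rather tersely, via an injectivity claim for the ratio $f(\vec x)/f(\vec y)$) that \emph{no} two distinct $f_{\vec a}$ are Pauli-isomorphic, obtaining $2^{n-1}$ nodes; you instead use the pigeonhole observation that there are only four single-qubit Paulis and the $f_{\vec a}$ are pairwise non-proportional, so each class contains at most four of them, giving $2^{n-3}$ nodes. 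Your route trades a constant factor in the bound for completely sidestepping the Pauli case analysis, which is a clean simplification and entirely sufficient for the $2^{\Omega(n)}$ conclusion.
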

\begin{proof}

We compute that the amplitude function for \sumstate is 
\begin{align}
f(\vec{x}) = 1 + e^{i\pi \sum_{j=1}^n x_j \cdot 2^{-j-1}}.\label{eq:sum}
\end{align}
We note that $f$ is injective and never zero,
and that $\frac{f(\vec x)}{f(\vec y)}$ is injective on the domain where $\vec x \neq \vec y$.

We now study the nodes $v$ at level $1$ (with $\nindex(v) = 1$) via the subfunctions they represent,
considering all variable orders.
These nodes represent subfunctions on one variable.
So we take out one variable $x_k\in \vec x = \set{x_1,\dots,x_n}$.
Without loss of generality, we may pick $x_1$ because the summation in \cref{eq:sum} is commutative. For each assignment $\vec a \in \set{0,1}^{n-1}$, we obtain the function:
\[
f_{\vec a}(x_1) = 1 + e^{i\pi \sum_{j=2}^n a_j \cdot 2^{-j-1}} \cdot
e^{i\pi \cdot \nicefrac14 x_1 }.
\]

We now show that for any $\vec a \neq \vec c \in \set{0,1}^{n-1}$ there is no $Q \in\paulilim_1$ such that $f_{\vec a} = Q f_{\vec c}$.

Let $Q = \alpha P$ for $\alpha \in  \mathbb C \setminus \set 0, P\in \set{\id, X,Y, Z}$, so $f_{\vec a} = \alpha P f_{\vec c}$.
Furthermore, define
$\alpha = \alpha(z, x, \vec{a}, x_1) = (-1)^z \cdot \frac{f_{\vec{a}}(x_1 \oplus x = 0)}{f_{\vec{a}}(x_1 \oplus x = 1)}$ for
 $P = X^x \cdot Z^z $ with $x, z\in \{0, 1\}$, absorbing the factor $i$ of $Y$ and -1 of $ZX$ in $\alpha$. The function $\alpha$ is injective, i.e., $\alpha(s) = \alpha(t)$ implies $s = t$, based on our earlier observations about $f$.

It follows that each subfunction $f_{\vec{a}}$ requires a separate node at level 1.
So there are $\Omega(2^{n-1})$ nodes.
\end{proof}

\begin{lemma}
	\label{thm:succ-mps-vs-limdd}
	There is a family of quantum states with polynomial-size MPS but exponential-size \limdd.
\end{lemma}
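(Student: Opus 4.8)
The plan is to take the state \sumstate introduced just above. \cref{thm:state-has-exp-limdd} already establishes that every \limdd for \sumstate has size $2^{\Omega(n)}$, regardless of variable order, so it remains only to exhibit a polynomial-size MPS for \sumstate.

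The key structural observation is that \sumstate is the \emph{sum of two product states}, namely $\ket{+}^{\otimes n}$ (with constant amplitude function $1$, after omitting normalization) and $\bigotimes_{j=1}^n(\ket 0 + e^{i\pi 2^{-j-1}}\ket 1)$. A product state always admits an MPS of bond dimension $1$ (a product of $1\times1$ matrices), and the sum of two MPS of bond dimensions $D$ and $D'$ admits an MPS of bond dimension $D+D'$ via the standard direct-sum construction: stack the first tensors vertically, put the two middle tensors in block-diagonal position, and concatenate the last tensors horizontally. Concretely, I would set $A_1^{x} = \begin{smallmat} 1 \\ c^x_1\end{smallmat}$, $A_k^{x} = \begin{smallmat} 1 & 0 \\ 0 & c^x_k\end{smallmat}$ for $2 \le k \le n-1$, and $A_n^{x} = \begin{smallmat} 1 & c^x_n\end{smallmat}$, where $c^0_k = 1$ and $c^1_k = e^{i\pi 2^{-k-1}}$. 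Multiplying these out yields $A_n^{x_n}\cdots A_1^{x_1} = 1 + \prod_{k=1}^n c^{x_k}_k = 1 + e^{\,i\pi\sum_{k=1}^n x_k 2^{-k-1}}$, which is exactly the amplitude function of \sumstate recorded in \cref{eq:sum}. Since MPS represents states only up to an overall scalar, the choice of normalization for $\ket{+}^{\otimes n}$ is immaterial.

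Finally I would bound the size: all bond dimensions $D_k$ are at most $2$ (and $D_0 = D_n = 1$), so $|M| = 2\sum_{k=1}^n D_k D_{k-1} \le 8n = O(n)$, which is polynomial. Combined with \cref{thm:state-has-exp-limdd}, this proves the lemma. There is no genuine obstacle here; the only points needing a line of verification are (i) that the direct-sum MPS indeed computes the pointwise sum of the two product-state amplitude functions — a short matrix-product calculation, transparent for small $n$ and then by the obvious induction on the telescoping product — and (ii) that the claimed size bound holds with the stated bond dimensions.
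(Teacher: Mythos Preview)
Your proposal is correct and follows essentially the same approach as the paper: both use \sumstate, invoke \cref{thm:state-has-exp-limdd} for the \limdd lower bound, observe that \sumstate is a sum of two product states, and write down the bond-dimension-$2$ direct-sum MPS explicitly. Your matrix conventions even match the paper's MPS definition (with $A_n$ a row vector and $A_1$ a column vector) more carefully than the paper's own proof does.
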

\begin{proof}
	MPS require only bond dimension $2$ to represent the state \sumstate.
	Namely, $\sumstate$ is the sum of two product states.
	Product states always have bond dimension $1$; and the sum of two states which have bond dimensions $d_1$ and $d_2$, has bond dimension at most $d_1+d_2$, as shown by \cref{thm:mps-addition}.
	
	Concretely, an MPS $A$ with bond dimension 2 for $\sumstate$ is given by 
	\begin{align}
	    A^{x_1}_1 = \begin{bmatrix} 1 & f_1(x_1)\end{bmatrix}, && A^{x_j}_j = \begin{bmatrix} 1 & 0 \\ 0 & \prod_{j=2}^{n-1} f_j (x_j)\end{bmatrix}\text{ for }j=2, 3, \dots, n-1, && A^{x_n}_n = \begin{bmatrix} 1 \\ f_n(x_n)\end{bmatrix},
	\end{align}	
	 where $x_1, x_2, \dots, x_n \in \{0, 1\}$ and $f_j$ is the function $\{0,1\}\rightarrow\mathbb{C}$ defined by$ f_j(0)=1$ and $f_j(1) = e^{i\pi 2^{-j-1}}$.
	 Some computation shows that, indeed, $A_1^{x_1} \cdot \dots \cdot A_n^{x_n} = \begin{pmatrix} 1 + \prod_{j=1}^n f_j (x_j) \end{pmatrix} = 1+e^{i\pi (x)_22^{-n-1}} = \braket{x|\text{Sum}}$, as desired.
	 Moreover, the bond dimension of the MPS $A$ is $2$.
	
	However, \cref{thm:state-has-exp-limdd} shows that \limdds require exponential size to represent the same state.
\end{proof}

\begin{lemma}
	\label{thm:succ-limdd-vs-mps}
	There is a family of quantum states with polynomial-size \limdd but exponential-size MPS.
\end{lemma}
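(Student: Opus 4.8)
The plan is to exhibit a family of \emph{graph states}: these are stabilizer states, so by~\cite{vinkhuijzen2021limdd} they have $\poly(n)$-size \limdds, and the whole task is to show that their bond dimension --- and hence MPS size --- is $2^{\Omega(n)}$ under \emph{every} ordering of the qubits. The \limdd side being immediate, the argument is entirely an entanglement (Schmidt-rank) lower bound for MPS.

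First I would recall the standard link between MPS bond dimension and Schmidt rank. Given an MPS $M$ whose matrices are ordered according to a qubit order $\pi$, with bond dimensions $D_0=D_n=1,D_1,\dots,D_{n-1}$, cutting the product $\braket{\vec{y}|M}=A_n^{y_n}\cdots A_1^{y_1}$ after the $k$-th factor exhibits the $2^{n-k}\times 2^{k}$ amplitude matrix of $\ket{M}$ (rows indexed by the last $n-k$ qubits of $\pi$, columns by the first $k$) as a product of a $2^{n-k}\times D_k$ matrix by a $D_k\times 2^{k}$ matrix; hence its rank --- which is exactly the Schmidt rank of $\ket{M}$ across that bipartition --- is at most $D_k$. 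With $k=\lceil n/2\rceil$ this gives $|M|=2\sum_k D_kD_{k-1}\ge 2D_{\lceil n/2\rceil}\ge 2\cdot\operatorname{SchmidtRank}_{(S_\pi,\bar S_\pi)}(\ket{M})$, where $S_\pi$ is the set of the first $\lceil n/2\rceil$ qubits in the order $\pi$. Next I would invoke the well-known identity that, for a graph $G$ on $n$ vertices with $\mathbb{F}_2$ adjacency matrix $\Gamma$, the Schmidt rank of the graph state $\ket{G}$ across $(S,\bar S)$ equals $2^{r}$ with $r=\operatorname{rank}_{\mathbb{F}_2}\Gamma[S,\bar S]$, the $\mathbb{F}_2$-rank of the submatrix of $\Gamma$ with rows in $S$ and columns in $\bar S$; this follows because $\braket{x|G}$ is proportional to $(-1)^{x_S^{\top}\Gamma[S,\bar S]x_{\bar S}}$ times a phase depending only on $x_S$ and a phase depending only on $x_{\bar S}$, and a $\pm 1$ character matrix of an $\mathbb{F}_2$-bilinear form of rank $r$ has rank $2^{r}$.

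It then suffices to produce, for infinitely many $n$, a graph $G_n$ such that \emph{every} size-$\lceil n/2\rceil$ vertex subset $S$ yields a submatrix $\Gamma[S,\bar S]$ of $\mathbb{F}_2$-rank $\Omega(n)$; for such $G_n$, the state $\ket{G_n}$ has a $\poly(n)$-size \limdd while every MPS for it, under every qubit order, has size $2^{\Omega(n)}$. I would prove existence probabilistically: for $G_n$ a uniformly random graph $G(n,\tfrac12)$ each $\Gamma[S,\bar S]$ is a uniformly random matrix in $\mathbb{F}_2^{\lceil n/2\rceil\times\lfloor n/2\rfloor}$, such a matrix has rank at most $n/2-t$ with probability $2^{-\Omega(t^2)}$, and a union bound over the at most $\binom{n}{\lceil n/2\rceil}\le 2^{n}$ subsets with $t=\lceil\sqrt{n}\log n\rceil$ shows that with probability $\to 1$ all these ranks are $n/2-o(n)=\Omega(n)$ simultaneously; any realised outcome supplies the desired $G_n$. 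The main obstacle is precisely this last step --- making the $\mathbb{F}_2$-rank bound hold \emph{uniformly over all $n!$ variable orders} rather than for one fixed bipartition --- but the union bound settles it, since the number of midpoint cuts is only $2^{n}$, which is dwarfed by the $2^{-\Omega(t^2)}$ random-matrix rank tail as soon as $t$ exceeds $\sqrt n$ by a little. I would not pursue an explicit construction instead, because (unlike edge expansion) large $\mathbb{F}_2$-rank of all balanced biadjacency submatrices is not guaranteed by spectral expansion, making the probabilistic route the clean one.
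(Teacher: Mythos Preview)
Your proof is correct, but it takes a different route from the paper's own argument. The paper's proof is a two-line citation: it notes that \limdds represent all stabilizer states in $\poly(n)$ size, and then cites~\cite{vinkhuijzen2021limdd} for the fact that a specific stabilizer state --- the cluster state --- already requires exponential-size MPS. Thus the paper appeals to an explicit, well-studied family rather than constructing one.

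Your approach instead builds the hard family probabilistically: you take random graph states $\ket{G_n}$ with $G_n\sim G(n,\tfrac12)$, reduce the MPS lower bound to a Schmidt-rank bound, identify the Schmidt rank across a cut $(S,\bar S)$ with $2^{\operatorname{rank}_{\mathbb{F}_2}\Gamma[S,\bar S]}$, and then use the $2^{-\Omega(t^2)}$ rank tail for random $\mathbb{F}_2$-matrices together with a union bound over the $\le 2^n$ balanced cuts. This is self-contained and yields a genuinely stronger conclusion: bond dimension $2^{\Omega(n)}$ simultaneously for \emph{every} qubit ordering. The paper's cluster-state route, by contrast, is shorter (a citation) and explicit, but the separation one gets from the $2$D cluster state is only $2^{\Omega(\sqrt n)}$ across the best ordering. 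Your remark that spectral expansion alone does not force large $\mathbb{F}_2$-rank of all balanced biadjacency submatrices is well taken; the probabilistic argument is indeed the clean way to get the uniform-in-ordering $2^{\Omega(n)}$ bound.
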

\begin{proof}
	\limdd can efficiently represent any stabilizer state, but some stabilizer states require exponential-size MPS (in particular, the cluster state, among others \cite{vinkhuijzen2021limdd}.
\end{proof}

\begin{lemma}
	\label{thm:qmdd-to-mps}
	MPS is at least as succinct as \qmdd.	
\end{lemma}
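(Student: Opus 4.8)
The plan is to give a size-preserving (indeed linear-in-output) translation of an arbitrary \qmdd $\beta$ on $n$ qubits into an equivalent MPS $M$, and then read off the polynomial size bound. Assume, without loss of generality, that $\beta$ is reduced and connected (so that, since edges go from level $k$ to level $k-1$ and the unique root edge points to the single root at level $n$, there is exactly one node at level $n$, and since $\leaflabels=\{1\}$ there is a single leaf at level $0$). Enumerate the nodes at level $k$ as $v^{(k)}_1,\dots,v^{(k)}_{D_k}$ with $D_k$ their number; then $D_0=D_n=1$, matching the MPS requirement.

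First I would define the matrices. For $k\in[n]$ and $x\in\{0,1\}$ let $A_k^x\in\mathbb{C}^{D_k\times D_{k-1}}$ be the weighted bipartite adjacency matrix of the edges labelled $x$ (low edges for $x=0$, high edges for $x=1$) from level $k$ to level $k-1$: that is, $(A_k^x)_{i,j}$ is the label of the $x$-edge leaving $v^{(k)}_i$ if that edge points to $v^{(k-1)}_j$, and $0$ otherwise. Since every node has exactly one outgoing edge per value of $x$, each row of $A_k^x$ has at most one nonzero entry. Finally, absorb the root-edge label $\ell_R=\lbl(e_R)$ by replacing $A_n^x$ with $\ell_R\cdot A_n^x$ (possible since $D_n=1$, so $A_n^x$ is a row vector).

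The correctness proof is an induction on $k$ establishing that, for every $\vec x\in\{0,1\}^k$, the $i$-th entry of the column vector $A_k^{x_k}A_{k-1}^{x_{k-1}}\cdots A_1^{x_1}$ equals $\braket{x_k x_{k-1}\cdots x_1\,|\,v^{(k)}_i}$ (amplitude of the computational basis state whose leftmost qubit is $x_k$, next is $x_{k-1}$, etc.). The base case $k=0$ is the empty product $=1=\lbl(\text{leaf})$. For the inductive step, write the product as $A_k^{x_k}\cdot\big(A_{k-1}^{x_{k-1}}\cdots A_1^{x_1}\big)$; by the structure of $A_k^{x_k}$, the $i$-th entry of $A_k^{x_k}w$ equals $\ell\cdot w_{j^*}$ where $e$ is the $x_k$-edge out of $v^{(k)}_i$, $\ell=\lbl(e)$, and $v^{(k-1)}_{j^*}$ is its target. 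By the QDD semantics $\ket{v^{(k)}_i}=\ket0\otimes\ket{e_{\mathit{low}}}+\ket1\otimes\ket{e_{\mathit{high}}}$ and $\ket{e}=\lbl(e)\ket{\text{target}(e)}$, we have $\braket{x_k x_{k-1}\cdots x_1|v^{(k)}_i}=\braket{x_{k-1}\cdots x_1|e}=\ell\braket{x_{k-1}\cdots x_1|v^{(k-1)}_{j^*}}$, which by the induction hypothesis equals $\ell\cdot w_{j^*}$, as desired. Taking $k=n$ and using the rescaled $A_n^x$ yields $A_n^{x_n}\cdots A_1^{x_1}=\ell_R\braket{x_n\cdots x_1|\text{root}}=\braket{x_n\cdots x_1|\beta}$, which matches $\braket{\vec x|M}$ under the MPS convention $\braket{\vec x|M}=A_n^{x_n}\cdots A_1^{x_1}$; hence $M$ represents $\ket\beta$, with MPS-qubit $k$ corresponding to \qmdd level $k$.

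For the size bound, $|M|=2\sum_{k=1}^n D_k D_{k-1}\le 2n\,|V(\beta)|^2$, which is polynomial (in fact cubic) in $|\beta|\ge|V(\beta)|\ge n$, so there is a polynomial $p$ with $|M|\le p(|\beta|)$; moreover $M$ can be written down in time linear in $|M|$. I do not expect a genuine obstacle here: the only thing needing care is the bookkeeping of qubit/bit orderings — ensuring the left-to-right order of the matrix product $A_n^{x_n}\cdots A_1^{x_1}$ lines up with the top-to-bottom order of the \qmdd levels — but once the convention above is fixed this matches with no reindexing. The substance of the \emph{strict} separation (a state with polynomial MPS but exponential \qmdd) lives in \cref{thm:mps-exp-more-succinct-than-qmdd}, not here.
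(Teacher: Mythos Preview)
Your proposal is correct and follows essentially the same approach as the paper: both define the MPS matrices $A_k^x$ as the weighted bipartite adjacency matrices of the $x$-edges between levels $k$ and $k-1$ of the \qmdd, absorb the root label, and verify that the matrix product computes the amplitude along the corresponding root-to-leaf path. The paper phrases correctness via the standard graph-theoretic fact that an entry of a product of adjacency matrices sums path weights, whereas you do an explicit induction on $k$; these are the same argument.
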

\begin{proof}
	\cref{sec:qmdd-to-mps} provides a polynomial time transformation from \qmdd to MPS.
\end{proof}

\begin{figure}
\centering
\begin{tikzpicture}[
    scale=0.3,
    every path/.style={>=latex},
    every node/.style={},
    inner sep=0pt,
    minimum size=14pt,
    line width=1pt,
    node distance=.5cm,
    thick,
    font=\footnotesize
    ]

    \node[] (root)   {};
    \node[draw,circle, below =of root, xshift=0cm] (l1) {};
    \node[draw,circle, below =of l1] (l2) {};

    \node[draw,circle, below =of l1, xshift=1cm] (ma2) {};

    \node[draw,circle, below =of l2] (l3) {};
    \node[draw,circle, below =of l2, xshift=3cm] (r3) {};
    \node[draw,circle, below =of l3] (l4) {};
    \node[draw,circle, below =of r3] (r4) {};
    \node[draw,circle, below =of l3, xshift=1cm] (ma4) {};
    \node[draw,circle, below =of l3, xshift=2cm] (mb4) {};

    \node[draw,circle, below =.6cm of l4] (l5) {};
    \node[draw,circle, below =.6cm of r4] (r5) {};

    \node[draw,circle, below= of l5] (l6) {};
    \node[draw,circle, below= of r5] (r6) {};
    \node[draw,circle, below= of l5, xshift=1cm] (ma6) {};
    \node[draw,circle, below= of l5, xshift=2cm] (mb6) {};

	 \node[left of =l1] 	{$x_1$};
	 \node[left of =l2] 	{$x_2$};
	 \node[left of =l3] 	{$x_3$};
	 \node[left of =l4] 	{$x_4$};
	 \node[left  =.1cm of l5] 	{$x_{n-1}$};
	 \node[left of =l6] 	{$x_{n}$};

    \node[draw,circle,rectangle,minimum size=0.4cm, below=of l6] (leaf0) {$0$};
    \node[draw,circle,rectangle,minimum size=0.4cm, below=of r6] (leaf1) {$\frac{1}{A}$};

    \draw[<-] (l1) --++(90:2cm) node[pos=1.4,right] {};

    \draw[e0=0] (l1) edge  node[] {} (l2);
    \draw[e1=0] (l1) edge  node[] {} (ma2);

    \draw[e0=25] (l2) edge  node {} (l3);
    \draw[e1=25] (l2) edge  node {} (l3);

    \draw[e0=0] (ma2) edge  node {} (l3);
    \draw[e1=0] (ma2) edge  node {} (r3);

    \draw[e0=0] (l3) edge  node {} (l4);
    \draw[e1=0] (l3) edge  node {} (ma4);

    \draw[e1=0] (r3) edge  node {} (mb4);
    \draw[e0=0] (r3) edge  node[left] {} (r4);

    \node[ xshift=0.5cm, yshift=.5cm, below= of ma4] (a) {$\pmb{\vdots}$};

    \draw[e0=0] (l5) edge  node {} (l6);
    \draw[e1=0] (l5) edge  node {} (ma6);

    \draw[e0=0] (r5) edge  node {} (r6);
    \draw[e1=0] (r5) edge  node {} (mb6);

    \draw[e0=0] (ma6) edge  node {} (leaf0);
    \draw[e1=0] (ma6) edge  node {} (leaf1);

    \draw[e0=0] (mb6) edge  node {} (leaf1);
    \draw[e1=0] (mb6) edge  node {} (leaf0);

    \draw[e0=25] (l6) edge  node {} (leaf0);
    \draw[e1=25] (l6) edge  node {} (leaf0);

    \draw[e1=25] (r6) edge  node {} (leaf1);
    \draw[e0=25] (r6) edge  node {} (leaf1);
\end{tikzpicture}
\caption{An \add for the inner product function $IP'$ from \cref{thm:add-rbm} made up of stacked blocks, each consisting of a layer of 2 nodes and a layer of 4 nodes.
$A$ in the right leaf is the normalization constant from \cref{thm:add-rbm}.
}
\label{fig:add-ip}
\end{figure}
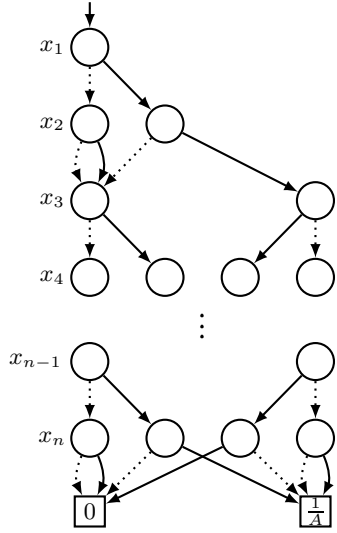

For proving the separation between RBM and \add, we use the seminal Boolean function $IP: \set{0, 1}^n \rightarrow \{0, 1\}, \vec{x} \mapsto \sum_{k=1}^{n/2} x_k x_{k + n/2} \mod 2$ for even $n$, which computes the inner product between the first half of the input with the second half.
\citeauthor{martens2013representational} show that any RBM requires a number of hidden weights $m$ which is necessarily exponential in $m$.

\begin{lemma}
\label{thm:add-rbm}
There is a quantum state that has linear representation both as \add, and \qmdd, and \limdd, and MPS, but requires exponential space when represented as RBM under any qubit order.
\end{lemma}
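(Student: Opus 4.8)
The plan is to use the inner-product-mod-$2$ state. Fix $n$ even, let $IP\colon\{0,1\}^n\to\{0,1\}$ be the function introduced above, let $A \defn \sqrt{\abs{IP^{-1}(1)}}$ be the (nonzero) normalization constant, and take the state $\ket{\phi_n} \defn \sum_{\vec x\in\{0,1\}^n} IP'(\vec x)\,\ket{\vec x}$, where $IP'(\vec x) = 1/A$ if $IP(\vec x)=1$ and $IP'(\vec x) = 0$ otherwise --- i.e., the amplitude function of \cref{fig:add-ip}. I would then establish two facts: that $\ket{\phi_n}$ has linear-size \add, \qmdd, \limdd and MPS representations, and that every RBM for $\ket{\phi_n}$ has size $2^{\Omega(n)}$ regardless of how the qubits are assigned to the visible nodes.

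For the first fact, I would order the qubits so that each $x_k$ is adjacent to its partner $x_{k+n/2}$. Reading the input in this order, the only information needed to determine the remaining amplitude function is the running parity $\bigoplus_{j\le k} x_j x_{j+n/2}$ (between two pairs), together with the bit read so far in the current pair (inside a pair); hence, after merging isomorphic nodes, each between-pair level carries $2$ nodes and each inside-pair level carries $4$ nodes. This is precisely the $O(n)$-node \add of \cref{fig:add-ip}. Since an \add is a special case of a \qmdd, which is a special case of a \limdd, the same diagram (with unlabelled edges relabelled by $1$, resp.\ $\id$) is a \qmdd, resp.\ \limdd, of size $O(n)$. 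For the MPS, I would invoke the linear-time \qmdd-to-MPS transformation of \cref{thm:qmdd-to-mps}, whose output has bond dimension equal to the maximum \qmdd level width --- here $O(1)$ --- hence size $O(n)$ as well.

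For the second fact, I would first note that the RBM representation is invariant under permuting qubits: permuting the entries of $\vec\alpha$ and the rows of $W$ by some $\pi\in S_n$ turns a size-$s$ RBM for a state $\ket\psi$ into a size-$s$ RBM for the qubit-permuted state. So ``qubit order'' is vacuous for RBMs, and it suffices to rule out a small RBM under one fixed identification of qubits with visible nodes. Since all data structures ignore global scalars, this reduces to showing that no RBM with $m = 2^{o(n)}$ hidden nodes has coefficient function proportional to the indicator of $IP^{-1}(1)$ --- which is exactly the exponential separation of \citeauthor{martens2013representational}, yielding $\abs{\mathcal M}\ge n\cdot m = 2^{\Omega(n)}$.

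The main obstacle is importing the Martens et al.\ bound in the form needed here: one must check that their lower bound on the number of hidden units applies to the amplitude-function (rather than probability-distribution) reading of an RBM and to complex weights, and that it is insensitive to the qubit-to-visible-node assignment --- the latter being precisely what the permutation-invariance remark handles. The remaining steps (counting nodes in the \add, promoting it to \qmdd and \limdd, and obtaining the MPS via \cref{thm:qmdd-to-mps}) are routine.
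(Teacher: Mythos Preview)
Your proposal is correct and follows essentially the same approach as the paper: both use the inner-product-mod-$2$ state, build the linear-size \add via the interleaved variable order (then promote to \qmdd, \limdd, MPS via the inclusions/transformations), and invoke \citeauthor{martens2013representational} for the RBM lower bound. You are in fact slightly more careful than the paper in two places---explicitly noting that qubit order is vacuous for RBMs via permutation invariance, and flagging that the Martens et al.\ bound must be checked to extend to complex weights and amplitude functions---but the overall route is identical.
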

\begin{proof}
We will give the proof for the \add; the result will then follow for \qmdd, \limdd and MPS, since these are at least as succinct as \add.

Since we consider the representation size under any qubit variable order,
 we may as well interleave the order. That is, we 
  consider $IP'$ which equals $IP$ with $x_{k+1}$ and $x_{k + n/2}$ swapped, i.e. $IP'(x) = x_1 x_2 + x_3 x_4 + \dots + x_{n-1} x_n$.
Consider the $n$-qubit quantum state $\ket{\phi}$ where $\langle x | \phi\rangle = IP'(x) / A$ for $x\in \{0, 1\}$, where the normalization factor $A = \sqrt{\sum_{x\in \{0, 1\}} IP'(x)}$.
\citeauthor{martens2013representational} show that any RBM requires a number of hidden weights $m$ which is necessarily exponential in $m$.

There exists an \add which represents $\ket{\phi}$ in $O(n)$ space.
This \add is constructed from stacked blocks of two layers (of 2 and 4 nodes, respectively).
The $(k+1)/2$-th block (counting from 1 from the top) for odd $k = 1, 3, 5, \dots$ corresponds to computing the value $x_k \cdot x_{k+1}$ and adding it to the running value of $IP'(x_1, x_2, \dots, x_{k-1})$.
See \cref{fig:add-ip}.
\end{proof}

\begin{lemma}
	\label{thm:rbm-represents-sum}
	RBM can represent the state \sumstate with a single hidden node.
\end{lemma}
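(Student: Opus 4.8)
The plan is to exhibit an explicit single-hidden-node RBM whose coefficient function coincides with the amplitude function of \sumstate, which establishes the claim directly. First I would recall that, as computed in the proof of \cref{thm:state-has-exp-limdd} (see \cref{eq:sum}), the amplitude function of \sumstate is
\[
f(\vec x) \;=\; 1 + e^{i\pi \sum_{j=1}^n x_j \cdot 2^{-j-1}},
\]
which holds because (ignoring normalization, as permitted) $\ket{+}^{\otimes n}$ contributes the constant $1$ to every computational-basis amplitude, while $\bigotimes_{j=1}^n(\ket 0 + e^{i\pi 2^{-j-1}}\ket 1)$ contributes $\prod_{j=1}^n (e^{i\pi 2^{-j-1}})^{x_j} = e^{i\pi\sum_{j=1}^n x_j 2^{-j-1}}$ to the amplitude of $\ket{\vec x}$.

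Next I would specialize the RBM coefficient formula \cref{eq:rbm-coefficient} to $m = 1$, obtaining $\braket{\vec x|\mathcal M} = e^{\vec x^T\cdot\vec\alpha}\cdot\bigl(1 + e^{\beta_1 + \vec x^T\cdot\vec W_1}\bigr)$, and observe that $f$ is already exactly of this shape with the visible biases switched off. Concretely, I would take the $n$-qubit RBM $\mathcal M = (\vec\alpha,\vec\beta,W,m)$ with $m = 1$, $\vec\alpha = \vec 0 \in \mathbb C^n$, $\beta_1 = 0$, and weight matrix $W \in \mathbb C^{n\times 1}$ defined by $W_{j,1} = i\pi\, 2^{-j-1}$. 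Then $\vec x^T\cdot\vec W_1 = \sum_{j=1}^n x_j\cdot i\pi\, 2^{-j-1}$, so
\[
\braket{\vec x|\mathcal M} \;=\; 1 + e^{\,i\pi\sum_{j=1}^n x_j 2^{-j-1}} \;=\; f(\vec x) \;=\; \braket{\vec x|\text{Sum}}
\]
for every $\vec x \in \{0,1\}^n$; hence $\ket{\mathcal M} = \sumstate$. The size is $|\mathcal M| = n + 1 + n = 2n + 1$, linear in $n$, and in particular it uses a single hidden node, as required. Together with \cref{thm:state-has-exp-limdd}, this also yields the RBM-vs-\limdd separation recorded in \cref{fig:succinct}.

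The main obstacle here is essentially nonexistent: the single observation needed is that \sumstate --- being the all-ones constant plus one product of per-qubit phase factors --- literally has the form $1 + e^{(\text{affine function of }\vec x)}$, which is exactly the multiplicative contribution of one hidden node when the visible biases vanish and $\beta_1 = 0$. The only real care is bookkeeping: matching the index convention of \cref{eq:sum} against the column convention for $W$ in \cref{eq:rbm-coefficient}, so I would double-check the signs and the $2^{-j-1}$ exponents when transcribing the weights.
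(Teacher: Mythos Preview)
Your proof is correct and follows essentially the same approach as the paper: a single hidden node with zero biases and weights $W_{j,1}=i\pi\,2^{-j-1}$, so that the RBM coefficient $1+e^{\vec x^T\vec W_1}$ reproduces the amplitude function of \sumstate. If anything, your version is slightly cleaner---the paper states the edge weight as $e^{i\pi 2^{-j-1}}$ (evidently a typo) before immediately using it as $i\pi 2^{-j-1}$ in the exponent, whereas you get this right from the start.
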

\begin{proof}
	All nodes have bias $0$, i.e., $\beta = [0]$ and $\alpha = [0, \dots, 0]^T$ (a length-$n$ vector).
	The weight on the edge between the hidden node and the $j$-th visible node is $e^{i\pi 2^{-j-1}}$.
	Then the RBM is defined by the multiplicative term of this hidden node, yielding
	\begin{align}
	\psi(\vec x) = 1 + e^{w\cdot \vec x} = 1 + \prod_{j=1}^ne^{x_ji\pi 2^{-j-1}}
	\end{align}
This corresponds exactly with the sum state: $\ket \psi = \sumstate$.
\end{proof}
 
\begin{lemma}
	\label{thm:succ-rbm-vs-limdd}
	There is a state with a RBM of size $\mathcal O(n)$ but which requires \limdd of size $2^{\Theta(n)}$, for every variable order.
\end{lemma}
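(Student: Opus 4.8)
The plan is to take $\ket\phi = \sumstate$ as the witness family, indexed by the number of qubits $n$, and to obtain the lemma by combining two facts about this state that have already been proved. No new separation argument is needed: the statement is an immediate corollary of \cref{thm:rbm-represents-sum} and \cref{thm:state-has-exp-limdd}.

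For the RBM side I would invoke \cref{thm:rbm-represents-sum}, which gives an RBM for \sumstate with a single hidden node; since an RBM on $n$ visible and $m$ hidden nodes has size $n+m+nm$, taking $m=1$ yields size $2n+1 = \mathcal O(n)$. For the \limdd side, the lower bound is exactly \cref{thm:state-has-exp-limdd}: every \limdd for \sumstate has $2^{\Omega(n)}$ nodes, for every variable order. To sharpen $2^{\Omega(n)}$ to $2^{\Theta(n)}$ I would add the trivial matching upper bound that holds for any $n$-qubit state and any order: the full binary decision tree (all edge labels $1$, amplitudes on the leaves) is a legal \add, hence a legal \limdd, of size $\mathcal O(2^n)$. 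Chaining the two bounds gives minimum \limdd size $2^{\Theta(n)}$ for every variable order, which is the claim.

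I do not expect a genuine obstacle here: all the substance has already been discharged in \cref{thm:state-has-exp-limdd}, whose proof hinges on the injectivity of the map $\vec a \mapsto f_{\vec a}$ and of the induced rescaling factor $\alpha$, showing that the $2^{n-1}$ single-variable subfunctions of the amplitude function of \sumstate are pairwise inequivalent under every Pauli LIM $Q \in \paulilim_1$. Given that lemma together with \cref{thm:rbm-represents-sum}, the present statement is pure bookkeeping of one $\mathcal O(n)$ bound against one $2^{\Theta(n)}$ bound; combined with the reverse separation supplied by \cref{thm:add-rbm} (a state with small \add/\qmdd/\limdd/MPS but exponential RBM), it establishes the incomparability of RBM and \limdd shown in \cref{fig:succinct}.
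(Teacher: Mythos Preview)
Your proposal is correct and matches the paper's proof essentially verbatim: the paper also takes \sumstate as the witness, invokes \cref{thm:rbm-represents-sum} for the $\mathcal O(n)$ RBM and \cref{thm:state-has-exp-limdd} for the exponential \limdd lower bound. Your added remark about the trivial $\mathcal O(2^n)$ upper bound to justify the $\Theta$ (rather than just $\Omega$) is a nice touch the paper omits.
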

\begin{proof}
	RBM can represent the state \sumstate, by \cref{thm:rbm-represents-sum}.
	However, \cref{thm:state-has-exp-limdd} shows that \limdds require exponential size to represent this state.
\end{proof}

\begin{lemma}
	\label{thm:succ-rbm-vs-mps}
	There is a family of states with polynomial-size RBM but exponential-size MPS.
\end{lemma}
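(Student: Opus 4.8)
The plan is to take a family of \emph{graph states} that RBM represents with one hidden node per graph edge but whose bipartite entanglement is so large that every MPS representing it has exponential bond dimension, for every qubit order. Recall that the graph state of a graph $G=(V,E)$ with $|V|=n$ is $\ket G=\prod_{(i,j)\in E} CZ_{ij}\,\ket{+}^{\otimes n}$, so that $\braket{\vec x|G}\propto\prod_{(i,j)\in E}(-1)^{x_i x_j}$.

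First I would build the RBM. The state $\ket{+}^{\otimes n}$ is the trivial RBM with no hidden nodes and all biases $0$. It then suffices to show that a single $CZ$ factor $(-1)^{x_i x_j}$ can be produced, up to a nonzero global constant (which every data structure here ignores), by adding one hidden node connected only to visible nodes $i$ and $j$, together with a correction to the visible biases $\alpha_i,\alpha_j$. Writing $t=e^{\beta}$, $p=e^{W_i}$, $q=e^{W_j}$, the hidden node contributes $1+t,\ 1+tp,\ 1+tq,\ 1+tpq$ on the four assignments of $(x_i,x_j)$; choosing $e^{\alpha_i}=\tfrac{1+t}{1+tp}$ and $e^{\alpha_j}=\tfrac{1+t}{1+tq}$ equalizes the first three entries, and the fourth becomes $-(1+t)$ exactly when $(1+tpq)(1+t)=-(1+tp)(1+tq)$, i.e. $2+t(1+p)(1+q)+2t^2 pq=0$; this quadratic in $t$ has a solution avoiding the degenerate values for generic $p=q$. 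Doing this for each edge yields an RBM with $m=|E|\le\binom{n}{2}$ hidden nodes, hence size $n+m+nm=O(n^3)$.

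Next I would use the standard fact that the Schmidt rank of $\ket G$ across a bipartition $(A,\bar A)$ of $V$ equals $2^{r}$, where $r=\operatorname{rank}_{\mathbb F_2}(\Gamma_{A\bar A})$ is the $\mathbb F_2$-rank of the $|A|\times|\bar A|$ off-diagonal block of the adjacency matrix of $G$. For any MPS with a fixed qubit order, the bond dimension across the cut separating the first $\lfloor n/2\rfloor$ qubits from the rest is at least the Schmidt rank there, and hence the MPS has size at least that Schmidt rank. So it suffices to find, for each $n$, a graph $G_n$ in which \emph{every} balanced bipartition $(A,\bar A)$ satisfies $\operatorname{rank}_{\mathbb F_2}(\Gamma_{A\bar A})=\Omega(n)$: then $\ket{G_n}$ has an $O(n^3)$-size RBM but every MPS for it, under every qubit order, has size $2^{\Omega(n)}$. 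A uniformly random graph $G(n,\tfrac{1}{2})$ works: for fixed balanced $A$ the block $\Gamma_{A\bar A}$ is a uniform random matrix over $\mathbb F_2$, which has rank below $n/4$ with probability $2^{-\Omega(n^2)}$, so a union bound over the at most $2^n$ choices of $A$ leaves, with high probability, every balanced cut of rank $\ge n/4$. Alternatively one may take the explicit $\sqrt{n}\times\sqrt{n}$ two-dimensional cluster state, whose area-law entanglement forces bond dimension $2^{\Omega(\sqrt{n})}$ across every balanced cut, analogously to how \cref{thm:succ-limdd-vs-mps} already uses a stabilizer state with exponential-size MPS.

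The main obstacle is the combinatorial heart of the argument: showing that \emph{every} balanced cut matrix of the chosen graph has large $\mathbb F_2$-rank --- note that a cut with many crossing edges can still have a low-rank cut matrix (e.g. the complete graph has rank-$1$ cut matrices), so edge-expansion alone is not enough. The random-graph union bound above settles this cleanly; for an explicit family one would instead argue via the rank of a good generator matrix (code-based graphs) or via the precise structure of balanced cuts of the grid. The RBM side is routine once the single-$CZ$ gadget is verified, and the MPS lower bound is just the standard bond-dimension $\ge$ Schmidt-rank fact.
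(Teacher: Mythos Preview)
Your proposal is correct and follows essentially the same route as the paper: exhibit a graph/stabilizer state that RBMs represent compactly but that forces exponential MPS bond dimension under every qubit order, with the 2D cluster state being the canonical witness. The paper's own proof simply cites \cite{zhang2018efficient} for the RBM side (stabilizer states have efficient RBMs) and \cite{vinkhuijzen2021limdd} for the MPS side (the cluster state needs exponential MPS), whereas you unpack both ingredients explicitly---a one-hidden-node-per-edge gadget for $CZ$ (your choice $p=q=-1,\ t=i$ indeed works) and the Schmidt-rank~$=2^{\operatorname{rank}_{\mathbb F_2}\Gamma_{A\bar A}}$ argument, even offering a random-graph alternative to the grid; this makes your argument self-contained but not different in substance.
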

\begin{proof}
	RBM can efficiently represent stabilizer states, as shown by \citeauthor{zhang2018efficient}.
	\citeauthor{vinkhuijzen2021limdd} show that some stabilizer states require exponential-size MPS (in particular, the cluster state, among others).
\end{proof}

}
{\section{Proofs of \cref{sec:tractable-queries}}
\label{sec:proofs-of-sec-tractability}

In this appendix, we prove \cref{thm:tractability} and  \cref{thm:fidelity} from \cref{sec:tractable-queries}.

\cref{thm:tractability} is restated below.
The proofs are organized per row of the table, so there is one section for each data structure.
\cref{sec:prelims2} contains relevant preliminaries on quantum information and QDDs.

\thmtractability*

\renewcommand\supp[1]{#1} %
\begin{table}[b!]

\label{tab:tractability2}
\end{table}

We restate the other main result \cref{thm:fidelity} here and provide a proof.

\thmfidelity*
\begin{proof}
	\cref{thm:even-subgraphs-ETH-hard} proves that \limdd does not admit a polynomial time algorithm unless the exponential time hypothesis fails. 
	\cref{thm:rbm-fidelity-ETH-hard} concludes the same for RBM.
\end{proof}

\subsection{Easy and hard operations for \add}
\label{app:trac:add}

As noted in \cref{sec:data-structures-definitions}, the decision diagrams are special cases of each other. In particular, \add specializes \qmdd, which specializes \limdd. From this, it immediately follows that $\limdd  \atleastassuccinctas \qmdd \atleastassuccinctas \add$. We also use this fact in the below proofs.

\begin{lemma}
	\add supports \samp and \pro.
\label{th:add-samp}
\label{th:add-prob}
\label{th:add-eq}
\end{lemma}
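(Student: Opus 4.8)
### Proof proposal for Lemma (\texttt{\add} supports \samp and \pro)

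The plan is to exploit the recursive (Shannon-decomposition) structure of an \add and the fact that every node represents an explicit pseudo-Boolean function $f\colon\{0,1\}^k\to\mathbb C$, with no edge labels to complicate the bookkeeping. The key quantity to compute at each node $v$ is the squared norm $S(v)\defn\sum_{\vec x\in\{0,1\}^k}|f_v(\vec x)|^2$, where $f_v$ is the subfunction represented by $v$. Since a non-leaf node satisfies $\ket v=\ket 0\otimes\ket{v[0]}+\ket 1\otimes\ket{v[1]}$ and the two halves live on disjoint computational-basis strings, we get the recurrence $S(v)=S(v[0])+S(v[1])$, with $S(\ell)=|\lbl(\ell)|^2$ at a leaf $\ell$. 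First I would compute $S(v)$ for every node by a single bottom-up pass over the DAG, memoizing the result so each node is touched once; this costs $O(|\alpha|)$ arithmetic operations. (One subtlety: the root edge of an \add has label $1$ in our convention, so no extra rescaling is needed; if one allowed a root scalar it would just multiply every $S$ by its squared modulus.)

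For \pro: given $\vec x\in\{0,1\}^n$, the amplitude $\braket{\vec x|\phi}$ is read off by following the unique path from the root determined by the bits of $\vec x$ (taking the low edge when $x_k=0$ and the high edge when $x_k=1$, where $k$ is the index of the current node), arriving at a leaf whose label is exactly $\braket{\vec x|\phi}$ — this is immediate from the semantics $\ket v=\ket0\otimes\ket{v[0]}+\ket1\otimes\ket{v[1]}$ unfolded along the path. Hence \pro returns $|\lbl(\text{leaf})|^2 / S(\text{root})$ in $O(n)$ time after the one-time $O(|\alpha|)$ preprocessing. Strictly, since the \add is only required to be (possibly) unnormalized, the probability is $|\braket{\vec x|\phi}|^2/\braket{\phi|\phi}$ and $\braket{\phi|\phi}=S(\text{root})$; if the state is already normalized $S(\text{root})=1$ and the division is trivial.

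For \samp: I would sample the bits $x_1,\dots,x_n$ one at a time while walking down the \add from the root. At a node $v$ with index $k$, having already computed $S$, set $x_k=1$ with probability $S(v[1])/S(v)$ and $x_k=0$ otherwise, then move to the corresponding child and continue until a leaf is reached. A telescoping argument shows the probability of producing a given $\vec x$ along its path is $\prod_k S(v^{(k+1)})/S(v^{(k)}) = S(\text{leaf})/S(\text{root}) = |\braket{\vec x|\phi}|^2/\braket{\phi|\phi}$, which is exactly the Born-rule distribution; so the output is correctly distributed. Each sample costs $O(n)$ time plus the shared $O(|\alpha|)$ precomputation of $S$, so \samp runs in randomized polynomial time — matching the \Yar entry in \cref{tab:tractability}.

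The routine parts (the bottom-up pass, the path-following) are standard; the only thing worth stating carefully is the telescoping identity for correctness of \samp and the handling of the (possibly) unnormalized interpretation, i.e. dividing by $S(\text{root})=\braket{\phi|\phi}$. I do not anticipate a real obstacle here, since \add carries no edge labels and no isomorphism/Pauli complications — these features are exactly what makes the later \qmdd and \limdd arguments (which build on this lemma) more delicate.
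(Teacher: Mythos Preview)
Your proof is correct, but it takes a different route from the paper. The paper's own proof is a single line: since \add is a special case of \limdd (every \add is a \limdd with trivial edge labels), and \limdd is already shown to support \samp and \pro by \cite{vinkhuijzen2021limdd}, the result follows immediately by inheritance. You instead give a direct, self-contained construction: a bottom-up dynamic program computing the squared norms $S(v)$, then path-following for \pro and top-down conditional sampling for \samp. Your approach is essentially what the cited \limdd algorithm does once specialized to \add (where the absence of edge labels and Pauli factors makes the recurrence $S(v)=S(v[0])+S(v[1])$ trivial), so the two are morally the same algorithm; the paper just outsources the details. What your write-up buys is that the lemma becomes independent of the \limdd machinery, which is pedagogically cleaner here since \add is the base case in the succinctness chain; what the paper's one-liner buys is brevity and avoiding redundancy with the \limdd section. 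One cosmetic point: watch the bit-index order --- in the paper's convention the root has index $n$ and leaves index $0$, so the walk samples $x_n,\ldots,x_1$ rather than $x_1,\ldots,x_n$; this does not affect correctness.
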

\begin{proof}
Since \add specializes \limdd (see \cref{thm:limdd-prob}).	
\end{proof}

\begin{lemma}
\add  supports inner-product $\innerprod \phi \psi$.
\label{th:add-inprod}
\end{lemma}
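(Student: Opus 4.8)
The plan is to give the standard recursive, memoised algorithm on pairs of nodes. Assume both operands use the same variable order (the usual convention for decision-diagram operations) and recall that, since node skipping is disallowed, a node at level $k$ in either diagram represents a $k$-qubit (pseudo-Boolean) vector, so the recursion descends through both diagrams in lockstep. Define $\mathrm{IP}(v,w)$ for a node $v$ of the \add\ of $\phi$ and a node $w$ of the \add\ of $\psi$ with $\nindex(v)=\nindex(w)=k$ as follows: if $k=0$, both are leaves and we return $(\lbl(v))^*\cdot\lbl(w)$; otherwise we return $\mathrm{IP}(\low v,\low w)+\mathrm{IP}(\high v,\high w)$. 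The answer $\braket{\phi|\psi}$ is $\mathrm{IP}(r_\phi,r_\psi)$ for the roots $r_\phi,r_\psi$, using that for \add\ every edge label --- including those on the two root edges --- equals $1$, so the root nodes carry the full states.

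For correctness I would show, by induction on $k$, that $\mathrm{IP}(v,w)=\braket{v|w}$, where $\ket v,\ket w$ are the states the two nodes represent. The base case is immediate from the leaf semantics and the definition of the inner product of one-dimensional vectors. For the step, the node semantics give $\ket v=\ket 0\otimes\ket{\low v}+\ket 1\otimes\ket{\high v}$ and likewise for $w$; since $\ket 0,\ket 1$ are orthonormal, $\braket{v|w}=\braket{\low v|\low w}+\braket{\high v|\high w}$, which equals $\mathrm{IP}(\low v,\low w)+\mathrm{IP}(\high v,\high w)=\mathrm{IP}(v,w)$ by the induction hypothesis.

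For the running time, memoise $\mathrm{IP}$ on the pair $(v,w)$: there are at most $\sizeof\phi\cdot\sizeof\psi$ distinct such pairs, and each is resolved with two table look-ups and one complex multiplication or one complex addition; hence the algorithm runs in time $O(\sizeof\phi\cdot\sizeof\psi)$, polynomial in the combined input size, so \add\ supports \inprod. This argument is routine; the only points needing care are stating the shared-variable-order assumption together with the no-skipping convention (so the two recursions stay level-synchronised) and observing that memoisation is exactly what collapses the otherwise exponential recursion tree --- the same observation that underlies every other \add\ operation. Neither reducedness nor normalisation of the inputs is used, and the same recursion (with the trivial weight $1$ propagated along edges) will serve as the template for the \qmdd\ and \limdd\ cases later, with the edge labels carried through the recursion.
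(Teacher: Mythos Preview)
Your proof is correct but takes a different route from the paper. The paper simply observes that \add\ is a specialisation of \qmdd\ and defers to the \qmdd\ result, which in turn is proved by translating the \qmdd\ to an MPS (in linear time) and invoking the known polynomial-time MPS inner-product algorithm. You instead give the classical direct dynamic-programming algorithm on pairs of nodes. Your approach is more elementary and self-contained, yielding an explicit $O(\sizeof\phi\cdot\sizeof\psi)$ bound without any MPS machinery; the paper's route is more economical given that the \qmdd-to-MPS translation and the MPS inner product are already established elsewhere in the paper, so no new algorithm needs to be described or analysed.

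One caveat on your closing remark: the same template does \emph{not} carry over to \limdd. The paper in fact proves that \limdd\ \inprod\ and \fid\ are intractable under the Exponential Time Hypothesis. The obstruction is precisely that when the edge labels are Pauli LIMs, propagating them through the recursion means the memoisation key is no longer just the pair of nodes but also the accumulated LIM, and the number of distinct such keys can be exponential. For \qmdd\ (scalar labels) the template does work, since the scalar can be factored out and the key remains a node pair.
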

\begin{proof}
Since \add is a specialization of \qmdd (see \cref{th:qmdd-inprod}).
\end{proof}

\begin{lemma}
\add  supports \addi and \eq.
\label{th:add-add}
\label{th:add-eq}
\end{lemma}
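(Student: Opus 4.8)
For \addi, the plan is to invoke the classical recursive \textsc{Apply} algorithm for decision diagrams \cite{bahar1997algebric,Clarke2001,miller2006qmdd}. Given reduced \add{}s $\alpha,\beta$ representing $n$-qubit states $\ket\phi,\ket\psi$, I would define a memoised routine $\textsc{Add}(u,v)$ which, when $u,v$ are both leaves, returns the leaf labelled $\lbl(u)+\lbl(v)$, and when $u,v$ are internal nodes---necessarily at the same level, since node skipping is disallowed---returns $\textsc{MakeNode}\bigl(\textsc{Add}(\low u,\low v),\textsc{Add}(\high u,\high v)\bigr)$, caching the result of each call under the pair $(u,v)$. A straightforward induction on the level shows $\braket{\vec y|\textsc{Add}(\alpha,\beta)}=\braket{\vec y|\alpha}+\braket{\vec y|\beta}$ for all $\vec y\in\{0,1\}^n$, and the caching bounds the number of distinct calls by $|\alpha|\cdot|\beta|$; since each call costs $\mathcal O(1)$ plus one \textsc{MakeNode} (amortised $\mathcal O(1)$ via a hashed unique table), the algorithm runs in time $\poly(|\alpha|+|\beta|)$ and its output has size at most $|\alpha|\cdot|\beta|$, kept reduced (hence canonical) by \textsc{MakeNode}.

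For \eq, I would rely on the canonicity of reduced \add{}s recalled in \cref{sec:preliminaries,sec:ddprelims}: two reduced \add{}s denote the same pseudo-Boolean function iff they are isomorphic as edge-labelled graphs, which is decidable in time $\mathcal O(|\alpha|+|\beta|)$ (trivially if both live in a shared unique table). To decide \emph{equivalence}, i.e.\ whether $\ket\phi=\lambda\ket\psi$ for some $\lambda\in\mathbb C$, I would first return \textbf{false} if $n(\alpha)\neq n(\beta)$, and otherwise bring each diagram to a canonical form up to scalar. Mark, bottom-up in time $\mathcal O(|\alpha|)$, which nodes represent the identically-zero function (a leaf is zero iff its label is $0$; an internal node is zero iff both its children are); if the root of $\alpha$ is marked then $\ket\phi$ is the zero vector and the answer is ``yes'' iff $\ket\psi$ is zero too, checked the same way. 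Otherwise descend from the root, at each node taking the low edge when its target is unmarked and the high edge otherwise; this reaches a nonzero leaf along the lexicographically least $\vec k$ with $\braket{\vec k|\phi}\neq 0$, and I record $c_\phi=\braket{\vec k|\phi}$. Doing the same with $\beta$ yields $\vec k'$ and $c_\psi$. If $\vec k\neq\vec k'$ the two states have different supports, hence are inequivalent; otherwise set $\lambda=c_\phi/c_\psi$, multiply every leaf label of $\beta$ by $\lambda$, re-run \textsc{MakeNode} bottom-up (time $\mathcal O(|\beta|)$; injectivity of multiplication by $\lambda\neq 0$ preserves the diagram shape), and return whether the result is isomorphic to $\alpha$. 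Correctness follows because, when $\ket\phi\neq 0$, any scalar $\mu$ with $\ket\phi=\mu\ket\psi$ is forced to equal $c_\phi/c_\psi$ by evaluating both sides at~$\vec k$.

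The main point requiring care---there is no genuine obstacle here---is the choice of $\lambda$ in the equality test: it must be extracted from a coordinate that is both canonical and common to the two states (the lexicographically first nonzero amplitude), and the zero-vector case must be dispatched separately, which is exactly the purpose of the bottom-up zero-marking pass. Everything else is the standard \textsc{Apply}/\textsc{MakeNode} machinery together with the canonicity of reduced \add{}s.
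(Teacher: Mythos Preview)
Your proof is correct. The paper's own proof is a one-line citation to \cite{fargier2014knowledge} (Table~1 for \textsc{EQ} and Table~2 for \textbf{+BC}), so you have essentially reconstructed the content behind that reference: the standard memoised \textsc{Apply} for \addi, and canonicity of reduced \add{}s for \eq.

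One point where your write-up is actually more careful than the paper's citation: Fargier et al.'s \textsc{EQ} is equality of pseudo-Boolean functions, whereas the paper's \eq is \emph{equivalence up to a scalar} $\lambda\in\mathbb C$. Your lexicographically-first-nonzero-amplitude trick to extract the candidate $\lambda$, followed by rescaling and an exact isomorphism check, bridges that gap explicitly; the paper leaves this reduction implicit. Your observation that scaling by $\lambda\ne 0$ is injective on leaf labels, so the diagram shape is preserved and no genuine re-reduction is needed, is correct and makes the $\mathcal O(|\beta|)$ bound immediate.
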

\begin{proof}
See \citeauthor{fargier2014knowledge} Table 1 (EQ) and Table 2 (\textbf{+BC}).
\end{proof}

\begin{lemma}
\add supports \loc, and hence also \had, \xyz, $T$, \swap and \cz.
\label{th:add-loc}
\label{th:add-xyz}
\label{th:add-had}
\label{th:add-t}
\label{th:add-swap}
\label{th:add-cx}
\end{lemma}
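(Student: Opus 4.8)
The plan is to exploit the fact that an \add stores the amplitude function $f\colon\{0,1\}^{n}\to\mathbb{C}$ essentially explicitly, so that a $k$-local gate with $k$ constant mixes only a constant number of cofactors of $f$. Let $U$ be a $k$-local gate acting on the qubit subregister $S=\{i_{1}<\dots<i_{k}\}$, so $U$ is a fixed $2^{k}\times 2^{k}$ unitary with $2^{k}=O(1)$. Writing $\vec{x}=(\vec{x}_{S},\vec{x}_{\bar{S}})$ and letting $f_{\vec{b}}(\vec{z})\defn f(\vec{b},\vec{z})$ denote the cofactor obtained by fixing the $S$-qubits to $\vec{b}\in\{0,1\}^{k}$, the amplitude function $g$ of $U\ket{\phi}$ satisfies
\[
g(\vec{x})=\sum_{\vec{b}\in\{0,1\}^{k}}U_{\vec{x}_{S},\vec{b}}\,f_{\vec{b}}(\vec{x}_{\bar{S}})=\sum_{\vec{a}\in\{0,1\}^{k}}[\vec{x}_{S}=\vec{a}]\Big(\sum_{\vec{b}\in\{0,1\}^{k}}U_{\vec{a}\vec{b}}\,f_{\vec{b}}(\vec{x}_{\bar{S}})\Big).
\]
Hence it suffices to implement on \adds the primitives (i) cofactoring with respect to the $k$ qubits in $S$; (ii) multiplying a function by a complex scalar; (iii) addition; and (iv) the inverse of cofactoring, i.e.\ given an \add $h$ on the $\bar{S}$-qubits and $\vec{a}\in\{0,1\}^{k}$, building the $n$-qubit \add of $\vec{x}\mapsto[\vec{x}_{S}=\vec{a}]\,h(\vec{x}_{\bar{S}})$; and then to compose a constant number of them.

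Each primitive is routine on \adds. Cofactoring a single variable $x_{i}$ to a constant is the standard \emph{restrict} operation: recursively, at a node $v$ with $\nindex(v)>i$ recurse into both children and rebuild with \textsc{MakeNode}, and at a node with $\nindex(v)=i$ return the appropriate child; this runs in time $O(\sizeof{\alpha})$ and never increases the size, so iterating it over the $k=O(1)$ qubits of $S$ stays linear. Scalar multiplication by $c\in\mathbb{C}$ multiplies every leaf label by $c$ in time $O(\sizeof{\alpha})$. Addition of two \adds is supported by \cref{th:add-add}, and each addition at most multiplies the sizes of its two operands, so after the constant number of additions below the running size stays $\sizeof{\alpha}^{O(1)}$. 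Primitive (iv) prepends $k$ decision nodes at levels $i_{1},\dots,i_{k}$, each routing to the next along bit $a_{\ell}$ and to the $0$-leaf otherwise, in time $O(k\cdot\sizeof{\alpha})$. The only bookkeeping subtlety is that cofactoring the ``middle'' qubits of $S$ yields diagrams whose levels skip $S$; this is harmless, either by relabelling levels or by the footnote in \cref{sec:data-structures-definitions} (forbidding node-skipping costs at most a linear-size reduction), and primitive (iv) restores a well-formed \add on all $n$ levels.

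Combining these: compute the $2^{k}$ cofactors $f_{\vec{b}}$; for each $\vec{a}\in\{0,1\}^{k}$ form $h_{\vec{a}}\defn\sum_{\vec{b}}U_{\vec{a}\vec{b}}f_{\vec{b}}$ using $2^{k}$ scalar multiplications and $2^{k}-1$ additions; apply primitive (iv) to each $h_{\vec{a}}$; and add the resulting $2^{k}$ diagrams, reducing at the end with \textsc{MakeNode}. This is a constant number of polynomial-time operations on \adds of size $\sizeof{\alpha}^{O(1)}$, so the \add of $U\ket{\phi}$ is produced in time polynomial in $\sizeof{\alpha}$; thus \add supports \loc. The stated consequences follow at once: $\had$, the Pauli gates $\xyz$ and $T$ are $1$-local, and \cz and \swap are $2$-local, hence all are instances of \loc. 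I do not expect a genuine obstacle here; the only care needed is verifying that the four primitives are correct and run in polynomial time and that the constant number of additions does not blow the size up super-polynomially — in particular, correctly handling level indices when the gate does not act on the top qubits.
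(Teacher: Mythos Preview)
Your proposal is correct and takes essentially the same approach as the paper: both decompose the action of a $k$-local gate into a constant number ($4^{k}$) of elementary terms built from projections/cofactors, scalar multiplications, and placements back on the $S$-qubits, and then invoke tractable \add addition (\cref{th:add-add}) to recombine them. The paper phrases the decomposition as $U=\sum_{x,y}a_{xy}\ket{x}\bra{y}\otimes\id$ and cites Fargier et al.'s conditioning operation (\textbf{CD}) for the tractability of each rank-one term, whereas you spell out the same primitives explicitly as restrict, scalar multiply, and ``inverse cofactoring''; your version is more self-contained but otherwise identical in substance.
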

\begin{proof}
Suppose $U$ is a local gate on $k$ qubits.
Then $U$ can be expressed as the sum of $4^k$ terms, $U=\sum_{x,y\in \{0,1\}^k}a_{xy}\ket x\bra y\otimes \mathbb I_{n-k}$.
Each of these terms individually can be applied to an \add in polynomial time (\citeauthor{fargier2014knowledge} Table 1 \textbf{CD}), since they are projections, followed by $X$ gates.
Since a constant number of states can be added in polynomial time in \adds (\cref{th:add-add}), the result can be computed in polynomial time.
Since \add supports arbitrary $k$-local gates, in particular it supports all other gates that are mentioned: $H$, $X$, $Y$, $Z$, $T$, $\text{Swap}$ $CZ$.
\end{proof}

\subsection{Easy and hard operations for \qmdd}
\label{app:trac:qmdd}

\begin{lemma}
	\qmdd supports \eq, \samp and \pro.
\label{th:qmdd-samp}
\label{th:qmdd-prob}
\end{lemma}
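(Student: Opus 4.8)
The plan is to get all three operations from two ingredients: the canonicity of reduced \qmdds, and the recursive ``Shannon'' structure of a QDD, in which a node $v$ at level $k$ decomposes its state into the two \emph{orthogonal} pieces $\ket v = \ket 0\otimes\ket{e_{\mathit{low}}} + \ket 1\otimes\ket{e_{\mathit{high}}}$. Since \qmdd is a special case of \limdd, one could instead inherit these results from the corresponding \limdd lemmas (as is done for \add in \cref{th:add-samp} via \cref{thm:limdd-prob}); I will give the direct arguments, which are shorter and self-contained.

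For \eq: reduce both inputs $\alpha,\beta$ to canonical form with the standard \textsc{MakeNode} procedure, which runs in time linear in the output size (\cref{sec:preliminaries,sec:ddprelims}). By canonicity of reduced \qmdds, the internal part of the diagram of a state is unique and the root edge carries exactly the global scalar; hence $\ket\alpha$ and $\ket\beta$ are equivalent iff, after reduction, their root nodes coincide. When a shared unique table is used this is a pointer comparison; otherwise it is a memoized recursive structural comparison running in time $O(|\alpha|+|\beta|)$. The all-zero state is handled by its own canonical form, so it causes no trouble.

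For \pro: given $\vec x$, the amplitude $\braket{\vec x|\phi}$ is obtained by walking the single root-to-leaf path selected by $\vec x$ (low edge when $x_k=0$, high edge when $x_k=1$) and multiplying the labels of the root edge and all traversed edges together with the leaf label, which is $1$ for \qmdd; this costs $O(n)$. The requested probability is $|\braket{\vec x|\phi}|^2/\braket{\phi|\phi}$, and $\braket{\phi|\phi}$ is computed either by the bottom-up squared-norm pass described next or by the \qmdd inner-product algorithm (\cref{th:qmdd-inprod}); both are polynomial. For \samp: first compute, for every node $v$ in reverse topological order (so each node is touched once), the squared norm $\|\ket v\|^2$; since the two summands of $\ket v$ live in orthogonal subspaces (distinguished by the leading qubit), $\|\ket v\|^2 = |\ell_0|^2\,\|\ket{w_0}\|^2 + |\ell_1|^2\,\|\ket{w_1}\|^2$, where $\ell_0,\ell_1$ label the low/high edges to children $w_0,w_1$ and $\|\ket{\mathrm{leaf}}\|^2 = 1$. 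Then descend from the root: at node $v$ at level $k$, set $x_k=1$ with probability $|\ell_1|^2\,\|\ket{w_1}\|^2/\|\ket v\|^2$ and $x_k=0$ otherwise, move to the chosen child, and iterate; output $(x_1,\dots,x_n)$. A short induction on the level shows this returns $\vec x$ with probability $|\braket{\vec x|\phi}|^2/\braket{\phi|\phi}$, i.e.\ exactly the computational-basis measurement distribution (the root-edge scalar cancels). The whole procedure is $O(|\alpha|)$ preprocessing plus $n$ biased coin flips, so \samp is supported in randomized polynomial time.

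These arguments are essentially routine; the only points that need care are (i) using dynamic programming / the unique table so each \qmdd node is processed once, since a reduced \qmdd has only polynomially many nodes but may encode exponentially many root-to-leaf paths, and (ii) keeping track of the normalization $\braket{\phi|\phi}$ in \pro and \samp, since the data structures may represent unnormalized states. Neither is a genuine obstacle, so I do not expect a hard step here.
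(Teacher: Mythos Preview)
Your proof is correct. The paper, however, takes the one-line route you yourself mention as an alternative: since \qmdd is a special case of \limdd, the result is inherited directly from \cref{thm:limdd-prob} (which in turn cites Vinkhuijzen et al.). Your direct arguments---canonicity for \eq, the single root-to-leaf path for \pro, and the bottom-up squared-norm pass followed by top-down sampling for \samp---are the standard QDD techniques and are exactly what underlies the \limdd result being invoked; they buy self-containment at the cost of a few extra lines. The paper's approach buys brevity and avoids repeating arguments already in the literature. There is no substantive difference in content, only in where the work is located.
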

\begin{proof}
Since \qmdd specializes \limdd (see \cref{thm:limdd-prob}).	
\end{proof}

\begin{lemma}
	\qmdd supports inner product (\inprod) and fidelity (\fid).
	\label{th:qmdd-inprod}
\end{lemma}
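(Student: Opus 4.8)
The plan is to give a recursive, memoised algorithm that computes $\braket{\phi|\psi}$ directly on the two \qmdd s representing $\ket\phi$ and $\ket\psi$, and then obtain \fid\ for free as $\abs{\braket{\phi|\psi}}^2$. Write $\ket\phi=\lambda_\phi\ket{r_\phi}$ and $\ket\psi=\lambda_\psi\ket{r_\psi}$, where $r_\phi,r_\psi$ are the root nodes and $\lambda_\phi,\lambda_\psi\in\mathbb C$ are the root-edge labels; then $\braket{\phi|\psi}=\lambda_\phi^{*}\lambda_\psi\braket{r_\phi|r_\psi}$, so it suffices to compute $\braket{v|w}$ for nodes $v,w$ sitting at the same level $k$ (this is well defined because no indices are skipped, so both diagrams have aligned levels and a node at level $k$ represents a $k$-qubit vector). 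For the leaves, $\braket{v|w}=1$ since $\leaflabels=\{1\}$. For internal nodes, denote the low/high edge labels of $v$ by $a^v_0,a^v_1$ with children $v_0,v_1$ (and similarly for $w$). The Shannon decomposition $\ket v=\ket0\otimes a^v_0\ket{v_0}+\ket1\otimes a^v_1\ket{v_1}$, together with $\braket{0|0}=\braket{1|1}=1$ and $\braket{0|1}=0$, yields
\[
\braket{v|w} = (a^v_0)^{*}a^w_0\braket{v_0|w_0} + (a^v_1)^{*}a^w_1\braket{v_1|w_1}.
\]

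I would implement this with a cache keyed on node pairs $(v,w)$, returning immediately on a hit. Correctness follows by induction on $k$ from the displayed identity, using the \qmdd\ semantics from \cref{sec:data-structures-definitions}. For the running time, observe that every recursive call $\braket{v'|w'}$ has $v'$ reachable from $r_\phi$ and $w'$ reachable from $r_\psi$ at a common level, so the set of distinct subproblems ever encountered is a subset of $V_\phi\times V_\psi$; with memoisation each subproblem is solved exactly once, performing $O(1)$ field operations plus two cache lookups. Hence the total running time is $O(\abs{V_\phi}\cdot\abs{V_\psi})$, polynomial in $\abs\phi+\abs\psi$ (treating complex arithmetic as unit cost, consistent with the paper's convention of not analysing numerical precision). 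Finally \fid\ is obtained as $\abs{\braket{\phi|\psi}}^2$, one further arithmetic step, so \qmdd\ supports \fid\ whenever it supports \inprod.

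The only genuinely non-routine point — and the one worth spelling out — is the bound on the number of distinct subproblems: one must verify that the recursion never asks for $\braket{v'|w'}$ with $v'$ and $w'$ at different levels or outside $V_\phi\times V_\psi$, which is exactly where the ``no skipped indices'' assumption and the DAG structure enter. This is the standard \textsc{Apply}-style dynamic-programming argument for decision diagrams (cf.\ \cite{bryant86,miller2006qmdd}), and it is the crux that makes the algorithm run in polynomial rather than exponential time; without the cache the same recurrence would in general branch exponentially.
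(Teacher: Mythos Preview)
Your proof is correct. The recursion you write down is exactly the Shannon-decomposition identity for \qmdd nodes, and the memoisation argument bounding the number of distinct calls by $|V_\phi|\cdot|V_\psi|$ is the standard \textsc{Apply}-style analysis; the ``no skipped indices'' remark ensures the level alignment you need.

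However, the paper takes a different route: it observes that any \qmdd can be translated in polynomial time to an MPS representing the same state (\cref{thm:trans-qmdd-to-mps}), and then appeals to the known polynomial-time inner-product and fidelity algorithms for MPS. Your approach is more self-contained and arguably more elementary, since it avoids the detour through MPS and gives an explicit $O(|V_\phi|\cdot|V_\psi|)$ bound directly on the diagrams; it is also closer to how \qmdd packages actually implement the operation. The paper's reduction, on the other hand, is shorter to state and lets the authors reuse machinery they need anyway for the succinctness and rapidity comparisons between \qmdd and MPS. The two arguments are in fact closely related under the hood: the \qmdd-to-MPS translation builds the bipartite adjacency matrices of each level, and the MPS inner-product contraction then performs essentially the same level-by-level dynamic program you describe.
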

\begin{proof}
	We show in \cref{sec:qmdd-to-mps} that a \qmdd can be efficiently and exactly translated to an MPS.
	Since MPS supports inner product and fidelity, the result follows.
\end{proof}

\begin{lemma}
	\textsc{\qmdd} does not support \addi in polynomial time.
\label{thm:qmdd-not-add}
\end{lemma}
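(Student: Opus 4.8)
The plan is to run the standard knowledge-compilation argument: exhibit two quantum states, each having a $\qmdd$ of size $O(n)$, whose sum provably requires a $\qmdd$ of size $2^{\Omega(n)}$. Since a polynomial-time $\addi$ algorithm run on the two small inputs would produce a $\qmdd$ for their sum of size polynomial in the input sizes, no such algorithm can exist.

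Concretely, I would take $\ket\phi \defn \ket{+}^{\otimes n}$ and $\ket\psi \defn \bigotimes_{j=1}^n(\ket 0 + e^{i\pi 2^{-j-1}}\ket 1)$, so that $\ket\phi + \ket\psi = \sumstate$, the state introduced in \cref{sec:proofs-of-sec-succinctness}. Both $\ket\phi$ and $\ket\psi$ are product states, and any product state has a reduced $\qmdd$ with exactly one node per level (the level-$j$ node's low and high edges simply carrying the two single-qubit amplitudes), hence size $O(n)$; this holds for every variable order, since a permutation of the qubits of a product state is again a product state. So the two inputs are guaranteed small, for any order.

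For the lower bound on the output, I would invoke \cref{thm:state-has-exp-limdd}: the $\limdd$ of $\sumstate$ has size $2^{\Omega(n)}$ for every variable order. Because $\limdd \atleastassuccinctas \qmdd$, a polynomial-size $\qmdd$ for $\sumstate$ would yield a polynomial-size $\limdd$ for it, a contradiction; hence every $\qmdd$ for $\sumstate$ has size $2^{\Omega(n)}$ (using also that the reduced $\qmdd$ is the unique minimum-size one, so no $\qmdd$ for $\sumstate$ is smaller than the reduced one). Combining: if $\addi$ on $\qmdd$ ran in time polynomial in $\sizeof{\phi} + \sizeof{\psi} = O(n)$, then on the $O(n)$-size $\qmdd$s for $\ket\phi, \ket\psi$ it would output a $\qmdd$ for $\sumstate$ of size $\poly(n)$, contradicting the exponential lower bound.

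I do not anticipate a real obstacle. The only points needing a little care are (i) that the input bound and the output bound both hold for \emph{every} variable order — the latter is already supplied by \cref{thm:state-has-exp-limdd}, and the former is immediate from product structure; and (ii) that the argument does not need the $\addi$ algorithm to return a \emph{reduced} diagram, only one representing $\sumstate$, since every $\qmdd$ representing $\sumstate$ is already of exponential size. (The same witnesses will reappear for the analogous $\limdd$ statement.)
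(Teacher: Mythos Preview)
Your argument is correct. The paper's own proof, however, is a one-line citation: it simply invokes \cite{fargier2014knowledge} (Thm.~4.9), who already established that \addi is hard for \qmdd.

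That said, your route is exactly the one the paper uses later for the analogous \limdd statement: take $\ket{+^n}$ and $\ket{\Rot^n}$ (your $\ket\phi,\ket\psi$), observe both have $O(n)$ \qmdds (\cref{thm:sump-small-qmdd}), and note their sum $\sumstate$ has exponential \limdd by \cref{thm:state-has-exp-limdd}, hence exponential \qmdd via $\limdd\atleastassuccinctas\qmdd$. So you have essentially reconstructed the paper's \limdd proof and applied it one level down. What your approach buys is self-containment within the paper's own machinery; what the citation buys is brevity and an independent witness (Fargier et al.'s hard instance is a different family).
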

\begin{proof}
\citeauthor{fargier2014knowledge} (Thm.~4.9) show that \addi is hard for \qmdd.
\end{proof}

\begin{lemma}
	\textsc{\qmdd} does not support \had in polynomial time and hence neither \loc.
\label{thm:qmdd-not-had}
\label{thm:qmdd-not-loc}
\end{lemma}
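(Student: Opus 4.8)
The plan is to exhibit a family of quantum states $\ket{\phi_n}$ on $n$ qubits such that each $\ket{\phi_n}$ has a polynomial-size \qmdd, but after applying a single Hadamard gate to a designated qubit the resulting state $H\ket{\phi_n}$ requires a \qmdd of size $2^{\Omega(n)}$ under every variable order. A natural candidate is a state closely related to \sumstate from \cref{sec:proofs-of-sec-succinctness}: recall that \sumstate $= \ket{+}^{\otimes n} + \bigotimes_{j=1}^n(\ket 0 + e^{i\pi 2^{-j-1}}\ket 1)$ is a sum of two product states, and that \cref{thm:state-has-exp-limdd} shows it has \limdd size $2^{\Omega(n)}$ for every variable order — hence also \qmdd size $2^{\Omega(n)}$, since \qmdd is a special case of \limdd. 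So the idea is to ``precompute'' a state that is cheap as a \qmdd and becomes (equivalent to) \sumstate — or an equally hard variant — after one Hadamard.

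Concretely, I would take an extra ancilla qubit $q_0$ in state $\ket 0$, so that $\ket 0 \otimes \bigotimes_{j=1}^n(\ket 0 + e^{i\pi 2^{-j-1}}\ket 1)$ is a product state and thus has a linear-size \qmdd (each level has a single node, with the phase $e^{i\pi 2^{-j-1}}$ on its high edge). Now consider instead a state of the form $\ket{\phi_n} = \ket 0 \otimes \ket{\psi_0} + \ket 1 \otimes \ket{\psi_1}$ where $\ket{\psi_0} = \ket{+}^{\otimes n}$ and $\ket{\psi_1} = \bigotimes_{j=1}^n(\ket 0 + e^{i\pi 2^{-j-1}}\ket 1)$ are both product states; this has a \qmdd with two nodes per level below the root (one in each product ``branch''), so size $O(n)$. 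Applying $H$ to qubit $q_0$ maps $\ket 0 \mapsto \ket 0 + \ket 1$, $\ket 1 \mapsto \ket 0 - \ket 1$, producing $\ket 0 \otimes (\ket{\psi_0} + \ket{\psi_1}) + \ket 1 \otimes (\ket{\psi_0} - \ket{\psi_1})$. Measuring or projecting onto $q_0 = 0$ yields $\ket{\psi_0} + \ket{\psi_1} = $ \sumstate up to the ancilla; more carefully, the subfunction of $H\ket{\phi_n}$ induced by fixing $q_0 = 0$ is exactly \sumstate. Since a \qmdd for $H\ket{\phi_n}$ must contain a node representing each induced subfunction (modulo the edge-label group, here $\mathbb C$; see the proposition in \cref{sec:ddprelims} and \cref{def:iso}), and the node reached by following the $q_0 = 0$ edge represents \sumstate, the whole \qmdd for $H\ket{\phi_n}$ is at least as large as a \qmdd for \sumstate, which by \cref{thm:state-has-exp-limdd} is $2^{\Omega(n)}$ for every variable order. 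If in the chosen variable order $q_0$ is not at the top, one argues symmetrically: $H$ acts locally on $q_0$, and for any fixed position of $q_0$ the cofactors of $H\ket{\phi_n}$ with respect to the variables above $q_0$ still contain (a near-copy of) \sumstate as a sub-structure on the remaining $n$ qubits, so the exponential lower bound of \cref{thm:state-has-exp-limdd} still applies below $q_0$. Thus no polynomial-time algorithm can output the \qmdd of $H\ket{\phi_n}$, since the output alone has superpolynomial size; this proves \qmdd does not support \had, and since \had is a special case of a $1$-local gate, it does not support \loc either.

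The main obstacle I anticipate is the bookkeeping around the variable order: \cref{thm:state-has-exp-limdd} is stated for \emph{all} variable orders of \sumstate on its $n$ qubits, but when $q_0$ is buried in the middle of the order of $H\ket{\phi_n}$, I need to argue that some cofactor over the top variables restricted to the bottom block is genuinely a state to which \cref{thm:state-has-exp-limdd} applies (possibly up to the separable top-block tensor factor and up to scalars, which \qmdd edge labels absorb). The cleanest route is probably to put $q_0$ \emph{first} in a convenient order to get one concrete exponential instance, and then invoke the convention stated at the end of \cref{sec:data-structures-definitions} — ``strictly more succinct'' / non-support claims quantify over the existence of \emph{some} hard worst-case state for \emph{each} order — so it suffices to produce, for each order, a state in the family that is exponential; since $H$ is applied to a fixed qubit, one picks the member of the family whose ancilla sits at the order's top, making the restriction-to-\sumstate argument immediate. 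A minor secondary point is confirming the $O(n)$ \qmdd bound for $\ket{\phi_n}$: this follows because $\ket{\psi_0}$ and $\ket{\psi_1}$ are product states, each contributing one node per level, and the two branches merge only at the leaf, giving $\le 2n+O(1)$ nodes.
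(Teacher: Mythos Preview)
Your proposal is correct and takes essentially the same approach as the paper: reduce from \addi by placing two small-\qmdd product states as the low and high children of an ancilla qubit, so that applying $H$ to the ancilla forces the low child of the result to represent their sum, which is exponentially large. The only difference is cosmetic: the paper instantiates $\ket a,\ket b$ with the states from Fargier et al.'s hardness-of-\addi proof, whereas you instantiate them with $\ket{+}^{\otimes n}$ and $\ket{\Rot^n}$ so that the sum is \sumstate and you can invoke \cref{thm:state-has-exp-limdd} directly; your extra care about variable order is more than the paper spells out but is handled by the same restriction argument (Wegener Th.~2.4.1) the paper uses elsewhere.
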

\begin{proof}
	By reduction from addition: Take a \qmdd root node $v$ with left child
	$a$ and right child $b$, then
	$\had(v) = H\ket v$ is a new node with a left child $\ket a + \ket b$.
	By choosing $\ket a,\ket b$ to be the states from Fargier et al.'s proof showing that addition is intractable for \qmdds, the state $\ket a+\ket b$ requires an exponential-size \qmdd.
Since \qmdd does not support the Hadamard gate, neither does it support arbitrary local gates (\loc).	
\end{proof}

\begin{lemma}
	\label{thm:qmdd-supports-xyzt}
	\textsc{\qmdd} supports Pauli gates \xyz and \T in polynomial time.
\end{lemma}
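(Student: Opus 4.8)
The plan is to exploit that each of $X$, $Y$, $Z$ and the $T$ gate is a \emph{monomial} matrix, i.e.\ has exactly one nonzero entry in every row and column. Applying such a single-qubit gate to a \qmdd therefore only rescales and reorders the two outgoing edges of the nodes sitting at one single level, and leaves the rest of the graph structurally unchanged; in particular it cannot enlarge the diagram. (The tempting shortcut of applying the gate to an equivalent MPS, where it is trivial, and converting back is useless here, since the MPS-to-\qmdd direction can blow up exponentially by \cref{thm:mps-exp-more-succinct-than-qmdd}; so a direct argument on the \qmdd is needed.)

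Concretely, fix $U\in\{X,Y,Z,T\}$ and let $k$ be the level at which the target qubit is branched on. A node $v$ with $\nindex(v)=k$ represents $\ket v=\ket0\otimes\ket{e_{\mathit{low}}}+\ket1\otimes\ket{e_{\mathit{high}}}$, so $U\ket v$ is again of this shape with: for $Z$, the label of $e_{\mathit{high}}$ negated; for $T$, that label multiplied by $e^{i\pi/4}$; for $X$, $e_{\mathit{low}}$ and $e_{\mathit{high}}$ swapped; for $Y$, swapped with one label multiplied by $i$ and the other by $-i$. First I would perform this constant-size rewrite on each of the at most $\sizeof\alpha$ level-$k$ nodes, leaving everything above and below level $k$ untouched; this step runs in $O(\sizeof\alpha)$ time.

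Next I would re-establish canonicity. Since $U$ acting on one qubit is invertible, it induces a bijection on the induced subfunctions that preserves the relation ``$f=p\cdot g$ for some $p\in\mathbb C$'' (it commutes with scalar multiplication); hence, by the subfunction-counting characterization of reduced-\qmdd size, the reduced \qmdd of $U\ket\phi$ has exactly as many nodes as that of $\ket\phi$, so no blow-up can occur. For the diagonal gates $Z$ and $T$ the low-edge labels are not touched, so under the standard normalization (low edge carries label $1$) the rewritten diagram is already reduced. For $X$ and $Y$ the old high-edge label ends up on the low edge of each level-$k$ node; I would factor it out onto the incoming edges, which may force a renormalization one level higher, and so on up the diagram --- which is exactly what the standard \textsc{MakeNode} routine does. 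Running \textsc{MakeNode} bottom-up with a memoization cache touches each node only a constant number of times (each node has two children and the graph above level $k$ is structurally unchanged), giving an $O(\sizeof\alpha)$ re-reduction and hence a polynomial --- indeed (near-)linear --- algorithm for \xyz and \T.

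The hard part will be exactly this last bookkeeping for $X$ and $Y$: one must check that pushing normalization factors upward does not cascade into super-linear work, and that the rebuilt diagram is correct and canonical. The key points to verify are that, with dynamic programming, the factor propagated out of each node is computed once, that each node absorbs at most one such factor per child, and that the already-established no-size-increase fact bounds the total number of \textsc{MakeNode} calls. Once these are in place, the correctness of the rewrites (a routine check that the modified edges implement the $2\times2$ matrix on the branched qubit) and the runtime bound follow immediately, completing the proof.
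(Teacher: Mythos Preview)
Your proposal is correct and follows essentially the same approach as the paper: both observe that $X,Y,Z,T$ are diagonal or anti-diagonal (monomial) single-qubit matrices, so applying them amounts to multiplying and possibly swapping the two outgoing edge labels at every node on the target level, which is linear work. The paper's proof is terser---it checks correctness directly via path-product amplitudes and does not discuss canonicity---whereas you additionally argue (correctly) that invertibility of $U$ preserves the scalar-isomorphism classes of subfunctions, so the reduced diagram cannot grow; this extra care is not needed for the lemma as stated but is a nice strengthening.
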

\begin{proof}
We will show that we can apply any single-qubit diagonal or anti-diagonal operator $A=\diagg{\alpha}{\beta}$ to an \qmdd in polynomial time.
The result then immediately follows for the special cases of the gates $X,Y,Z$ and $T$.
Applying any diagonal local operator $A = \diagg{\alpha}{\beta}$ to the top qubit is easy:
simply multiply the weights of the low and high edges of the diagram's root node with respectively $\alpha$ and $\beta$. For the anti-diagonal operator $A^T$, we also swap low with high edges.
To apply the local operator on any qubit, simply do the above for all nodes on the corresponding level.

	To see that the resulting \qmdd indeed represents the state $A\cdot\ket\psi$ (or $A^T\ket\psi$), consider the amplitude of any basis state $x\in\{0,1\}^n$.
	The amplitude of $\ket x$ in an \qmdd is the product of the labels found on the edges while traversing the diagram from root to leaf.
	In the new diagram, only the weights have changed, whereas the topology has remained the same.
	If $x_k=0$ (resp. $x_k=1$), then, the $k$-th edge encountered during this traversal is the same in the new diagram as in the old diagram, but the label has been multiplied by $\alpha$.
	Otherwise, if $x_k=1$ (resp. $x_k=0$), then the label is multiplied by $\beta$.
	All the other weights remain the same.
	Therefore, the amplitude of $x$ in the new diagram is equal to the old amplitude multiplied by $\alpha$ (resp. $\beta)$.
\end{proof}

\begin{lemma}
	\qmdd supports controlled-$Z$ in polynomial time.
	\label{thm:qmdd-cz}
\end{lemma}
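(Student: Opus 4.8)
The plan is to exploit that controlled-$Z$, unlike the Hadamard gate ruled out by \cref{thm:qmdd-not-had}, is a \emph{diagonal} two-qubit gate, and to reduce it to the single-qubit $Z$ gate already shown tractable in \cref{thm:qmdd-supports-xyzt}. On control qubit $q_i$ and target qubit $q_j$ the gate maps $\ket{\vec x}\mapsto(-1)^{x_ix_j}\ket{\vec x}$, which is precisely: apply $Z$ to qubit $q_j$ on the subspace where $x_i=1$, and act as the identity where $x_i=0$. Since $CZ$ is symmetric in its two qubits, I would assume without loss of generality that $q_i$ lies above $q_j$ in the variable order and, identifying qubits with their levels, write $i>j$.

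The main step is a recursive procedure run from the root edge (whose label passes through unchanged, as $CZ$ is linear). At a node $v$ with $\nindex(v)>i$, both $q_i$ and $q_j$ lie strictly below $v$, so writing its Shannon decomposition as $\ket v=\ket 0\otimes\ket{e_0}+\ket 1\otimes\ket{e_1}$ with $e_0,e_1$ the low and high edges, we have $CZ\ket v=\ket 0\otimes CZ\ket{e_0}+\ket 1\otimes CZ\ket{e_1}$; here I would recurse into both children and reassemble with \textsc{MakeNode}. At a node $v$ with $\nindex(v)=i$ we instead have $CZ\ket v=\ket 0\otimes\ket{e_0}+\ket 1\otimes\bigl(Z_{q_j}\ket{e_1}\bigr)$, so I would leave the low edge untouched, replace the high child by the result of running the polynomial-time routine of \cref{thm:qmdd-supports-xyzt} (which applies $Z$ to qubit $q_j$) on its subdiagram, and reassemble with \textsc{MakeNode}. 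By the no-skipped-nodes convention, every downward path from a node above level $i$ meets a node at level exactly $i$, so these are the only cases --- the degenerate case $i=n$, control on the top qubit, being the second case applied at the root.

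Correctness would follow by induction on $\nindex(v)$ from the two displayed identities and correctness of the $Z$ routine. For the running time I would memoize the recursion on $v$, so it is invoked $O(|V|)$ times --- once per input node at level $\ge i$ --- each invocation doing $O(1)$ work plus one \textsc{MakeNode} call (polynomial, amortized via the \unique table) and, only at level $i$, one call to the $Z$ routine, which (using its own cache) runs in time polynomial in its input size and produces output no larger than its input. Summing over all calls gives a bound polynomial in $|\psi|$.

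The point requiring most care is that transforming a level-$i$ node forces all nodes above it to be rebuilt, and distinct level-$i$ nodes may share descendants whose $Z_{q_j}$-transformed copies must be identified consistently. This causes no blow-up: $Z_{q_j}$ is an involution that preserves the isomorphism classes of subfunctions (\cref{def:iso}), so it neither merges nor splits nodes of its own accord, and \textsc{MakeNode} together with the \unique table keeps the result reduced and canonical; hence the dynamic-programming bookkeeping goes through just as in the single-qubit case. This is the step I expect to write out most carefully in the full proof.
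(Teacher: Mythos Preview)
Your proposal is correct and takes essentially the same approach as the paper: the paper gives exactly this recursive procedure (descend to the control level, leave the low edge intact, apply $Z$ on the target qubit along the high edge), with a cache keyed on the node, and bounds the number of non-trivial calls by the number of nodes. Your final paragraph about $Z$ being an involution that preserves isomorphism classes is a pleasant extra (it shows the output has the same size as the input), but the paper does not need it---the cache argument alone already yields the $O(m)$ bound, so you can safely shorten that part.
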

\begin{proof}
	\cref{alg:qmdd-cz} applies a controlled-$Z$ gate to a \qmdd in time linear in the number of nodes in the \qmdd.
	To show that this is the runtime, we consider the number of times the algorithms \textsc{ApplyControlledZ} and \textsc{ApplyZ} are called.

	For both these algorithms, say that a call is \emph{trivial} if the result is already in the cache, otherwise a call is \emph{non-trivial}.
	Then a trivial call completes in constant time (i.e., in time $\mathcal O(1)$).
	Moreover, the number of trivial calls is at most twice the number of non-trivial calls.
	Therefore, for the purposes of obtaining an asymptotic upper bound on the running time, it suffices to count the number of non-trivial calls to the algorithm.
	
	Thanks to the cache, a given setting of the input parameters $(v,a,b)$ (or $(v,t)$ in the case of \textsc{ApplyZ}) will trigger only one non-trivial call.
	Therefore, the number of non-trivial calls is equal to the number of distinct input parameters.
	But here only $v$ varies, so the number of non-trivial calls is at most the number of nodes in the \qmdd.
	This reasoning holds for both algorithms \textsc{ApplyControlledZ} and \textsc{ApplyZ}.
	Therefore, both subroutines run in time $\mathcal O(m)$, for an \qmdd which contains $m$ nodes.
	
	The correctness of this algorithm follows from the fact that $CZ_{a,b} \ket v = \lambda_0 \ket{v_0} + 
	\lambda_1 Z_b \ket{v_1}$
	where node $v$ is represents an $a-qubit$ state $\lnode[v]{\lambda_0}{v_0}{\lambda_1}{v_1}$ and $Z_b$ means applying the $Z$ gate to the $b$-th qubit. This behavior is implemented by \cref{l:cz}. By linearity, the algorithm is correct for nodes representing $k$-qubit states with $k> a$. This is implemented by \cref{l:ncz}.
\begin{algorithm}
	\caption{Applies a controlled-$Z$ gate to a \qmdd node $v$, with control qubit $a$ and target qubit $b$.
	More specifically, given a \qmdd node $v$, representing the state $\ket{v}$, this subroutine returns a \qmdd edge $e$ representing $\ket{e}=CZ_a^b\ket{v}$.
	We assume wlog that $a>b$, since $CZ_a^b=CZ_b^a$.
	Here $\nindex$ denotes the index of the qubit of $v$, and \textsc{Cache} denotes a hashmap which maps triples to \qmdd nodes.
	The subroutine \textsc{ApplyZ} applies a $Z$ gate to a given target qubit $t$.}
	\label{alg:qmdd-cz}
	\begin{algorithmic}[1]
		\Procedure{ApplyControlledZ}{\qmdd node $v$, qubit indices $a,b$}
			\State Say that node $v$ is $\lnode[v]{\lambda_0}{v_0}{\lambda_1}{v_1}$
			\If{the \textsc{Cache} contains the tuple $(v,a,b)$}
				\Return $\textsc{Cache}[v,a,b]$
			\ElsIf{$\nindex(v)=a$} $r:=\textsc{MakeNode}(\ledge{\lambda_0}{v_0}, \lambda_1\cdot \textsc{ApplyZ}(v_1,b))$\label{l:cz}
			\Else\ $r:=\textsc{MakeNode}(\lambda_0 \cdot \textsc{ApplyCZ}(v_0,a,b), \lambda_1\cdot\textsc{ApplyCZ}(v_1,a,b))$\label{l:ncz}
			\EndIf
			\State $\textsc{Cache}[v,a,b]:=r$
			\State \Return $r$
		\EndProcedure
		\Procedure{ApplyZ}{\qmdd node $v$, qubit index $t$}
			\State Say that node $v$ is $\lnode[v]{\lambda_0}{v_0}{\lambda_1}{v_1}$
			\If{the \textsc{Cache} contains the tuple $(v,t)$}
				\Return $\textsc{Z-Cache}[v,t]$
			\ElsIf{$\nindex(v)=t$} $r:=\textsc{MakeNode}(\ledge {\lambda_0}{v_0}, \ledge {-\lambda_1}{v_1})$
			\Else\ $r:=\textsc{MakeNode}(\lambda_0\cdot \textsc{ApplyZ}(v_0,t), \lambda_1 \cdot \textsc{ApplyZ}(v_1,t))$
			\EndIf
			\State $\textsc{Z-Cache}[v,t]:=r$
			\State \Return $r$
		\EndProcedure
	\end{algorithmic}
\end{algorithm}
\end{proof}

\begin{wrapfigure}[14]{r}{4cm}
\vspace{-1em}
\begin{tikzpicture}[
    scale=0.3,
    every path/.style={>=latex},
    every node/.style={},
    inner sep=0pt,
    minimum size=14pt,
    line width=1pt,
    node distance=.5cm,
    thick,
    font=\footnotesize
    ]

    \node[draw,circle] (a1) {};
    \node[draw,circle, below =of a1] (a2) {};
    \node[draw,circle, below =of a2] (a3) {};
    \node[ below =-.1cm of a2] (a) {$\pmb{\vdots}$};
    \node[draw,circle,rectangle,minimum size=0.4cm, below=of a3] (w1) {$1$};

    \draw[<-] (a1) --++(90:2cm) node[pos=1.4,right] {$+^n$};
    \draw[e0=25] (a1) edge  node[] {} (a2);
    \draw[e1=25] (a1) edge  node[right] {} (a2);
    \draw[e0=25] (a3) edge  node[] {} (w1);
    \draw[e1=25] (a3) edge  node[right] {} (w1);

    \node[draw,circle,right = 1cm of a1] (a1) {};
    \node[draw,circle, below =of a1] (a2) {};
    \node[draw,circle, below =of a2] (a3) {};
    \node[ below =-.1cm of a2] (a) {$\pmb{\vdots}$};
    \node[draw,circle,rectangle,minimum size=0.4cm, below=of a3] (w1) {$1$};

    \draw[<-] (a1) --++(90:2cm) node[pos=1.4,right] {$\Rot^n$};
    \draw[e0=25] (a1) edge  node[] {} (a2);
    \draw[e1=25] (a1) edge  node[right=.3cm] {$e^{i\pi \cdot 2^{-n-1} }$} (a2);
    \draw[e0=25] (a3) edge  node[] {} (w1);
    \draw[e1=25] (a3) edge  node[right=.3cm] {$ e^{i\pi \cdot 2^{-2}}$} (w1);
\end{tikzpicture}
\caption{The states $\ket{+^n}$ and $\ket{\Rot^n}$ state as \qmdd.}
\label{fig:sump}
\end{wrapfigure}
To prove that a single swap operation can explode the \limdd or \qmdd, we first  provide two lemmas.

\begin{lemma}
	\label{thm:sump-small-qmdd}
	For $n\geq 1$, let $\ket{\Rot^n} = \bigotimes_{j=1}^n\left(\ket 0 + e^{i\pi2^{-j-1}}\ket 1\right)$ and $\ket{+^n} = \ket{+}^{\otimes n}$.Then the states $\ket{\Rot^n}$ and $\ket{+^n}$ have a linear-size \qmdd.
\end{lemma}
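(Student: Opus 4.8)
The plan is to exhibit, for each of the two states, an explicit \qmdd of ``caterpillar'' shape --- one node per level --- whose size is then manifestly $O(n)$. The key observation is that both $\ket{+^n}$ and $\ket{\Rot^n}$ are \emph{product states}: each can be written as $\bigotimes_{j=1}^n \ket{\psi_j}$ with $\ket{\psi_j} = \ket0 + c_j\ket1$ a single-qubit (unnormalized) state, where $c_j = 1$ for all $j$ in the case of $\ket{+^n}$ and $c_j = e^{i\pi 2^{-j-1}}$ in the case of $\ket{\Rot^n}$.

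First I would use the \qmdd semantics to argue that a product state has only one distinct induced subfunction per level, up to a nonzero scalar. Writing $\ket\phi = \ket{\psi_n}\otimes\cdots\otimes\ket{\psi_1}$ in the given variable order, its $\ket0$-part and $\ket1$-part are both equal to $\bigotimes_{j<n}\ket{\psi_j}$ up to the scalars $1$ and $c_n$ respectively; iterating, all subfunctions appearing on a fixed level coincide up to a scalar in $\mathbb C\setminus\{0\}\subseteq\mathcal{E}_k$, hence are isomorphic by \cref{def:iso}. So a reduced \qmdd representing $\ket\phi$ has exactly one node $v_k$ on each level $k\in[n]$.

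Concretely, the diagram is the following: let the leaf carry label $1$; let $v_1$ have low edge to the leaf with label $1$ and high edge to the leaf with label $c_1$; for $k = 2,\dots,n$ let $v_k$ have low edge to $v_{k-1}$ with label $1$ and high edge to $v_{k-1}$ with label $c_k$; and let the root edge $e_R$ point to $v_n$ with label $1$. By the \qmdd semantics, the amplitude of a basis state $\vec x\in\{0,1\}^n$ is the product of the edge labels along the unique root-to-leaf path consistent with $\vec x$, namely $\prod_{k=1}^n c_k^{x_k} = \braket{\vec x\,|\,\phi}$, so this \qmdd indeed represents $\ket{+^n}$ (taking $c_k = 1$) resp.\ $\ket{\Rot^n}$ (taking $c_k = e^{i\pi 2^{-k-1}}$). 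It has $n$ internal nodes, one leaf, and $2n+1$ edges each carrying a single complex number, so $|\alpha| = O(n)$. Since permuting the tensor factors of a product state again yields a product state, the same bound holds for every variable order.

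There is essentially no obstacle here: the only point requiring a little care is checking that the displayed diagram is already reduced (so that it is in fact the canonical, minimum-size \qmdd), which is immediate since distinct levels never share a node and there is precisely one node per level.
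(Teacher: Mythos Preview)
Your proposal is correct and takes essentially the same approach as the paper: the paper's proof simply exhibits the same caterpillar-shaped \qmdd (one node per level, low edge labeled $1$, high edge labeled $c_k$) in a figure, without the additional justification you supply. Your argument that product states have a single node per level and your explicit size count are fine elaborations of what the paper leaves implicit.
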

\begin{proof}
	\cref{fig:sump} provides the \qmdd representing  
	both states.
\end{proof}

\begin{lemma}
	\label{thm:sum-prime-has-exp-limdd}
	The following state has large \limdd, for any variable order in which the qubit in register $B$ comes after the qubits in register $A$.
	\begin{align}
	\ket{\mathit{Sum}'}=\ket{+}_A^{\otimes n}\ket{0}_B + \bigotimes_{j=1}^n(\ket {0}_A+e^{i\pi 2^{-j-1}}\ket {1}_A)\otimes \ket {1}_B
	\end{align}
\end{lemma}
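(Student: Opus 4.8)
The plan is to mimic the argument of \cref{thm:state-has-exp-limdd}, but now exploiting the extra qubit in register $B$ to "unfold" the sum of two product states into two disjoint branches that the \limdd cannot merge. First I would fix a variable order in which every qubit of $A$ precedes the single qubit of $B$, and consider the nodes at the level immediately above the $B$-qubit — equivalently, the subfunctions of $\ket{\mathit{Sum}'}$ obtained by fixing all $n$ qubits of $A$ to some assignment $\vec a\in\{0,1\}^n$ and leaving $x_B$ free. Writing $g(\vec a) = \prod_{j=1}^n f_j(a_j)$ with $f_j(0)=1$, $f_j(1)=e^{i\pi 2^{-j-1}}$ as in the proof of \cref{thm:succ-mps-vs-limdd}, the amplitude function of $\ket{\mathit{Sum}'}$ is
\[
F(\vec a, x_B) = \begin{cases} 1 & x_B = 0\\ g(\vec a) & x_B = 1,\end{cases}
\]
so the subfunction induced by $\vec a$ is the single-qubit vector $(1,\ g(\vec a))$ (unnormalized).

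Next I would show that for $\vec a \neq \vec c$ there is no $Q\in\paulilim_1$ with $(1,g(\vec a)) = Q\,(1,g(\vec c))$. A nonzero Pauli-\LIM{} on one qubit is $\alpha P$ with $\alpha\in\mathbb C\setminus\{0\}$ and $P\in\{\id,X,Y,Z\}$; up to absorbing scalars, $P$ is either diagonal (fixing the two entries' positions) or anti-diagonal (swapping them). In the diagonal case, equality of the two vectors forces $\alpha\cdot 1 = 1$ and $\alpha\cdot g(\vec c) = g(\vec a)$, hence $g(\vec a) = g(\vec c)$; but $g$ is injective on $\{0,1\}^n$ (each bit $a_j$ contributes a distinct, irrational-angle rotation $2^{-j-1}$, exactly the injectivity already used for \sumstate), so $\vec a = \vec c$, a contradiction. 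In the anti-diagonal case, equality forces $\alpha\cdot g(\vec c) = 1$ and $\alpha\cdot 1 = g(\vec a)$, hence $g(\vec a)\cdot g(\vec c) = 1$; but every $f_j$ has modulus $1$ and positive argument strictly between $0$ and $\pi$, so $g(\vec a)$ and $g(\vec c)$ both have modulus $1$ and arguments in $(0,\pi)$ (for $\vec a,\vec c$ not both zero), making the product have argument in $(0,2\pi)$ and in fact never equal to $1$ unless both are the empty product — and the all-zero assignment is the unique such case, which can be handled separately. Either way no such $Q$ exists, so all $2^n$ subfunctions $f_{\vec a}$ are pairwise non-isomorphic under $\mathcal E_1$, giving $2^{\Omega(n)}$ nodes at that level by the counting proposition in \cref{sec:ddprelims}.

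The main obstacle I anticipate is the bookkeeping of the anti-diagonal (and $Y$, i.e. scaled-anti-diagonal) Pauli case and the degenerate all-zeros assignment: one must be careful that $g$ evaluated at $\vec 0$ equals $1$, which is exactly the value that could be confused with the $x_B=0$ branch under an anti-diagonal Pauli, so the non-isomorphism claim for the pair $(\vec 0, \vec c)$ needs the argument that $g(\vec c)$ has argument in $(0,\pi)$ and is therefore $\neq 1$. Once that edge case is dispatched, the rest is a direct transcription of the injectivity bookkeeping already carried out for \sumstate in \cref{thm:state-has-exp-limdd}, and the hypothesis that $B$ comes after $A$ is precisely what guarantees the relevant level has one node per distinct $f_{\vec a}$.
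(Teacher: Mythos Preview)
Your approach is correct and in fact differs slightly from the paper's. The paper's one-line proof says to reason about level~2, i.e., it leaves one $A$-qubit \emph{and} the $B$-qubit free and argues non-isomorphism under $\paulilim_2$, so that the bookkeeping tracks the \sumstate proof verbatim after the extra qubit is peeled off. You instead work at level~1 (only $x_B$ free), obtaining the simpler one-qubit subfunctions $(1,g(\vec a))$ and only having to check $\paulilim_1$. This is arguably cleaner and buys you a smaller case analysis; the paper's choice buys a more literal reduction to \cref{thm:state-has-exp-limdd}.

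That said, your case analysis has two small holes. First, in the ``diagonal'' case you only treat $P=\id$: for $P=Z$ the second coordinate gives $g(\vec a)=-g(\vec c)$, not $g(\vec a)=g(\vec c)$, and injectivity of $g$ alone does not rule this out. Second, in the ``anti-diagonal'' case you only derive the $X$ condition $g(\vec a)g(\vec c)=1$; for $P=Y$ the condition is $g(\vec a)g(\vec c)=-1$. Your stated range ``argument in $(0,\pi)$'' is too loose to kill either of these: with that range the product's argument lies in $(0,2\pi)$, which still permits $\pi$. The fix is to observe that $\sum_{j=1}^n 2^{-j-1}<\tfrac12$, so every $g(\vec a)$ has argument in $[0,\tfrac\pi2)$. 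Then $-g(\vec c)$ has argument in $[\pi,\tfrac{3\pi}2)$, disposing of the $Z$ case, and $g(\vec a)g(\vec c)$ has argument in $[0,\pi)$, disposing of the $Y$ case and (for $\vec a\neq\vec c$) the $X$ case. With this sharpened bound the argument goes through and the degenerate $\vec a=\vec 0$ case you flag is absorbed into the half-open interval.
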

\begin{proof}
	The proof is similar to that of \cref{thm:state-has-exp-limdd} except that we reason about level 2.
\end{proof}

\begin{lemma}
	\label{thm:qmdd-not-swap}
	\qmdd does not support \swap in polynomial time.
\end{lemma}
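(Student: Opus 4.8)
The plan is to exhibit one \qmdd $\alpha$ of size $O(n)$ together with a single \swap gate such that $\swap\ket\alpha$ is a state all of whose \qmdds have size $2^{\Omega(n)}$; since any algorithm implementing \swap must at least produce its (reduced, hence size-minimal) output, no polynomial-time implementation can exist. The ingredients are already in place: \cref{thm:sump-small-qmdd} gives linear-size \qmdds for $\ket{+^n}$ and $\ket{\Rot^n}$, and \cref{thm:sum-prime-has-exp-limdd} gives an exponential \limdd lower bound for $\ket{\mathit{Sum}'}$ whenever the register-$B$ qubit comes after all register-$A$ qubits.

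First I would take the $(n+1)$-qubit state $\ket{\mathit{Sum}'}$ from \cref{thm:sum-prime-has-exp-limdd} and fix the variable order so that the single qubit of register $B$ comes \emph{first}, with the $n$ qubits $a_1,\dots,a_n$ of register $A$ below it. In this order the reduced \qmdd of $\ket{\mathit{Sum}'}$ is small: its root is a node for the $B$-qubit whose low edge points to the \qmdd of $\ket{+^n}$ and whose high edge points to the \qmdd of $\ket{\Rot^n}$ on the $A$-qubits, and both of these are linear-size by \cref{thm:sump-small-qmdd}. Hence $|\alpha| = O(n)$. Call this \qmdd $\alpha$.

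Next I would apply the \swap gate exchanging the topmost qubit ($B$) with the bottommost qubit ($a_n$). Keeping the variable order fixed, the amplitude function of the resulting state is that of $\ket{\mathit{Sum}'}$ with the register-$B$ qubit now in the last position and the $A$-qubits permuted among the first $n$ positions; in particular the $B$-qubit again comes after every $A$-qubit. Therefore \cref{thm:sum-prime-has-exp-limdd} applies and every \limdd for $\swap\ket\alpha$ has size $2^{\Omega(n)}$; since every \qmdd is a special case of a \limdd (\cref{sec:data-structures-definitions}), every \qmdd for $\swap\ket\alpha$ also has size $2^{\Omega(n)}$. (One can also see the \qmdd bound directly: fixing the $A$-qubits yields the $2^n$ pairwise non-proportional one-qubit subfunctions $b\mapsto 1$ if $b=0$ and $b\mapsto e^{i\pi\sum_j a_j 2^{-j-1}}$ if $b=1$, forcing $2^n$ distinct nodes on the bottom level, exactly as in \cref{thm:state-has-exp-limdd}.)

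Combining the two steps: a polynomial-time \swap algorithm run on the linear-size input $\alpha$ would output, in polynomial time, a \qmdd for a state whose \qmdds are all exponentially large --- impossible, since merely writing that output takes exponential time. Hence \qmdd does not support \swap in polynomial time. The only point that needs care is the bookkeeping in the second step: checking that exchanging the extreme qubits really transports $\ket{\mathit{Sum}'}$ to another instance of $\ket{\mathit{Sum}'}$ in a variable order covered by the hypothesis of \cref{thm:sum-prime-has-exp-limdd}. This is routine once one writes out the amplitude function, since \cref{thm:sum-prime-has-exp-limdd} allows the $A$-qubits to appear in any internal order as long as the $B$-qubit follows all of them.
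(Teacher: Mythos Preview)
Your proof is correct and follows essentially the same idea as the paper's: build a small \qmdd with the ``control'' qubit $B$ on top (so the children are the linear-size diagrams for $\ket{+^n}$ and $\ket{\Rot^n}$ from \cref{thm:sump-small-qmdd}), then swap $B$ to the bottom and invoke \cref{thm:sum-prime-has-exp-limdd}. The only difference is cosmetic: the paper works on $n+2$ qubits by appending an extra always-$\ket 0$ qubit, so that after the swap the $A$-register stays in its original internal order and a conditioning step (Wegener Th.~2.4.1) strips the leading $\ket 0$; you work on $n+1$ qubits directly and instead rely on the fact that \cref{thm:sum-prime-has-exp-limdd} permits an arbitrary internal order of the $A$-qubits. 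Your route is slightly leaner, the paper's route avoids having to mention the $A$-permutation at all---either way the argument goes through.
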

\begin{proof}
	Let $\ket{+^n}$ and $\ket{\Rot^n}$ be the states from \cref{thm:sump-small-qmdd}, and define the following state $\ket{\rho}$ on $n+2$ qubits,
	\begin{align}
	\ket\rho = \ket0\ket{+^n}\ket 0 + \ket1\ket{\Rot^n} \ket 0
	\end{align}
	Then $\ket\rho$ has a small \qmdd, of only size $\mathcal O(n)$.
	When we swap the first and last qubits, we obtain a state that includes $\ket{\mathit{Sum}'}$ from \cref{thm:sum-prime-has-exp-limdd}:
	\begin{align}
	\text{Swap}_1^n\cdot\ket\rho = \ket 0\otimes \left(\ket{+^n}\ket0 + \ket{\Rot^n}\ket1\right) 
	\end{align}
	The \qmdd of $\text{Swap}_1^n\cdot\ket\rho$ is at least as large as that of \sumstate:
	First, \cite{wegener2000branching} Th.~2.4.1 shows that constraining can never increase the DD size, 
	so we can discard the $\ket 0\otimes $ part (regardless of variable order), as $\text{Swap}_1^n\cdot\ket\rho$ is at least as large as $\ket{\mathit{Sum}'}$.
	Then  \cref{thm:sum-prime-has-exp-limdd} shows that this \limdd has size at least $2^{\Omega(n)}$ for any variable order. Since, \limdd is at least as succinct as \qmdd (see \cref{fig:succinct}), this also holds for \qmdd.
\end{proof}

\begin{lemma}
	\label{thm:qmdd-not-local}
	\qmdd does not support \loc in polynomial time.
\end{lemma}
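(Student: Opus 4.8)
The plan is to derive this lemma immediately from the intractability of a single concrete local gate. By definition, the \loc operation instantiated with the constant $k=1$ and the gate $U=H$ is exactly \had, and instantiated with $k=2$ and $U$ the swap gate it is exactly \swap. Consequently, a polynomial-time algorithm for \loc on \qmdd would restrict to a polynomial-time algorithm for \had (and for \swap) on \qmdd, contradicting \cref{thm:qmdd-not-had} (resp.\ \cref{thm:qmdd-not-swap}).

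Concretely I would first fix a witness family. Using \had, this is the root node obtained by joining the two states from Fargier et al.'s addition-hardness construction under one parent, exactly as in the proof of \cref{thm:qmdd-not-had}: the input \qmdd has size $\mathcal{O}(n)$, but its Hadamard image has a low child representing $\ket a + \ket b$, forcing \qmdd size $2^{\Omega(n)}$. (Alternatively, one can take the state $\ket\rho$ of \cref{thm:qmdd-not-swap}, whose image under the swap of qubits $1$ and $n$ contains $\ket{\mathit{Sum}'}$ and hence has exponential-size \qmdd by \cref{thm:sum-prime-has-exp-limdd}.) In either case the output cannot even be written down, let alone computed, in time polynomial in the $\mathcal{O}(n)$-size input.

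There is essentially no obstacle here: the only point to verify is that the witness gate is a genuine $k$-local gate for a fixed constant $k$ — which it is, with $k=1$ for the Hadamard gate and $k=2$ for the swap gate — so that the reduction to \loc is legitimate. Hence \qmdd does not support \loc in polynomial time.
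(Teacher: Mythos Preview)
Your proposal is correct and matches the paper's approach: the paper's proof is the single line ``This is implied by \cref{thm:qmdd-not-swap},'' i.e., exactly the observation that \swap is a special case of \loc. Your alternative reduction via \had is likewise already noted in the paper (at the end of the proof of \cref{thm:qmdd-not-had}), so both routes you describe coincide with the paper's reasoning.
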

\begin{proof}
	This is implied by \cref{thm:qmdd-not-swap}.
\end{proof}

\subsection{Easy and hard operations for \limdd}
\label{app:trac:limdd}

\begin{lemma}
	\limdd supports \samp, \pro, \eq and \xyz.
\label{thm:limdd-samp}
\label{thm:limdd-prob}
\label{thm:limdd-eq}
\label{thm:limdd-xyz}
\end{lemma}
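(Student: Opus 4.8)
The plan is to prove each of the four operations in turn, leaning heavily on the facts already established in the excerpt: that \limdd\ is a reduced, canonical QDD whose edge labels are invertible Pauli-LIMs, and that a node at level $k$ represents a $k$-qubit (unnormalized) state obtained from its two children via the Shannon decomposition $\ket{v}=\ket{0}\otimes\ket{e_{\mathit{low}}}+\ket{1}\otimes\ket{e_{\mathit{high}}}$, with edge labels acting multiplicatively.

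\textbf{\xyz.} First I would handle the Pauli gates, since this is essentially trivial for \limdd. To apply a single-qubit Pauli $P$ to qubit $k$ of the state represented by root edge $e_R=\ledge{\ell}{v}$, observe that $P$ on qubit $k$ is the $n$-fold tensor product $\id^{\otimes (k-1)}\otimes P\otimes \id^{\otimes(n-k)}$, which is itself an element of $\paulilim_n$. Hence it suffices to \emph{left-multiply the root-edge label} $\ell$ by this Pauli string: the new root edge is $\ledge{(\id^{\otimes(k-1)}\otimes P\otimes\id^{\otimes(n-k)})\cdot\ell}{v}$, which represents exactly $P_k\ket{e_R}$ and has the same underlying DAG. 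This is an $O(n)$ operation (to write down the new label) and produces a valid \limdd; one then reduces in linear time with \textsc{MakeNode}, though in fact no structural change occurs. I expect this to be the easy part.

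\textbf{\samp\ and \pro.} Next I would treat weak and strong simulation together, via the standard recursive ``downward'' pass over the reduced diagram. For \pro\ (computing $|\braket{\vec x|\phi}|^2$ for a fixed $\vec x$): traverse from the root following the low/high edge dictated by each bit of $\vec x$, accumulating the product of edge labels; since each Pauli-LIM edge label on a path to a leaf, when restricted to the single basis entry $\vec x$, contributes a scalar that is computable in $O(n)$ time, the whole amplitude $\braket{\vec x|\phi}$ is computed in $O(n^2)$ time, and the probability is its squared modulus. (Here one must be slightly careful: unlike \qmdd, a Pauli edge label can permute basis vectors, so following the literal low/high edge for bit $x_k$ is not quite right — instead, at a node whose incoming edge has label $\lambda P_1\otimes\cdots\otimes P_k$, one must first conjugate/rewrite to push the Pauli down, or equivalently recompute which child and which local scalar correspond to the desired output bit. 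This bookkeeping is the one genuinely \limdd-specific subtlety.) For \samp\ (weak simulation): do a top-down recursive sampling where at each node one computes the squared norms $p_0,p_1$ of the low-part and high-part subvectors, samples a bit accordingly, descends, and repeats; the squared norms of all relevant subnodes can be precomputed bottom-up in one linear pass using $\|\ket{v}\|^2=\|\ket{e_{\mathit{low}}}\|^2+\|\ket{e_{\mathit{high}}}\|^2$ together with the fact that a Pauli-LIM is unitary up to the scalar $|\lambda|$, so $\|P\ket{w}\|^2=|\lambda|^2\|\ket{w}\|^2$. Caching makes each node processed once, so \samp\ runs in randomized polynomial time (the \Yar\ entry). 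I would cite \citeauthor{vinkhuijzen2021limdd} for these routines where they already appear.

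\textbf{\eq.} Finally, equality (equivalence up to global scalar $\lambda$): since \limdd\ is canonical when reduced (as discussed in \cref{sec:preliminaries} and \cref{sec:ddprelims}), two states $\ket{\phi},\ket{\psi}$ are equivalent iff their reduced \limdds\ have isomorphic underlying DAGs with root edges that differ only by a scalar. The algorithm is: reduce both (linear time), then check graph isomorphism of the two reduced structures respecting levels and low/high labels, tracking the induced scalar on the root edges; because the structures are canonical this reduces to a straightforward simultaneous recursive traversal (merging via the \unique\ table / \textsc{MakeNode} already ensures pointers coincide for identical sub-diagrams), which is linear time. Again I would defer the detailed correctness of canonicity to \cite{vinkhuijzen2021limdd}.

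\textbf{Main obstacle.} The only non-routine point is the one flagged above in \pro/\samp: because \limdd\ edge labels are \emph{Pauli} LIMs (not mere scalars as in \qmdd), navigating to a specific computational-basis amplitude, and computing the per-qubit split probabilities, requires correctly commuting the single-qubit Pauli factors through the low/high decomposition (e.g. an $X$ on the top qubit swaps the roles of low and high; a $Y$ or $Z$ introduces phase factors $\pm i,\pm1$). I would make this precise by showing that for a node $\ket{v}=\ket 0\otimes\ket{e_0}+\ket1\otimes\ket{e_1}$ and incoming label $\lambda\,Q_1\otimes R$ with $Q_1\in\{\id,X,Y,Z\}$, one has $\braket{b,\vec y\mid \lambda Q_1\otimes R\mid v}=\lambda\, (Q_1)_{b,b\oplus x(Q_1)}\,\braket{\vec y\mid R\mid e_{b\oplus x(Q_1)}}$ where $x(Q_1)\in\{0,1\}$ records whether $Q_1$ is anti-diagonal; this turns the whole computation back into an ordinary path traversal with $O(n)$ scalar bookkeeping per level, and everything else goes through as in the \qmdd\ / ADD case.
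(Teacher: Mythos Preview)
Your proposal is correct. The paper's own proof, however, consists of a single sentence: it simply cites \cite{vinkhuijzen2021limdd} as having already established all four facts. So you are doing strictly more work than the paper does---you are essentially reconstructing (accurately) the relevant algorithms from the cited reference rather than deferring to it. Your \xyz\ argument (left-multiply the root label by the Pauli string) and your handling of the Pauli bookkeeping in \pro/\samp\ via the commutation identity match what the \limdd\ paper actually does, and your \eq\ argument via canonicity is also the standard one. The only practical difference is that the present paper treats this lemma as a black-box import, whereas you unpack the box.
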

\begin{proof}
Vinkhuijzen et al. \cite{vinkhuijzen2021limdd} show that \limdd supports \samp, \pro, \eq and \xyz.	
\end{proof}

Here we show that \limdd also support applying a $T$ gate, but does not support \addi, $H$ and \swap.
In this work, we show that computing the fidelity (and hence the inner product, as we can reduce fidelity to inner product) between two states represented by \limdds is NP-hard.

\begin{lemma}
	\limdd supports Controlled-$Z$ in polynomial time.
	\label{thm:limdd-cz}
\end{lemma}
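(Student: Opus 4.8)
The plan is to adapt the \qmdd algorithm of \cref{alg:qmdd-cz} to \limdds, where the only real change is how the Pauli labels on edges interact with the gate. Fix the control qubit $a$ and the target qubit $b$ with, w.l.o.g., $a>b$ (since $CZ_a^b=CZ_b^a$). The key observation --- the same one that makes single-qubit Pauli gates cheap on \limdds (\cref{thm:limdd-xyz}) --- is that a Pauli edge label \emph{absorbs} a single-qubit Pauli: for any single-qubit Pauli $P$ one has $Z\cdot P=c\cdot P'$ with $c\in\{1,i,-i\}$ and $P'$ again a single-qubit Pauli (concretely $Z\id=Z$, $ZX=iY$, $ZY=-iX$, $ZZ=\id$), so for a multi-qubit LIM $A=\lambda\,P_k\otimes\cdots\otimes P_1$ acting on qubits $1,\dots,k$ with $b\le k$, the product $Z_bA$ is again a LIM which merely rescales $\lambda$ by $c$ and replaces the $b$-th tensor factor. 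This is an $O(1)$ rewrite that requires \emph{no} descent below level $b$ in the diagram. Similarly, conjugating any multi-qubit Pauli by the Clifford gate $CZ_a^b$ again gives a Pauli that differs from the original only on qubits $a$ and $b$ (it gains a $Z_b$ factor iff the original has an $X$ or $Y$ on qubit $a$, and symmetrically), computable in $O(1)$ from the Pauli's two factors on those qubits. We deliberately avoid the identity $CZ_a^b=\tfrac12(\id+Z_a+Z_b-Z_aZ_b)$, since that route would need addition of \limdds, which is not supported (\cref{tab:tractability}).

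With these facts I would define a recursive routine \textsc{ApplyCZ}$(v,a,b)$ with a cache keyed on the node $v$ (the parameters $a,b$ are fixed for the whole call), mirroring \cref{alg:qmdd-cz}. Let $v$ have low edge $(A_0,v_0)$ and high edge $(A_1,v_1)$. If $\nindex(v)=a$, then $CZ_a^b\ket v=\ket0\otimes A_0\ket{v_0}+\ket1\otimes(Z_bA_1)\ket{v_1}$; since $b<a$ the label $A_1$ acts on qubit $b$, so we absorb $Z_b$ into $A_1$ as above and return $\makeedge$ applied to the node with edges $(A_0,v_0)$ and $(Z_bA_1,v_1)$. If $\nindex(v)>a$, we recurse: obtain $(B_i,w_i):=\textsc{ApplyCZ}(v_i,a,b)$ for $i\in\{0,1\}$, let $A_i'$ be $A_i$ conjugated by $CZ_a^b$ (well-defined since $a,b\le\nindex(v)-1$), so that $CZ_a^b(A_i\ket{v_i})=A_i'B_i\ket{w_i}$, and return $\makeedge$ of the node with edges $(A_0'B_0,w_0)$ and $(A_1'B_1,w_1)$; then cache and return this result. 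Finally, on a full \limdd with root edge $(R,v_{\mathrm{root}})$, output $(R'B,w)$ where $(B,w)=\textsc{ApplyCZ}(v_{\mathrm{root}},a,b)$ and $R'$ is $R$ conjugated by $CZ_a^b$.

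Correctness then follows by induction on $\nindex(v)$ from the two algebraic facts together with the \limdd edge/node semantics. For the running time, the cache ensures every node triggers at most one non-trivial call, a trivial call (cache hit) costs $O(1)$, and there are at most twice as many trivial as non-trivial calls, so a \limdd with $m$ nodes causes $O(m)$ calls in total; each non-trivial call performs $O(1)$ Pauli-string manipulations of size $O(n)$ plus one invocation of the polynomial-time \makeedge routine of \cite{vinkhuijzen2021limdd}, for total time $m\cdot\mathrm{poly}(n)$, polynomial in the input size. The step to check carefully --- the main obstacle --- is the base case: one must verify that pushing $Z_b$ into the high edge's LIM is a genuinely local constant-time rewrite, so that the recursion bottoms out at level $a$ and never traverses the lower part of the diagram (in contrast to \qmdd, where the corresponding \textsc{ApplyZ} call \emph{does} recurse all the way down to qubit $b$), and that re-canonicalising with \makeedge keeps the diagram polynomial, which it does since each \makeedge call adds at most one node.
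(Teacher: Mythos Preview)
Your proposal is correct and is essentially an explicit, self-contained version of the argument the paper defers to: the paper's own proof simply cites \cite{vinkhuijzen2021limdd} for controlled-Pauli application on \limdds and observes that, since $CZ_a^b=CZ_b^a$, the restriction ``target after control in the variable order'' is no restriction at all for $CZ$. Your two algebraic facts (that $Z_b$ absorbs into a Pauli LIM, and that Clifford conjugation by $CZ$ maps Pauli LIMs to Pauli LIMs) are precisely what power that cited algorithm, and your cache-bounded recursion and size analysis recover the paper's remark that the output \limdd is no larger than the input. One cosmetic point: calling the base-case rewrite ``$O(1)$'' is a slight overstatement if the LIM is stored as a length-$(a{-}1)$ Pauli string, but you already account for this in the ``$O(1)$ manipulations of size $O(n)$'' running-time line, so the overall polynomial bound is unaffected.
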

\begin{proof}
	Vinkhuijzen et al. \cite{vinkhuijzen2021limdd} show how to apply any controlled Pauli gate to a state represented by a \limdd in polynomial time, in the case where the target qubit comes after the control qubit in the variable order of that \limdd.
	However, in the case of the controlled-$Z$, there is no distinction between control and target qubit, since the gate is symmetric.
	Therefore, their analysis applies to all controlled-$Z$ gates.
	In fact, inspecting their method, we see that the \limdd of the resulting state is never larger in size than the \limdd we started with.
\end{proof}

It is known that addition is hard for \qmdd~(see Table 2 in \cite{fargier2014knowledge}). For \limdd, the same was suspected, but not proved in \cite{vinkhuijzen2021limdd}.
We show it here by showing that
$\braket{Z}$-\limdd does not support addition in polytime.
\begin{lemma}
	\limdd does not support \addi in polytime.
\end{lemma}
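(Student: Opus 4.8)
The goal is to show that $\limdd$ (equivalently, $\braket{Z}$-$\limdd$) does not support $\addi$ in polynomial time. The strategy mirrors the hardness proof for $\qmdd$: exhibit two states $\ket{\phi}, \ket{\psi}$, each with a polynomial-size $\limdd$, whose sum $\ket{\phi} + \ket{\psi}$ requires a $\limdd$ of size $2^{\Omega(n)}$ for every variable order. The natural candidate for the sum is the state $\sumstate = \ket{+}^{\otimes n} + \bigotimes_{j=1}^n(\ket 0 + e^{i\pi 2^{-j-1}}\ket 1)$, since Lemma~\ref{thm:state-has-exp-limdd} already establishes that its $\limdd$ has size $2^{\Omega(n)}$ under every variable order. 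So I would take $\ket{\phi} = \ket{+}^{\otimes n}$ and $\ket{\psi} = \bigotimes_{j=1}^n(\ket 0 + e^{i\pi 2^{-j-1}}\ket 1)$, i.e.\ $\ket{\phi}$ is $\ket{+^n}$ and $\ket{\psi}$ is $\ket{\Rot^n}$ in the notation of Lemma~\ref{thm:sump-small-qmdd}.

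The first step is to observe that both $\ket{+^n}$ and $\ket{\Rot^n}$ are product states, hence have trivial $\limdd$ representations of size $\mathcal{O}(n)$: each level has a single node, whose low and high edges both point to the unique node one level down, with an appropriate scalar (or Pauli) edge label; Lemma~\ref{thm:sump-small-qmdd} already gives the linear-size $\qmdd$, and since $\qmdd \atleastassuccinctas$ is a special case of $\limdd$ (with trivial Pauli labels), the same diagram is a linear-size $\limdd$. The second step is simply to apply Lemma~\ref{thm:state-has-exp-limdd}: the sum $\ket{\phi} + \ket{\psi} = \sumstate$ has $\limdd$ size $2^{\Omega(n)}$ for every variable order. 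The third step is the logical conclusion: if there were an algorithm computing $\addi$ on $\limdds$ in time polynomial in the sizes of the operands, then on input the two $\mathcal{O}(n)$-size diagrams for $\ket{\phi}$ and $\ket{\psi}$ it would in particular produce some output $\limdd$ for $\sumstate$ in polynomial time — but its output would then have polynomial size, contradicting the exponential lower bound. (Here I use that the output of $\addi$ must be a valid $\limdd$ representing $\ket{\phi}+\ket{\psi}$, and that any $\limdd$ for that state, reduced or not, has size $2^{\Omega(n)}$; Lemma~\ref{thm:state-has-exp-limdd} is stated for "the $\limdd$", but its proof counts distinct level-1 subfunctions modulo Pauli isomorphism, which lower-bounds the node count of \emph{any} $\limdd$ for the state, not just the reduced one.)

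The only subtlety — and the step I would be most careful about — is making sure the exponential lower bound in Lemma~\ref{thm:state-has-exp-limdd} genuinely applies to $\braket{Z}$-$\limdds$ and not merely to full Pauli-$\limdds$, since the statement we are proving is explicitly phrased in terms of $\braket{Z}$-$\limdd$. Restricting the edge-label group from all Paulis to $\braket{Z}$ (diagonal Paulis with scalars) only makes the diagram \emph{less} able to merge nodes, so the lower bound can only go up; concretely, in the argument of Lemma~\ref{thm:state-has-exp-limdd} one shows that for $\vec a \neq \vec c$ there is no $Q \in \paulilim_1$ with $f_{\vec a} = Q f_{\vec c}$, and a fortiori no such $Q$ in the smaller group $\braket{Z}_1$, so all $\Omega(2^{n-1})$ level-1 subfunctions still need distinct nodes. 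Hence the result for $\braket{Z}$-$\limdd$ follows, and since $\braket{Z}$-$\limdd$ is a restriction of $\limdd$, so does the result for $\limdd$ itself. This also explains the remark in the excerpt that it suffices to argue about $\braket{Z}$-$\limdd$.
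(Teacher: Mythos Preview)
Your proposal is correct and matches the paper's proof essentially line for line: take the two product states $\ket{+^n}$ and $\ket{\Rot^n}$, observe they have linear-size $\qmdd$s (hence linear-size \limdd{}s) by Lemma~\ref{thm:sump-small-qmdd}, and invoke Lemma~\ref{thm:state-has-exp-limdd} to conclude that their sum $\sumstate$ requires exponential \limdd{} size under every variable order. Your additional care about the $\braket{Z}$-\limdd{} case is sound (restricting the label group can only worsen the lower bound, while the inputs are already small as $\qmdd$s), though the paper's proof simply applies Lemma~\ref{thm:state-has-exp-limdd} directly to full Pauli \limdd{}s without spelling this out.
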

\begin{proof}
Consider the states $\ket {+^n}$ and $\ket{\Rot^n}$ as defined in \cref{thm:sum-prime-has-exp-limdd}.
Both states have polynomially sized \qmdds as shown in \cref{thm:sump-small-qmdd}.
Since \limdd is at least as succinct as \qmdd (see \cref{fig:succinct}), the \limdds representing these states are also small.
However, their sum is the state $\ket{Sum}=\ket {+^n} + \ket{\Rot^n}$,  which has an exponential-size
\limdd relative to every variable order by \cref{thm:state-has-exp-limdd}.
\end{proof}

\begin{lemma}
	\limdd does not support \had in polynomial time, and hence neither does it support \loc.
\end{lemma}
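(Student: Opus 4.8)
The plan is to reduce from addition, exactly as in the \qmdd case (\cref{thm:qmdd-not-had}), now invoking the exponential lower bound for \sumstate from \cref{thm:state-has-exp-limdd} together with the fact---established in the preceding lemma---that \limdd does not support \addi.

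First I would take the two states $\ket{+^n}$ and $\ket{\Rot^n}$ of \cref{thm:sump-small-qmdd}: both have $O(n)$-size \qmdd, hence also $O(n)$-size \limdd, since \limdd is at least as succinct as \qmdd (\cref{fig:succinct}). From these I build an $(n{+}1)$-qubit \limdd of size $O(n)$ for
\[
\ket v \;=\; \ket 0 \otimes \ket{+^n} \;+\; \ket 1 \otimes \ket{\Rot^n}
\]
by creating a fresh root node whose low (resp.\ high) edge points to the root of the \limdd of $\ket{+^n}$ (resp.\ $\ket{\Rot^n}$) with trivial edge label.

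Next I would apply a Hadamard gate to the top qubit, giving $(H\otimes\id[2^n])\ket v = \tfrac1{\sqrt2}\bigl(\ket 0\otimes(\ket{+^n}+\ket{\Rot^n}) + \ket 1\otimes(\ket{+^n}-\ket{\Rot^n})\bigr)$, whose $\ket 0$-cofactor along the top qubit is $\sumstate$ up to a scalar. By the subfunction proposition in \cref{sec:ddprelims}, any reduced \limdd for this state contains a node $w$ representing a state Pauli-isomorphic to \sumstate; the sub-\limdd rooted at $w$ is a \limdd for that state with a variable order inherited on the lower $n$ qubits, so by \cref{thm:state-has-exp-limdd}---whose lower bound counts level-$1$ subfunctions modulo Pauli isomorphism and is therefore invariant under a global Pauli LIM and valid for every variable order---it has size $2^{\Omega(n)}$. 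Since a sub-DAG is never larger than the whole DAG (equivalently, by \cite{wegener2000branching} Th.~2.4.1, cofactoring does not increase DD size), every \limdd for $(H\otimes\id[2^n])\ket v$ has size $2^{\Omega(n)}$.

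Thus \had maps an $O(n)$-size \limdd to one of size $2^{\Omega(n)}$; since a polynomial-time algorithm cannot even write down such an output, \limdd does not support \had in polynomial time. Finally, $H$ is a $1$-local gate, so a polynomial-time \loc algorithm would immediately yield a polynomial-time \had algorithm; hence \limdd does not support \loc either. I expect the one delicate point to be the transfer of the \sumstate lower bound to the sub-\limdd: one must confirm that the bound of \cref{thm:state-has-exp-limdd} really is order-independent and stable under the Pauli LIM carried on the incoming edge of $w$, which holds precisely because it is proved by a level-$1$ subfunction count up to Pauli isomorphism.
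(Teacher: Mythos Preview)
Your proposal is correct and follows essentially the same route as the paper: both reduce from the intractability of \addi by observing that applying $H$ to the top qubit of $\ket 0\ket a+\ket 1\ket b$ yields a state whose $\ket 0$-cofactor is $\ket a+\ket b$, and then invoke the fact that cofactoring cannot increase \limdd size. The paper's proof is terser---it simply points to the Hadamard-plus-conditioning-equals-addition mechanism and the preceding lemma---while you instantiate the reduction explicitly with $\ket{+^n}$, $\ket{\Rot^n}$ and \sumstate and carefully justify the transfer of the level-$1$ lower bound through the Pauli LIM on the cofactor node; this extra care is welcome but not a different argument.
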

\begin{proof}
	Since Hadamard can be used together with measurement (called \emph{conditioning} by Fargier \cite{fargier2014knowledge}) to realize state addition as
	explained in \cref{sec:tractable-queries}, it is also intractable. (Recall also from \cite{wegener2000branching} Th.~2.4.1 that conditioning never increases DD size; this is true in particular for \limdds)
\end{proof}

We now prove that the fidelity of \limdds cannot be computed in polynomial time, under common assumptions of complexity theory.

\paragraph{\limdd FIDELITY is hard to compute.}
We show that \limdd FIDELITY cannot be computed in polynomial time, unless the Exponential Time Hypothesis (ETH) is false.
This proof implies that inner product is hard, since fidelity reduces to inner product.
Proving hardness of inner product is also a specialized case of the below construction, which does not require our newly defined EOSD problem (see below) but only the well-known hard problem of counting even subgraphs of a certain size (\#EVEN SUBGRAPHS).

The proof of \limdd FIDELITY hardness proceeds in several steps.
The starting point is Jerrum and Meeks' result that the problem \#EVEN SUBGRAPHS cannot be solved in polynomial time unless ETH is false (\cref{thm:even-subgraphs-ETH-hard}).
We introduce a problem we call EVEN ODD SUBGRAPHS DIFFERENCE (EOSD).
We give a reduction from \#EVEN SUBGRAPHS to EOSD, thus showing that EOSD cannot be solved in polynomial time, under the same assumptions (\cref{thm:eosd-ETH-hard}).
This step is the most technical part of the proof.
Finally, we give a reduction form EOSD to \limdd FIDELITY, thus obtaining the desired result, that \limdd FIDELITY cannot be computed in polynomial time (to a certain precision), unless ETH is false (\cref{thm:limdd-fidelity-ETH-hard}).
In this step, we use the fact that \limdds can efficiently represent Dicke states and graph states (a type of stabilizer state).
Specifically, we will show that computing the fidelity between these states essentially amounts to solving EOSD for the given graph state.
Dicke states were first studied by \citeauthor{dicke1954coherence}; see also \cite{bartschi2019deterministic}.

We first formally define the three problems above, including computing the fidelity of two \limdds.
We will need the following terminology for graphs.
For an undirected graph $G=(V,E)$ and a set of vertices $S\subseteq V$, we denote by $G[S]$ the subgraph induced by $S$.
If $|S|=k$, then we say that $G[S]$ is a  $k$-induced subgraph, and we say that it is an \emph{even} (resp. \emph{odd}) subgraph if $G[S]$ has an even (resp. odd) number of edges.
We let $e(G,k)$ (resp. $o(G,k)$) denote the number of even (resp. odd) $k$-induced subgraphs of $G$.

\begin{description}
	\item[\limdd FIDELITY.]~\\
	\textbf{Input:} Two \limdds, representing the states $\ket{\phi},\ket\psi$\\
	\textbf{Output:} The value $|\braket{\phi|\psi}|^2$ to $2n$ bits of precision.

	\item[\#EVEN SUBGRAPHS]~\\
	\textbf{Input:} A graph $G=(V,E)$, an an integer $k$\\
	\textbf{Output:} The value $e(G,k)$.
	
	\item[EVEN ODD SUBGRAPH DIFFERENCE (EOSD).]~\\
	\textbf{Input:} A graph $G=(V,E)$, and an integer $k$.\\
	\textbf{Output:} The value $|e(G,k)-o(G,k)|$, i.e., the absolute value of the difference between the number of even and odd induced $k$-subgraphs of $G$.
\end{description}

\begin{lemma}[\citeauthor{jerrum2017parameterised}]
	\label{thm:even-subgraphs-ETH-hard}
	If \#EVEN SUBGRAPHS is polytime, then ETH is false.
\end{lemma}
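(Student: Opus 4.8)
The plan is to invoke the parameterised-complexity analysis of Jerrum and Meeks \cite{jerrum2017parameterised} and then make explicit the single logical step that turns their bound into the statement above. First I would recall their main result: counting the $k$-vertex induced subgraphs of $G$ having an even number of edges is $\#\textsf{W}[1]$-hard, and --- crucially --- their reduction is parameter-preserving enough that, assuming ETH, there is no algorithm solving \#EVEN SUBGRAPHS in time $f(k)\cdot n^{o(k)}$ for any computable function $f$, where $n=|V(G)|$.

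Next I would show that a polynomial-time algorithm for \#EVEN SUBGRAPHS would contradict this. Without loss of generality $k\le n$ (for $k>n$ the count is $0$), so the input has size $\Theta(|V|+|E|)$ and a polynomial-time algorithm runs in time at most $n^{c}$ for some constant $c$. Taking $f\equiv 1$ and noting that the constant exponent $c$ is $o(k)$ as $k\to\infty$, such an algorithm is in particular an $f(k)\cdot n^{o(k)}$ algorithm for \#EVEN SUBGRAPHS, contradicting the conditional lower bound of the previous paragraph; hence ETH is false, as claimed.

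The point I would be most careful about is ensuring that the cited result is genuinely an $n^{o(k)}$-type lower bound and not merely $\#\textsf{W}[1]$-hardness (which by itself does not refute ETH). This holds because Jerrum and Meeks obtain their hardness through a parameterised reduction from a $\#\textsf{W}[1]$-complete counting problem (counting $k$-cliques) that maps the parameter $k$ to $O(k)$, composed with the standard ETH-based lower bound that \#CLIQUE admits no $f(k)\cdot n^{o(k)}$ algorithm. Since this chain of implications is exactly the content of \cite{jerrum2017parameterised}, no new argument is needed here; the substantive new work lies in the subsequent reductions \#EVEN SUBGRAPHS $\leq$ EOSD (\cref{thm:eosd-ETH-hard}) and EOSD $\leq$ \limdd FIDELITY (\cref{thm:limdd-fidelity-ETH-hard}).
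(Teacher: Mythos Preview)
Your proposal is correct and follows essentially the same approach as the paper, which simply cites the $\#\textsf{W}[1]$-hardness result of Jerrum and Meeks and states the ETH consequence. You are in fact more careful than the paper's own proof: you explicitly note that $\#\textsf{W}[1]$-hardness alone does not yield an ETH lower bound, and that one needs the parameter-preserving reduction from counting $k$-cliques together with the standard $n^{o(k)}$ lower bound for that problem---a subtlety the paper's proof glosses over.
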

\begin{proof}
	\citeauthor{jerrum2017parameterised} showed that counting the number of even induced subgraphs with $k$ vertices is $\#\textsc{W}[1]$-hard.
	Consequently, there is no algorithm running in time $poly(n)$ (independent of $k$) unless the exponential time hypothesis fails.
\end{proof}

\begin{lemma}
	\label{thm:limdd-fidelity-ETH-hard}
	There is no polynomial-time algorithm for \limdd FIDELITY, i.e., for computing fidelity between two \limdds to $2n$ bits of precision, unless the Exponential Time Hypothesis (ETH) fails.
\end{lemma}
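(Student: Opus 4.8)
The plan is to establish \cref{thm:limdd-fidelity-ETH-hard} by a polynomial-time reduction from \textsc{Even Odd Subgraph Difference} (EOSD), which is ETH-hard by \cref{thm:eosd-ETH-hard}. Given an EOSD instance $(G,k)$ with $G=(V,E)$ and $n=|V|$, I would build, in polynomial time, two \limdds: one for the graph state $\ket G=\prod_{\{i,j\}\in E}CZ_{ij}\ket{+}^{\otimes n}$, and one for the Dicke state $\ket{D^n_k}$, whose amplitude function satisfies $\braket{x|D^n_k}=\binom{n}{k}^{-1/2}$ when the Hamming weight $|x|$ equals $k$ and $\braket{x|D^n_k}=0$ otherwise. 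The claim then follows once we show that $|\braket{D^n_k|G}|^2$, known to $2n$ bits of precision, determines $|e(G,k)-o(G,k)|$, so that a polynomial-time \limdd FIDELITY algorithm would solve EOSD in polynomial time.

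\emph{Step 1 (both \limdds are small and poly-time constructible).} For $\ket G$, I would start from the $O(n)$-size \limdd (in fact \add) of $\ket{+}^{\otimes n}$ and apply the $|E|$ controlled-$Z$ gates one edge at a time via \cref{thm:limdd-cz}; each application runs in polynomial time and never increases the diagram size, so $\ket G$ ends with a polynomial-size \limdd (this also follows from \cite{vinkhuijzen2021limdd}, as graph states are stabilizer states). For $\ket{D^n_k}$, its amplitude function $f$ has, among the nodes representing $\ell$-qubit subfunctions, at most $\ell+2$ distinct induced subfunctions --- one ``$t$ more ones are needed'' function for each $t\in\{0,\dots,\ell\}$, plus the all-zero function --- so $f$ admits an \add (hence a \limdd) with $O(n^2)$ nodes, computable by a memoised top-down recursion in polynomial time.

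\emph{Step 2 (the fidelity encodes the EOSD output).} Writing $q_G(x)=|E(G[\operatorname{supp}(x)])|$ for the number of edges of the subgraph induced on the support of $x$, we have $\braket{x|G}=2^{-n/2}(-1)^{q_G(x)}$, hence
\[
\braket{D^n_k|G}=\frac{1}{\sqrt{\binom{n}{k}\,2^n}}\sum_{x:\,|x|=k}(-1)^{q_G(x)}=\frac{e(G,k)-o(G,k)}{\sqrt{\binom{n}{k}\,2^n}},
\]
because each $k$-subset $S$ with $G[S]$ even contributes $+1$ and each odd one $-1$. Therefore $|\braket{D^n_k|G}|^2=\bigl(e(G,k)-o(G,k)\bigr)^2\big/\bigl(\binom{n}{k}\,2^n\bigr)$. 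The numerator is a non-negative integer bounded by $\binom{n}{k}^2\le\binom{n}{k}\,2^n\le 2^{2n}$ (using $\binom{n}{k}\le\binom{n}{\lfloor n/2\rfloor}<2^n$ for $n\ge 1$), so from the fidelity known to additive error $2^{-2n}$ one recovers $\bigl(e(G,k)-o(G,k)\bigr)^2$ exactly: multiply by $\binom{n}{k}\,2^n$ and round to the nearest integer, the accumulated error being $\binom{n}{k}/2^n<1/2$ for $n\ge 3$ (the finitely many smaller cases are solved by brute force). Taking a square root yields $|e(G,k)-o(G,k)|$, solving EOSD; combined with \cref{thm:eosd-ETH-hard}, this gives the theorem.

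I expect the only delicate point to be the precision bookkeeping in Step~2: one must verify that $2n$ bits genuinely isolate the integer $\bigl(e(G,k)-o(G,k)\bigr)^2$, which works precisely because $\binom{n}{k}\,2^n$ sits comfortably below $2^{2n}$. Everything else --- the two \limdd constructions and the amplitude identities for graph and Dicke states --- is routine, so the reduction itself is short; the conceptual weight of the lemma lies entirely in the earlier \#\textsc{Even Subgraphs}-to-EOSD reduction (\cref{thm:eosd-ETH-hard}), which I am assuming here.
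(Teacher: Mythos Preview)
Your proposal is correct and follows essentially the same reduction as the paper: construct polynomial-size \limdds for the graph state $\ket{G}$ and the Dicke state $\ket{D_n^k}$, observe that $\braket{D_n^k|G}=(e(G,k)-o(G,k))/\sqrt{\binom{n}{k}2^n}$, and recover $|e(G,k)-o(G,k)|$ from the fidelity using the bound $\binom{n}{k}2^n\leq 2^{2n}$. The only cosmetic differences are that the paper cites the direct graph-state \limdd construction of \cite{vinkhuijzen2021limdd} and Bryant's BDD for the Hamming-weight indicator, whereas you build $\ket{G}$ by applying $CZ$ gates via \cref{thm:limdd-cz} and argue the Dicke-state size by subfunction counting; both routes yield polynomial-size diagrams in polynomial time.
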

\begin{proof}
	Suppose there was such a polynomial-time algorithm, running in time $\mathcal O(n^c)$ for some constant $c\geq 1$.
	We will show that then EOSD can be solved in time $\mathcal O(n^{c})$ (independent of $k$), by giving a reduction from EOSD to \limdd FIDELITY.
	From \cref{thm:eosd-ETH-hard}, it would then follow that ETH is false.

	The reduction from EOSD to \limdd FIDELITY is as follows.
	
	Let $G$ be an input graph on $n$ vertices $V$ and $0\leq k\leq n$ an integer.
	Let $\ket G$ be the graph state corresponding to $G$ \cite{nest2004graphical}, so that
	\begin{align}
	\ket G = \frac1{2^{n/2}}\sum_{S\subseteq V}(-1)^{|G[S]|}\ket S
	\end{align}
where $|G[S]|$ denotes the number of edges in the $S$-induced subgraph of $G$, and $\ket S$ denotes the computational-basis state $\ket{S} = \ket{x_1} \otimes \ket{x_2} \otimes \dots \otimes \ket{x_n}$ with $x_j = 1$ if $j \in S$ and $x_j = 0$ otherwise.
	Let $\ket {D_n^k}$ be the Dicke state \cite{dicke1954coherence}.
	\begin{align}
	\ket{D_n^k} = \frac{1}{\sqrt{\binom nk}}\sum_{\vec x\in \{0,1\}^n \text{~with~} |\vec x|=k}\ket{\vec x}
	\end{align}
	Both these states have small \limdds: 
	\begin{description}
	\item[Dicke state] \citeauthor{bryant86} gives a construction for BDDs to represent the function $f_k\colon \{0,1\}^n\to \{0,1\}$, with $f_k(x)=1$ iff $|x|=k$.
	This is precisely the amplitude function of the Dicke state $\ket{D_n^k}$ (up to a factor $1/\sqrt{\binom nk}$).
	This construction also works for \limdds, by simply setting all the edge labels to the identity, and using root label $1/\sqrt{\binom nk}\cdot \mathbb I^{\otimes n}$.
	\item[Graph state] \citeauthor{vinkhuijzen2021limdd} show how to efficiently construct a \limdd for any graph state.
	\end{description}
It is straightforward to verify that the fidelity between $\ket{D_n^k}$ and $\ket{G}$ is related to the subgraphs of $G$, as follows,
	\begin{align}
	\braket{D_n^k|G} = \frac{1}{\sqrt{\binom nk2^n}} \sum_{S\subseteq V:|S|=k}(-1)^{|G[S]|} = \frac{1}{\sqrt{\binom nk2^n}} \left(e(G,k)-o(G,k)\right)
	\end{align}
Hence,
	\begin{align}
	\underbraceset{\textnormal{solution to EOSD}}{|e(G,k)-o(G,k)|}=\sqrt{\binom nk2^n \underbraceset{\textnormal{Fidelity}}{|\braket{D_n^k|G}|^2}}
	\end{align}
Since $|\braket{D_n^k|G}|^2$ denotes the fidelity between $\ket{D_n^k}$ and $\ket{G}$, and $|e(G,k)-o(G,k)|$ denotes the quantity asked for by the EOSD problem, this completes the reduction.
The overhead of constructing the \limdds from the description of the Dicke and graph states takes linear time in the size of the resulting \limdd. So, if the fidelity of two \limdds is computed in polynomial time, say, in time $\mathcal O(n^c)$, then also the quantity $|e(G,k)-o(G,k)|$ is computed in time $\mathcal O(n^c)$; thus, EOSD is solved in time $\mathcal O(n^c)$.
Lastly, we address the number of bits of precision required.
In order to exactly compute the integer $|e(G,k)-o(G,k)|$, it is necessary to compute the fidelity $|\braket{D_n^k|G}|^2$ with a precision of at least one part in $\binom nk2^n$.
Put another way, the required number of bits of precision is $\log_2(\binom nk\cdot 2^n)\leq \log_2(2^n\cdot 2^n)=2n$.
Summarizing, computing the fidelity of (the states represented by) two \limdds representing a graph state and a Dicke state, to $2n$ bits of precision, is not possible in polynomial time, unless ETH fails.
\end{proof}

\begin{lemma}
	\label{thm:eosd-ETH-hard}
	There is no polynomial-time algorithm for EOSD, unless ETH is false.
\end{lemma}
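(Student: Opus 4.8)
The plan is to exhibit a polynomial-time (Turing) reduction from \#\textsc{EVEN SUBGRAPHS} to EOSD; combined with \cref{thm:even-subgraphs-ETH-hard} this gives the claim. Fix an instance $(G,k)$ with $n=\sizeof{V(G)}$; the cases $k\in\{0,n\}$ are trivial, so assume $1\le k\le n-1$ and put $D:=n-k\ge 1$. For $0\le j\le n$ write $d_j:=e(G,j)-o(G,j)$. Since $e(G,j)+o(G,j)=\binom nj$, it suffices to compute the single integer $d_k$, from which $e(G,k)=\tfrac12\bigl(\binom nk+d_k\bigr)$. Crucially, $d_n=(-1)^{\sizeof{E(G)}}$, because the only $n$-vertex induced subgraph of $G$ is $G$ itself; this value we compute directly from the input and will use as a sign ``anchor''.

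The engine of the reduction is the following construction: for $t\ge 0$ let $G_t$ be $G$ with $t$ additional isolated vertices (a graph of size $n+t$). Isolated vertices never add edges, so counting the $m$-vertex induced subgraphs of $G_t$ by how many of their vertices lie in $V(G)$ gives, for $m=j+t$,
\[
e(G_t,\,j+t)-o(G_t,\,j+t)\;=\;\sum_{i=j}^{n} d_i\binom{t}{\,i-j\,}\;=:\;Q_j(t).
\]
Reading $\binom{t}{i-j}$ as a polynomial in $t$ (which agrees with the binomial coefficient at all integers $t\ge 0$, vanishing when $i-j>t$), $Q_j$ is a univariate polynomial of degree exactly $D$ with leading coefficient $d_n/D!\neq 0$ and with $Q_j(0)=d_k$. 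Hence each call EOSD$(G_t,\,j+t)$ returns the exact nonnegative integer $\sizeof{Q_j(t)}$.

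Now carry out: query EOSD on $(G_t,\,j+t)$ for $t=0,1,\dots,2D$ --- at most $2n+1$ calls, each on a graph of size $\le 3n$ --- obtaining $a_t=\sizeof{Q_j(t)}$. Lagrange-interpolate (over $\mathbb{Q}$) the degree-$\le 2D$ polynomial $U$ through the $2D+1$ points $(t,a_t^2)$; since $Q_j^2$ has degree $2D$ and agrees with $U$ at $2D+1$ points, $U=Q_j^2$ identically. Compute the square-root polynomial $S$ of $U$, i.e.\ the unique polynomial with positive leading coefficient and $S^2=U$ --- this is polynomial time, recovering the coefficients of $S$ from highest degree downward. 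Matching leading coefficients (recall $d_n=\pm 1$) gives $Q_j=d_n\cdot S$, hence $d_k=Q_j(0)=d_n\cdot S(0)$, and we output $\tfrac12\bigl(\binom nk+d_k\bigr)=e(G,k)$. Everything outside the oracle calls is polynomial time, so a polynomial-time algorithm for EOSD would yield one for \#\textsc{EVEN SUBGRAPHS}, contradicting \cref{thm:even-subgraphs-ETH-hard} unless ETH fails.

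The main obstacle is exactly that the oracle returns only the absolute values $\sizeof{Q_j(t)}$, so $Q_j$ itself cannot be interpolated directly; the fix is to interpolate $Q_j(t)^2=\sizeof{Q_j(t)}^2$ --- a bona fide polynomial --- and then kill the one remaining global sign ambiguity using the explicitly known value $d_n=(-1)^{\sizeof{E(G)}}$ via the leading coefficient $d_n/D!$ of $Q_j$. Two bookkeeping points must be handled carefully: the identification of the binomial coefficient with a polynomial in $t$ (so that $Q_j$ genuinely has degree $D$ with the stated leading coefficient), and the fact that adjoining $t$ isolated vertices multiplies the subgraph generating function $\sum_{S}(-1)^{\sizeof{G[S]}}z^{\sizeof S}$ by $(1+z)^t$, which is what keeps $\deg Q_j$ equal to $D$ rather than growing with $t$ and thus keeps the number of interpolation points polynomial.
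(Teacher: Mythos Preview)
Your proof is correct (modulo a cosmetic slip: you introduce a free index $j$ where you clearly intend $j=k$ throughout, e.g.\ when you write ``$Q_j(0)=d_k$'' and ``$\deg Q_j=D=n-k$''; this should be tidied up). The key steps all go through: the identity $e(G_t,k+t)-o(G_t,k+t)=\sum_{i=k}^{n}d_i\binom{t}{i-k}$ is correct and does define a degree-$D$ polynomial $Q_k$ with leading coefficient $d_n/D!\ne 0$; squaring the oracle values and Lagrange-interpolating at $2D+1$ integer points recovers $Q_k^2$ exactly over $\mathbb Q$; the polynomial square root with positive leading coefficient is well-defined and computable by the top-down coefficient recursion you describe (since the leading coefficient $1/(D!)^2$ is a rational square); and the global sign is pinned down by $d_n=(-1)^{|E(G)|}$. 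All oracle calls are on graphs of size $\le 3n$ and all rationals have polynomially many bits, so the reduction is polynomial.

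Your route and the paper's both reduce \#\textsc{EVEN SUBGRAPHS} to EOSD by padding $G$ with isolated vertices, but they resolve the sign ambiguity differently. The paper anchors at the \emph{bottom}, using $\zeta_0=1$, and then bootstraps upward: for each $j=1,\dots,k$ it queries $|\zeta_j|$ directly, and if nonzero it adds just enough isolated vertices so that a second query returns $|\zeta_\ell+\zeta_j|$ for the last nonzero index $\ell<j$, from which the sign of $\zeta_j$ follows---at most $2k$ oracle calls, no interpolation, only integer arithmetic. You anchor at the \emph{top}, using $d_n=(-1)^{|E(G)|}$ as the sign beacon, and recover $d_k$ in one shot by interpolating $Q_k^2$ and taking a polynomial square root---at most $2(n-k)+1$ oracle calls. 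The paper's argument is more elementary; yours is slicker and has the pleasant feature that the polynomial structure does all the work, but it requires the extra (easy) observation that rational polynomial square roots can be extracted exactly.
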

\begin{proof}
	We provide an efficient reduction (in \cref{alg:compute-even-subgraphs}) from \#EVEN SUBGRAPHS: the problem, on input an undirected graph $G$ and a parameter $k\in \{0,1, 2, \dots, |V|\}$, of computing the number of $k$-vertex induced subgraphs which have an even number of edges.
	It follows that, if EOSD can be computed in polynomial time, then \cref{alg:compute-even-subgraphs}, which computes \#EVEN SUBGRAPHS, also runs in polynomial time.
	Jerrum and Meeks \cite{jerrum2017parameterised} show that \#EVEN-SUBGRAPHS cannot be computed in polynomial time unless ETH is false (\cref{thm:even-subgraphs-ETH-hard}).
	Therefore, if EOSD could be computed in polynomial time, then ETH would be false.

	The algorithm \textsc{CountEvenSubgraphs} (\cref{alg:compute-even-subgraphs}) takes as parameters a graph $G$ and an integer $k\geq 0$, and outputs $e(G,k)$, the number of even $k$-induced subgraphs of $G$, thus solving \#EVEN SUBGRAPHS.
	This algorithm uses at most $2n$ invocations of a subroutine \textsc{EvenOddSubgraphsDifference}; therefore, if the subroutine \textsc{EvenOddSubgraphsDifference} runs in polynomial time, then so does \textsc{CountEvenSubgraphs}.
	
	Let us briefly sketch the idea behind the algorithm, before we give a formal proof of correctness.
	First, we know that $e(G,k) + o(G,k)=\binom nk$, since each subgraph is either even or odd, and $G$ has $\binom nk$ different $k$-induced subgraphs in total.
	Thus, if we knew the (possibly negative) difference $\zeta_k=e(G,k) - o(G,k)$, then we know the sum and difference of $e(G,k)$ and $o(G,k)$, so we could compute the desired value $e(G,k)=\frac 12(\binom nk+\zeta_k)$.
	Unfortunately, \textsc{EvenOddSubgraphsDifference} only tells us the absolute value, $|\zeta_k|$.
	Fortunately, we know that $e(G,0)=1$ and $o(G,0)=0$, so $\zeta_0=1-0=1$ (namely, there is only one induced subgraph with $0$ vertices, and it has $0$ edges, which is even).
	We now bootstrap our way up, computing $\zeta_j$ for $j=1,\ldots, k$ using the previously known results.
	The key ingredient is that, by adding isolated vertices to the graph and querying \textsc{EvenOddSubgraphsDifference} on this new graph, we can discover the the absolute difference $|\zeta_j+\zeta_{j-1}|$, which allows us to compute the values $\zeta_j$.
	
	\begin{algorithm}[b!]
		\caption{An algorithm which computes the number of even $k$-induced subgraphs using at most $2n$ calls to a subroutine \textsc{EvenOddSubgraphsDifference}, which returns $|e(G,k)-o(G,k)|$ on input $(G,k)$.}
		\label{alg:compute-even-subgraphs}
		\begin{algorithmic}[1]
			\Procedure{CountEvenSubgraphs}{$G=(V,E),k$}
				\Statex Output: The number of even induced subgraphs of $G$ with $k$ vertices
				\State $d_0:=1$ \Comment{$d$ is an array of $k+1$ integers}
				\label{algline:delta-0-is-1}
				\State $\ell:=0$   \Comment{Last iteration when $\zeta_j = 1$}
				\For{$j:=1,\ldots, k$}
					\State $q:=\textsc{EvenOddSubgraphsDifference}(G, j)$
					\label{algline:compute-q}
					\If{$q=0$}\Comment{There are equally many even as odd subgraphs}
						\State $d_j:=0$ 
						\label{algline:set-delta-ell-zero}
					\Else\Comment{Else we have to figure out whether there more even or odd subgraphs:}
						\State $G':=(V\cup \{v_1',\ldots,v_{j-\ell}'\}, E)$ \Comment{Add $j-\ell$ new isolated vertices}
						\State $p:=\textsc{EvenOddSubgraphsDifference}(G', j)$
						\label{algline:compute-p}
						\State $d_j:=\begin{cases}
						q & \text{if }|d_\ell+q|=p \\
						-q& \text{if }|d_\ell-q|=p
						\end{cases}$
						\label{algline:compute-delta-ell-nonzero}
						\State $\ell:=j$		 \Comment{Since iteration $j$ is the latest iteration having $\zeta_j = 1$}
					\EndIf
				\EndFor
				\vspace{0.5em}
				\State \Return $\frac 12\left(\binom nk+d_k\right)$
			\EndProcedure
		\end{algorithmic}
	\end{algorithm}
	\textbf{Correctness of the algorithm.}
	We now prove that the algorithm \textsc{CountEvenSubgraphs} outputs the correct value.
	Let $(G,k)$ be the input to the algorithm.
	For $j=0,\ldots, k$, let $\zeta_j = e(G,j)-o(G,j)$.
	We will show that, for each $j=1,\ldots, k$, the algorithm sets the variable $d_j$ to the value $\zeta_j$ in the $j$-th iteration of the for-loop.
	The proof is by induction on $j$.
	In the induction hypothesis, we include also that the variable $\ell$ is always the largest value below $j$ satisfying $\zeta_\ell\ne 0$  as in \cref{eq:ih2} (this value is well-defined, since $1=\zeta_0\ne 0$, so we have $0\leq \ell<j$).
	\begin{align}\label{eq:ih2}
	\ell=\max\{0\leq \ell<j \ |\  \zeta_\ell\ne 0\}
	\end{align}
	For the base case, where $j=0$, it suffices to note that there is only one set with zero vertices -- the empty set -- which induces the empty graph, which contains an even number of edges.
	Therefore, $\zeta_0=1$, which the algorithm sets on \cref{algline:delta-0-is-1}.
	Finally, $\ell$ is correctly set to 0.

	For the induction case $j\geq 1$,  the variables $d_t$ have been set to $d_t=\zeta_t$ for $t=0,\ldots, j-1$ and $\ell$ satisfies \cref{eq:ih2} from the induction hypothesis.
	Consequently, we have $\zeta_{\ell+1}=\cdots=\zeta_{j-1}=0$.
	If $\zeta_j=0$, then the algorithm sets $q:=|\zeta_j|=|0|=0$ on \cref{algline:compute-q}, so the algorithm sets $d_j$ correctly on \cref{algline:set-delta-ell-zero}, and correctly leaves $\ell$ untouched ($\ell$ remains unchanged from the $j-1$-th to the $j$-th iteration).
	Otherwise, if $\zeta_j\ne 0$, the algorithm adds $j-\ell$ new, isolated vertices to $G$, obtaining the new graph $G'=(V_G\cup \{v_1',\ldots, v_{j-\ell}'\},E_G)$.
	On \cref{algline:compute-p}, it computes the value $|e(G',j)-o(G',j)|$ of this graph.
	Since this expression also sums over induced subgraphs of $G'$ that contain isolated vertices, this value can be expressed as follows:
	\begin{align}
	p := & |e(G',j)-o(G',j)| \\
	= & \left| \sum_{a=\ell}^j \binom{j-\ell}{j-a}(e(G,a)-o(G,a)) \right| = \left| \sum_{a=\ell}^j \binom{j-\ell}{j-a}\zeta_a \right| \label{eq:eosd1}
	\\
	= & \left| \binom{j-\ell}{j-\ell}\zeta_\ell + \binom{j-\ell}{0}\zeta_j\right|  =  \left|\zeta_\ell+\zeta_j\right|  \label{eq:eosd2}
	\end{align}
	We noted that $\zeta_{\ell+1}=\cdots=\zeta_{j-1}=0$; therefore, these terms vanish from the summation (step from \cref{eq:eosd1} to \cref{eq:eosd2}), so that only $p=|\zeta_j+\zeta_\ell|$ remains.
	Since we now know the values of $\zeta_\ell,|\zeta_j|$ and $|\zeta_j+\zeta_\ell|$ and since $\zeta_\ell\ne 0$, we can infer the value of $\zeta_j$, which is done on \cref{algline:compute-delta-ell-nonzero}.
	We conclude that each variable $d_j$ is correctly set to $\zeta_j$, concluding the proof by induction. Also, since $\zeta_j \neq 0$, $\ell$ is correctly set to $j$.
	
	Lastly, we show that the value returned by the algorithm is indeed the number $e(G,k)$.
	Suppose that $d_k=\zeta_k=e(G,k)-o(G,k)$.
	We know that $e(G,k)+o(G,k)=\binom nk$.
	That is, we know both the sum and the difference of $e(G,k),o(G,k)$; therefore we can compute them both.
	By adding these two equations and solving for $e(G,k)$, we obtain $e(G,k)=\frac 12\left(\binom nk+d_k\right)$, which is the value returned by the algorithm.
\end{proof}

\begin{lemma}
	\label{thm:limdd-t-gate}
	\limdd supports the $T$-gate.
\end{lemma}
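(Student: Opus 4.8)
The plan is to imitate the \qmdd argument of \cref{thm:qmdd-supports-xyzt}: the $T$-gate is the single-qubit phase gate $T = \mathrm{diag}(1,e^{i\pi/4})$, and applying a diagonal gate to the \emph{top} qubit of a node is trivial (multiply that node's high edge label by $e^{i\pi/4}$), while linearity takes care of the nodes above the target level. The one extra complication compared with \qmdd is that \limdd edge labels are Pauli-LIMs $\lambda\, P_{k-1}\otimes\cdots\otimes P_1$ rather than plain scalars, so pushing $T$ (acting on qubit $j$) down past such a label requires commuting $T$ past the single-qubit Pauli $P_j$ sitting on position $j$.

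First I would record the commutation relations for a phase gate $G=\mathrm{diag}(1,\mu)$ with $\mu\neq 0$: a two-by-two check gives $G\,\id = \id\,G$, $G\,Z = Z\,G$, $G\,X = \mu\, X\, G^{-1}$ and $G\,Y = \mu\, Y\, G^{-1}$. Specialising to $G=T$ (so $\mu = e^{\pm i\pi/4}$), the crucial observation is that the two-element set $\{T,T^{-1}\}$ is closed both under $G\mapsto G^{-1}$ and under ``commute past a single-qubit Pauli'': in every case the single-qubit Pauli on position $j$ is unchanged, only a nonzero scalar $\mu$ is produced, and the new single-qubit gate is again in $\{T,T^{-1}\}$.

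Next I would give a recursive procedure $\textsc{ApplyT}(v,j,G)$, with $G\in\{T,T^{-1}\}$, returning an edge for $G_j\ket v$. If the incoming edge label is $0$ the result is $0$ (trivial). If $\nindex(v)=j$, return the node obtained from $v$ by multiplying the high edge's label by the lower-right entry of $G$ (i.e.\ $e^{\pm i\pi/4}$), via the standard \textsc{MakeEdge}/\textsc{MakeNode} reduction. If $\nindex(v)=k>j$, then for each child edge of $v$, say with label $\lambda_x P_x$ and target $v_x$ ($x\in\{0,1\}$), write $P_x = P_x^{(k-1)}\otimes\cdots\otimes P_x^{(1)}$, use the commutation relation to write $G\,P_x^{(j)} = \mu_x\,P_x^{(j)}\,G'$ with $G'\in\{T,T^{-1}\}$, recurse with $\textsc{ApplyT}(v_x,j,G')$ to get an edge $e_x$, and form the new low/high edge with label $\mu_x\lambda_x P_x$ into $e_x$; then reduce with \textsc{MakeEdge}. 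The root edge label is handled by one such commutation step before descending. Cache every nontrivial call on the triple $(v,j,G)$. Correctness follows by induction on $\nindex(v)$, using the node semantics $\ket v = \ket0\otimes\ket{e_{\mathit{low}}}+\ket1\otimes\ket{e_{\mathit{high}}}$ together with $G_j(\ket b\otimes\psi)=\ket b\otimes(G_j\psi)$ for $j<k$ and the commutation relations to move $G$ through each edge label's $j$-th Pauli factor.

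For the runtime: since the third argument ranges over only the two gates $T$ and $T^{-1}$, there are at most $2|V|+O(1)$ distinct argument triples, so with the cache at most $O(|V|)$ nontrivial calls occur, each doing $O(1)$ arithmetic plus a \textsc{MakeEdge} call; hence the procedure runs in time linear in the size of the input \limdd (up to the usual reduction overhead), which gives the claimed polynomial-time support; the case $\id=\id$ also re-covers the \qmdd $T$-entry when all Paulis are $\id$. The main obstacle I expect is purely the bookkeeping of scalars: making the ``$\{T,T^{-1}\}$ is closed under Pauli conjugation'' claim airtight by checking $TP$ and $T^{-1}P$ for all $P\in\{\id,X,Y,Z\}$, and then threading the resulting factors $\mu_x$ correctly into \textsc{MakeEdge} so that the diagram stays reduced; once this algebra is pinned down, the recursion, caching and complexity analysis are routine.
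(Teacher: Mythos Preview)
Your proposal is correct and follows essentially the same approach as the paper: recurse from the root down to the target level, commuting the diagonal gate past the $j$-th Pauli factor of each edge label, cache on (node, gate), and observe that only two distinct gates ever arise so the number of nontrivial calls is at most $2|V|$. The only cosmetic difference is the parameterisation of the ``two gates'': you track $G\in\{T,T^{-1}\}$ together with a scalar $\mu$ from $GX=\mu XG^{-1}$, whereas the paper works with an arbitrary diagonal gate $D=\mathrm{diag}(\rho,\omega)$ and its swap $E=\mathrm{diag}(\omega,\rho)$, using the exact identities $DX=XE$ and $DY=YE$ (so no scalar bookkeeping is needed and the edge labels $\lambda_iA_i$ are left untouched); this also yields the statement for all single-qubit diagonal gates, not just $T$.
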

\begin{proof}
	\cref{alg:limdd-t-gate} applies an arbitrary diagonal gate $D=\begin{smallmat}\rho & 0 \\ 0 & \omega\end{smallmat}$ to a state represented by a \limdd.
	To apply a $T$-gate, one calls the algorithm with $D=T=\diag{e^{i\pi/4}}$.
	We now show that the algorithm runs in polynomial time.
	First, since each recursive call takes $\mathcal O(1)$ time, for the purposes of estimating runtime it suffices to count the number of recursive calls.
	The cache stores all tuples of nodes and matrices with which the algorithm is called, so for the purposes of estimating the runtime it suffices to count the number of \emph{distinct} recursive calls.
	To this end, we note that the recursive calls to the algorithm only receive two different matrices, namely $\begin{smallmat}\rho & 0 \\ 0 & \omega\end{smallmat}$ and $\begin{smallmat}\omega & 0 \\ 0 & \rho\end{smallmat}$.
	The nodes that are passed as argument $v$ are nodes that are already in the diagram.
	Therefore, if the diagram contains $m$ nodes, then at most $2m$ distinct recursive calls are made.
	We conclude that the runtime is polynomial (indeed, linear), in the size of the diagram.
	\begin{algorithm}
	\caption{Applies a diagonal gate $D$ to qubit $k$ of a state represented by a \limdd.}
	\label{alg:limdd-t-gate}
	\begin{algorithmic}[1]
		\Procedure{ApplyDiagGate}{\limdd Node $v=\lnode{\lambda_0A_0}{v_0}{\lambda_1 A_1}{v_1}$, gate $D$, qubit $k$}
			\Statex with $D=\begin{smallmat}\rho & 0 \\ 0 & \omega\end{smallmat}$
			\Statex Node $v$ represents a state on $n$ qubits
			\If{\textsc{Cache} contains the tuple $(v,D)$}
				\Return $\textsc{Cache}[v,D]$
			\ElsIf{$k=n$}
				\State \Return $\lnode{\rho\lambda_0A_0}{v_0}{\omega\lambda_1A_1}{v_1}$
			\Else
				\State gate $E:=\begin{smallmat}\omega & 0 \\ 0& \rho\end{smallmat}$
				\For{$i=0,1$}
				\State gate $F_i :=\begin{cases}
				D     & \text{if }A^k_i\in \{I,Z\} \\
				E     & \text{if }A^k_i\in \{X,Y\}
				\end{cases}$\Comment{Here $A^k_i$ denotes the $k$-th qubit of the Pauli operator $A^i$}
				\State Node $u_i:=\textsc{ApplyTGateTo\limdd}(v_0,F_i,k)$
				\EndFor
				\State Node $r:=\lnode{\lambda_0A_0}{u_0}{\lambda_1A_1}{u_1}$
				
				\State $\textsc{Cache}[v,D]:=r$
				\State \Return $r$
			\EndIf
		\EndProcedure
	\end{algorithmic}
	\end{algorithm}
\end{proof}

\subsection{Supported operations of MPS}
\label{ops:mps}
\label{app:trac:mps}

\citeauthor{vidal2003efficient} shows that MPS supports efficient application of a single one-qubit gate or two-qubit gate on consecutive qubits, which includes \xyz, \had, \T.
This extends to any two-qubit operation on any qubits \cite{perez2007matrix}, particularly including \swap and \cz gates.
These algorithms are extendable to $k$-local gates on adjacent qubits, which does not increase the largest matrix dimension $D$ to more than $D^k$.
The algorithm consists of merging the $k$ tensors (the $j$-th tensor combines the two matrices $A^0_j$ and $A^1_j$) into a single large one, applying the gate to the large tensor, followed by splitting the tensor again into $k$ matrices $A_j^0$ and $A_j^1$ again by use of the singular-value decomposition (for details on the merging and splitting see e.g. \cite{dang2019optimising}).
The largest matrix dimension during this process does not increase above $D^k$.
Using $\swap$ gates, one thus implements $\loc$ on any qubits.
We give a direct proof of the support for addition below (\cref{thm:mps-addition}).

\citeauthor{orus2014practical} gives an accessible exposition of TN, of which MPS is a special case. He explains how to compute the inner product in polynomial time. 
Thus, MPS also supports \pro.
\samp can be done by a Markov Chain Monte Carlo approach, invoking \pro as subroutine.
Since inner product is supported, so is \eq: MPS $M$ and $M'$ are equivalent iff $\frac{|\langle M | M' \rangle|^2}{\langle M | M \rangle \cdot \langle M' | M' \rangle} = 1$.

\begin{lemma}
	\label{thm:mps-addition}
	MPS supports addition in polynomial time.
\end{lemma}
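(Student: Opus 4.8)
The plan is to give an explicit, direct construction of an MPS for $\ket{\phi}+\ket{\psi}$ from MPS representations of $\ket{\phi}$ and $\ket{\psi}$, using the standard block-diagonal (direct-sum) trick for the bulk tensors and row/column stacking at the two boundaries. Let $\ket{\phi}$ be given by matrices $A_k^x\in\mathbb{C}^{D_k\times D_{k-1}}$ and $\ket{\psi}$ by matrices $B_k^x\in\mathbb{C}^{D_k'\times D_{k-1}'}$, with $D_0=D_n=D_0'=D_n'=1$. I would define new matrices $C_k^x$ as follows: for the bulk indices $k=2,\dots,n-1$, set
\[
C_k^x \;=\; \begin{bmatrix} A_k^x & 0 \\ 0 & B_k^x \end{bmatrix} \in \mathbb{C}^{(D_k+D_k')\times(D_{k-1}+D_{k-1}')};
\]
at the bottom boundary $k=1$ (column dimension $1$), stack the columns, $C_1^x = \begin{bmatrix} A_1^x \\ B_1^x \end{bmatrix}$; at the top boundary $k=n$ (row dimension $1$), stack the rows, $C_n^x = \begin{bmatrix} A_n^x & B_n^x \end{bmatrix}$. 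The new bond dimensions are $D_k+D_k'$ for the interior cuts and $1$ at the two ends, so the total size is $|C| = 2\sum_{k=1}^n (D_k+D_k')(D_{k-1}+D_{k-1}')$, which is polynomial (indeed at most a constant factor more than $|A|+|B|$ up to cross terms) in $|\phi|+|\psi|$, and the construction itself is clearly linear-time.

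The key verification step is to check the interpretation. For $\vec{y}\in\{0,1\}^n$, I would compute $C_n^{y_n}\cdots C_1^{y_1}$ and show by an easy induction that the partial product $C_j^{y_j}\cdots C_1^{y_1}$ equals the stacked column $\begin{bmatrix} A_j^{y_j}\cdots A_1^{y_1} \\ B_j^{y_j}\cdots B_1^{y_1}\end{bmatrix}$ for $j=1,\dots,n-1$ — this uses that block-diagonal matrices multiply block-wise and that $C_1$ is the column stack. Then multiplying on the left by $C_n^{y_n} = \begin{bmatrix} A_n^{y_n} & B_n^{y_n}\end{bmatrix}$ gives
\[
A_n^{y_n}\cdots A_1^{y_1} \;+\; B_n^{y_n}\cdots B_1^{y_1} \;=\; \braket{\vec{y}|\phi} + \braket{\vec{y}|\psi} \;=\; \braket{\vec{y}\,|\,\phi+\psi},
\]
which is exactly what is required for $C$ to represent $\ket{\phi}+\ket{\psi}$.

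There is essentially no deep obstacle here — the argument is the textbook direct-sum construction for matrix product states. The only point needing a little care is the treatment of the two boundary tensors, where the bond dimension is pinned to $1$ on one side: there one must stack (rows at the top, columns at the bottom) rather than take a direct sum, so that the outer product structure is preserved and the final scalar comes out as a genuine sum of the two amplitudes. I would also note that this is consistent with, and a special case of, the bond-dimension-additivity fact already invoked in \cref{thm:succ-mps-vs-limdd} (the sum of states of bond dimensions $d_1$ and $d_2$ has bond dimension at most $d_1+d_2$); the present lemma just makes the algorithmic content — that the construction is polynomial-time — explicit.
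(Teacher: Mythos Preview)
Your proposal is correct and follows exactly the same approach as the paper: the paper's proof gives precisely this block-diagonal construction with $C_n^x=\begin{smallmat}A_n^x & B_n^x\end{smallmat}$, $C_j^x=\begin{smallmat}A_j^x & 0\\ 0 & B_j^x\end{smallmat}$ for $2\leq j\leq n-1$, and $C_1^x=\begin{smallmat}A_1^x\\ B_1^x\end{smallmat}$. If anything, your write-up is more detailed than the paper's, since you spell out the inductive verification and the size bound that the paper leaves implicit.
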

\begin{proof}
	Let $A,B$ be MPSs.
	Then a new MPS $C$ representing $\ket C=\ket A+\ket B$ can be efficiently constructed as follows, for $x=0,1$ and $j=2,\ldots, n-1$:
	\begin{align}
	C_n^x = \begin{smallmat}
	A_n^x & B_n^x
	\end{smallmat}
	& &
	C_j^x = \begin{smallmat}
	A_j^x & 0 \\ 0 & B_j^x
	\end{smallmat} 
	&& 
	C_1^x=\begin{smallmat}
	A_1^x  \\ B_1^x
	\end{smallmat} 
	\end{align}
\end{proof}

\subsection{Supported operations of RBM}
\label{app:trac:rbm}

\citeauthor{jonsson2018neural} show that RBM supports Pauli gates, the controlled-$Z$ gate and the $T$-gate (and, in fact, arbitrary phase gates).
There is at the moment no efficient exact algorithm for the \had gate, which would make the list of supported gates universal.
Hence there is at the moment no exact efficient algorithm for \loc either.
\samp is supported for any $n$-qubit RBM $M$, see e.g. Appendix B of \cite{jonsson2018neural} and references therein, by performing a Markov Chain Monte Carlo algorithm (e.g. Metropolis algorithm) where the Markov Chain state space consists of all bit strings $x \in \{0, 1\}^n$, and the corresponding unnormalized probability $|\langle x |M\rangle |^2$ of each state is efficiently computed using \cref{eq:rbm-coefficient}.
No exact algorithm for \eq is known (in fact, the related problem of identity testing when one only has sampling access to one of the two RBMs is already computationally hard~\cite{blanca2021hardness}).
Although no exact algorithm for \inprod is known, it can be approximated using \samp as subroutine (see e.g. \cite{wu2020artificial}).
Furthermore, \pro can be approximated by computing the normalization factor $1 / \langle M | M \rangle$ using the (exact or approximate) algorithm for \inprod, while the relative outcome probabilities are defined in \cref{eq:rbm-coefficient}.

\begin{lemma}
	\label{thm:rbm-supports-swap}
	RBM supports \swap.
\end{lemma}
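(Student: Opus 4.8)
The plan is to exhibit an explicit, linear-time transformation on the RBM description that realizes the Swap gate, exploiting the fact that the RBM coefficient formula \eqref{eq:rbm-coefficient} depends on the visible variables only through the dot products $\vec x^T\cdot\vec\alpha$ and $\vec x^T\cdot\vec W_k$, and is therefore ``equivariant'' under permutations of the visible nodes. Swapping qubits $i$ and $j$ means producing a data structure for the state $\ket{\mathcal M'}$ with $\braket{\vec x\mid \mathcal M'}=\braket{\sigma_{ij}(\vec x)\mid \mathcal M}$, where $\sigma_{ij}$ transposes the $i$-th and $j$-th coordinates of a bit string. So it suffices to permute the RBM's parameters to mimic this relabeling of inputs.

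Concretely, given $\mathcal M=(\vec\alpha,\vec\beta,W,m)$, I would define $\mathcal M'=(\vec\alpha',\vec\beta,W',m)$, where $\vec\alpha'$ is $\vec\alpha$ with its $i$-th and $j$-th entries swapped, and $W'$ is $W$ with its $i$-th and $j$-th rows swapped (equivalently, for each hidden node $k$, swap $W_{i,k}$ with $W_{j,k}$), leaving $\vec\beta$ and $m$ untouched. The verification is a short direct computation: for every $\vec x\in\{0,1\}^n$ one has $\vec x^T\cdot\vec\alpha' = \sum_{\ell\neq i,j} x_\ell\alpha_\ell + x_i\alpha_j + x_j\alpha_i = (\sigma_{ij}\vec x)^T\cdot\vec\alpha$, and likewise $\vec x^T\cdot\vec W'_k = (\sigma_{ij}\vec x)^T\cdot\vec W_k$ for each $k$. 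Substituting into \eqref{eq:rbm-coefficient} gives $\braket{\vec x\mid\mathcal M'} = e^{(\sigma_{ij}\vec x)^T\cdot\vec\alpha}\prod_{k=1}^m\bigl(1+e^{\beta_k+(\sigma_{ij}\vec x)^T\cdot\vec W_k}\bigr) = \braket{\sigma_{ij}(\vec x)\mid\mathcal M}$, which is exactly the amplitude function of $\mathrm{Swap}_{i,j}\ket{\mathcal M}$.

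Finally, I would note the complexity: constructing $\mathcal M'$ only swaps two scalars of $\vec\alpha$ and two length-$m$ rows of $W$, costing $O(n+m)$ time, and $|\mathcal M'| = |\mathcal M| = n+m+n\cdot m$, so the operation is supported in polynomial (indeed linear) time. There is essentially no obstacle here — the only thing to be careful about is bookkeeping the index permutation correctly (swapping rows of $W$ and entries of $\vec\alpha$, not columns of $W$ or entries of $\vec\beta$), and confirming that the normalization/equivalence conventions of the paper make the equality $\braket{\vec x\mid\mathcal M'}=\braket{\sigma_{ij}\vec x\mid\mathcal M}$ suffice to conclude that $\mathcal M'$ represents $\mathrm{Swap}_{i,j}\ket{\mathcal M}$.
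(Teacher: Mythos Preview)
Your proposal is correct and follows essentially the same approach as the paper: swap the $i$-th and $j$-th entries of $\vec\alpha$ and the $i$-th and $j$-th rows of $W$, leaving $\vec\beta$ and $m$ unchanged. Your write-up is in fact more detailed than the paper's, which simply states the construction and asserts the result without the explicit verification or complexity count.
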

\begin{proof}
	In order to effect a swap between qubits $q_1$ and $q_2$, we simply exchange rows $q_1$ and $q_2$ in the matrix $W$ and the vector $\vec{\alpha}$, obtaining $W^\prime$ and $\vec{\alpha}^\prime$.
	Then $\mathcal M^\prime=(\vec{\alpha}^\prime,\vec\beta,W^\prime,m)$ has $\ket{\mathcal M^\prime}=\textsc{Swap}(q_1,q_2)\cdot\ket{\mathcal M}$.
\end{proof}

\citeauthor{torlai2018neural} note that RBMs can exactly represent Dicke states.
In \cref{thm:rbm-represents-dicke}, we give another construction of succinct RBMs for Dicke states, where the number of hidden nodes grows linearly with the number of visible nodes.
\begin{lemma}
	\label{thm:rbm-represents-dicke}
	An RBM can exactly represent any Dicke state, using only $2n$ hidden nodes.
\end{lemma}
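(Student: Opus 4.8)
The plan is to exhibit an RBM whose coefficient function, read off from \cref{eq:rbm-coefficient}, is proportional to the indicator of the Hamming-weight-$k$ shell $\{\vec x \mid |\vec x| = k\}$, which is exactly (up to normalization) the amplitude function of $\ket{D_n^k}$. The key observation is that a single hidden node $j$ contributes a factor $1 + e^{\beta_j + \vec x^T \vec W_j}$ that can be made to vanish on precisely one weight shell: if all entries of the $j$-th weight column are equal, say $\vec W_j = (1,\dots,1)^T$, then $\vec x^T \vec W_j = |\vec x|$, the Hamming weight of $\vec x$; choosing $\beta_j = i\pi - s_0$ then turns this factor into $1 - e^{|\vec x| - s_0}$, which for $|\vec x| \in \{0,1,\dots,n\}$ equals $0$ if and only if $|\vec x| = s_0$, since $|\vec x|-s_0$ is an integer.

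Concretely, I would take visible biases $\vec\alpha = \vec 0$ and, re-indexing the hidden nodes by $s_0$, put one hidden node for each $s_0 \in \{0,1,\dots,n\}\setminus\{k\}$ with weight column $(1,\dots,1)^T$ and bias $\beta_{s_0} = i\pi - s_0$; this uses $m = n$ hidden nodes, comfortably within the claimed bound of $2n$. Then $\braket{\vec x \mid \mathcal M} = \prod_{s_0\neq k}\bigl(1 - e^{|\vec x| - s_0}\bigr)$. If $|\vec x| \neq k$, the factor with $s_0 = |\vec x|$ vanishes; if $|\vec x| = k$, every factor $1 - e^{k-s_0}$ with $s_0 \neq k$ is nonzero, and their product is a single fixed constant $c\neq 0$ independent of $\vec x$. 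Hence $\ket{\mathcal M} = c\sqrt{\binom nk}\,\ket{D_n^k}$, so the RBM represents the Dicke state up to the global scalar that the representation framework ignores anyway.

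If one wants to reproduce the normalized amplitudes exactly, I would append at most one further hidden node acting as a pure rescaling: a hidden node with all-zero weight column contributes the constant factor $1 + e^{\beta}$, and since $\beta \mapsto 1+e^{\beta}$ is surjective onto $\mathbb{C}\setminus\{1\}$, one can pick $\beta$ so that this constant equals $1/(c\sqrt{\binom nk})$ whenever that value differs from $1$ (and otherwise the extra node is simply unnecessary). This gives at most $n+1 \le 2n$ hidden nodes in all cases, as claimed.

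I do not anticipate a genuine obstacle here; the only points that need care are (i) that the construction is uniform in $k$, including the boundary cases $k=0$ and $k=n$ — the set of ``bad'' weights $\{0,\dots,n\}\setminus\{k\}$ always has exactly $n$ elements — and (ii) verifying that on the weight-$k$ shell the product is really a nonzero constant (so that one genuinely obtains $\ket{D_n^k}$ rather than an $\vec x$-dependent distortion of it), which holds because $e^{k-s_0}\neq 1$ for every integer $s_0\neq k$.
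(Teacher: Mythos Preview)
Your proof is correct and in fact slightly sharper than the paper's. Both arguments build the Dicke state by inserting, for each forbidden Hamming weight $s_0\in\{0,\dots,n\}\setminus\{k\}$, hidden nodes whose multiplicative contribution vanishes precisely on the shell $|\vec x|=s_0$. The difference is in how this vanishing factor is engineered. You use a \emph{single} hidden node per forbidden weight, with real all-ones weight column and complex bias $i\pi-s_0$, so that the contribution is $1+e^{i\pi}e^{|\vec x|-s_0}=1-e^{|\vec x|-s_0}$, which is zero iff the real integer $|\vec x|-s_0$ equals $0$. The paper instead spends \emph{two} hidden nodes per forbidden weight, a conjugate pair with purely imaginary edge weights $\pm i\pi/n$, whose product is the real cosine factor $2-2\cos\bigl(\pi(|\vec x|-j)/n\bigr)$; this vanishes for the same reason.

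Your construction thus uses only $n$ hidden nodes (or $n+1$ if you insist on exact normalization), comfortably inside the stated $2n$, whereas the paper actually needs all $2n$. What the paper's pairing buys is that every multiplicative factor is manifestly real and nonnegative, so the unnormalized amplitudes on the weight-$k$ shell are visibly a positive constant; in your version the constant $\prod_{s_0\neq k}(1-e^{k-s_0})$ is real but may be negative, which is harmless given the equivalence-up-to-scalar convention but requires the one-line observation you already made. Your discussion of normalization via an extra constant hidden node is a nice touch, though as you note the framework's scalar-equivalence convention makes it unnecessary.
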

\begin{proof}
	We will construct an RBM with $2n$ hidden nodes representing $\ket{D_n^k}$.
	
	For each $j\in \{0,1,\ldots, n\}\setminus\{k\}$, our construction will use two hidden nodes.
	Fix such a $j$.
	Then the first hidden node is connected to each visible node with weight $i\pi / n$, and has bias $b_j=i\pi (1-j/n)$.
	The second hidden node is connected to each visible node with weight $-i\pi/n$ and has bias $b_j=-i\pi (1-j/n)$.
	Since the weights on all edges incident to a given hidden node are the same, the term it contributes depends only on the weight of the input (i.e., the number of zeroes and ones).
	Thus, these two nodes contribute a multiplicative factor $(1+e^{i\pi (1+|x|/n+j/n)})$ and $(1+e^{-i\pi(1+|x|/n-j/n)})$, respectively.
	Multiplying these together, the two terms collectively contribute a multiplicative term of $2+2\cos(\pi(1+|x|/n-j/n)=2-2\cos(\pi(|x|-j)/n)$, which is $0$ iff $|x|=j$ and nonzero otherwise.
	Let $a$ be the constant $a=\prod_{j=0,j\ne k}^n 2-2\cos(\pi(k-j)/n)$, i.e., the product of all terms when $|x|=k$.
	Then the RBM represents the following state unnormalized function $\Psi\colon \{0,1\}^n\to\mathbb C$, which we then normalize to obtain the state $\ket{\Psi}$:
	\begin{align}
	\Psi(x) = \begin{cases}
	a & \text{if }|x|=k \\
	0 & \text{otherwise}
	\end{cases} & &
	\ket{\Psi} = \frac{1}{a\sqrt{\binom nk}} \sum_x \Psi(x)\ket x
	\end{align}
	The normalized state $\ket{\Psi}$ represents exactly the Dicke state $\ket{D_n^k}$, since its amplitudes are equal to $1/\sqrt{\binom nk}$ when $|x|=k$ and zero otherwise.
\end{proof}

Since RBM can succinctly represent both Dicke states (\cref{thm:rbm-represents-dicke}) and graph states, a subset of stabilizer states~\cite{zhang2018efficient}, the proof for hardness of \limdd FIDELITY (\cref{thm:limdd-fidelity-ETH-hard}) is also applicable to RBM.

\begin{corollary}
	\label{thm:rbm-fidelity-ETH-hard}
	There is no polynomial-time algorithm for RBM FIDELITY, i.e., for computing fidelity between two RBM to $2n$ bits of precision, unless the Exponential Time Hypothesis (ETH) fails.
\end{corollary}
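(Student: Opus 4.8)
The plan is to observe that the reduction in the proof of \cref{thm:limdd-fidelity-ETH-hard} uses no property specific to \limdds beyond the fact that both the Dicke state $\ket{D_n^k}$ and every graph state $\ket G$ admit a representation of polynomial size that can be \emph{constructed} in polynomial time. Hence, to transport the hardness proof to RBMs, it suffices to exhibit such polynomial-size RBMs for these two families of states and to re-run the same argument essentially verbatim.

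First I would invoke \cref{thm:rbm-represents-dicke}, which gives an RBM with only $2n$ hidden nodes representing any Dicke state $\ket{D_n^k}$, together with the result of \citeauthor{zhang2018efficient}, which shows that RBMs can efficiently and exactly represent any stabilizer state; since graph states are stabilizer states, each $\ket G$ has a polynomial-size RBM that is constructible in polynomial time from $G$. Both constructions are explicit, so the construction step takes time polynomial in $n$.

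Next I would reuse the chain of equalities from the proof of \cref{thm:limdd-fidelity-ETH-hard}, namely $\braket{D_n^k|G} = \frac{1}{\sqrt{\binom nk 2^n}}\left(e(G,k)-o(G,k)\right)$, so that $|e(G,k)-o(G,k)| = \sqrt{\binom nk 2^n \cdot |\braket{D_n^k|G}|^2}$, and $2n$ bits of precision in the fidelity $|\braket{D_n^k|G}|^2$ suffice to recover the nonnegative integer $|e(G,k)-o(G,k)|$ exactly (since $\log_2(\binom nk \cdot 2^n)\le 2n$). A hypothetical polynomial-time algorithm for RBM FIDELITY would therefore solve EOSD in polynomial time: build the two RBMs, call the fidelity algorithm, and evaluate the closed-form expression. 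By \cref{thm:eosd-ETH-hard} this contradicts ETH, which proves the corollary.

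There is essentially no obstacle here; the only point worth double-checking is that polynomial-time \emph{constructibility} of the RBMs (not merely their existence) holds, which it does for both the explicit Dicke-state construction of \cref{thm:rbm-represents-dicke} and the stabilizer-state construction of \citeauthor{zhang2018efficient}. As the main text already remarks, the argument in fact shows that fidelity is ETH-hard for any representation that is simultaneously succinct (with polynomial-time constructions) for all graph states and for the Dicke state.
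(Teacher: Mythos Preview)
Your proposal is correct and matches the paper's approach exactly: the paper simply observes that since RBMs can succinctly represent both Dicke states (\cref{thm:rbm-represents-dicke}) and graph states (via \citeauthor{zhang2018efficient}), the proof of \cref{thm:limdd-fidelity-ETH-hard} carries over verbatim. Your additional remark about polynomial-time constructibility is a reasonable sanity check, but otherwise there is nothing to add.
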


}
{\section{Proofs of \cref{sec:rapidity}}
\label{sec:proofs-of-sec-rapidity}

\cref{sec:proof-rapidity-preorder} proves that our rapidity definition is a preorder and is equivalent to the one given by \cite{lai2017new} for canonical data structures.

\cref{sec:proof-sufficient-condition} provides the proof for the sufficient condition for rapidity and \cref{sec:proofs-rapidity-relations}-\ref{sec:transformations-between-QDDs} prove its application to the data structures studied in this paper.

\subsection{Rapidity Definition}
\label{sec:proof-rapidity-preorder}

We now show that rapidity is a preorder over data structures and that the definition of \cite{lai2017new} can be considered a special case for canonical data structures.
For convenience, we restate the definition of rapidity.

\defrapidity*

\rapidityispreorder*
\begin{proof}
	We first show that rapidity is reflexive, next we show that it is transitive.
	
	\textbf{Rapidity is reflexive.}
	It suffices to show that rapidity is a reflexive relation on algorithms performing a given operation.
	Let $D$ be a data structure, $OP$ an operation and $ALG$ an algorithm performing $OP(D)$.
	Then $ALG$ is at most as rapid as itself if there exists a polynomial $p$ such that for each input $\phi$ there exists an equivalent input $\psi$ with $time(ALG,\phi)\leq p(ALG,\psi)$.
	We may choose the polynomial $p(x)=x$, and we may choose $\psi:=\phi$.
	Then the statement reduces to the trivial statement $time(ALG,\phi)=time(ALG,\psi)=p(ALG,\psi)$.
	
	\textbf{Rapidity is transitive.}
	It suffices to show that rapidity is a transitive relation on algorithms.
	To this end, let $D_1,D_2,D_3$ be data structures, $OP$ an operation and $ALG_1,ALG_2,ALG_3$ algorithms performing $OP(D_1),OP(D_2),OP(D_3)$, respectively.
	Suppose that $ALG_1$ is at most as rapid as $ALG_2$ and $ALG_2$ is at most as rapid as $ALG_3$.
	We will show that $ALG_1$ is at most as rapid as $ALG_3$.
	By the assumptions above, there are polynomials $p$ and $q$ such that (i) for each input $\phi$ there exists an equivalent input $\psi$ such that $time(ALG_2,\psi)\leq p(time(ALG_1,\phi))$; and (ii) for each input $\psi$ there exists an equivalent input $\gamma$ such that $time(ALG_3,\gamma)\leq q(time(ALG_2,\psi))$.
	
	Put together, for every input $\phi$ there exist equivalent inputs $\psi$ and $\gamma$ such that $time(ALG_3,\gamma)\leq q(time(ALG_2,\psi))\leq q(p(time(ALG_1,\phi)))$.
	Letting the polynomial $\ell(x)=q(p(x))$, we obtain that for every $\phi$ there exists an equivalent $\gamma$ such that $time(ALG_3,\gamma)\leq \ell(ALG_1,\phi)$.
\end{proof}

We note that an alternative definition of rapidity~\cite{lai2017new}, which always allows $ALG_2$ to read its input by requiring $\text{time}(ALG_2,y)\leq p(\text{time}(ALG_1,x)+|y|)$ instead of $\text{time}(ALG_2,y) \leq p(\text{time}(ALG_1,x))$, is not transitive for query operations:

Consider the data structure Padded \qmdd, (\textsf P\qmdd) which is just a QMDD, except that a string of $2^{2^n}$ "0"'s have been concatenated to the end of the \qmdd representation, where $n$ is the number of qubits. 

Under the alternative rapidity relation $ \geq_r^{\text{alt}}$, both \add and \qmdd are at least as rapid as \textsf P\qmdd, because the \add algorithm is allowed to run for $poly(2^{2^n})$ time. But \textsf P\qmdd is also at least as rapid as \qmdd, because algorithms for \textsf P\qmdd don't need to read the whole $2^{2^n}$-length input --- they only read the \qmdd at the beginning of the string. Put together, this leads to:

$$ \text{\add} \geq_r^{\text{alt}} \text{\textsf P\qmdd} \geq_r^{\text{alt}} \text{\qmdd}
\text{~~~~~and~~~~~}
\text{\add} \not\geq_r^{\text{alt}} \text{\qmdd}.$$

Next, we show that our definition of rapidity is equivalent to Lai et al.'s definition of rapidity in the case when both data structures are canonical and we restrict our attention to only those algorithms which run in time at least $m$ where $m$ is the size of the input.
For convenience, we restate Lai et al.'s definition here.
\begin{definition}[Rapidity for canonical data structures \cite{lai2017new}]
	A $c$-ary operation $OP$ on a canonical language $L_1$ is \emph{at most as rapid as} $OP$ on another canonical language $L_2$, iff for each algorithm $ALG$ performing $OP$ on $L_1$ there exists some polynomial $p$ and some algorithm $ALG_2$ performing $OP$ on $L_2$ such that for every valid input $(\phi_1,\ldots,\phi_c, \alpha)$ of $OP$ on $L_1$ and every valid input $(\psi_1,\ldots, \psi_c,\alpha)$ of $OP$ on $L_2$ satisfying $\phi_i\equiv \psi_i$ ($1\leq i\leq c$), $ALG_2(\psi_1,\ldots, \psi_c,\alpha)$ can be done in time $p(t+|\phi_1|+\cdots+|\phi_c|+|\alpha|)$, where $\alpha$ is any element of supplementary information and $t$ is the running time of $ALG(\phi_1,\ldots,\phi_c,\alpha)$.
\end{definition}
Lai et al. use several minor differences in notation.
First, they speak of \emph{valid} inputs (because they consider data structures which cannot represent all objects), whereas we do not; they use an element of supplementary information $\alpha$ as part of the input, whereas we omit such an element; they write $\phi_i\equiv \psi_i$ where we write $\ket{\phi_i}=\ket{\psi_i}$; lastly they speak of a \emph{language} whereas we speak of a \emph{data structure}.
Since these differences between the notation are inconsequential, it will be convenient to rephrase the definition of Lai et al. using the notation of this paper, as follows:
\begin{definition}[Rapidity of canonical data structures, rephrased]
	\label{def:rapidity-lai-rephrased}
	In the following, $ALG_1$, $ALG_2$ are algorithms which perform $OP$ on canonical data structures $D_1,D_2$, respectively.
	
	\begin{enumerate}[(a)]
	\item An algorithm $ALG_1$ is \emph{at most as rapid as} an algorithm $ALG_2$ iff there is a polynomial $p$ such that for each input $\phi$ and for each equivalent input $\psi$, it holds that $time(ALG_2,\psi)\leq p(time(ALG_1,\phi)+|\phi|)$.
	\label{item:rapidity-lai-algorithm}
	\item A canonical data structure $D_1$ is \emph{at most as rapid as} a canonical data structure $D_2$ for an operation $OP$ if for each algorithm $ALG_1$ performing $OP$ on $D_1$ there is an algorithm $ALG_2$ performing $OP$ on $D_2$ such that $ALG_1$ is at most as rapid as $ALG_2$.
	\label{item:rapidity-lai-ds}
	\end{enumerate}
\end{definition}

\begin{lemma}
	\label{thm:equivalence-of-rapidity-definitions}
	\cref{def:rapidity} is equivalent to the definition of \cite{lai2017new} (\cref{def:rapidity-lai-rephrased}) in the case when two data structures $D_1,D_2$ are both canonical and where we restrict our attention to algorithms  whose runtime is at least $m$, where $m$ is the size of the input.
\end{lemma}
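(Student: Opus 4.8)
The plan is to reduce the statement to the claim that, under the two stated hypotheses, the algorithm-level relation ``$ALG_1$ is at most as rapid as $ALG_2$'' defined in clause~(a) of \cref{def:rapidity} coincides with the one in clause~(a) of \cref{def:rapidity-lai-rephrased}; the data-structure-level statements in clause~(b) of the two definitions are then word-for-word identical quantifications (``for every $ALG_1$ on $D_1$ there is $ALG_2$ on $D_2$ with $ALG_1$ at most as rapid as $ALG_2$'') over this common inner relation, so their equivalence is immediate. Comparing the two versions of clause~(a), there are exactly two discrepancies: \cref{def:rapidity} asks only for the \emph{existence} of an equivalent input $\psi\in D_2^{\,c}$ whereas \cref{def:rapidity-lai-rephrased} quantifies over \emph{all} such $\psi$; and the bound of \cref{def:rapidity-lai-rephrased} carries an extra additive term, $\time(ALG_2,\psi)\le p(\time(ALG_1,\phi)+|\phi|)$ versus $\time(ALG_2,\psi)\le p(\time(ALG_1,\phi))$. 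I would show that canonicity of $D_2$ neutralizes the first discrepancy and that the restriction to algorithms with runtime at least the input size neutralizes the second.

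For the first discrepancy: fix algorithms $ALG_1$ on $D_1$ and $ALG_2$ on $D_2$ implementing a $c$-ary operation, and fix an input $\phi=(\phi_1,\dots,\phi_c)\in D_1^{\,c}$. Because every quantum state is expressible in $D_2$ and $D_2$ is canonical, for each $j$ there is a unique $\psi_j\in D_2$ with $\ket{\psi_j}=\ket{\phi_j}$; hence the set of inputs $\psi\in D_2^{\,c}$ equivalent to $\phi$ is a singleton. Over a singleton, ``there exists an equivalent $\psi$'' and ``for every equivalent $\psi$'' are the same statement, so the two quantifiers may be interchanged. (Only canonicity of the \emph{target} structure $D_2$ is used here; canonicity of $D_1$ is assumed merely because Lai et al.'s framework is phrased for a pair of canonical languages.)

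For the second discrepancy I would first observe that, without loss of generality, every witnessing polynomial may be taken monotone nondecreasing, replacing $p(x)=\sum_i a_i x^i$ by $\sum_i|a_i|x^i$. Now if $ALG_1$ is at most as rapid as $ALG_2$ in the sense of \cref{def:rapidity-lai-rephrased}, with monotone witness $p$, then since $ALG_1$ runs in time at least $|\phi|$ we have $\time(ALG_1,\phi)+|\phi|\le 2\,\time(ALG_1,\phi)$, so for the (unique) equivalent $\psi$, $\time(ALG_2,\psi)\le p(2\,\time(ALG_1,\phi))$, and $q(x):=p(2x)$ is a polynomial witnessing the bound required by \cref{def:rapidity}. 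Conversely, if the bound of \cref{def:rapidity} holds with monotone witness $p$, then $\time(ALG_2,\psi)\le p(\time(ALG_1,\phi))\le p(\time(ALG_1,\phi)+|\phi|)$, which is the bound of \cref{def:rapidity-lai-rephrased}. With both discrepancies resolved the algorithm-level relations agree, and hence so do the data-structure-level relations of clause~(b). I do not anticipate a genuine obstacle; the only points needing care are recognizing that it is canonicity of $D_2$ rather than $D_1$ that collapses the $\psi$-quantifier, and that the hypothesis ``runtime at least $m$'' is precisely what is needed to absorb the additive $|\phi|$ in Lai et al.'s bound.
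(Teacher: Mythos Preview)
Your proposal is correct and follows essentially the same route as the paper: reduce to the algorithm-level clause~(a), use canonicity of $D_2$ to collapse the existential/universal quantifier over $\psi$, assume the witnessing polynomial is monotone, and handle the additive $|\phi|$ via $q(x)=p(2x)$ in one direction and monotonicity in the other. Your remark that only canonicity of $D_2$ is actually needed is a nice sharpening the paper does not make explicit.
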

\begin{proof}
	Let $D_1,D_2$ be two canonical data structures.
	We will show that $D_1$ is at most as rapid as $D_2$ according to \cref{def:rapidity} if and only if the same is true according tot \cref{def:rapidity-lai-rephrased}.
	Since items \ref{def:rapidity}.\ref{item:rapidity-operations} and \ref{def:rapidity-lai-rephrased}.\ref{item:rapidity-lai-ds} are equivalent, it suffices to show that the two definitions are equivalent for \emph{algorithms} rather than \emph{data structures}.
	That is, we will show that an algorithm $ALG_1$ is at most as rapid as $ALG_2$ according to \cref{def:rapidity} if and only if the same is true according to \cref{def:rapidity-lai-rephrased}.
	
	Abusing notation, we write 
	$|(\phi_1,\ldots, \phi_c)|$ instead of $|\phi_1| + \ldots + |\phi_c|$, etc.
	In this proof, we will assume without loss of generality that all polynomials $p$ are monotonically increasing (i.e., $p(x)\leq p(y)$ if $x\leq y$).
	Namely, if $p$ is a polynomial which does not monotonically increase, then use instead the polynomial $p'(x)=p(x)+x^k$ for sufficiently large $k$.
	
	\textbf{Direction \emph{if}.}
	Let $ALG_1,ALG_2$ be algorithms performing $OP$ on canonical data structures $D_1,D_2$, respectively, such that $ALG_1$ is at most as rapid as $ALG_2$ according to \cref{def:rapidity-lai-rephrased}.
	Then there is a polynomial $p$ such that $time(ALG_2,\psi)\leq p(time(ALG_1,\phi)+|\phi|)$ for all equivalent inputs $\phi,\psi$.
	Since the data structures $D_1,D_2$ can represent all quantum state vectors, there certainly \emph{exists} an equivalent $\psi$ to any $\phi$; indeed, since $D_2$ is canonical, there is a unique such instance $\psi$.
	Since we restrict our attention to algorithms with runtime at least $m$ where $m$ is the size of the input, we get that $ |\phi|\leq time(ALG_1,\phi)$, so $p(time(ALG_1,\phi)+|\phi|)\leq p(2\cdot time(ALG_1,\phi))$.

	Therefore, let $q(x)=p(2x)$.
	Now we get that, for every input $\phi$, there exists an equivalent input $\psi$ such that $time(ALG_2,\psi)\leq q(ALG_1,\phi)$.
	Therefore, $ALG_1$ is at most as rapid as $ALG_2$ according to \cref{def:rapidity}.
	
	\textbf{Direction \emph{only if}.}
	Suppose that $ALG_1$ is at most as rapid as $ALG_2$ according to \cref{def:rapidity}.
	Then there is a polynomial $p$ such that for each input $\phi$, there is an equivalent input $\psi$ such that $time(ALG_2,\psi)\leq p(time(ALG_1,\phi))$.
	Using the monotonicity of $p$ which we assume without loss of generality, we get that $p(time(ALG_1,\phi))\leq p(time(ALG_1,\phi)+|\phi|)$.
	Lastly, since $D_2$ is canonical, any instance $\psi$ which is equivalent to $\phi$ must be the \emph{only} input instance that is equivalent to $\phi$.
	Therefore, we obtain that there exists a polynomial $p$ such that for each input $\phi$ and for all equivalent inputs $\psi$ (i.e., for the unique equivalent instance $\psi$ of $D_2$), it holds that $time(ALG_2,\psi)\leq p(time(ALG_1,\phi)+|\phi|)$.
	Therefore, $ALG_1$ is at most as rapid as $ALG_2$ according to \cref{def:rapidity-lai-rephrased}.
\end{proof}

\subsection{A Sufficient Condition for Rapidity}
\label{sec:proof-sufficient-condition}

Here, we prove \cref{thm:sufficient-condition-rapidity}, which we restate below.

\sufficientconditiontheorem*

\begin{proof} We prove the theorem for $c=1$. This can be easily extended to the case with multiple operands by treating the operands point-wise and summing their sizes.
We show that $OP(D_2)$ is at most as rapid as $OP(D_1)$, assuming that the conditions in \cref{thm:sufficient-condition-rapidity} hold. 
(Note that this swaps the roles of $D_1$ and $D_2$ relative to \cref{def:rapidity}).
In this proof, we will assume without loss of generality that all polynomials $p$ are monotone, i.e., if $x\leq y$ then $p(x)\leq p(y)$.

	We prove the theorem for a manipulation operation $OP$.
	The proof for a query operation $OP$ follows as a special case, which we treat at the end of the proof.

	Let $ALG_2$ be an $\Omega(m)$ algorithm implementing $OP(D_2)$.
	By \ref{i:rm}, we may assume without loss of generality that $ALG_2$ is runtime monotonic.
	Let $f\colon D_1\to D_2$ be the polynomial-time weakly minimizing transformation (\ref{i:d1d2}), and $g\colon D_2\to D_1$ the polynomial-time transformation in the other direction satisfying the criteria in \ref{i:d2d1}. 

	We set $ALG_1 = f \circ ALG_2 \circ g$, i.e., $ALG_1$ is as follows.
	\begin{algorithmic}[1]
		\Procedure{$ALG_1$}{$\phi$}
		\State $\psi:=f(\phi)$
		\State  $\rho := ALG_2(\psi)$	
		\State \Return $g(\rho)$							 
		\EndProcedure
	\end{algorithmic}
	$ALG_1$ is \emph{complete} (i.e., works on all inputs), since $f,g$ and $ALG_2$ are.
	The remainder of the proof shows that 
	$ALG_2$ is at most as rapid as $ALG_1$,	
	i.e., there exists a polynomial $p$ such that 
	 for all operands $\psi \in D_2$, there exists in input
	 $\phi \in D_1$ with $\ket{\phi} = \ket{\psi}$ for which $\time(ALG_1, \phi) \leq p\left(\time(ALG_2, \psi)\right)$.

Let $\psi \in D_2$.
We take $\phi \in D_1$ such that $\ket{\phi} = \ket{\psi}$ and $\sizeof{\phi} \leq s(\sizeof{\psi})$ for the polynomial $s$ ensuring the succinctness relation $D_1 \atleastassuccinctas D_2$.
Such a $\phi$ exists, because $D_1$ is more succinct than $D_2$.

It remains to show that $\exists p \colon \time(ALG_1, \phi) \leq p\left(\time(ALG_2, \psi)\right)$, where $p$ is independent of $\phi$ and $ \psi $.
	To this end, we can express the time required by $ALG_1$ by summing the runtimes of its three steps as follows.
	\begin{align}
	\label{eq:alg1-time}
	\time(ALG_1,\phi) 
	= \time(f,\phi)+\time(ALG_2,f(\phi))
	+ \time(g,ALG_2(f(\phi)))
	\end{align}
It now suffices to prove that each summand of \cref{eq:alg1-time} 
	is polynomial  in the runtime of $ALG_2(\psi)$.
 	\begin{enumerate}
	\item %
	We show $\time(f,\phi)\leq \poly(\time(ALG_2,\psi))$.
	Since $f$ runs in polynomial time in its  output (\ref{i:d1d2})
	and  $\sizeof{\phi} \leq s(\sizeof{\psi})$ (see above), 
	we have $\time(f, \phi) \leq \poly(\sizeof{f(\phi)})$.
	Let $t$ be the polynomial such that $\time(f,\phi)\leq t(|f(\phi)|)$.
	Since $f$ is weakly minimizing  (\ref{i:d1d2}), it is guaranteed that $|f(\phi)|\leq m(|\psi|)$ for some polynomial $m$.
	Lastly, by \ref{i:omega}, we have $\sizeof{\psi}  = \oh(\time(ALG^{rm}_2,\psi))$, so $|\psi|\leq k(\time(ALG_2,\psi))$ for some polynomial $k$.
	Put together, we have $\time(f,\phi)\leq t(|f(\phi)|)\leq t(m(|\psi|))\leq t(m(k(\time(ALG_2,\psi))))$, which proves the claim.
	
	\item 
	\label{item:show-fast-2}
	We show $\time(ALG_2,f(\phi))\leq \poly  (\time(ALG_2,\psi))$.
	Because $f$ is a weakly minimizing transformation (\ref{i:d1d2}), we have $\sizeof{f(\phi)}\leq s(\sizeof{\psi})$ for some $s$.
	Since $ALG_2$ is runtime monotonic (\ref{i:rm}), and because $\sizeof{f(\phi)}\leq s(\sizeof{\psi})$, we have $\time(ALG_2, f(\phi)) \leq t(ALG_2,\psi)$ for some $t$, which proves the claim.
	
	\item 
	We show $\time(g,ALG_2(f(\phi)))\leq \poly (\time(ALG_2, \psi))$.
	Since $g$ runs in time polynomial in the input (\ref{i:d2d1}); and the input to $g$ is $ALG_2(f(\phi))$, we have $\time(g,ALG_2(f(\phi)))\leq p(\sizeof{ALG_2(f(\phi))})$ for some polynomial $p$.
	Next, we have trivially $\time(ALG_2,f(\phi))\geq \sizeof{ALG_2(f(\phi))}$, since the time $ALG_2$ spends writing the output is included in the total time, thus we obtain $\time(g,ALG_2(f(\phi)))\leq p(\time(ALG_2,f(\phi)))$.
	As we have seen above in item \ref{item:show-fast-2}, $\time(ALG_2,f(\phi))\leq t(\time(ALG_2,\psi))$ for some polynomial $t$.
	Putting this together, we obtain $\time(g,ALG_2(f(\phi)))\leq p(t(\time(ALG_2,\psi)))$, which proves the claim.
	
	\end{enumerate}
	This proves the theorem for the case when $OP$ is a manipulation operation.
	
	Lastly, if $OP$ is a query operation rather than a manipulation operation, then the transformation from $D_2$ back to $D_1$ using $g$ is no longer necessary.
	This is the only change needed in $ALG_1$; in the proof above, we may use $time(g,ALG_2(f(x_1)))=0$.
	The requirement that $time(g,ALG_2(f(x_1)))\leq p(time(ALG_2,x_2))$ now holds vacuously.
\end{proof}

\subsection{Rapidity Relation between Data Structures}
\label{sec:proofs-rapidity-relations}

Here we prove the rapidity relations between data structures studied in the paper as stated in \cref{thm:rapidity}, restated below with proof.

\thmrapidity*
\begin{proof}
	The relation between \qmdd and MPS is proved in \cref{thm:mps-as-rapid-as-qmdd} as restated in \cref{sec:qmdd-to-mps}.
	Finally, \cref{sec:transformations-between-QDDs} provides the transformations between QDDs that fulfill the conditions of \cref{thm:sufficient-condition-rapidity}.
\end{proof}

\begin{figure}[h!]
\input{pics/rapidity-relations-figure}
\end{figure}

\subsection{MPS is at least as Rapid as \qmdd}
\label{sec:qmdd-to-mps}
\label{sec:mps-to-qmdd}

This appendix proves \cref{thm:mps-as-rapid-as-qmdd} from \cref{sec:transformations} by
providing transformations between MPS and \qmdd that realize the sufficient conditions of \cref{thm:sufficient-condition-rapidity}. The introduction into QDDs, given in  \cref{sec:ddprelims} is relevant here.

\mpsasrapidasqmdd*
\begin{proof}
	Let $f$ be the polynomial-time transformation from \cref{thm:trans-qmdd-to-mps}.
	Let $g$ be the weakly minimizing transformation from MPS to \qmdd of \cref{thm:trans-mps-to-qmdd}, that runs in time polynomial in the size of the input MPS and the resulting \qmdd.
These transitions satisfy requirements \ref{i:d1d2} and \ref{i:d2d1} of \cref{thm:sufficient-condition-rapidity} respectively.
Since QDDs are canonical data structures as explained in \cref{sec:preliminaries}, all algorithms are by definition runtime monotonic, as  for any state $\ket{\phi}$ there is only one structure representing it, i.e., $D^\phi$ is a singleton set.
This satisfies \ref{i:rm}. Since its premise fulfills \ref{i:omega}, the theorem follows.
\end{proof}

\begin{lemma}[\qmdd to MPS]
	\label{thm:trans-qmdd-to-mps}
	In polynomial time, a \qmdd can be converted to an MPS representing the same state.
\end{lemma}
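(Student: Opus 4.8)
The plan is a direct, layer-by-layer translation. Given a reduced \qmdd $\alpha$ on $n$ qubits, the no-skipping assumption partitions its nodes into layers $V_1,\dots,V_n$ (with $V_k=\{v:\nindex(v)=k\}$) together with the single leaf $V_0$; since $\alpha$ has one root, $|V_n|=1$. I would fix an arbitrary ordering of each $V_k$ and, for $x\in\{0,1\}$, form the weighted biadjacency matrix $B^x_k$ of the $x$-labelled edges from $V_k$ to $V_{k-1}$: its $(i,j)$ entry is $\lbl(e)$ when the $x$-edge $e$ leaving the $i$-th node of $V_k$ points to the $j$-th node of $V_{k-1}$, and $0$ otherwise. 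Since every node has exactly one $x$-edge and no edges are skipped, each row of $B^x_k$ is supported on a single column, namely the target of that $x$-edge. The MPS $M$ is then obtained by taking its matrices $A^x_k$ to be these $B^x_k$ — up to the bookkeeping in the last paragraph — setting $D_k$ to the corresponding layer size (so $D_0=D_n=1$, as required), and absorbing the root-edge label $\lambda$ into the top single-row matrix, $B^x_n\mapsto\lambda B^x_n$ (which merely rescales the whole amplitude).

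Correctness, $\ket M=\ket\alpha$, I would establish by induction on the layers: for the $i$-th node $v$ of $V_k$ and any $\vec z\in\{0,1\}^k$ one has $\braket{\vec z\,|\,v}=\bigl(B^{z_1}_k B^{z_2}_{k-1}\cdots B^{z_k}_1\bigr)_{i,1}$, where $z_1$ is the bit at $v$'s layer. The base case $k=1$ is immediate from $\ket v=\lbl(e_{\mathit{low}})\ket0+\lbl(e_{\mathit{high}})\ket1$ and the leaf value $1$. For the step, the node decomposition $\ket v=\ket0\otimes\lbl(e_{\mathit{low}})\ket{w_0}+\ket1\otimes\lbl(e_{\mathit{high}})\ket{w_1}$ together with the one-nonzero-per-row structure of $B^{z_1}_k$ makes the matrix product literally read ``follow the $z_1$-edge, multiply by its label, recurse on the selected child'', and the induction hypothesis closes the case. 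Taking $k=n$ and $\ket\alpha=\lambda\ket r$ for the root $r$ gives $\braket{\vec z\,|\,M}=\braket{\vec z\,|\,\alpha}$ for all $\vec z$. The complexity bound is then immediate: every layer size, hence every $D_k$, is at most $|V|$, so $|M|=2\sum_k D_kD_{k-1}=\oh(n|V|^2)$, and the matrices are filled by a single sweep over the edges of $\alpha$ after zeroing $\oh(n|V|^2)$ entries; the whole conversion runs in time polynomial in $|\alpha|$. The same construction applies verbatim to \add as a special case, which suffices for the corollary $\mathrm{MPS}\atleastassuccinctas\qmdd\atleastassuccinctas\add$.

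The one point requiring care — and the only place the argument is not utterly routine — is aligning the qubit-indexing conventions of the two formalisms: a \qmdd lists its qubits from the root downward, whereas the MPS interpretation $\braket{\vec y\,|\,M}=A^{y_n}_n\cdots A^{y_1}_1$ puts the $1\times D_{n-1}$ matrix $A_n$ leftmost and the $D_1\times1$ matrix $A_1$ rightmost, so that it is $A_1$ (not $A_n$) that governs the leading tensor factor. To make the root of $\alpha$ correspond to the correct end of the chain — and hence to get $\ket M=\ket\alpha$ on the nose, as required of a transformation — I would list the layers in the reversed order and replace each $B^x_k$ by its transpose, using $(XY)^T=Y^TX^T$ and the fact that the overall $1\times1$ product equals its own transpose. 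Apart from this notational alignment, which I would spell out at the outset, the construction is elementary linear algebra and presents no genuine obstacle.
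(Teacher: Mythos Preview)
Your proposal is correct and takes essentially the same approach as the paper: both build the MPS matrices $A_k^x$ as the weighted bipartite adjacency matrices of the $x$-edges between consecutive layers of the \qmdd, and both identify the amplitude $\braket{\vec x\,|\,v}$ with the product of edge labels along the root-to-leaf path selected by $\vec x$ (you by induction, the paper via the standard ``product of adjacency matrices equals sum of path-weight products'' observation). Your extra paragraph on aligning the qubit-ordering conventions via transposition is a point the paper handles implicitly by absorbing it into its indexing; otherwise the two arguments coincide.
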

\begin{proof}
	Consider a \qmdd with root edge $\ledge {\lambda}v$ describing a state $\ket\phi = \sum_{\vec x\in \{0,1\}^n}\alpha(\vec x)\ket{\vec x}$.
	We will construct an MPS $A$ describing the same state.
	For the purposes of this proof, we will call low edges \emph{0-edges} and high edge \emph{1-edges}. 

	First, without loss of generality, we may assume that the root edge label is $\lambda=1$.
	Namely, we may multiply the labels on the root's low and high edges with $\lambda$, and then set the root edge label to $1$; this operation preserves the state represented by the \qmdd.

\def\low{\textnormal{0}}
\def\high{\textnormal{1}}

Denote by $D_{\ell}$ the number of nodes at the $\ell$-th layer in \qmdd $v$, i.e. $D_n = 1$ (the root node $v$) and $D_{0}=1$ (the leaf \leafnode).
	Recall that the \qmdd is a directed, weighted graph whose vertices are divided into $n+1$ layers, i.e., the edges only connect nodes from consecutive layers.
	Therefore, we may speak of the $D_\ell\times D_{\ell-1}$ bipartite adjacency matrix between layer $\ell$ and layer $\ell-1$ of the diagram.
	For layer $1\leq\ell\leq n$ and $x=0,1$, let $A_\ell^x$ be the $D_\ell\times D_{\ell-1}$ bipartite adjacency matrix obtained in this way using only the low edges if $x=0$, and only the high edges if $x=1$.
	That is, assuming some order on nodes within each level, the entry of the matrix $A_{\ell}^{x}$ in row $r$ and column $c$ is defined as
  \begin{equation}
\label{eq:qmdd-to-mps}
\left(A_{\ell}^{x}\right)_{r,c} = 
    \begin{cases*}
      \textnormal{label}(e)
& if node with index $r$ in level $\ell$ has a $x$-edge $e$ to node with index $c$ in level $\ell-1$
\\
      0        & otherwise
    \end{cases*}
  \end{equation}
	We claim that the following MPS $A$ describes the same state as the \qmdd:
	\begin{equation}
	A=(A_1^{\low},A_1^{\high},\ldots, A_n^{\low},A_n^{\high})
	\end{equation}

Following the MPS definition in \cref{sec:preliminaries}, our claim is proven by showing that for \qmdd root node $v$ representing $\ket v$, we have
\begin{equation}
\label{eq:qmdd-to-mps-claim}
\langle\vec x |v\rangle = A_{n}^{x_{n}} \cdot A_{n-1}^{x_{n-1}} \cdots A_{1}^{x_{1}} \textnormal{\qquad for all $\vec x\in \{0, 1\}^n$}
\end{equation}

For an $n$-qubit \qmdd $v$, the amplitude $\langle \vec{x} | v \rangle$ for $\vec{x}\in \{0, 1\}^n$ is equal to the product of the weights found on the single path from the root node node to leaf effected by $\vec{x}$ (this path is found as follows: go down from root to leaf; at a vertex at layer $j$, choose to traverse the low edge if $x_j = 0$ and the high edge if $x_j = 1$).
We next reason that this product equals the single entry of the product $y:=A_{n}^{x_{n}} \cdot A_{n-1}^{x_{n-1}} \cdots A_{1}^{x_{1}}$ from \cref{eq:qmdd-to-mps-claim}.

We recall several useful facts from graph theory.
If $G$ ($G'$) is a weighted, directed bipartite graph on the bipartition $M \cup M''$ ($M'' \cup M'$) vertices, with weighted adjacency matrix $A_G$ ($A_{G'}$), then it is not hard to see that the element $(A_G \cdot A_{G'})_{r,c}$ is the sum, over all two-step paths $r-a-c$ starting at vertex $r \in M$ and going through vertex $a\in M''$  to vertex $c \in M'$, of products of the two weights $w_{r\rightarrow a}$ and $w_{a \rightarrow c}$.
More generally, for a sequence of weighted, directed bipartite graphs $G_j$ with vertex set $M_j \cup M_{j+1}$, the $(r,c)$-th entry of the product of adjacency matrices $A_{G_1} \cdot A_{G_2} \cdot \dots \cdot A_{G_n}$ equals $\sum_{\textnormal{paths $\pi$ from $r$ to $c$}} \prod_{\textnormal{edge} \epsilon\in \pi} \textnormal{weight}(\epsilon)$.

Now note that the matrix $y$ has dimensions $1\times 1$ (since $D_0 = D_n = 1$), corresponding to a single root and single leaf.
By the reasoning above, since $y$ is the product of all bipartite adjacency matrices of the \qmdd, the single element of this matrix is equal to the product of weights found on the single path from root to leaf as represented by $\vec{x}$.
\end{proof}

\begin{lemma}[MPS to \qmdd]
	\label{thm:trans-mps-to-qmdd}
	There is a weakly minimizing transformation from MPS to \qmdd, that runs in time polynomial in the size of the input MPS and the resulting \qmdd.
\end{lemma}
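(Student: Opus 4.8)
The plan is to build the reduced (hence canonical) \qmdd for $\ket M$ by a top-down recursive Shannon decomposition driven by the MPS, while memoising on the already-constructed \qmdd nodes so the number of recursive calls stays bounded by the output size. Write $M=(A_1^0,A_1^1,\dots,A_n^0,A_n^1)$ with $D_0=D_n=1$. A node at level $k$ of a \qmdd represents, up to a scalar, an induced subfunction $f_{\vec a}$ for a partial assignment $\vec a\in\{0,1\}^{n-k}$ of the top qubits (the proposition in \cref{sec:ddprelims}); for this MPS that subfunction is exactly the $k$-qubit state
\[
\ket{v}=\sum_{\vec y\in\{0,1\}^{k}}\bigl(\vec v\,A_k^{y_k}\cdots A_1^{y_1}\bigr)\ket{\vec y},
\qquad \vec v := A_n^{a_n}\cdots A_{k+1}^{a_{k+1}}\in\mathbb C^{1\times D_k},
\]
with $\vec v=(1)$ at the root (level $n$). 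Crucially $\ket v$ is itself an MPS on $k$ qubits (absorb $\vec v$ into the top matrices, $\tilde A_k^{y}:=\vec v A_k^{y}$), so inner products $\braket{v|v'}$ of two such states are computable in time $\poly(|M|)$ since MPS supports \inprod (\cref{app:trac:mps}).

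First I would define the recursion $\textsc{ToQMDD}(\vec v,k)$: if $k=0$, return the \qmdd edge with label equal to the scalar $\vec v$ pointing at the leaf; otherwise set $\vec v_0:=\vec v A_k^0$, $\vec v_1:=\vec v A_k^1$, compute $e_0:=\textsc{ToQMDD}(\vec v_0,k-1)$ and $e_1:=\textsc{ToQMDD}(\vec v_1,k-1)$, and return $\textsc{MakeNode}(e_0,e_1)$; the \qmdd root edge is the output of $\textsc{ToQMDD}((1),n)$. Correctness is an induction on $k$: by the hypothesis $e_b$ represents $\ket{v_b}$, and $\ket 0\otimes\ket{v_0}+\ket1\otimes\ket{v_1}=\ket v$, so $\textsc{MakeNode}$ returns an edge for $\ket v$; and $\textsc{MakeNode}$ keeps the diagram reduced (isomorphic nodes are merged through the unique table, and if $f=p\cdot g$ then the Shannon cofactors of $f,g$ are scalar multiples and therefore recurse to the same canonical children). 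Hence the output is the canonical reduced \qmdd for $\ket M$, which by canonicity has the fewest nodes and edges among all \qmdd for $\ket M$; so $f$ is in fact minimising, a fortiori weakly minimising.

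To make the running time polynomial I would add, per level $k$, a memoisation table $T_k$ storing pairs $(\vec u, w)$ of a representative vector and the \qmdd edge already built for $\ket u$. Before recursing on a new $\vec v$ at level $k$, I test each stored $\ket u$ for proportionality $\ket v=\mu\ket u$ via the Cauchy–Schwarz equality $|\braket{u|v}|^2=\braket{u|u}\cdot\braket{v|v}$ (treating $\ket v=0$ and the leaf-normalisation convention as separate easy cases), and on a hit I return the edge $w$ with label $\mu=\braket{u|v}/\braket{u|u}$ instead of recursing. Since the number of distinct level-$k$ subfunctions modulo scalar equals the number of level-$k$ nodes of the reduced \qmdd, each level generates at most twice as many candidate vectors as there are nodes one level above, so across all levels only $\oh(|\alpha|)$ vectors are generated for the output \qmdd $\alpha$; each generated vector triggers at most $\oh(|\alpha|)$ proportionality checks, each costing $\poly(|M|)$, and the per-call linear algebra ($1\times D_k$ times $D_k\times D_{k-1}$) and $\textsc{MakeNode}$ work are also $\poly(|M|)$. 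Altogether the transformation runs in time $\poly(|M|,|\alpha|)$, as required.

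The main obstacle, and essentially the only place the MPS structure is exploited, is the efficiency argument for the memoisation: showing that detecting subfunction equivalence by MPS inner products is both sound (proportionality of vectors in a Hilbert space is exactly the Cauchy–Schwarz equality case) and cheap, and that this prunes the recursion tree down to the size of the output \qmdd rather than $2^n$. The accompanying bookkeeping — propagating scalar labels consistently with the $\textsc{MakeNode}$ normalisation and handling zero-amplitude subdiagrams and the leaf — is routine but must be done carefully.
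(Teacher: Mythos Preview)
Your proposal is correct and follows essentially the same approach as the paper: a top-down Shannon decomposition of the MPS with memoisation, where equivalence of subfunctions up to a scalar is detected via MPS inner products (Cauchy--Schwarz equality), and weak minimisation follows immediately from canonicity of the reduced \qmdd. The only cosmetic difference is that the paper stores already-built \qmdd nodes in the cache and, for each equivalence test, converts the node back to an MPS before computing inner products, whereas you keep the partial-contraction row vector $\vec u$ directly; this is the same idea and the same running-time bound.
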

\begin{proof}
	\cref{alg:mps2qmdd} shows the algorithm which converts an MPS to a \qmdd.
	The idea is to use perform backtracking to construct the \qmdd bottom-up.
	Specifically, given an MPS $\{A_n^0, A_n^{1},\dots,A_1^0, A_{1}^1\}$ representing a state $\ket\phi = \ket 0 \ket{\phi_0} + \ket 1 \ket{\phi_1}$, the MPS  for $\ket{\phi_0}$ is easily constructed by setting the first open index to $0$ and contracting these two blocks, i.e., $A_{n-1}^{0} := A_n^0\cdot A_{n-1}^{0}$ and $A_{n-1}^1 := A_n^0\cdot A_{n-1}^1$, and similarly for $\ket{\phi_1}$.
	We then recurse, constructing MPS for states $\ket{\phi_{00}},\ket{\phi_{01}}$, etc.
	When we find a state whose \qmdd node we have already constructed, then we may simply return an edge to that \qmdd node without recursing further.
	This dynamic programming behavior is implemented through the  check at \cref{algline:mps2qmdd-dynamic-programming}.
	
Through the use of dynamic programming with the cache set $D$, it is clear that the number of recursive calls to MPS2\qmdd is bound by the number of edges in the resulting \qmdd.
Dynamic programming is implemented by checking, for each call with MPS ${M}$, whether some \qmdd node $v\in D$ already represents $\ket{M}$ up to a complex factor.
To this end, the subroutine \textsc{Equivalent}, on \cref{algline:mps2qmdd-dynamic-programming}, checks whether $\ket{M}=\lambda\cdot \ket{v}$ for some $\lambda\in\mathbb C$.
It is straightforward to see that it runs in polynomial time in the sizes of \qmdd $v$ and MPS $M$: first, it creates an MPS for the given \qmdd node $v$ using the efficient transformation in \cref{sec:qmdd-to-mps}.
Next, it computes several inner products on MPS, which can also be done in polynomial time, using the results in \cref{sec:proofs-of-sec-tractability}.
This \textsc{Equivalent} operation is called $\sizeof D$ time, which dominates the runtime of each call \textsc{MPS2\qmdd}.
Therefore the entire runtime is polynomial in the sizes of the MPS and the resulting \qmdd.

Since \qmdd is canonical, the transformation is weakly minimizing by definition.
\end{proof}
\begin{algorithm}[h!]
\begin{algorithmic}[1]
\State $D:=\{\leafnode\}$ \Comment{Initiate diagram $D$ with only a \qmdd leaf node representing 1}
\Procedure{MPS2\qmdd}{MPS $M=\{A_j^{x}\}$} \Comment{Returns a root edge $e_R$ such that $\ket{e_R} = \ket M$}
	\If{$D$ contains a node $v$ with $\ket{M} = \lambda\ket v$}
		\label{algline:mps2qmdd-dynamic-programming}
		\Return $\ledge \lambda v$\Comment{Implemented with \textsc{Equivalent}($v, M$) for all3 $v\in D$}
	\EndIf
		\State \Edge $e_0:=\textsc{MPS2\qmdd}(\{A_n^0\cdot A_{n-1}^{0},~~~ A_n^0\cdot A_{n-1}^1\}\cup \{A_{n-2}^0,A_{n-2}^1,\ldots, A_1^0,A_1^1\})$ 
		\State \Edge $e_1:=\textsc{MPS2\qmdd}(\{A_n^1\cdot A_{n-1}^{0},~~~ A_n^1\cdot A_{n-1}^1\}\cup \{A_{n-2}^0,A_{n-2}^1,\ldots, A_1^0,A_1^1\})$
	\State \Node $w:=\lnode{e_0}{}{e_1}{}$\Comment{Create new node $w$ with \textsc{MakeNode}}
	\State $D:=D\cup \{w\}$
	 \State  \Return \Edge $\ledge 1w$
\EndProcedure\vspace{1ex}
\Procedure{Equivalent}{ \qmdd \Node $v$, MPS $M=\{A_j^{x}\}$} 
	\State $V:=\qmdd\textsc{2MPS}(v)$ \Comment{Using transformation in \cref{sec:qmdd-to-mps}}
	\State $s_V:=\sqrt{|\braket{V|V}|}$ \Comment{Compute inner product}
	\State $s_M:=\sqrt{|\braket{M|M}|}$ \Comment{Compute inner product}
	\State $\lambda:=\nicefrac{1}{s_V\cdot s_M}\braket{V|M}$ \Comment{Compute inner product}
	\If{$|\lambda| = 1$}
		\Return ``$\ket{M} = \frac{s_M}{s_V}\lambda \ket{v}$''
	\Else
		~\Return ``$\ket{v}$ is not equivalent to $\ket{M}$''
	\EndIf
\EndProcedure
\end{algorithmic}
\caption{An algorithm which converts an MPS into a \qmdd.
It runs in time polynomial in $s+d$, where $s$ is the size of the \qmdd, and $d$ is the bond dimension of the MPS.
Here $D$ is the diagram representing the state.
The subroutine $\textsc{Equivalent}(v,M)$ computes whether the vectors $\ket{v},\ket{M}$ are co-linear, i.e., whether there exists $\lambda\in \mathbb C$ such that $\ket{M}=\lambda\ket{v}$.
}
\label{alg:mps2qmdd}
\end{algorithm}

\subsection{Transformations between QDDs}
\label{sec:transformations-between-QDDs}

QDDs are canonical data structures as explained in \cref{sec:preliminaries}. Therefore, (i) all algorithms are by definition runtime monotonic, as  for any state $\ket{\phi}$ there is only one structure representing it, i.e., $D^\phi$ is a singleton set; and (ii) all transformations given below are therefore weakly minimizing since they convert to a canonical data structure (namely, since they map to the unique element in $D^\phi$, in particular they map to the minimum-size element of $D^\phi$).

\subsubsection{\limdd to \qmdd}

\cref{alg:limdd2qmdd} converts a \limdd to a \qmdd in time linear in the size of the output.
The diagram is the set of edges $D$, which is initialized to contain the Leaf (i.e., the node $\ledge 11$), and is filled with the other edges during the recursive calls to \textsc{\limdd2\qmdd}.
The function $\textsc{getLexminLabel}$ is taken from Vinkhuijzen et al. \cite{vinkhuijzen2021limdd}; it returns a canonical edge label.

\begin{algorithm}[t!]
	\begin{algorithmic}[1]
	\State \qmdd $D:=\{\ledge 11\}$ \Comment{The \qmdd is initialized to contain only the Leaf}
	\Procedure{\limdd2\qmdd}{\limdd edge $\ledge {\lambda P_n\otimes P'}v$}
		\Statex Returns (a pointer to) an edge to a \qmdd node
		\If{$v$ is the Leaf node} \Return $\ledge {\lambda}{v}$
		\EndIf
		\State $R:=\textsc{getLexminLabel}(P_n\otimes P', v)$
		\If{the \textsc{Cache} contains tuple $(R,v)$} \Return $\lambda\cdot \textsc{Cache}[R,v]$
		\EndIf
		\For{$x=0,1$}
			\State \qmdd edge $r_x:=\textsc{Follow}_x(\ledge {P_n\otimes P'}{v})$
		\EndFor
		\State \qmdd edge $r:=\textsc{MakeEdge}(r_0,r_1)$
		\State $\textsc{Cache}[R,v]:=r$
		\State $D:=D\cup \{r\}$\Comment{Add the new edge to the diagram}
		\State \Return $\lambda\cdot r$
	\EndProcedure
	\end{algorithmic}
	\caption{An algorithm which converts a \limdd into an \qmdd.
	}
	\label{alg:limdd2qmdd}
\end{algorithm}

\subsubsection{\qmdd to \limdd}
\label{sec:qmdd-to-limdd}
By definition, a \qmdd can be seen as a \limdd in which every edge is labeled with a complex number and the $n$-qubit identity tensor $\mathbb I^{\otimes n}$.
Thus, a transformation does not need to do anything.
Optionally, it is possible to convert a given \limdd to one of minimum size, as described by \cite{vinkhuijzen2021limdd}.

\subsubsection{\add to \qmdd}
\label{sec:add-to-qmdd}

To convert an \add into a \qmdd, we add a Leaf node labelled with $1$; then, for each Leaf node labelled with $\lambda\ne 1$, we label each incoming edge with $\lambda$, and then reroute this edge to the (new) Leaf node labelled with $1$.
Optionally, the resulting \qmdd can be minimized to obtain the canonical instance for this state, using, e.g., techniques from \cite{miller2006qmdd,brace1990efficient}.

\subsubsection{\qmdd to \add}

\cref{alg:qmdd2add} gives a method which converts an \qmdd to an \add.
It is very similar to the ones used in the transformation \limdd to \qmdd above, .
We here check whether the diagram already contains a function which is pointwise equal to the one we are currently considering.
If so, we reuse that node; otherwise, we recurse.
\begin{algorithm}
	\begin{algorithmic}[1]
		\Procedure{\qmdd2ADD}{\qmdd edge $e=\ledge {\lambda}{v}$ on $n$ qubits}
			\If{$n=0$}
				\State $w:=\ledge{}{\lambda}$
			\ElsIf{$A$ contains a node $w$ with $\ket v=\ket w$}
				\State \Return $w$
			\Else
				\For{$x=0,1$}
					\State $w_x:=\textsc{SLDD2ADD}(\textsc{Follow}_x(e))$
				\EndFor
				\State \qmdd Node $w:=\lnode{}{w_0}{}{w_1}$
			\EndIf
			\State $A:=A\cup \{w\}$
			\State \Return $w$
		\EndProcedure
	\end{algorithmic}
	\caption{An algorithm which converts an \qmdd to an \textsc{\add}.
	Its input it an \qmdd edge $e$ representing a state $\ket e$ on $n$ qubits.
	Here the method $\textsc{Follow}_x(e)$ returns an \qmdd edge representing the state $\bra{x}\otimes \mathbb I^{\otimes n-1}\cdot \ket e$.
	It outputs an \qmdd node $w$ representing $\ket w=\ket e$.}
	\label{alg:qmdd2add}
\end{algorithm}

}

\end{document}